\algnewcommand{\LeftComment}[1]{\Statex \(\triangleright\) #1}
\newlength\dlf  % Define a new measure, dlf
\newtheorem{theorem}{Theorem}[section]
\newtheorem{conj}[theorem]{Conjecture}
\newtheorem{corollary}[theorem]{Corollary}
\newtheorem{proposition}[theorem]{Proposition}
\newtheorem{problem}[theorem]{Problem}
\theoremstyle{definition}
\newtheorem*{remark}{Remark}
\let\realbfseries=\bfseries
\def\bfseries{\realbfseries\boldmath}
\let\epsilon=\varepsilon
\def\defn#1{\textbf{\textit{\boldmath #1}}}
\newcommand{\OO}{\texorpdfstring{\,\vcenter{\hbox{\includegraphics[scale=0.2]{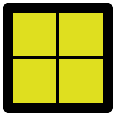}}}\,}O}
\newcommand{\TT}{\texorpdfstring{\,\vcenter{\hbox{\includegraphics[scale=0.2]{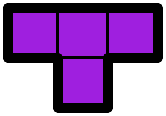}}}\,}T}
\newcommand{\LL}{\texorpdfstring{\,\vcenter{\hbox{\includegraphics[scale=0.2]{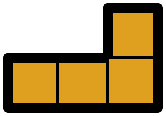}}}\,}L}
\newcommand{\JJ}{\texorpdfstring{\,\vcenter{\hbox{\includegraphics[scale=0.2]{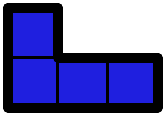}}}\,}J}
\renewcommand{\SS}{\texorpdfstring{\,\vcenter{\hbox{\includegraphics[scale=0.2]{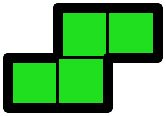}}}\,}S}
\newcommand{\ZZ}{\texorpdfstring{\,\vcenter{\hbox{\includegraphics[scale=0.2]{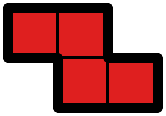}}}\,}Z}
\newcommand{\II}{\texorpdfstring{\,\vcenter{\hbox{\includegraphics[scale=0.2]{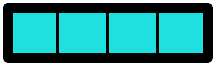}}}\,}I}
\newcommand{\ALL}{\II, \allowbreak \OO, \allowbreak \TT, \allowbreak \SS, \allowbreak \ZZ, \allowbreak \JJ, \allowbreak \LL}
\begin{document}
\title{Tetris with Few Piece Types}
\author{%
  MIT Hardness Group%
    \thanks{Artificial first author to highlight that the other authors (in
      alphabetical order) worked as an equal group. Please include all
      authors (including this one) in your bibliography, and refer to the
      authors as “MIT Hardness Group” (without “et al.”).}
\and
  Erik D. Demaine%
    \thanks{MIT Computer Science and Artificial Intelligence Laboratory,
      32 Vassar St., Cambridge, MA 02139, USA, \protect\url{{edemaine,hhall314,jeli}@mit.edu}}
\and
  Holden Hall\footnotemark[2]
\and
  Jeffery Li\footnotemark[2]
}
\date{}

\maketitle

\begin{abstract}
    We prove NP-hardness and \#P-hardness of Tetris clearing (clearing an initial board using a given sequence of pieces) with the Super Rotation System (SRS), even when the pieces are limited to \emph{any two} of the seven Tetris piece types.
    This result is the first advance on a question posed twenty years ago:
    %\cite{Tetris_IJCGA}:
    which piece sets are easy vs.\ hard?
    All previous Tetris NP-hardness proofs used five of the seven piece types.
    We also prove ASP-completeness of Tetris clearing, using three piece types,
    as well as versions of 3-Partition and Numerical 3-Dimensional Matching
    where all input integers are distinct.
    Finally, we prove NP-hardness of Tetris survival and clearing
    under the "hard drops only" and "20G" modes, using two piece types,
    improving on a previous "hard drops only" result that used five piece types.
\end{abstract}

\section{Introduction}\label{sec:intro}

Tetris is one of the oldest and most popular puzzle video games, originally created by Alexey Pajitnov in 1984.
Tetris has reached mainstream media many times,
most recently in the biopic \emph{Tetris} \cite{Tetris_movie}
and with the news of 13-year-old Willis Gibson being the first person to "beat" the NES version of Tetris by reaching a killscreen \cite{Tetris_Beat}.

The rules of Tetris are simple.  In each round, a tetromino piece
(one of $\ALL$)
spawns at the top of a grid and periodically moves down one unit, assuming the squares below the piece are empty.
The player can repeatedly move this piece one unit left, one unit right, or one unit down, or rotate the piece by $\pm 90^\circ$.
When any part of the piece rests on top of a filled square for long enough that it triggers an automatic downward move, the piece "locks" in place, and stops moving.
If a piece stops above a certain height or where the next piece would spawn, the player loses; otherwise, the next piece spawns at the top of the grid, and play continues.
Completely filling a row causes the row to clear,
and all squares above that row move downward by one unit.
For more detailed rules, see \cite{Tetris_Wikipedia}.

To study Tetris from a computational complexity perspective, we generally
assume that the player is given a sequence of pieces and an initial board state
of filled cells, making the game perfect information
(as introduced in \cite{Tetris_IJCGA}).
The two main objectives we consider here are ``clearing'' and ``survival''
(as introduced in \cite{TotalTetris_JIP}).
In \defn{Tetris clearing}, we want to determine whether we can clear the
entire board after placing all the given pieces.
In \defn{Tetris survival}, we want to determine whether
the player can avoid losing before placing all the given pieces.
Previous work shows that these problems are NP-complete,
even to approximate various metrics within $n^{1-\epsilon}$ \cite{Tetris_IJCGA},
or with only $8$ columns or $4$ rows \cite{ThinTetris_JCDCGGG2019},
or with additional constraints on drops \cite{HardDrops},
or with $k$-ominoes for $k \geq 3$ clearing or $k \geq 4$ survival
\cite{TotalTetris_JIP}.

\subsection{Our Results}\label{sec:results}

One of the open problems posed in the original paper proving Tetris NP-hard
twenty years ago \cite{Tetris_IJCGA} is to determine which subsets of the
seven Tetris piece types $\{\ALL\}$
suffice for NP-hardness, and which admit a polynomial-time algorithm.
All existing Tetris NP-hardness proofs
\cite{Tetris_IJCGA,ThinTetris_JCDCGGG2019,HardDrops}
use at least five of the seven piece types.
% \cite{Tetris_IJCGA} claims 5 -- {LG, RG, I, Sq,T} -- no S or Z
% Other 5-tuples on https://erikdemaine.org/papers/Tetris_IJCGA/paper.pdf#page=24
% \cite{ThinTetris_JCDCGGG2019} seems to use all 7
% \cite{HardDrops} does not seem to use S or O
In particular, \cite[Section~6.2]{Tetris_IJCGA}
mentions various sets of five piece types that suffice.
What about fewer
%than five
piece types?

\begin{table}
    \centering
    \small
    \tabcolsep=0.7\tabcolsep
    \begin{tabular}{c|c|c|c|c|c|c|c|}
         & $\II$ & $\OO$ & $\TT$ & $\SS$ & $\ZZ$ & $\JJ$ & $\LL$ \\\hline
         $\II$ & $-$ & Prop. \ref{prop:IL} (H) & Prop. \ref{prop:IL} (H) & Prop. \ref{prop:IL} & Prop. \ref{prop:IL} & Prop. \ref{prop:IL} (H, G) & Prop. \ref{prop:IL} (H, G) \\\hline
         $\OO$ & $-$ & $-$ & Prop. \ref{prop:OT} & Prop. \ref{prop:OS} & Prop. \ref{prop:OS} & Prop. \ref{prop:OJ} & Prop. \ref{prop:OJ} \\\hline
         $\TT$ & $-$ & $-$ & $-$ & Prop. \ref{prop:ST} & Prop. \ref{prop:ST} & Prop. \ref{prop:JZ} & Prop. \ref{prop:JZ} \\\hline
         $\SS$ & $-$ & $-$ & $-$ & $-$ & Prop. \ref{prop:SZ} & Prop. \ref{prop:JS} & Prop. \ref{prop:JZ} \\\hline
         $\ZZ$ & $-$ & $-$ & $-$ & $-$ & $-$ & Prop. \ref{prop:JZ} & Prop. \ref{prop:JS} \\\hline
         $\JJ$ & $-$ & $-$ & $-$ & $-$ & $-$ & $-$ & Prop. \ref{prop:JL} (G?) \\\hline
         $\LL$ & $-$ & $-$ & $-$ & $-$ & $-$ & $-$ & $-$ \\\hline
    \end{tabular}
    \caption{Our NP-hardness results for Tetris clearing assuming SRS. Each entry in a specific row and column corresponds to the proposition for the hardness of the two-element subset consisting of the row piece and column piece (for example, the entry "Prop. \ref{prop:IL}" in row $\II$ and column $\OO$ indicates that Proposition \ref{prop:IL} proves hardness for the subset $\{\II, \OO\}$). Letters in parentheses denote additional models ("H" for "hard drop only", "G" for "20G"); question mark indicates a conjecture for hardness under that additional model.}
    \label{tab:subset}
\end{table}

Our main results are the first to make progress on this question:
for any \emph{size-2} subset $A\subseteq \{\ALL\}$,
Tetris clearing is NP-complete with pieces restricted to~$A$.
Most pairs of piece types require different constructions for their reductions;
refer to Table~\ref{tab:subset}.
Our results require us to specify more details of the piece rotation model,
specifically what happens when the player rotates a piece in a way that
collides with a filled square.
We assume the \defn{Super Rotation System (SRS)} \cite{SRS_TetrisWiki},
first introduced in the 2001 game \emph{Tetris Worlds}
and as part of the Tetris Company's Tetris Guideline for how all modern (2001+)
Tetris games should behave \cite{Guideline_TetrisWiki}.
%which has emerged as the standard rotation system for most modern
%implementations of Tetris.

For every size-2 subset $A$ of piece types,
we also establish \#P-hardness for the corresponding problem of
counting the number of ways to clear the board.
Here we distinguish solutions by the final placement of each piece,
not the sequence of moves to make those placements
(as long as the placement is valid).
This definition lets us ignore e.g.\ the null effect of
moving a piece repeatedly left and right.

For certain size-3 subsets of piece types, we further establish
ASP-completeness for Tetris clearing.  Recall that an NP search problem
is \defn{ASP-complete} \cite{Yato-Seta-2003}
if there is a parsimonious reduction from all NP search problems
(including a polynomial-time bijection between solutions).
In particular, ASP-completeness implies NP-hardness of finding another solution
given $k$ solutions, for any $k \geq 0$, as well as \#P-completeness.
These results hold for piece types $\{\II, \TT, \LL\}$ and $\{\II, \TT, \JJ\}$.

We also study Tetris under two more restrictive models on piece moves:

\begin{itemize}
    \item \defn{Hard drops only}: In this model, pieces do not move downward on their own, and if the player moves a piece downward, the piece moves maximally downward before locking into place (a \defn{hard drop} maneuver). The player is still free to rotate or move the piece left or right before hard-dropping the piece.
    This model is motivated by most Tetris games awarding higher scores
    for hard drops, and was posed in \cite{Tetris_IJCGA}.
    \item \defn{20G}: In this model, instead of periodically moving down one unit, all pieces move maximally downward \emph{instantly and on their own}, and the player is not allowed to control how fast a piece moves downward.
    The player is still free to rotate or move the piece left or right before the piece locks.
    This model is motivated by levels with the maximum possible gravity, as in Level 20+ of regular Tetris with 20 rows \cite{20G_TetrisWiki}.
\end{itemize}

For certain size-2 subsets of piece types, we establish NP-hardness
of both Tetris survival and clearing under either of these models.
Table~\ref{tab:subset} labels which of our Tetris clearing results hold in which models.

Along the way, we prove new results about 3-Partition and Numerical
3-Dimensional Matching (3DM): both problems are strongly ASP-complete
even when all integers are assumed distinct.
These results are of independent interest for ASP-hardness reductions.
Previously, these problems were known to be ASP-complete with
multisets of integers \cite{PathPuzzles_GC},
and strongly NP-complete with distinct integers \cite{HULETT2008594}.

\subsection{Outline}\label{sec:outline}

The structure of the rest of the paper is as follows.
Section \ref{sec:srs} details the Super Rotation System (SRS),
an important aspect of modern Tetris and used in our constructions.
Section~\ref{sec:3part} proves ASP-completeness of 3-Partition with
Distinct Integers and Numerical 3-Dimensional Matching with Distinct Integers,
two problems we reduce from.
Section~\ref{sec:2piecetetris} discusses our hardness results for Tetris clearing with SRS with only two piece types.
Section~\ref{sec:survival} discusses some Tetris survival results under the "hard drops only" and "20G" models.
Section \ref{sec:asp} proves ASP-completeness of Tetris clearing with SRS.

\section{Super Rotation System (SRS)}\label{sec:srs}

Most previous Tetris results are not sensitive to exactly how Tetris pieces
rotation: most reasonable rotation models work \cite[Section~6.4]{Tetris_IJCGA}.
By contrast, many of the results in this paper focus specifically on
(and require) the \defn{Super Rotation System (SRS)} \cite{SRS_TetrisWiki},
defined as follows.

Each piece has a defined \defn{rotation center},
as indicated by dots in Figure \ref{allpieces},
except for $\II$ and $\OO$, whose rotation centers are the centers of the $4\times 4$ squares in Figure \ref{allpieces}. When unobstructed, all non-$\OO$ tetrominoes will rotate purely about the rotation center (note that $\OO$ pieces cannot rotate). The key feature about SRS is \defn{kicking}: if a tetromino is obstructed when a rotation is attempted, the game will attempt to "kick" the tetromino into one of four alternate positions, each tested sequentially; if all four positions do not work, then the rotation will fail. See Figure \ref{SRSrotate} for an example of this kicking process. The full data for wall kicks can be found in Tables \ref{tab:kickdataMain} and \ref{tab:kickdataI}, and at \cite{SRS_TetrisWiki}. Of note is that SRS wall kicks are vertically symmetric for all pieces or pairs of pieces (i.e., $\JJ\leftrightarrow \LL$ and $\SS\leftrightarrow \ZZ$) except for the $\II$ piece, so all rotations can be mirrored.

\begin{figure}[ht]
    \centering
    \includegraphics[width=160pt]{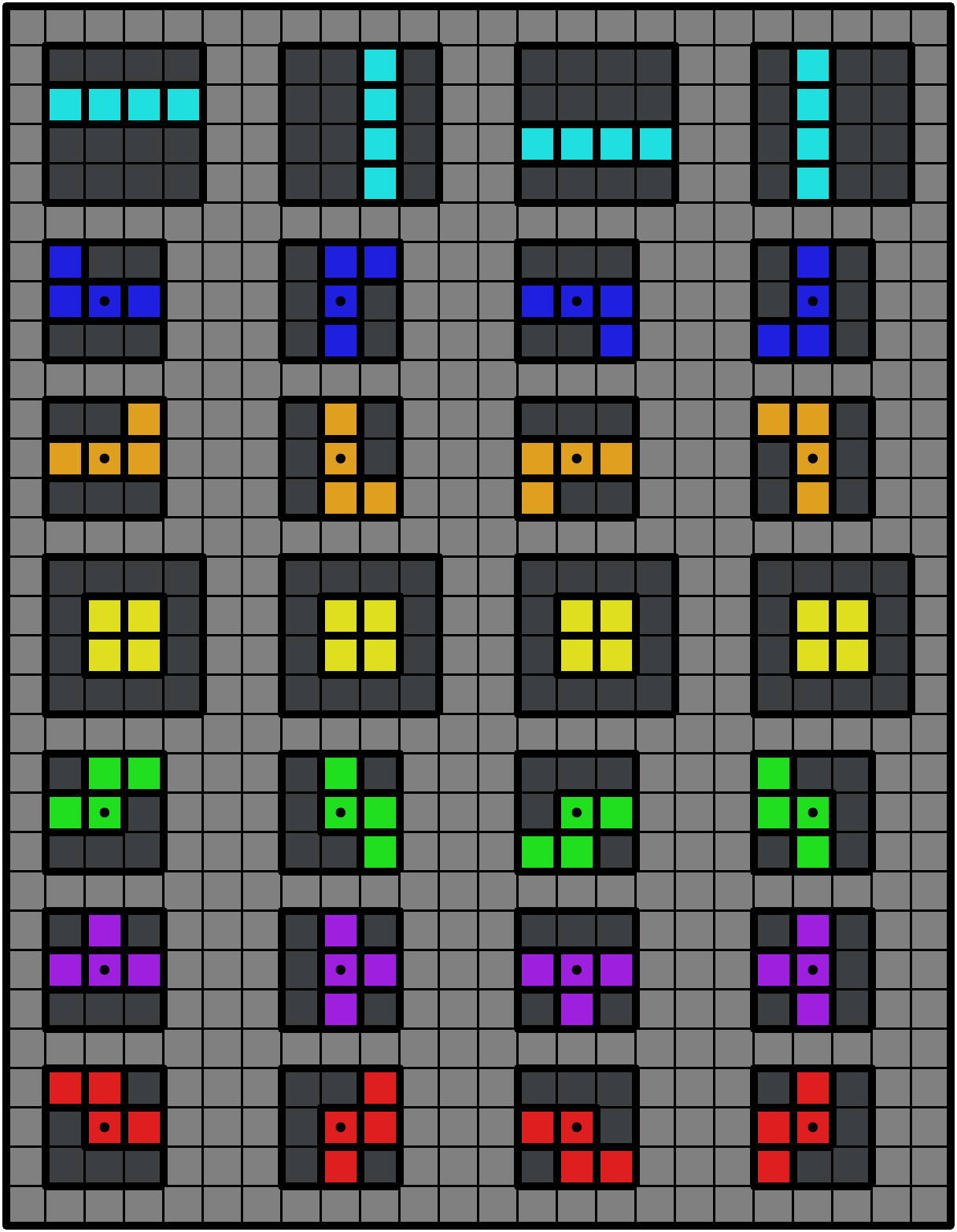}
    \caption{All tetromino pieces, in order from top to bottom: $\II$, $\JJ$, $\LL$, $\OO$, $\SS$, $\TT$, $\ZZ$. The first column is the default orientation of a piece upon spawning in; each column to the right indicates a $90^\circ$ rotation clockwise about the rotation center of the piece.}
    \label{allpieces}
\end{figure}

\begin{figure}[ht]
  \centering
  \begin{subfigure}[b]{0.3\textwidth}
    \centering
    \includegraphics[width=60pt]{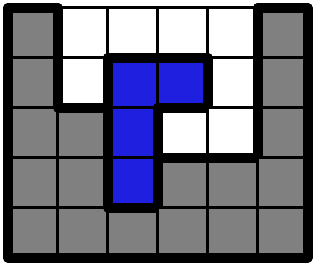}
    \caption{}
  \end{subfigure}
  \begin{subfigure}[b]{0.3\textwidth}
    \centering
    \includegraphics[width=60pt]{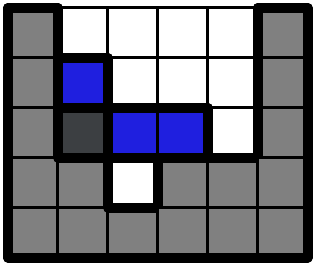}
    \caption{}
  \end{subfigure}
  \begin{subfigure}[b]{0.3\textwidth}
    \centering
    \includegraphics[width=60pt]{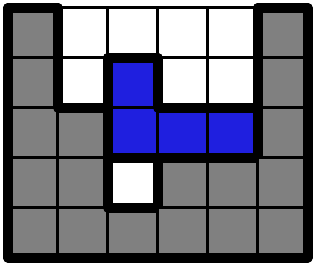}
    \caption{}
  \end{subfigure}
  \caption{An example of the SRS kick system. Suppose the $\JJ$ piece in (a) is being rotated $90^\circ$ counter-clockwise. Test 1 (which is $(0, 0)$) would fail, due to the dark gray square shown in (b). Test 2 (which is $(+1, 0)$) would succeed, as shown in (c), and so the $\JJ$ piece would rotate to the position in (c).}
  \label{SRSrotate}
\end{figure}

\begin{table}[ht]
    \centering
    \footnotesize
    \begin{tabular}{c|c|c|c|c|c|}
         & Test 1 & Test 2 & Test 3 & Test 4 & Test 5 \\\hline
         $0\to R$ & $(0, 0)$ & $(-1, 0)$ & $(-1, +1)$ & $(0, -2)$ & $(-1, -2)$ \\\hline
         $R\to 0$ & $(0, 0)$ & $(+1, 0)$ & $(+1, -1)$ & $(0, +2)$ & $(+1, +2)$ \\\hline
         $R\to 2$ & $(0, 0)$ & $(+1, 0)$ & $(+1, -1)$ & $(0, +2)$ & $(+1, +2)$ \\\hline
         $2\to R$ & $(0, 0)$ & $(-1, 0)$ & $(-1, +1)$ & $(0, -2)$ & $(-1, -2)$ \\\hline
         $2\to L$ & $(0, 0)$ & $(+1, 0)$ & $(+1, +1)$ & $(0, -2)$ & $(+1, -2)$ \\\hline
         $L\to 2$ & $(0, 0)$ & $(-1, 0)$ & $(-1, -1)$ & $(0, +2)$ & $(-1, +2)$ \\\hline
         $L\to 0$ & $(0, 0)$ & $(-1, 0)$ & $(-1, -1)$ & $(0, +2)$ & $(-1, +2)$ \\\hline
         $0\to L$ & $(0, 0)$ & $(+1, 0)$ & $(+1, +1)$ & $(0, -2)$ & $(+1, -2)$ \\\hline
    \end{tabular}
    \caption{Kick data for $\JJ$, $\LL$, $\SS$, $\TT$, and $\ZZ$ pieces. $0$ indicates the default orientation, and $R$, $2$, and $L$ indicate the orientation reached from a $90^\circ$, $180^\circ$, and $270^\circ$ rotation clockwise (respectively) from the default orientation. An ordered pair $(a, b)$ denotes a translation of the center by $a$ units in the $x$ direction and $b$ units in the $y$ direction. Positive $x$ direction is rightwards, and positive $y$ direction is upward.}
    \label{tab:kickdataMain}
\end{table}

\begin{table}[ht]
    \centering
    \footnotesize
    \begin{tabular}{c|c|c|c|c|c|}
         & Test 1 & Test 2 & Test 3 & Test 4 & Test 5 \\\hline
         $0\to R$ & $(0, 0)$ & $(-2, 0)$ & $(+1, 0)$ & $(-2, -1)$ & $(+1, +2)$ \\\hline
         $R\to 0$ & $(0, 0)$ & $(+2, 0)$ & $(-1, 0)$ & $(+2, +1)$ & $(-1, -2)$ \\\hline
         $R\to 2$ & $(0, 0)$ & $(-1, 0)$ & $(+2, 0)$ & $(-1, +2)$ & $(+2, -1)$ \\\hline
         $2\to R$ & $(0, 0)$ & $(+1, 0)$ & $(-2, 0)$ & $(+1, -2)$ & $(-2, +1)$ \\\hline
         $2\to L$ & $(0, 0)$ & $(+2, 0)$ & $(-1, 0)$ & $(+2, +1)$ & $(-1, -2)$ \\\hline
         $L\to 2$ & $(0, 0)$ & $(-2, 0)$ & $(+1, 0)$ & $(-2, -1)$ & $(+1, +2)$ \\\hline
         $L\to 0$ & $(0, 0)$ & $(+1, 0)$ & $(-2, 0)$ & $(+1, -2)$ & $(-2, +1)$ \\\hline
         $0\to L$ & $(0, 0)$ & $(-1, 0)$ & $(+2, 0)$ & $(-1, +2)$ & $(+2, -1)$ \\\hline
    \end{tabular}
    \caption{Kick data for $\II$ pieces, with same notation as Table \ref{tab:kickdataMain}.}
    \label{tab:kickdataI}
\end{table}

This system of kicking tetrominoes during rotations allows for moves which are often called \defn{twists} or \defn{spins}. All the spins that we utilize are detailed in Appendix \ref{appendix:spins}.

\section{3-Partition and Numerical 3DM with Distinct Integers}\label{sec:3part}

Our reductions to Tetris are all from one of the following two problems,
which are strengthenings of two standard strongly NP-complete problems:

\begin{problem}[\textbf{3-Partition with Distinct Integers}]
    Given a \textbf{set} $A = \{a_1, a_2, \ldots, a_n\}$ of $n$
    \textbf{distinct} positive integers
    such that $\frac t4 < a_i < \frac t2$ for each $i$,
    where $t = \frac 3n \sum_{i=1}^n a_i$,
    determine whether there is a partition of $A$ into $\frac n3$ groups
    $D_1, \ldots, D_{n/3}$
    (each necessarily of size $3$)
    having the same sum $\sum_{x\in D_j}x = t$.
\end{problem}

\begin{problem}[\textbf{Numerical 3-Dimensional Matching (3DM) with Distinct Integers}]
    Given three \textbf{sets} $$A = \{a_1, a_2, \ldots , a_n\}, B = \{b_1, b_2, \ldots , b_n\},\text{ and }C = \{c_1, c_2, \ldots , c_n\}$$ of $n$ positive integers, where all $3n$ integers are \textbf{distinct}, and a target sum $t = \frac 1n \sum_{i=1}^n (a_i+b_i+c_i)$, determine whether there is a partition of $A\cup B\cup C$ into $n$ groups $D_1, \ldots, D_n$, each with exactly one element from each of $A$, $B$, and $C$, and $\sum_{x\in D_j}x = t$ for all $j$.
\end{problem}

Without the "distinct" and "set" conditions, both problems are well-known to be
\defn{strongly NP-complete}, meaning that the problem is NP-hard even if the
$a_i$ integers are bounded by a polynomial in~$n$.
This property makes it feasible to represent each integer $a_i$ (and $t$)
in unary, which is the approach taken by all past Tetris NP-hardness proofs
\cite{Tetris_IJCGA,TotalTetris_JIP,ThinTetris_JCDCGGG2019,HardDrops},
as then the total reduction size is still polynomial in~$n$.

We want to ensure all integers are distinct in order to have more control
over our reductions' blowup in the number of solutions,
as needed for \#P- and ASP-hardness.
Bosboom et al.~\cite{PathPuzzles_GC} proved that numerical 3DM is
strongly ASP-complete when $A$ is restricted to be a set,
but allowed for $B$ and $C$ to be multisets as usual,
and did not forbid repeated integers between $A,B,C$.
Hulett, Will, and Woeginger \cite{HULETT2008594} proved that
both 3-Partition and Numerical 3DM remain strongly NP-hard
with distinct integers.
We extend their proof to obtain ASP-completeness:

\begin{theorem}\label{3partasp}
    3-Partition with Distinct Integers,
    and Numerical 3-Dimensional Matching with Distinct Integers,
    are strongly ASP-complete.
\end{theorem}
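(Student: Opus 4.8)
The plan is to prove both statements by exhibiting parsimonious polynomial-time reductions from a problem already known to be strongly ASP-complete, namely Bosboom et al.'s Numerical 3DM~\cite{PathPuzzles_GC}, in which $A$ is a set but $B$ and $C$ may be multisets and integers may repeat across the three sets. Membership in NP is immediate, since a partition is a polynomial-size certificate checkable in polynomial time, so the content is entirely in the reductions. Recall that a reduction witnesses ASP-completeness exactly when it is \emph{parsimonious}, i.e.\ it comes with a polynomial-time bijection between the solutions of the source and target instances. I would proceed in two stages: (i) reduce Bosboom's Numerical 3DM parsimoniously to Numerical 3DM with Distinct Integers, and then (ii) reduce the latter parsimoniously to 3-Partition with Distinct Integers (a base-encoding reduction in which type tags force every group to contain exactly one element of each of $A,B,C$, and a global scaling enforces the required $\frac t4 < a_i < \frac t2$ window). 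Composing the two stages yields both target problems at once.

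The engine of stage (i) is to \emph{lift} each integer into a high-order ``identity region'' guaranteeing pairwise distinctness while the low-order ``value region'' carries the original arithmetic, with the base chosen large enough that summing the three elements of a group never carries between regions. Two kinds of collision must be removed. Collisions \emph{across} the sets are eliminated for free by residue tagging: replacing $a_i,b_j,c_k$ by $3a_i,\,3b_j+1,\,3c_k+2$ and $t$ by $3t+3$ leaves the feasible matchings literally unchanged, because every group contains exactly one element of each set and so contributes the constant residue offset $0+1+2=3$; this map is trivially a bijection on solutions, and since $A$ is already a set, only within-$B$ and within-$C$ repeats survive. The delicate collisions are precisely these \emph{within-set} repeats, where equal values must be separated without disturbing which groups hit the target.

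I expect this within-set tie-breaking to be the main obstacle, and it is exactly where we must extend the Hulett--Will--Woeginger construction~\cite{HULETT2008594}. The difficulty is structural: a naive additive perturbation that separates equal values changes each group's sum by the perturbations of its members, and since a group may contain \emph{any} one element of a multiset, there is no way to force the per-group perturbation to a fixed constant (as the residue trick did), so a single common target can no longer be met by all the original matchings. The resolution is HWW's identifier gadget, which separates equal values while pinning the per-group identity contribution to a constant, so that precisely the original solutions survive. For ASP-completeness I would then upgrade ``feasibility-preserving'' to ``parsimonious'' by exhibiting the bijection explicitly: erasing the identity region should map every solution of the distinct instance to a solution of the source instance, and conversely each source solution should admit a \emph{unique} completion to a distinct-instance solution, the identifiers being forced by the group structure. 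Injectivity and surjectivity of this erase/complete correspondence is the crux of the whole argument.

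Finally I would discharge the two routine-but-necessary points. First, strong ASP-completeness requires all constructed integers to remain polynomially bounded: using base $\mathrm{poly}(n)$ with $O(n)$ identity blocks keeps every value polynomial in $n$ and in the (polynomially bounded) original integers, so the reduction is ``strong.'' Second, stage (ii), the reduction from distinct Numerical 3DM to distinct 3-Partition, must itself be parsimonious and preserve distinctness; this follows because the type-tag base encoding makes the correspondence between tripartite matchings and valid $3$-element groups a bijection, while the scalings used to create distinctness and to enforce the $\frac t4 < a_i < \frac t2$ window are injective on integers. Assembling these pieces yields strong ASP-completeness of both distinct-integer problems.
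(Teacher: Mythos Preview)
Your stage (ii) is fine and is essentially the paper's final step (the mod-$8$ tagging $8a_i+1,\,8b_j+2,\,8c_k-3$). The gap is in stage (i). You correctly isolate the real obstacle---perturbing repeated values in $B$ or $C$ shifts each group's sum by an amount that depends on the unknown matching, so no single new target can be hit by exactly the original solutions---but you do not actually resolve it. You defer to ``HWW's identifier gadget,'' yet Hulett--Will--Woeginger do \emph{not} provide a distinctification procedure for arbitrary Numerical 3DM instances. Their construction is a reduction \emph{from Latin Square Completion / Tripartite Triangle Partition}: each integer encodes a pair of vertex indices, and the algebra is arranged so that a triple sums to $t$ iff the three encoded edges form a triangle. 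The distinctness there is a byproduct of encoding combinatorial identifiers that were already present in the source instance; there is no mechanism that takes a Numerical 3DM instance with repeated $b$- or $c$-values and outputs an equivalent one with distinct values while preserving the solution count. Your sentence ``the identifiers being forced by the group structure'' is exactly the step that has no content: if $b_1=b_2$ in the source, any labeled source solution can swap them, and after you make $b_1'\neq b_2'$ there is nothing in your construction that pins which of the two goes into which group.

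The paper sidesteps this entirely by never starting from Bosboom et al.'s multiset version. Its chain is $3\text{SAT}\to\text{Positive 1-in-3SAT}\to\text{Tripartite Edge-Disjoint Triangle Partition}\to\text{Numerical 3DM with Distinct Integers}\to\text{3-Partition with Distinct Integers}$, with each arrow parsimonious. The third arrow \emph{is} HWW applied in its native setting (triangles $\to$ integers), so distinctness comes for free and parsimony is immediate because triangles and target-hitting triples are in bijection by construction. If you want to rescue your route, you would need to supply an explicit parsimonious distinctifier for Numerical 3DM with multiset $B,C$; absent that, the cleaner fix is to adopt the Triangle-Partition detour.
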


To prove this result, we use the following intermediate problems
(which are thus also ASP-complete):

\begin{problem}[\textbf{Positive 1-in-3SAT}]
    Given a boolean formula in 3CNF (i.e., an \textsc{and} of clauses consisting of 3 literals), where all literals are positive, does there exist an assignment of the variables to either true or false such that each clause has exactly one literal set to true?
\end{problem}

\begin{problem}[\textbf{Tripartite Edge-Disjoint Triangle Partition}]
    Given an undirected tripartite graph $G = (V, E)$, can we partition $E$ into disjoint triangles?
\end{problem}

\begin{proof}[Proof of Theorem \ref{3partasp}]
    We give a chain of parsimonious reductions from 3SAT,
    which is known to be ASP-complete \cite{Yato-Seta-2003}:

    \begin{enumerate}
        \item \textbf{3SAT $\to$ Positive 1-in-3SAT}: 
          Hunt, Marathe, Radhakrishnan, and Stearns~\cite[Theorem~3.8]{Hunt1in3SAT}
          gave such a parsimonious reduction.%
          \footnote{Their problem ``1-\textsc{Ex3MonoSat}'' is
            Positive 1-in-3SAT with the additional constraint that
            every clause has exactly three literals.
            Their reduction is also planarity preserving,
            so chaining with their parsimonious reduction from 3SAT to
            Planar 3SAT, we obtain that Planar 1-in-3SAT is also ASP-complete.}
          See also \cite[Lemma 2.1]{PathPuzzles_GC}.

        \item \textbf{Positive 1-in-3SAT $\to$ Tripartite Edge-Disjoint Triangle Partition}:
        We follow a simplification of a reduction from FCP 1-in-3SAT
        to FCP Tripartite Edge-Disjoint Triangle Partition
        \cite[Theorem~12]{FewestClues_TCS},
        which in turn is based on a reduction from 3SAT to
        Tripartite Edge-Disjoint Triangle Partition \cite{Holyer-1981}.
        By reducing from Positive 1-in-3SAT, we simplify the
        reduction of \cite{FewestClues_TCS} by avoiding negative literals.

        We represent each variable by a sufficiently large
        triangular grid of vertices,
        with opposite sides of a parallelogram identified to form a flat torus,
        as shown in Figure~\ref{fig:triangle partition T}.
        This grid has exactly two solutions, corresponding to
        true (the triangles in Figure~\ref{fig:triangle partition T})
        and false (the triangles in Figure~\ref{fig:triangle partition F});
        note that the two solutions consist of exactly the same edges,
        and cover each exactly once.
        For each clause $(x,y,z)$, we pick one triangle of positive
        orientation, remove its edges, and unify the corresponding vertices
        of these triangles and of the three neighboring triangles
        of negative orientation,
        as shown in Figure~\ref{fig:triangle partition clause}.
        Exactly one variable must choose the true state so as to cover
        the edges surrounding the unified hole exactly once.
        By choosing the variable gadgets large enough,
        we can ensure that the clause gadgets are disjoint from each other.
        Each gadget has a unique way to implement a given assignment,
        so this reduction is parsimonious.

        \begin{figure}
          \centering
          \subcaptionbox{\label{fig:triangle partition T} Variable gadget in true state. A/B denote unification to form torus.}
            {\includegraphics[page=1,scale=0.28]{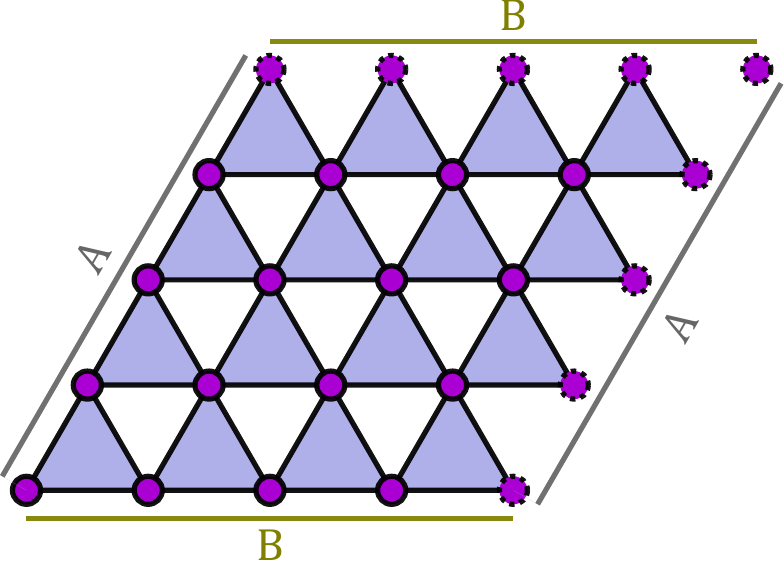}}%
          \hfill
          \subcaptionbox{\label{fig:triangle partition F} Variable gadget in false state. A/B denote unification to form torus.}
            {\includegraphics[page=2,scale=0.28]{figs/triangle-partition.pdf}}%
          \hfill
          \subcaptionbox{\label{fig:triangle partition clause} Clause gadget bringing together three variable gadgets. Vertices connected by dashes are unified.}
            {\includegraphics[page=3,scale=0.28]{figs/triangle-partition.pdf}}%
          \caption{Reduction from Positive 1-in-3SAT to Tripartite Edge-Disjoint Triangle Partition.}
          \label{fig:triangle partition}
        \end{figure}

        \item \textbf{Tripartite Edge-Disjoint Triangle Partition $\to$ Numerical 3DM with Distinct Integers}:
        We combine a chain of reductions, from
        Triangle Edge-Disjoint Triangle Partition to
        Latin Square Completion \cite{Colbourn-1984},
        and from Latin Square Completion
        to Numerical 3DM with Distinct Integers \cite{HULETT2008594}.

        If $U = \{u_1,u_2,\ldots\}, V=\{v_1,v_2,\ldots\}, W=\{w_1,w_2,\ldots\}$ is the vertex tripartition, then we do the following:
        \begin{itemize}
            \item Let $q = 2\max\{|U|, |V|, |W|\}$, and let the target sum be $t=19q^6$.
            \item Map each edge $(u_i,w_k)$ to $2q^6 + iq - k \in A$.
            \item Map each edge $(v_j,w_k)$ to $7q^6 + jq^2 + k \in B$.
            \item Map each edge $(u_i,v_j)$ to $t - (9q^6 + jq^2 + iq) = 10q^6 - jq^2 - iq \in C$.
        \end{itemize}
        The lemmas in \cite{HULETT2008594} show that all the integers
        in $A$, $B$, and $C$ are distinct (i.e., we have a valid instance
        of Numerical 3DM with Distinct Integers);
        and that any triple summing to $t$ consists of one element each
        from $A$, $B$, and $C$, with the elements corresponding to a
        triangle in the graph.
        Thus we obtain a bijection between triangle partitions and
        Numerical 3DM solutions, i.e., the reduction is parsimonious.
        \item \textbf{Numerical 3DM with Distinct Integers $\to$ 3-Partition with Distinct Integers}:
        We use standard techniques to relate these problems.
        Convert each integer $a_i$, $b_i$, and $c_i$ in Numerical 3DM
        to integers $8 a_i+1$, $8 b_i+2$, and $8 c_i-3$, respectively,
        in 3-Partition; and convert $t$ to $8t$.
        In particular, all integers are still distinct, because we scale up
        by a factor of $8$ and then shift values by less than~$4$.
        Furthermore, working modulo $8$, every triple of integers summing to
        $t$ must take exactly one $a_i$, one $b_j$, and one $c_k$.
        Therefore we have a parsimonious reduction.
    \end{enumerate}

    Composing these reductions, we obtain that 3-Partition with
    Distinct Integers, and Numerical 3DM with Distinct Integers, are ASP-hard.
    Both problems are NP search problems, so they are ASP-complete.
\end{proof}

%Note also that previous proofs for hardness of clearing and survival Tetris \cite{Tetris_IJCGA, TotalTetris_JIP, ThinTetris_JCDCGGG2019} reduce from 3-Partition, but with \emph{multisets} (i.e., the $a_i$ need not be distinct). Here, we use the version of the problem with distinct integers so that, for our \#P-hardness proofs, no additional multiplicative factors arise due to some of the $a_i$ being the same integer.

%For the purposes of counting problems, we also consider the following problem:

%\begin{problem}[\textbf{Numerical 3-Dimensional Matching with Distinct Integers}]
%    Given three \textbf{sets} $$A = \{a_1, a_2, \ldots , a_n\}, B = \{b_1, b_2, \ldots , b_n\},\text{ and }C = \{c_1, c_2, \ldots , c_n\}$$ of positive integers (i.e., all integers are distinct) and a target sum $t = \frac 1n \sum_{i=1}^n (a_i+b_i+c_i)$, determine whether there is a partition of $A\cup B\cup C$ into $n$ groups $D_1, \ldots, D_n$, each with $3$ elements, 1 from each of $A$, $B$, and $C$, and $\sum_{x\in D_j}x = t$ for all $j$.
%\end{problem}

%%%%%%INCLUDE PROOF HERE%%%%%
%. Furthermore, there is a parsimonious reduction from Numerical 3-Dimensional Matching to 3-Partition --- convert all $a_i$, $b_i$, and $c_i$ to $16a_i+1$, $16b_i+4$, and $16c_i-5$, respectively. This ensures that each group in the corresponding 3-Partition instance must take exactly one $a_i$, one $b_j$, and one $c_k$, so the solutions to both instances are in bijection with each other. Thus, 3-Partition is also ASP-complete.

\section{Tetris with Two Piece Types}\label{sec:2piecetetris}

In this section, we will prove that for any size-$2$ subset $A\subseteq \{\ALL\}$, Tetris clearing with SRS is NP-hard, and the corresponding counting problem is \#P-hard, even if the sequence of pieces given to the player only contains the piece types in $A$. We will also show that some of the reductions work under the "hard drop only" model and the "20G" model. Refer to Table \ref{tab:subset} for a table of all of our results.

All of our reductions are from 3-Partition with Distinct Integers and are in the same flavor as the reduction for clearing 3-tris with rotation as given in the Total Tetris paper \cite{TotalTetris_JIP}, which we will use some terminology from. In particular, the reductions will involve a starting board involving $\frac n3$ structures, which we will call "\defn{bottles}", of equal height of $\Theta(t\cdot \text{poly}(n))$, spaced sufficiently far apart so that bottles do not interact with each other except for line clears, and possibly along with an additional structure, which we will call a "\defn{finisher}", to the right of the rightmost bottle. 

Each bottle consists of a neck portion with $n$ constant-sized "\defn{top segments}", a body portion with $t$ $\text{poly}(n)$-sized "\defn{units}", and possibly $O(n)$ extra lines either above the neck portion, between the "top segments", between the neck portion and the body portion, and/or below the body portion that get cleaned up after the rest of the lines. To simplify our arguments, we make the size of each unit larger than the size of the neck portion.

The finisher will be a structure that specifically prevents the rows in the body portion from clearing before all of the top segments are cleared, and is located in same rows as the body portion of the bottles when required. We will use three types of finishers, a $\JJ$ finisher, an $\LL$ finisher (which is a vertically symmetric version of the $\JJ$ finisher), and a $\TT$ finisher, shown in Figure \ref{Finishers}. Note that the finishers can be adapted to any number of rows larger than $4$, and there is exactly one way to clear each type of finisher.

\begin{figure}[ht]
  \centering
  \begin{subfigure}[b]{0.3\textwidth}
    \centering
    \includegraphics[width=120pt]{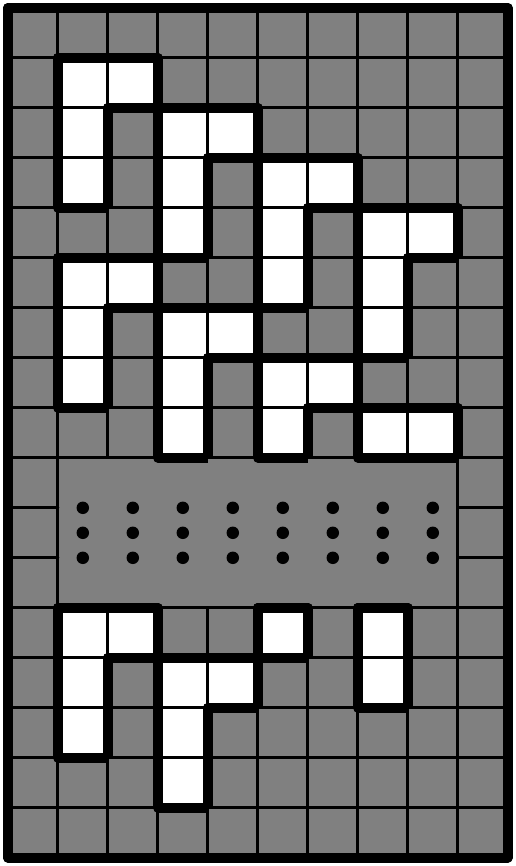}
    \caption{}
  \end{subfigure}
  \begin{subfigure}[b]{0.3\textwidth}
    \centering
    \includegraphics[width=120pt]{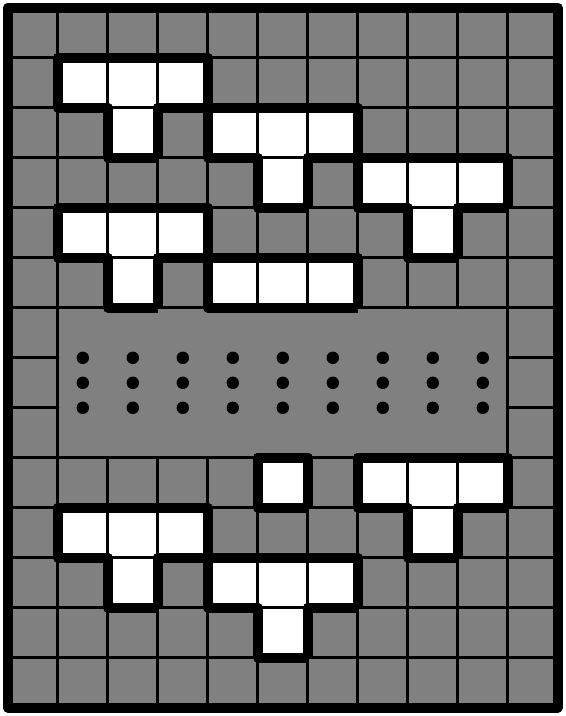}
    \caption{}
  \end{subfigure}
  \begin{subfigure}[b]{0.3\textwidth}
    \centering
    \includegraphics[width=120pt]{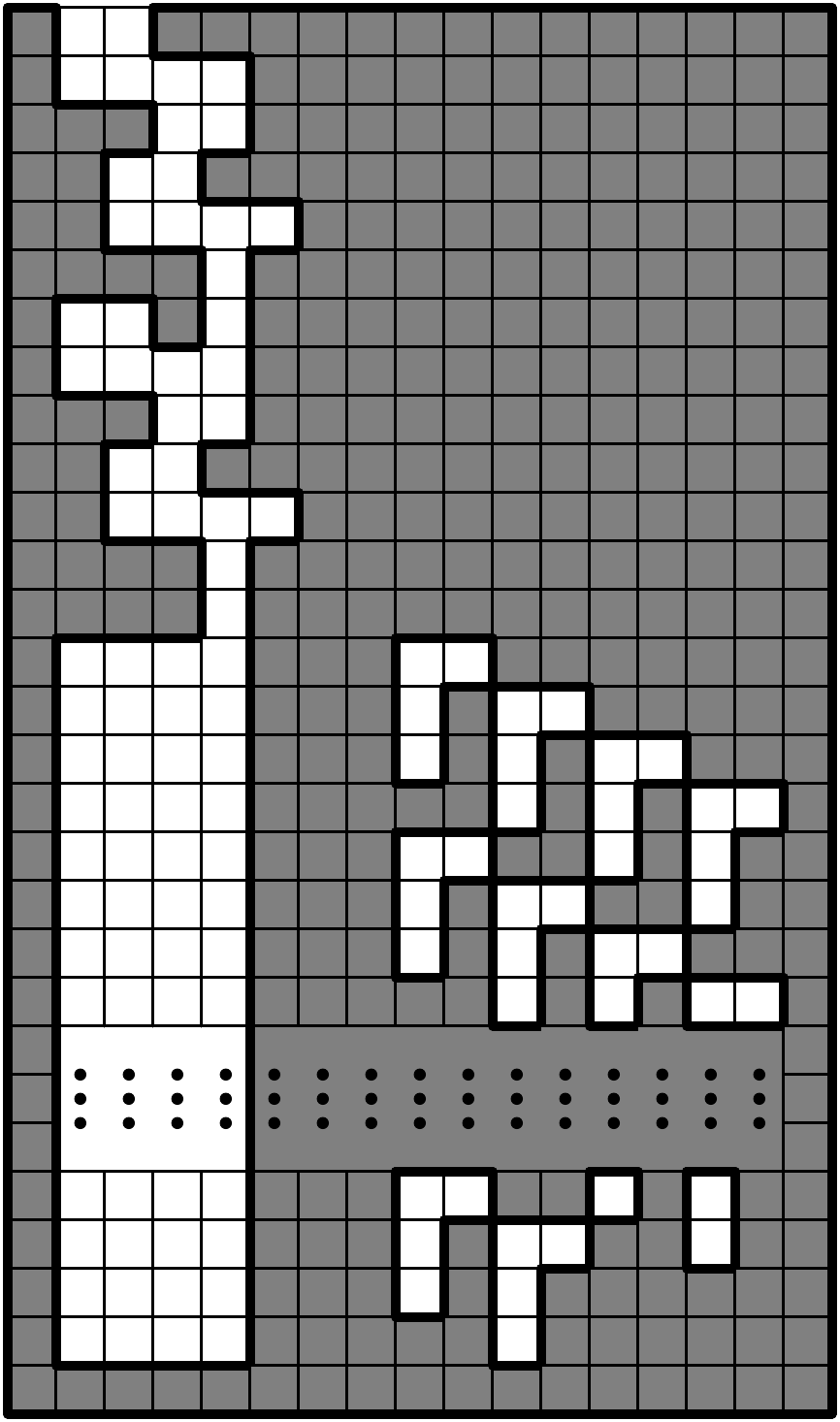}
    \caption{}
  \end{subfigure}
  \caption{The $\JJ$ and $\TT$ finishers (the $\LL$ finisher can be obtained by reflecting the $\JJ$ finisher through a vertical line), and an example of the $\JJ$ finisher next to a bottle (in the ${\OO, \JJ}$ setup).}
  \label{Finishers}
\end{figure}

For each element $a_i\in A$, we create a sequence of pieces $S_i$, which can be decomposed into three subsequences:
\begin{itemize}
    \item \defn{Priming sequence}: A piece sequence that, if used correctly, properly blocks all bottles but one in the same "top segment", and if used incorrectly, either directly "overflows" the bottle (i.e., puts blocks above the line under which all of our pieces must go) or "clogs" the bottle (i.e., improperly blocks the bottle and prevents the player from being able to clear the lines necessary to re-open the bottle). For all of the bottle structures except for the one for $\{\OO, \TT\}$, the pieces in the priming sequence cannot rotate or translate below the topmost "top segment" under SRS, and any piece placed into a "top segment" of a bottle prevents any piece in the filling sequence from reaching the body portion of that bottle.
    \item \defn{Filling sequence}: A piece sequence of length $\Theta(a_i)$ that "fills" $a_i$ units in the body portion of the unblocked bottle. If there are not enough units left to fill, then the pieces corresponding to one of the units will cause an overflow due to there not being enough empty space in the neck portion for all of the pieces (using the fact that the size of each unit is larger than the size of the neck portion).
    \item \defn{Closing sequence}: A piece sequence that properly clears the lines corresponding to the "top segment" blocked by the priming sequence and resets the states of the neck portion of the bottles (albeit with one less "top segment").
\end{itemize}

We also have a \defn{finale sequence} $F$, possibly the empty sequence, which helps clear any finishers on the board and the remaining lines on the board after the lines corresponding to the neck and body portions have been cleared.

In this section, when we write a sequence of pieces, we will use parentheses around sequences, commas between different piece types, and exponentiation to denote repeated pieces of the same piece type. For example, a sequence written as $(\TT^2, \OO^3, \SS)$ consists of 2 $\TT$s, 3 $\OO$s, and an $\SS$, in that order. The sequence of pieces given to the player will be of the form $(S_1, S_2, \ldots, S_n, F)$.

\subsection{General Argument}\label{sec:genarg}

We provide a very general argument for why these reductions work. If there exists a valid 3-partition $(D_1, \ldots, D_{n/3})$ for $\{a_1, a_2, \ldots, a_n\}$, then for each $S_i$, determine the corresponding $j_i$ such that $a_i\in D_{j_i}$, then use the priming sequence to block all bottles properly except for the $(j_i)$th one, the filling sequence to fill $a_i$ units in the body portion of the $(j_i)$th bottle, and the closing sequence to reset the states of the neck portion of the bottles. After all the $S_i$ are used in this way, all lines corresponding to the "top segments" will be cleared as there are $n$ such "top segments" with each $S_i$ clearing exactly one of them, and each bottle will be filled to exactly $t$ units. Thus, in the case where there are no pieces in the finale sequence, the lines corresponding to the body portions of the bottles will be cleared, meaning that no lines remain and we have cleared the board, and in the case where there are pieces in the finale sequence, the only lines that remain are those that can be cleared by the finale sequence. Thus, the sequence $(S_1, S_2, \ldots, S_n, F)$ can clear the board.

Conversely, if the sequence $(S_1, S_2, \ldots, S_n, F)$ can clear the board, then we claim that there is a corresponding 3-partition for $\{a_1, a_2, \ldots, a_n\}$. In particular, for each $S_i$, the priming sequence must properly block all but one bottle, say the $(j_i)$th bottle, forcing all the pieces in the filling sequence into the $(j_i)$th bottle. The filling sequence must then fill exactly $a_i$ units in the $(j_i)$th bottle before the closing sequence, and it must do so without overfilling the body portion of the bottle, as otherwise there will be an overflow in that bottle. In particular, this means that, for each $1\leq j\leq \frac n3$, the sum of the $a_i$ corresponding to the $S_j$ that filled some units in the $j$th bottle must be at most $t$. However, since $\sum a_i$ is exactly $t\left(\frac n3\right)$, the sum of the $a_i$ corresponding to the $S_j$ that filled some units in the $j$th bottle must actually be exactly $t$. In other words, there is a way to partition the $a_i$ into $\frac n3$ subsets $D_1, \ldots, D_{n/3}$ such that the sum of the elements in each subset is $t$. Thus, there is a corresponding 3-partition for $\{a_1, a_2, \ldots, a_n\}$.

This general argument shows how \textsc{yes} instances of the two problems (3-Partition with Distinct Integers, Tetris clearing with SRS and restricted piece types) are equivalent, and hence that this reduction works. The rest of the subsections in this section show how to set up the bottles for each size-$2$ subset of piece types.

\subsection{Subsets with $\II$ Pieces}\label{sec:2piecewithi}

First we show how the reduction in the Total Tetris paper \cite{TotalTetris_JIP} can be easily adapted to any subset of pieces with $\II$ pieces plus an additional piece type:

\begin{proposition}\label{prop:IL}
    Tetris clearing with SRS is NP-hard, and the corresponding counting problem is \#P-hard, even if:
    \begin{itemize}
        \item The type of pieces in the sequence given to the player is restricted to any of $\{\II, \JJ\}$, $\{\II, \LL\}$, $\{\II, \OO\}$, $\{\II, \SS\}$, $\{\II, \TT\}$, or $\{\II, \ZZ\}$;
        \item The model being considered is "hard drops only" and type of pieces in the sequence given to the player is restricted to any of $\{\II, \JJ\}$, $\{\II, \LL\}$, $\{\II, \OO\}$, or $\{\II, \TT\}$; or
        \item The model being considered is "20G" and the type of pieces in the sequence given to the player is restricted to either $\{\II, \JJ\}$ or $\{\II, \LL\}$.
    \end{itemize}
\end{proposition}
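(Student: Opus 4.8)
The plan is to instantiate the general reduction framework of Section~\ref{sec:genarg} by exhibiting, for each listed subset, a concrete bottle together with priming, filling, and closing subsequences whose behavior matches the three invariants that the general argument requires; since the forward and reverse directions are already handled abstractly there, the entire task reduces to verifying these local gadget properties. Because $\II$ appears in every subset, I would use $\II$ pieces as the filling pieces: each bottle body is a narrow column (one cell wide after accounting for pre-filled walls), each \emph{unit} is a height-$4$ block of that column, and a single vertically-oriented $\II$ hard-drops straight down to fill one unit, so the filling subsequence of length $\Theta(a_i)$ deposits exactly $a_i$ units. I would reserve the second piece type (and horizontal $\II$s where convenient) for plugging and clearing the constant-sized top segments.

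First I would fix a base construction, say for $\{\II, \LL\}$ or $\{\II, \OO\}$, adapting the $3$-omino clearing reduction of the Total Tetris paper~\cite{TotalTetris_JIP} to tetrominoes. I would draw each top segment as a notch plugged by exactly one piece, and shape the columns so that, under SRS, a filling $\II$ can neither rotate nor slide sideways once it has descended past the topmost unblocked top segment; this is what makes the priming sequence block all bottles but one, and what makes a plugged top segment stop every subsequent filling piece. I would then verify the remaining two invariants: that attempting the $(t+1)$st unit overflows, using the stated fact that a unit is larger than the neck so there is no room to absorb the extra pieces; and that the closing sequence completes and clears precisely the top segment opened by the priming sequence, resetting the neck with one fewer top segment. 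Pairing the bottle with the appropriate finisher of Figure~\ref{Finishers}, whose clearing is unique, the map from $3$-partitions to clearings is a bijection, so the reduction is parsimonious and \#P-hardness follows from the \#P-completeness of counting solutions to 3-Partition with Distinct Integers (Theorem~\ref{3partasp}).

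Next I would adapt the neck to the remaining second piece types $\TT$, $\SS$, $\ZZ$, $\JJ$ by reshaping each notch to the silhouette of its plugging piece while reusing the $\II$-filled body verbatim; the vertical symmetry of the SRS kick tables noted in Section~\ref{sec:srs} yields the $\JJ/\LL$ and $\SS/\ZZ$ variants by reflection, and since the body relies only on straight $\II$ drops, the $\II$ kick asymmetry is irrelevant to that reflection. The most delicate part, and the main obstacle, is the two restricted models. Under ``hard drops only'' I lose the ability to lower a piece gently into a notch, and under ``20G'' every piece slams to the floor the instant it spawns and can be repositioned only by kicks during rotation; I therefore must design each notch so that the intended plug is reachable by an explicit SRS spin or wall kick (so the honest solution survives) while no stray placement becomes newly available (so soundness is preserved). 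I expect this kick analysis to be exactly why ``20G'' is claimed only for $\{\II, \JJ\}$ and $\{\II, \LL\}$, and ``hard drops only'' for $\{\II, \JJ\}$, $\{\II, \LL\}$, $\{\II, \OO\}$, and $\{\II, \TT\}$: the required spins exist only for those plugging pieces. For each such model I would exhibit the concrete spin (referencing the spin catalog in the appendix) realizing the plug, and argue, via the same no-sideways-motion-below-the-top-segment property, that nothing else is attainable, so the equivalence of the general argument carries over unchanged.
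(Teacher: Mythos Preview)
Your overall plan---instantiate the general argument of Section~\ref{sec:genarg} with $\II$ as the filler and the second piece type as the blocker---matches the paper's approach. However, two points in your proposal are substantively wrong.

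First, the reduction is \emph{not} parsimonious. With $\frac n3$ interchangeable bottles and $\frac n3-1$ interchangeable blocking pieces per priming sequence, each 3-partition corresponds to exactly $(\frac n3)!\cdot((\frac n3-1)!)^n$ clearings, not one. The paper establishes \#P-hardness by observing that this blowup factor is computable (the reduction is ``$c$-monious'' for an explicit $c$), not by a bijection. Your claim that the finisher makes clearings unique is also off: the paper uses no finisher at all for any of the $\II$-plus-one subsets.

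Second, and more seriously, your reasoning for the ``hard drops only'' and ``20G'' cases is inverted. You propose to exhibit an SRS spin realizing the plug in each restricted model and attribute the listed piece subsets to ``the required spins exist only for those plugging pieces.'' The actual reason is the opposite: the $\{\II,\JJ\}$, $\{\II,\LL\}$, $\{\II,\OO\}$, $\{\II,\TT\}$ setups work under hard drops precisely because they require \emph{no} spins or last-second rotations---every piece can be hard-dropped directly into place. The $\{\II,\SS\}$ and $\{\II,\ZZ\}$ setups are excluded from hard drops because they \emph{do} require a spin (Figure~\ref{SSpin2}) to seat the blocker. Likewise, 20G works for $\{\II,\JJ\}$ and $\{\II,\LL\}$ not because of any kick, but because those bottles are two cells wide while horizontally oriented $\II$, $\JJ$, $\LL$ pieces are at least three cells wide, so a piece under 20G can slide across the tops of all bottles before dropping into the chosen one. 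Your proposed spin catalog would not help here, and the argument you sketch would not go through.
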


\begin{proof}
    Refer to Figures \ref{ILSetup} (which shows the bottle structure for $\{\II, \LL\}$) and \ref{I+OtherSetup} (which shows the bottle structure for the rest of the subsets). We focus on the reduction for $\{\II, \LL\}$; the rest of the reductions are similar. No finishers are required for any of the setups.

    % IL Setup Images
\begin{figure}[ht]
  \centering
  \begin{subfigure}[b]{0.1\textwidth}
    \centering
    \includegraphics[width=40pt]{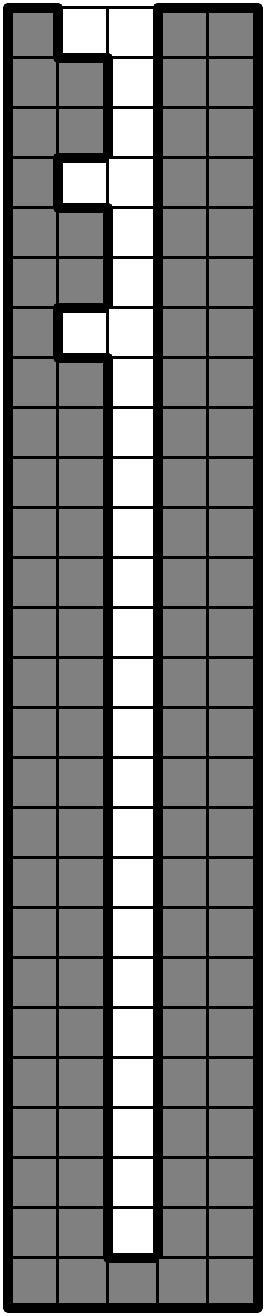}
    \caption{}
  \end{subfigure}
  \begin{subfigure}[b]{0.1\textwidth}
    \centering
    \includegraphics[width=40pt]{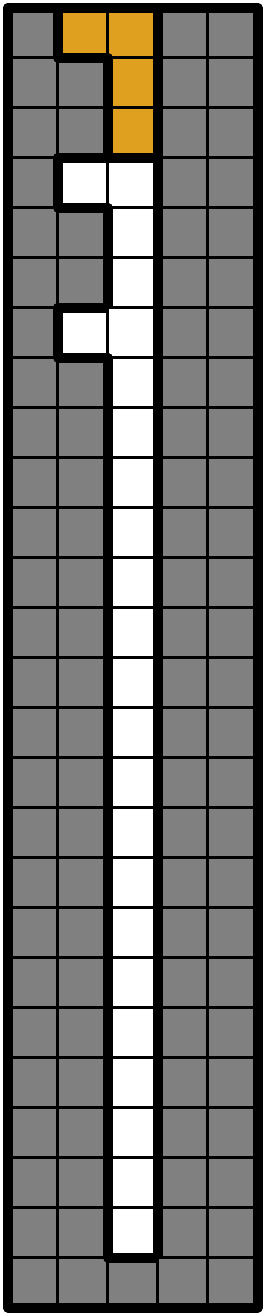}
    \caption{}
  \end{subfigure}
  \begin{subfigure}[b]{0.1\textwidth}
    \centering
    \includegraphics[width=40pt]{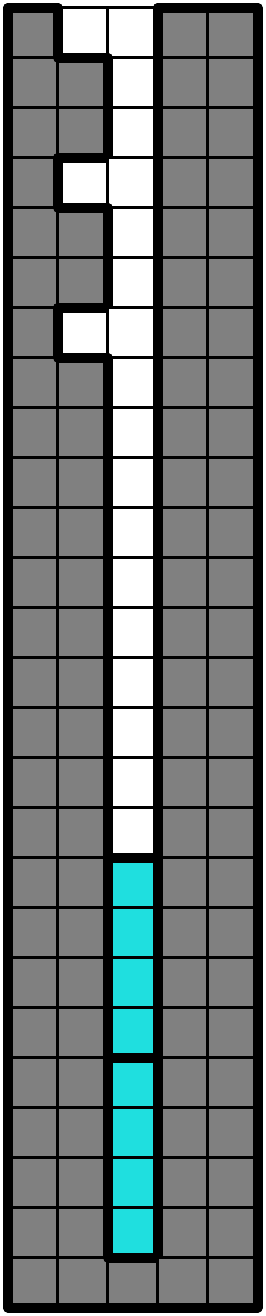}
    \caption{}
  \end{subfigure}
  \begin{subfigure}[b]{0.1\textwidth}
    \centering
    \includegraphics[width=40pt]{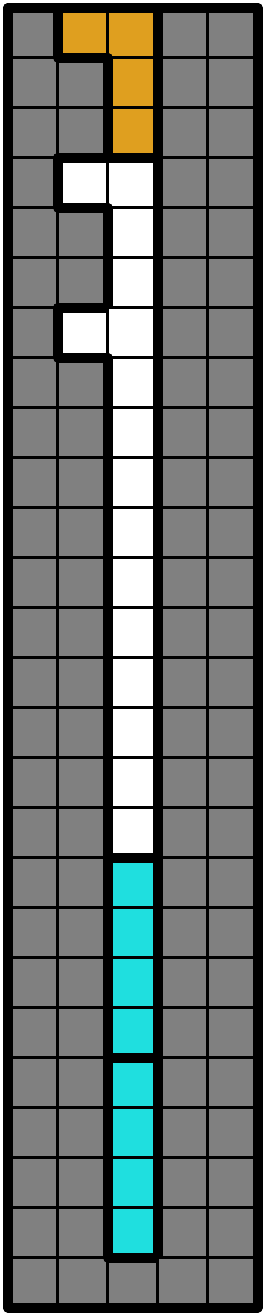}
    \caption{}
  \end{subfigure}
  \begin{subfigure}[b]{0.1\textwidth}
    \centering
    \includegraphics[width=40pt]{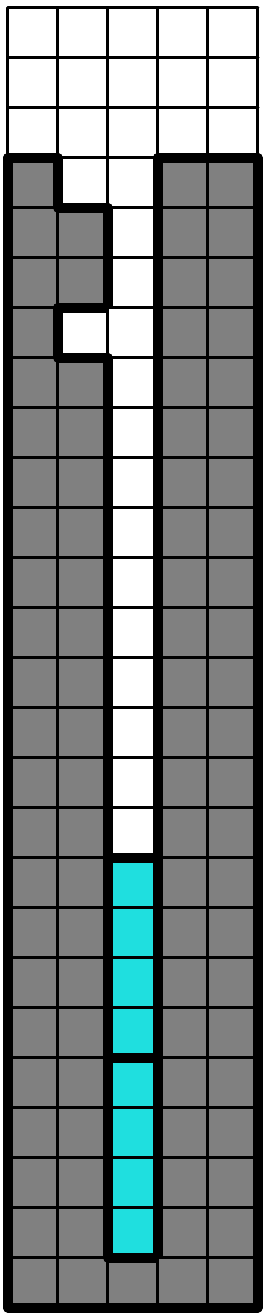}
    \caption{}
  \end{subfigure}
  \caption{The bottle structure for $\{\II, \LL\}$. (b) shows how an $\LL$ piece must block a bottle during the priming sequence. (c) shows the result of a filling sequence where $a_i=2$. (d) shows where the $\LL$ piece in the closing sequence must go. (e) shows the result of the closing sequence.}
  \label{ILSetup}
\end{figure}

    Here, the "top segments" are the $\LL$-shaped holes at the top of the bottle, and a "unit" is one $8n\times 1$ rectangle towards the bottom of the bottle (note that the size of the neck portion is $4n$, which is smaller than $8n$, the size of a "unit"); no extra lines are necessary. Each $a_i$ is encoded by the sequence of pieces $(\LL^{n/3 - 1}, \II^{2na_i}, \LL)$. The priming sequence is $(\LL^{n/3 - 1})$, the filling sequence is $(\II^{2na_i})$, and the closing sequence is $(\LL)$.
    
    Our finale sequence is empty.

    Since we have all of the required components, we can apply the general argument from Section \ref{sec:genarg} to conclude NP-hardness. 
    
    The proofs for the other piece types are similar: replace $\LL$ with $\JJ$, $\OO$, $\TT$, $\SS$, or $\ZZ$, respectively, and use the corresponding bottle structure in Figure \ref{I+OtherSetup}. Notice that the $\SS$ and $\ZZ$ reductions require the $\SS/\ZZ$ spin in Figure \ref{SSpin2} to rotate the piece into the uppermost "top segment".

    These reductions are $((\frac n3)!(\frac n3-1)!^n)$-monious:%
    \footnote{We call a reduction \defn{$c$-monious} if it blows up the number of solutions by exactly a multiplicative factor of~$c$.}
    for each solution to the 3-Partition with Distinct Integers instance, there are $(\frac n3)!$ ways to "permute" which subsets correspond to which bottles, and for each $a_i$ sequence, there are $(\frac n3-1)!$ ways to permute which piece in the priming sequence blocks which bottle. As such, we can also conclude \#P-hardness.

    % I+other Setup Images
\begin{figure}[ht]
  \centering
  \begin{subfigure}[b]{0.18\textwidth}
    \centering
    \includegraphics[width=40pt]{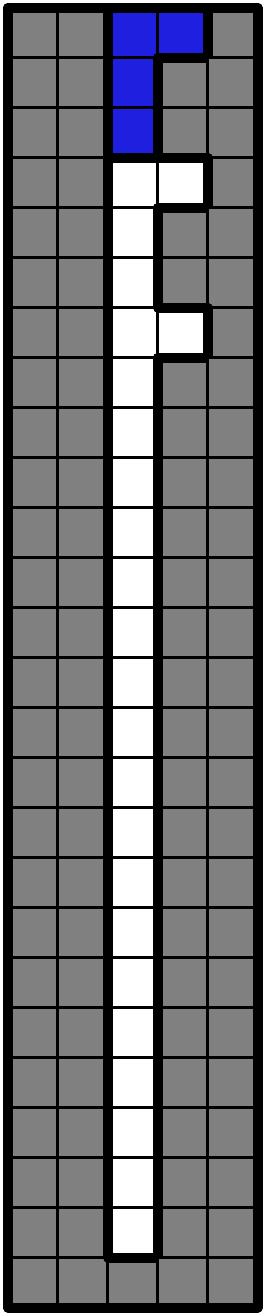}
    \caption{$\{\II, \JJ\}$}
  \end{subfigure}
  \begin{subfigure}[b]{0.18\textwidth}
    \centering
    \includegraphics[width=40pt]{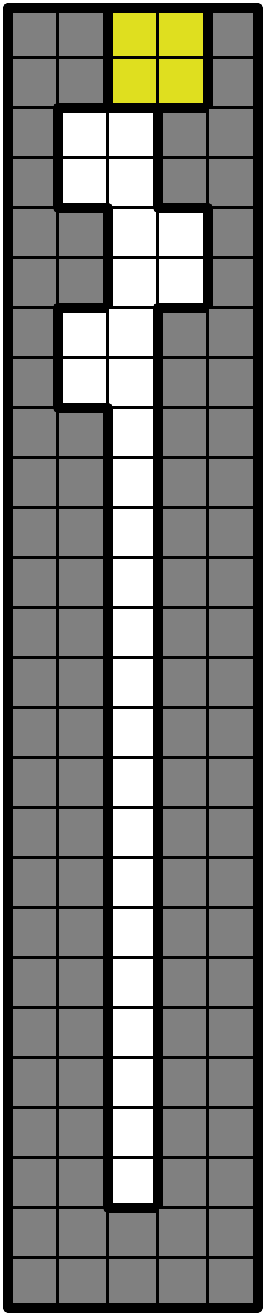}
    \caption{$\{\II, \OO\}$}
  \end{subfigure}
  \begin{subfigure}[b]{0.18\textwidth}
    \centering
    \includegraphics[width=40pt]{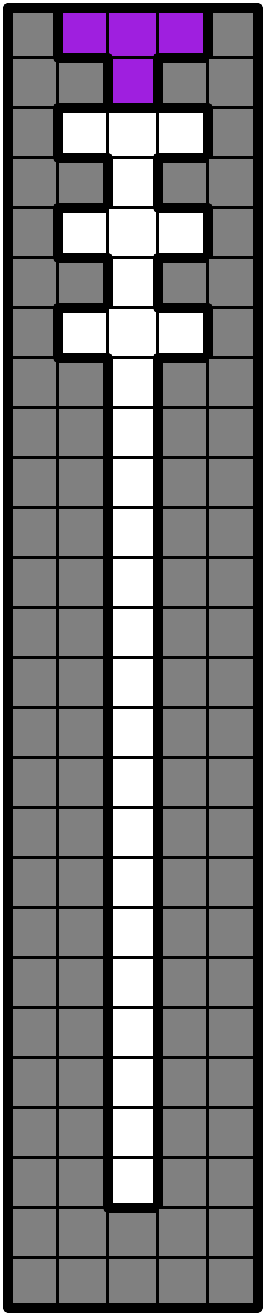}
    \caption{$\{\II, \TT\}$}
  \end{subfigure}
  \begin{subfigure}[b]{0.18\textwidth}
    \centering
    \includegraphics[width=40pt]{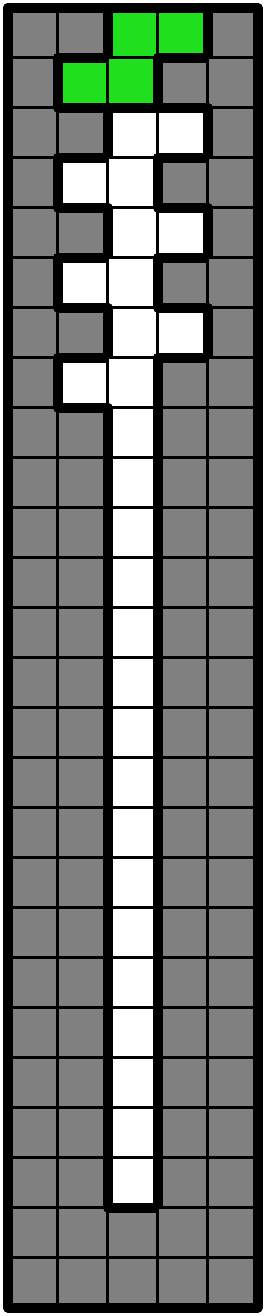}
    \caption{$\{\II, \SS\}$}
  \end{subfigure}
  \begin{subfigure}[b]{0.18\textwidth}
    \centering
    \includegraphics[width=40pt]{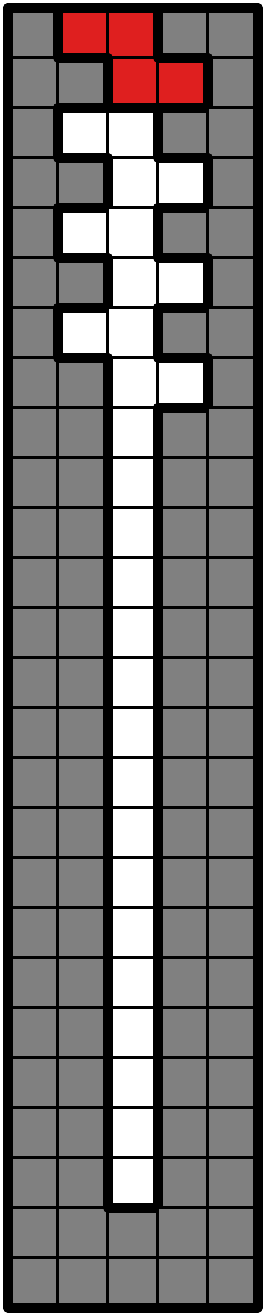}
    \caption{$\{\II, \ZZ\}$}
  \end{subfigure}
  \caption{The bottle structures for the other subsets containing $\II$, including how the non-$\II$ piece must block a bottle during the priming and closing sequence.}
  \label{I+OtherSetup}
\end{figure}

    Notice that the $\{\II, \JJ\}$, $\{\II, \LL\}$, $\{\II, \OO\}$, and $\{\II, \TT\}$ set-ups do not require any last-second rotations (and in fact do not require the kicking system of SRS), and so all pieces can be hard-dropped from the top of the board for the $\{\II, \JJ\}$, $\{\II, \LL\}$, $\{\II, \OO\}$, and $\{\II, \TT\}$ set-ups. 
    
    In addition, because all buckets are two units wide in the $\{\II, \JJ\}$ and $\{\II, \LL\}$ set-ups and the $\II$, $\JJ$, and $\LL$ pieces are at least three units wide when oriented horizontally, the $\{\II, \JJ\}$ and $\{\II, \LL\}$ set-ups work even if pieces experience 20G gravity: when any piece drops, orient the piece horizontally, move it left or right to the desired bucket, and rotate accordingly.
\end{proof}

\subsection{Other Subsets with $\OO$ Pieces}\label{sec:2piecewitho}

We now move on to all the remaining subsets which contain $\OO$ pieces.

\begin{proposition}\label{prop:OJ}
    Tetris clearing with SRS is NP-hard, and the corresponding counting problem is \#P-hard, even if the type of pieces in the sequence given to the player is restricted to either $\{\OO, \JJ\}$ or $\{\OO, \LL\}$.
\end{proposition}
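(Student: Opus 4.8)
The plan is to instantiate the bottle/finisher template of Section~\ref{sec:genarg}, using $\JJ$ pieces as the priming and closing pieces and $\OO$ pieces as the filling pieces; the $\{\OO,\LL\}$ case then follows by a left--right reflection, since the SRS kick data is vertically symmetric under $\JJ\leftrightarrow\LL$, the $\OO$ piece is symmetric, and the $\LL$ finisher is the mirror of the $\JJ$ finisher. First I would design a bottle whose $n$ "top segments" are $\JJ$-shaped notches in the neck, sized so that the only way a $\JJ$ piece can come to rest in the topmost exposed segment is via a forced SRS kick (a $\JJ$-spin of the kind illustrated in Figure~\ref{SRSrotate}); any other placement either leaves blocks above the neck (an overflow) or wedges the piece so that the segment can never be completed (a clog). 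Crucially, I want the geometry to guarantee that a $\JJ$ resting in a top segment completely seals the neck of that bottle, so that no subsequent $\OO$ can pass into the body, and that a primed $\JJ$ can neither translate nor rotate down past the topmost segment, exactly as the template requires.

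Second, I would lay out the body as a stack of $t$ "units," each wider than the $4n$-cell neck and fillable only by a forced run of $\OO$ pieces; making each unit wider than the neck ensures that attempting to fill one more unit than a bottle has room for pushes blocks up into the neck and overflows, which is the mechanism the general argument needs. The encoding of $a_i$ is then $(\JJ^{n/3-1}, \OO^{m\,a_i}, \JJ)$ for the appropriate constant $m$: the priming sequence $\JJ^{n/3-1}$ seals every bottle but one, the filling sequence $\OO^{m\,a_i}$ drops into the unique open bottle and fills $a_i$ units, and the closing $\JJ$ completes and clears the rows of the chosen top segment, resetting the neck with one fewer segment.

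Because the $\OO$-filled units do not by themselves complete a body row — those rows are left deliberately incomplete in the columns occupied by the $\JJ$ finisher placed to the right of the rightmost bottle (the $\{\OO,\JJ\}$ configuration of Figure~\ref{Finishers}) — the body rows cannot clear during the priming/filling/closing phases, keeping the neck geometry fixed throughout. Only after all $n$ top segments have been cleared and every bottle is filled to height exactly $t$ does the (unique) finale sequence $F$ complete the $\JJ$ finisher, simultaneously completing and clearing all $t$ body rows. With these gadgets in place, the general argument of Section~\ref{sec:genarg} yields a correct reduction from 3-Partition with Distinct Integers, establishing NP-hardness.

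For \#P-hardness I would count solutions exactly as in the proof of Proposition~\ref{prop:IL}: each 3-Partition solution lifts to a fixed number of Tetris solutions — $(\tfrac n3)!$ ways to assign subsets to bottles, times $(\tfrac n3-1)!^{\,n}$ ways to choose which priming $\JJ$ seals which bottle — while the finisher and its finale admit exactly one clear, so the reduction is $c$-monious for a computable $c$ and hence parsimonious up to this factor. The main obstacle I anticipate is the top-segment design: I must verify against the SRS kick table (Table~\ref{tab:kickdataMain}) that the intended $\JJ$-spin into a notch actually succeeds, that no unintended kick lets a priming $\JJ$ slip deeper into the bottle or lets an $\OO$ bypass a sealed neck, and that the closing $\JJ$ has exactly one legal resting place; getting all of these to hold simultaneously, while preserving the "unit wider than neck" overflow mechanism, is the delicate part of the construction.
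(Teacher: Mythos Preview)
Your plan reverses the roles the paper actually uses: in the paper's construction the \emph{priming and closing} pieces are $\OO$ and the \emph{filling} pieces are $\JJ$, with encoding $(\OO^{\,n-1}, \JJ^{\,8na_i}, \OO, \JJ^{\,n/3})$ and a $\JJ$-spin sequence that threads $\JJ$ pieces through a neck designed to stop $\OO$ pieces.  Your reversal is not merely cosmetic; it runs into a genuine geometric obstacle that the paper's choice was made to avoid.  The $\OO$ piece cannot rotate under SRS, so for it to serve as the filling piece it must be able to reach the body by pure translation, which forces the neck to contain an unobstructed $2$-wide channel (possibly with jogs that preserve a $2\times 2$ overlap).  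But a $\JJ$ in its $R$ or $L$ orientation is also $2$ cells wide, and with rotations and kicks available it can follow any such channel at least as well as an $\OO$ can.  Hence nothing stops the player from sending a priming $\JJ$ straight past the top-segment notch into the body, violating the template requirement that priming pieces ``cannot rotate or translate below the topmost top segment''; once that fails, the player can leave multiple bottles unblocked during a filling phase and distribute $\OO$ pieces among them, decoupling the Tetris instance from 3-Partition.

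The paper's orientation of the roles sidesteps this entirely: using $\OO$ as the blocker is easy because any offset or narrowing in the neck is enough to trap a non-rotating $2\times 2$ square, while $\JJ$ as the filler can still traverse such a neck via the $\JJ$-spins of Figures~\ref{JSpin4} and~\ref{JSpin-long1}.  If you wish to keep your reversal, you would need to exhibit a concrete neck geometry through which $\OO$ can translate but no $\JJ$ placement or SRS kick sequence can pass --- and I do not see how to do that, since the translational reach of $\OO$ is strictly contained in that of $\JJ$.  A secondary point: even if the gadget worked, your $c$-monious count would need to account for the number of ways $\OO$ pieces can fill each body (and any additional ordering freedom in your closing sequence), so it would not in general coincide with the $(\tfrac n3)!\,(\tfrac n3-1)!^{\,n}$ of Proposition~\ref{prop:IL} unless the body is exactly $2$ columns wide.
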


\begin{proof}
    First we discuss the $\{\OO, \JJ\}$ case. Refer to Figure \ref{OJSetup}(a), which shows the bottle structure for $\{\OO, \JJ\}$. We will also use a $\JJ$ finisher in our setup to prevent rows in the body portion from clearing early.

    % OJ Setup Images
\begin{figure}[ht]
  \centering
  \begin{subfigure}[b]{0.1\textwidth}
    \centering
    \includegraphics[width=40pt]{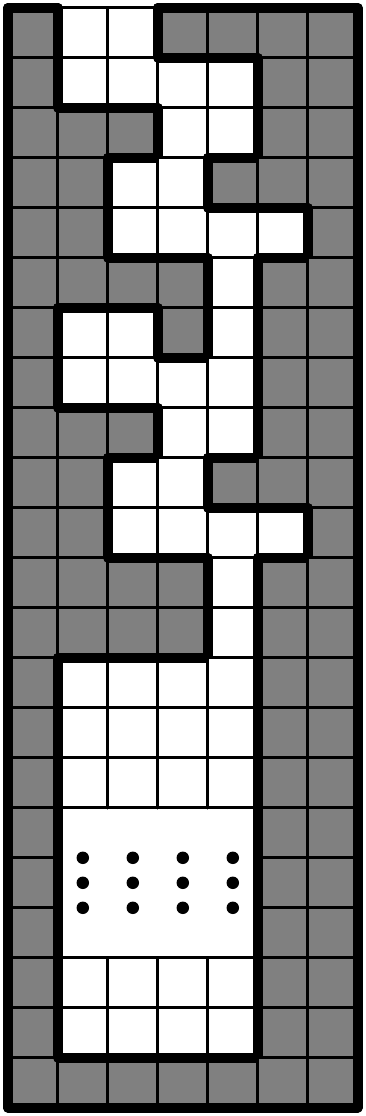}
    \caption{}
  \end{subfigure}
  \begin{subfigure}[b]{0.1\textwidth}
    \centering
    \includegraphics[width=40pt]{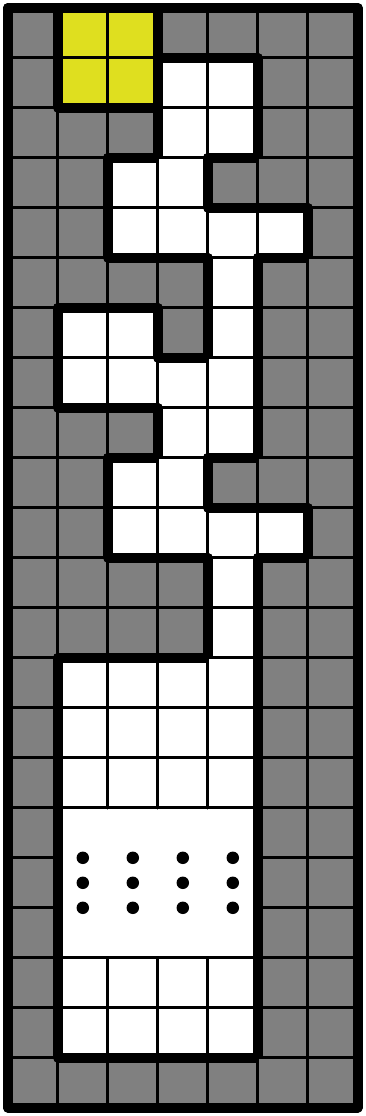}
    \caption{}
  \end{subfigure}
  \begin{subfigure}[b]{0.1\textwidth}
    \centering
    \includegraphics[width=40pt]{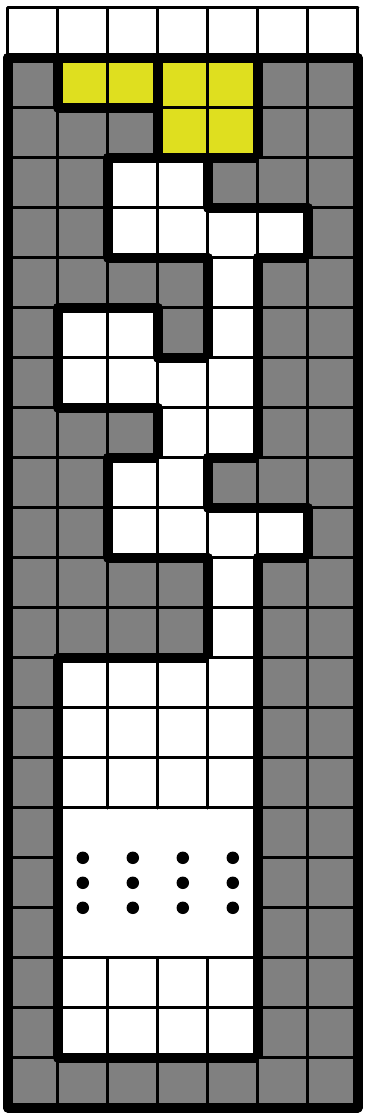}
    \caption{}
  \end{subfigure}
  \begin{subfigure}[b]{0.1\textwidth}
    \centering
    \includegraphics[width=40pt]{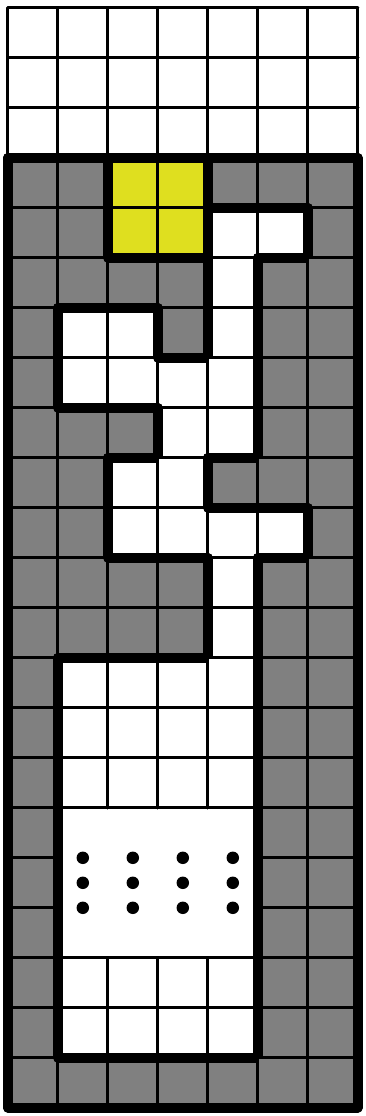}
    \caption{}
  \end{subfigure}
  \begin{subfigure}[b]{0.1\textwidth}
    \centering
    \includegraphics[width=40pt]{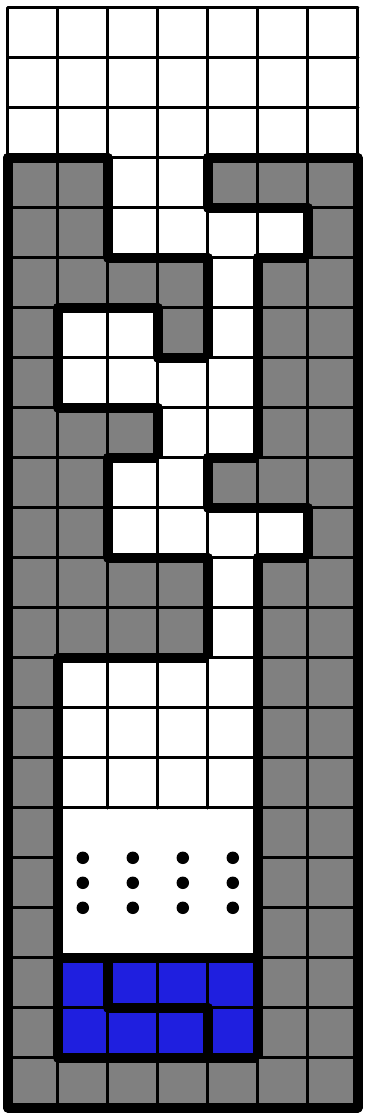}
    \caption{}
  \end{subfigure}
  \begin{subfigure}[b]{0.1\textwidth}
    \centering
    \includegraphics[width=40pt]{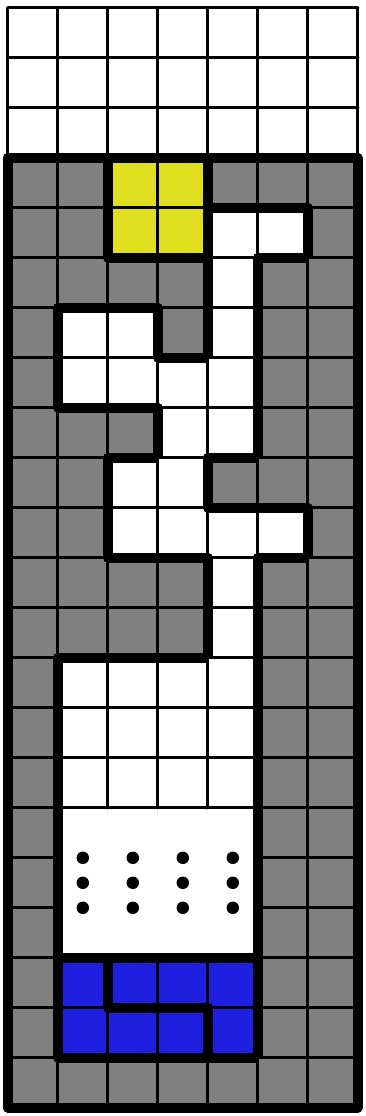}
    \caption{}
  \end{subfigure}
  \begin{subfigure}[b]{0.1\textwidth}
    \centering
    \includegraphics[width=40pt]{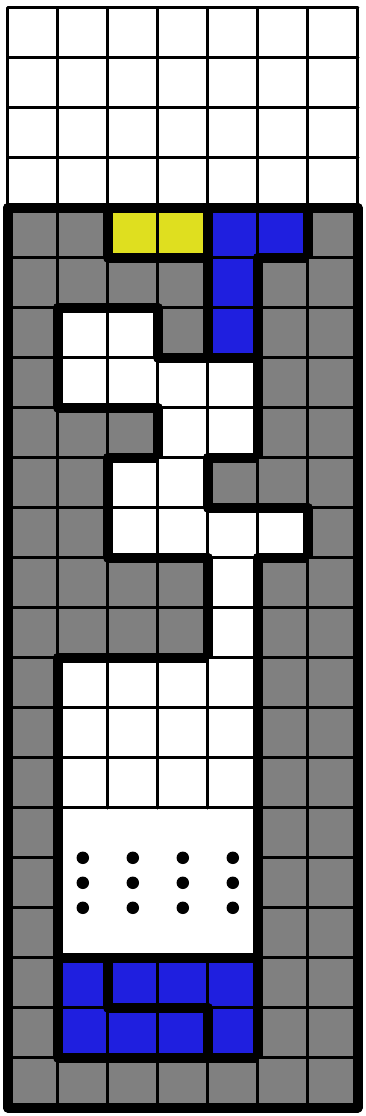}
    \caption{}
  \end{subfigure}
  \begin{subfigure}[b]{0.1\textwidth}
    \centering
    \includegraphics[width=40pt]{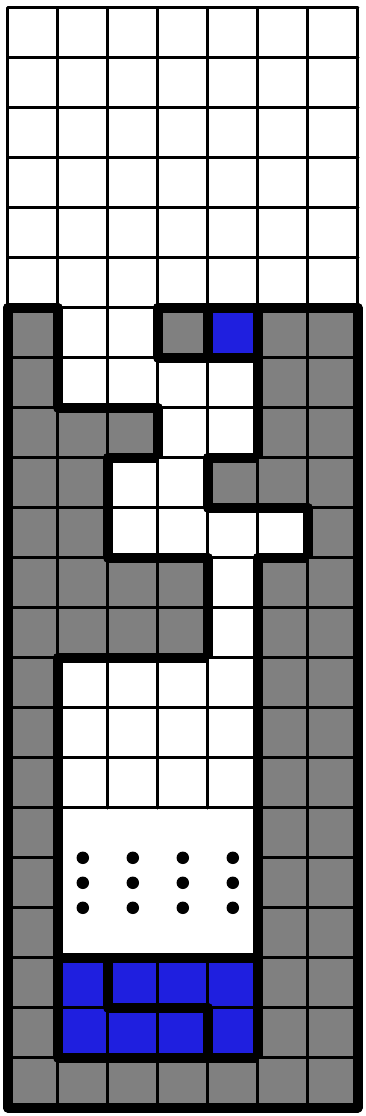}
    \caption{}
  \end{subfigure}
  \caption{The bottle structure for $\{\OO, \JJ\}$. (b--c) show how the first two $\OO$ pieces per bottle must fit during the priming sequence. (d) shows how the $\OO$ pieces must block a bottle during the priming sequence. (e) shows the result of a filling sequence (requires the $\JJ$-spin sequence in Figure \ref{JSpin-long1} to get $\JJ$ pieces through the neck portion and the $\JJ$ spin in Figure \ref{JSpin4} to get $\JJ$ pieces into the body portion). (f--g) show where the pieces in the closing sequence must go. (h) shows the result of the closing sequence.}
  \label{OJSetup}
\end{figure}

    Here, a "top segment" is a pattern of $16$ squares repeating every $7$ lines at the top of the bottle, and a "unit" is one $8n\times 4$ rectangle towards the bottom of the bottle (note that the size of the neck portion is $16n$, which is smaller than $32n$, the size of a "unit"). 
    
    Each $a_i$ is encoded by the sequence of pieces $(\OO^{n-1}, \JJ^{8na_i}, \OO, \JJ^{n/3})$. The priming sequence is $(\OO^{n-1})$, the filling sequence is $(\JJ^{8na_i})$, and the closing sequence is $(\OO, \JJ^{n/3})$, with the ways to block, fill, and re-open a bottle properly shown in Figure \ref{OJSetup}. Discussion on improper piece placements can be found in Appendix \ref{appendix:ojclogs}.

    The finale sequence is $(\JJ^{8nt-2})$ to clear the $\JJ$ finisher.

    Since we have all of the required components, we can apply the general argument from Section \ref{sec:genarg} to conclude NP-hardness.

    Now, for our \#P-hardness analysis, let $f_4(m)$ denote the number of ways to place $m$ $\JJ$ pieces in an rectangle of width $4$ and height $m$ to fill the rectangle exactly, using Tetris rules. This reduction is $\left((\frac n3)!\left(((\frac n3)!)^3(\frac n3-1)!\right)^n(f_4(8nt))^{n/3}\right)$-monious, as for each solution to the 3-Partition with Distinct Integers instance, there are $(\frac n3)!$ ways to "permute" which subsets correspond to which bottles, $((\frac n3)!)^2(\frac n3-1)!$ ways to permute how the pieces in the priming sequence get placed for each $a_i$ sequence (the first $\frac n3$ pieces must be placed as in Figure \ref{OJSetup}(b), then the next $\frac n3$ pieces must be placed as in Figure \ref{OJSetup}(c), before we can reach Figures \ref{OJSetup}(d--f)), $(\frac n3)!$ ways to permute how the $\JJ$ pieces in the closing sequence get placed for each $a_i$ sequence, and $f_4(8nt)$ ways to permute how the $\JJ$ pieces get placed within the body portion of each bottle. As such, we can also conclude \#P-hardness.

    We get a similar argument for $\{\OO, \LL\}$ by vertical symmetry.
\end{proof}

A demo of the $\{\OO, \JJ\}$ bottle structure can be found at \url{https://jstris.jezevec10.com/map/80188}.

\begin{proposition}\label{prop:OS}
    Tetris clearing with SRS is NP-hard, and the corresponding counting problem is \#P-hard, even if the type of pieces in the sequence given to the player is restricted to either $\{\OO, \SS\}$ or $\{\OO, \ZZ\}$.
\end{proposition}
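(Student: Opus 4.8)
The plan is to instantiate the bottle framework of Section~\ref{sec:genarg} exactly as in the $\{\OO,\JJ\}$ construction of Proposition~\ref{prop:OJ}, with $\OO$ pieces performing the priming and closing (blocking and re-opening bottles at the neck) and $\SS$ pieces performing the filling of the body. By the vertical symmetry of the SRS kick data for the pair $\SS\leftrightarrow\ZZ$ (and the symmetry of $\OO$), it suffices to build the $\{\OO,\SS\}$ bottle and its piece sequences; reflecting every gadget through a vertical line then yields the $\{\OO,\ZZ\}$ reduction. So I would focus entirely on $\{\OO,\SS\}$ and invoke symmetry at the end, as Proposition~\ref{prop:OJ} does for $\{\OO,\LL\}$.

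First I would design the neck so that each top segment is a cell pattern that a rigid $2\times2$ $\OO$ block plugs in exactly one way, and so that a plugged top segment seals the only channel through which an $\SS$ could descend to the body. This realizes a priming sequence of $\OO$ pieces that blocks all but one bottle and a closing sequence that re-seals the neck so the top-segment rows can clear (with one fewer top segment). The key verification here is the property promised for all non-$\{\OO,\TT\}$ bottles in Section~\ref{sec:2piecetetris}: since $\OO$ cannot rotate under SRS, I must check that a priming $\OO$ can neither translate nor be spun below the topmost open top segment, so that it genuinely blocks the intended bottle rather than slipping into its body.

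The crux, and the step I expect to be the main obstacle, is the body. A single $\SS$ (or $\ZZ$) tetromino tiles no rectangle, so the units cannot be plain rectangles filled flush by $\SS$ pieces the way $\JJ$ fills the width-$4$ rectangle in Proposition~\ref{prop:OJ}. I would therefore shape each unit as a width-$2$ staircase (parallelogram) whose jagged lower boundary is supplied either by the initial board or by the top of the unit beneath it, so that a fixed number of $\SS$ pieces tiles it flush and leaves a congruent staircase for the next unit; the filling sequence is then $(\SS^{c\,a_i})$ for the appropriate constant~$c$. Seating each $\SS$ in this staircase (and threading it through the open neck) will require the $\SS$-spin of Figure~\ref{SSpin2}, and I must rule out every alternative legal placement or kick that would let an $\SS$ (i) divert into a neck already blocked by an $\OO$, (ii) stop short and clog the body so a later row cannot clear, or (iii) fill past the intended height---where, because a single unit is larger than the entire neck, any overfill backs pieces up into the neck and overflows, exactly as in the general argument. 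Since the staircase leaves body rows incomplete until the very end, I expect to need either a finisher occupying the body rows or the $O(n)$ extra lines permitted by the framework, cleared by a finale sequence of $\SS$ pieces playing the role the $\JJ$ finisher plays in Proposition~\ref{prop:OJ}; I would design such an $\SS$-clearable finisher and its finale sequence directly.

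With these gadgets in place, the equivalence between \textsc{yes} instances and valid $3$-partitions follows verbatim from the general argument of Section~\ref{sec:genarg}. For \#P-hardness I would, as in Proposition~\ref{prop:OJ}, count the solution blow-up: a factor $(\tfrac n3)!$ for assigning subsets to bottles, a per-$S_i$ factor for the orderings of the priming and closing $\OO$ placements across the $\tfrac n3$ bottles, and a factor counting the ways a fixed batch of $\SS$ pieces can tile one bottle's staircase body; multiplying these constant-per-solution factors shows the reduction is $c$-monious for an explicit~$c$, yielding \#P-hardness. The exhaustive case analysis of clogging and overflow placements from steps (i)--(iii) would be deferred to an appendix, as is done for $\{\OO,\JJ\}$.
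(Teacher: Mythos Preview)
Your approach is essentially the paper's: $\OO$ pieces block bottles at the neck, $\SS$ pieces fill a staircase-shaped body of stacked $\SS$ shapes, and vertical symmetry handles $\{\OO,\ZZ\}$. Two implementation details differ from what you anticipate. First, the paper's closing sequence is $(\OO,\SS^{n/3})$, not $\OO$ alone: after the final $\OO$ plugs the last open bottle, one $\SS$ is spun (via the kick in Figure~\ref{SSpin2}) into each bottle's top segment to complete and clear those rows; an $\OO$-only top segment that $\SS$ can still thread through is hard to arrange, so the top segment is instead designed to be finished by $\OO+\SS$ together. Second, the paper uses no separate finisher structure; it adds three extra rows that, once everything else clears, collapse into one more copy of the top-segment pattern, and the finale $(\OO^{n/3},\SS^{n/3})$ clears it just as a closing sequence would. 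Your proposed $\SS$-only finisher would run into the same no-rectangle obstruction you already flagged for the body, whereas recycling the top-segment shape sidesteps it. These are design choices rather than gaps; your sketch would converge on the paper's construction once you notice that both the closing and the finale need both piece types. (Minor point: the paper's priming is $(\OO^{n/3-1})$, one $\OO$ per blocked bottle, not the three-$\OO$-per-bottle pattern of Proposition~\ref{prop:OJ}.)
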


\begin{proof}

    First we discuss the $\{\OO, \SS\}$ case. Refer to Figure \ref{OSSetup}(a), which shows the bottle structure for $\{\OO, \SS\}$. We do not use a finisher in our setup.

    % OS Setup Images
\begin{figure}[ht]
  \centering
  \begin{subfigure}[b]{0.09\textwidth}
    \centering
    \includegraphics[width=40pt]{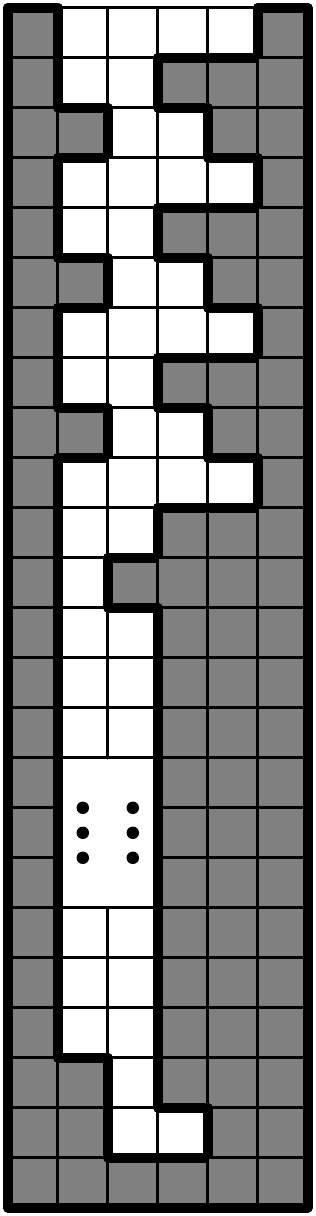}
    \caption{}
  \end{subfigure}
  \begin{subfigure}[b]{0.09\textwidth}
    \centering
    \includegraphics[width=40pt]{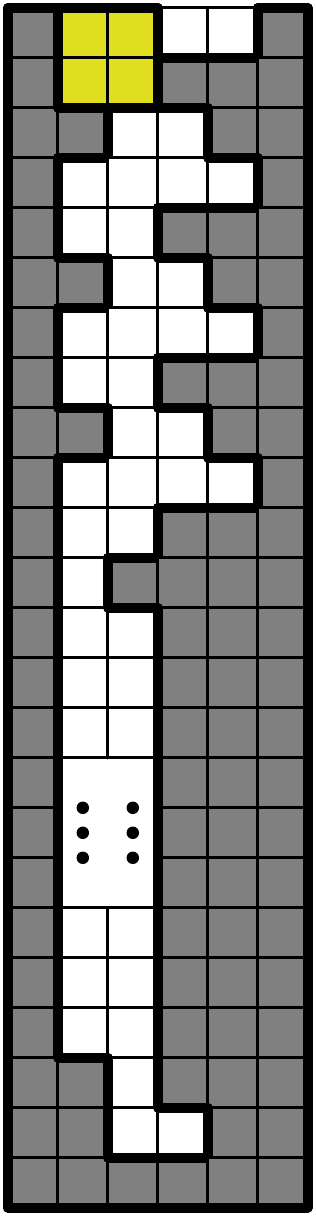}
    \caption{}
  \end{subfigure}
  \begin{subfigure}[b]{0.09\textwidth}
    \centering
    \includegraphics[width=40pt]{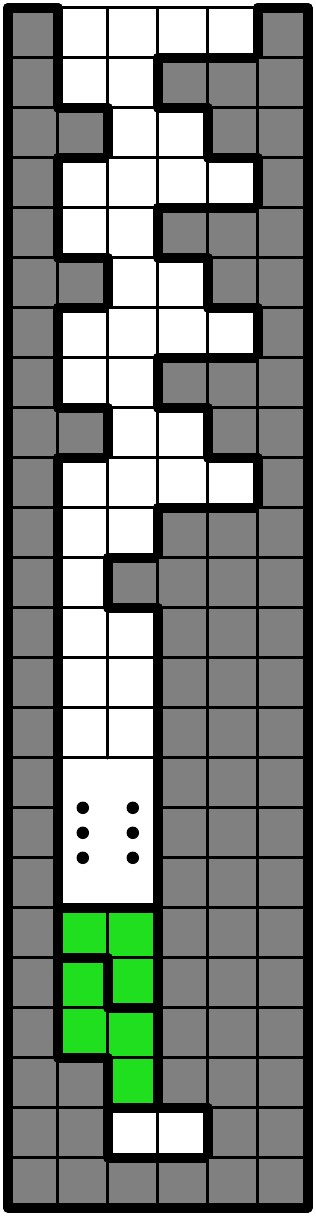}
    \caption{}
  \end{subfigure}
  \begin{subfigure}[b]{0.09\textwidth}
    \centering
    \includegraphics[width=40pt]{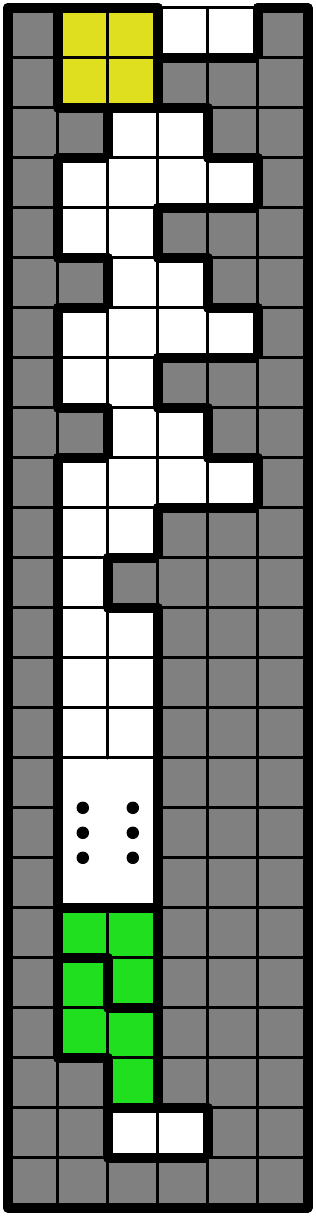}
    \caption{}
  \end{subfigure}
  \begin{subfigure}[b]{0.09\textwidth}
    \centering
    \includegraphics[width=40pt]{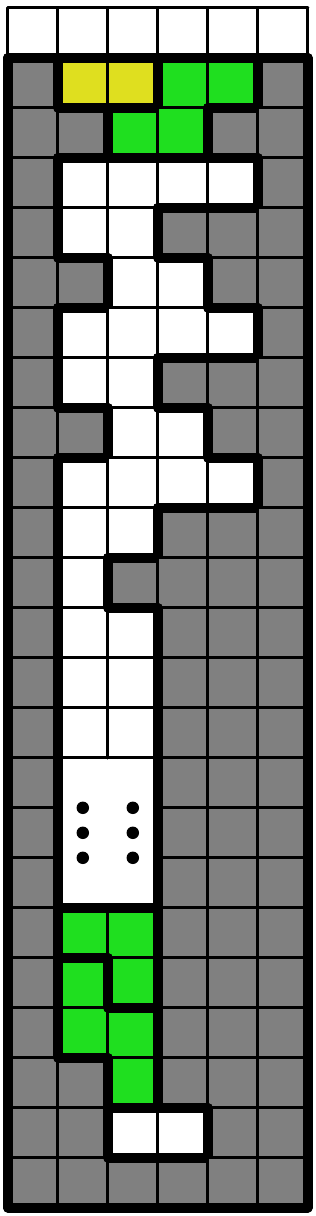}
    \caption{}
  \end{subfigure}
  \begin{subfigure}[b]{0.09\textwidth}
    \centering
    \includegraphics[width=40pt]{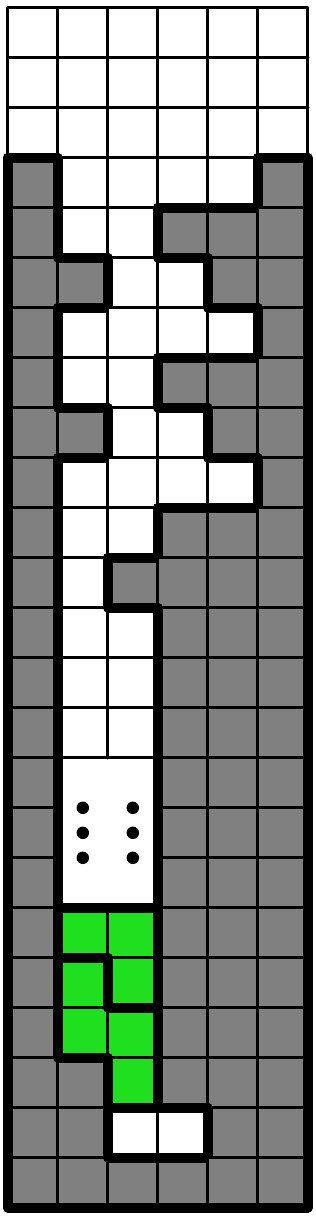}
    \caption{}
  \end{subfigure}
  \begin{subfigure}[b]{0.09\textwidth}
    \centering
    \includegraphics[width=40pt]{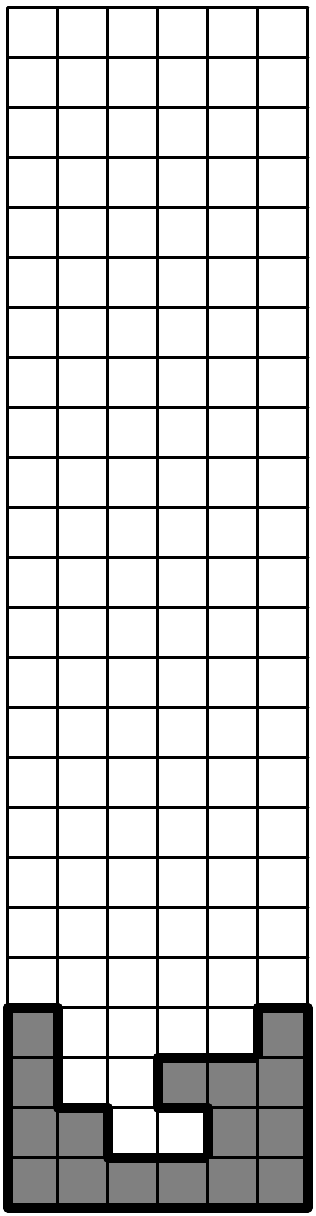}
    \caption{}
  \end{subfigure}
  \caption{The bottle structure for $\{\OO, \SS\}$. (b) shows how the $\OO$ pieces must block a bottle during the priming sequence. (c) shows the result of a filling sequence (requires the $\SS$-spin sequence in Figure \ref{SSpin-long1} to get $\SS$ pieces through the neck portion and the $\SS$ spin in Figure \ref{SSpin1} to get $\SS$ pieces into the body portion). (d--e) show where the pieces in the closing sequence must go (requires the $\SS$ spin in Figure \ref{SSpin2}). (f) shows the result of the closing sequence. (g) shows the state of the board immediately before the finale sequence.}
  \label{OSSetup}
\end{figure}

    Here, a "top segment" is the pattern that repeats every 3 lines at the top of the bottle, and a "unit" is a region of area $16n$ consisting of $4n$ $\SS$-shaped pieces stacked on top of each other. Three extra lines are necessary, corresponding to an extra "top segment". Note that the size of the neck portion plus the area between the neck and body portion is $8n+6$, which is smaller than $16n$, the size of a "unit".
    
    Each $a_i$ is encoded by the sequence of pieces $(\OO^{n/3 - 1}, \SS^{4na_i}, \OO, \SS^{n/3})$. The priming sequence is $(\OO^{n/3 - 1})$, the filling sequence is $(\SS^{4na_i})$, and the closing sequence is $(\OO, \SS^{n/3})$, with the ways to block, fill, and re-open a bottle properly shown in Figure \ref{OSSetup}. Discussion on improper piece placements can be found in Appendix \ref{appendix:osclogs}.

    The finale sequence is $(\OO^{n/3}, \SS^{n/3})$; the remaining lines collapse down to another copy of the "top segment" (see Figure \ref{OSSetup}(g)) and can be cleared in the same fashion as in the priming sequence and the closing sequence.

    Since we have all of the required components, we can apply the general argument from Section \ref{sec:genarg} to conclude NP-hardness.

    This reduction is $\left((\frac n3)!\left((\frac n3-1)!(\frac n3)!\right)^n((\frac n3)!)^2\right)$-monious: for each solution to the 3-Partition with Distinct Integers instance, there are $(\frac n3)!$ ways to "permute" which subsets correspond to which bottles, $(\frac n3-1)!$ ways to permute how the pieces in the priming sequence get placed for each $a_i$ sequence, $(\frac n3)!$ ways to permute how the $\SS$ pieces in the closing sequence get placed for each $a_i$ sequence, and $((\frac n3)!)^2$ ways to permute how the pieces in the finale sequence get placed. As such, we can also conclude \#P-hardness.

    We get a similar argument for $\{\OO, \ZZ\}$ by vertical symmetry.
\end{proof}

A demo of the $\{\OO, \SS\}$ bottle structure can be found at \url{https://jstris.jezevec10.com/map/81818}.

\begin{proposition}\label{prop:OT}
    Tetris clearing with SRS is NP-hard, and the corresponding counting problem is \#P-hard, even if the type of pieces in the sequence given to the player is restricted to $\{\OO, \TT\}$.
\end{proposition}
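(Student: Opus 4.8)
The plan is to instantiate the general bottle framework of Section~\ref{sec:genarg}, reducing from 3-Partition with Distinct Integers, using a starting board built entirely from $\OO$ and $\TT$ pieces. As in the preceding propositions I would present a figure (analogous to Figures~\ref{OJSetup} and~\ref{OSSetup}) showing one bottle: a neck consisting of $n$ constant-sized "top segments" and a body of $t$ $\text{poly}(n)$-sized "units," sized so that one unit exceeds the neck, together with the intended blocking, filling, and closing steps. A "unit" would be a region filled by a fixed number of $\TT$ pieces via a repeated T-spin maneuver drawn from Appendix~\ref{appendix:spins}, while each top segment is a small gadget that an $\OO$ piece (possibly assisted by a $\TT$) must block during priming. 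Each $a_i$ would be encoded by a sequence of the form (priming $\OO$/$\TT$ pieces, $\Theta(a_i)$ filling $\TT$ pieces, closing pieces), with multiplicities chosen so the priming sequence leaves exactly one bottle open per $S_i$; I would pair the board with a $\TT$ finisher (Figure~\ref{Finishers}) to keep the body rows from clearing until every top segment has been cleared.

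The decisive difference from every other pair is already flagged in Section~\ref{sec:2piecetetris}: $\{\OO,\TT\}$ is the \emph{one} bottle structure in which the priming pieces can rotate or translate below the topmost top segment under SRS, precisely because the $\TT$ piece admits T-spins that let it wriggle into narrow slots. Hence the clean "a primed piece is stuck at the top" argument used elsewhere fails, and the main obstacle is to rule out any sequence of T-spins that lets the player cheat. I would address this with a careful case analysis of where a $\TT$ or $\OO$ piece can come to rest in a partially filled bottle, verifying that every placement other than the intended one either leaves a cell above the overflow line (an \emph{overflow}) or produces a configuration from which the top-segment rows can never again be completed (a \emph{clog}), rendering the instance unclearable. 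Enumerating the reachable T-spin landing positions inside the neck and at the neck/body interface, and checking each against the overflow and clog conditions, is the technical heart of the proof and the step I expect to be hardest; I would relegate the exhaustive placement check to an appendix in the style of Appendices~\ref{appendix:ojclogs} and~\ref{appendix:osclogs}.

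With the blocking/filling/closing behavior pinned down, correctness follows from the general argument of Section~\ref{sec:genarg}: a valid 3-partition yields a clearing play by opening, for each $S_i$, exactly the bottle indexed by the group containing $a_i$, filling $a_i$ units, and closing; conversely, any clearing play forces each filling sequence entirely into one bottle and forbids overfilling, so each bottle receives at most $t$, hence exactly $t$ since the per-bottle totals sum to $t\cdot\tfrac n3$, giving a partition. For \#P-hardness I would carry out a monious count exactly as in Propositions~\ref{prop:OJ} and~\ref{prop:OS}, multiplying the $(\tfrac n3)!$ choices of which group fills which bottle, the per-$a_i$ permutations of priming and closing placements, and (if the filling admits internal reorderings) a polynomial-time-computable factor counting the ways to tile one body with the filling $\TT$ pieces. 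This shows the reduction is $c$-monious for an explicitly determined $c$, so the counting problem is \#P-hard.
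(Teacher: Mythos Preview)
Your plan is essentially the paper's own approach: a bottle reduction from 3-Partition with Distinct Integers, priming with $\OO$, filling with $\TT$ via T-spins, closing with a mix, a $\TT$ finisher, and a clog/overflow case analysis deferred to an appendix. The NP-hardness and $c$-monious \#P-hardness arguments go through exactly as you outline.

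One diagnostic point is off, and it matters for how the case analysis is organized. You attribute the special difficulty of $\{\OO,\TT\}$ to T-spins letting the \emph{priming} pieces slip below the topmost segment. But in the paper's construction (and in yours, as you describe it) the priming sequence consists of $\OO$ pieces, which do not rotate at all; they can translate below the topmost top segment simply because the neck channel is two cells wide. So the obstacle is not ``rule out T-spins that let priming pieces through'' but rather ``rule out $\OO$ pieces sliding down the neck and $\TT$ pieces landing where $\OO$ pieces are supposed to go.'' The paper handles this not by a pure placement enumeration but by a short global argument: first it shows no $\OO$ can land in the body portion (the body pattern only accepts $\TT$ via the T-spin), and that within the neck an $\OO$ can only sit in the leftmost two columns; then a counting argument forces \emph{all} of those leftmost-column cells to be covered by $\OO$ pieces, which in one stroke makes every $\TT$ placement touching those columns a clog. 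If you try to do the case analysis without this counting step, you will find yourself chasing many $\TT$-in-neck placements that are hard to rule out locally; the column-counting shortcut is the main technical idea you are missing.
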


\begin{proof}
    Refer to Figure \ref{OTSetup}(a), which shows the bottle structure for $\{\OO, \TT\}$. We will also use a $\TT$ finisher in our setup to prevent rows in the body portion from clearing early.

    % OT Setup Images
\begin{figure}[ht]
  \centering
  \begin{subfigure}[b]{0.1\textwidth}
    \centering
    \includegraphics[width=40pt]{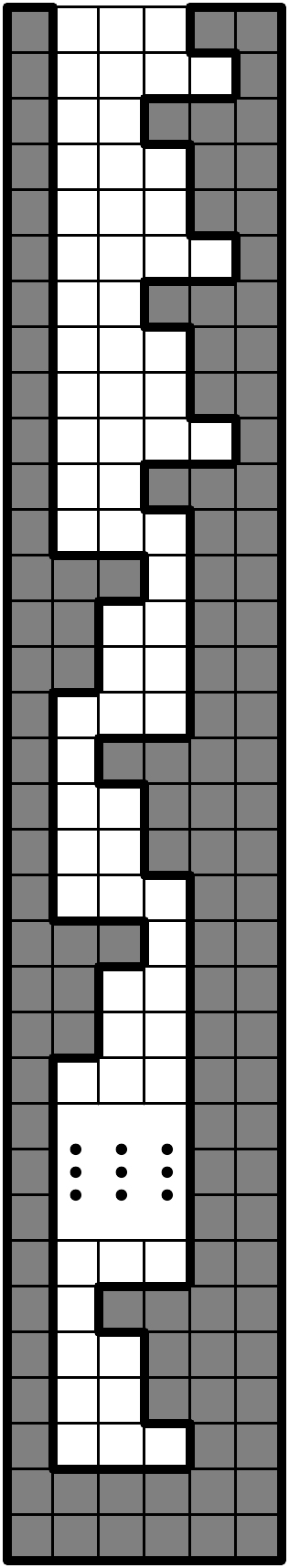}
    \caption{}
  \end{subfigure}
  \begin{subfigure}[b]{0.1\textwidth}
    \centering
    \includegraphics[width=40pt]{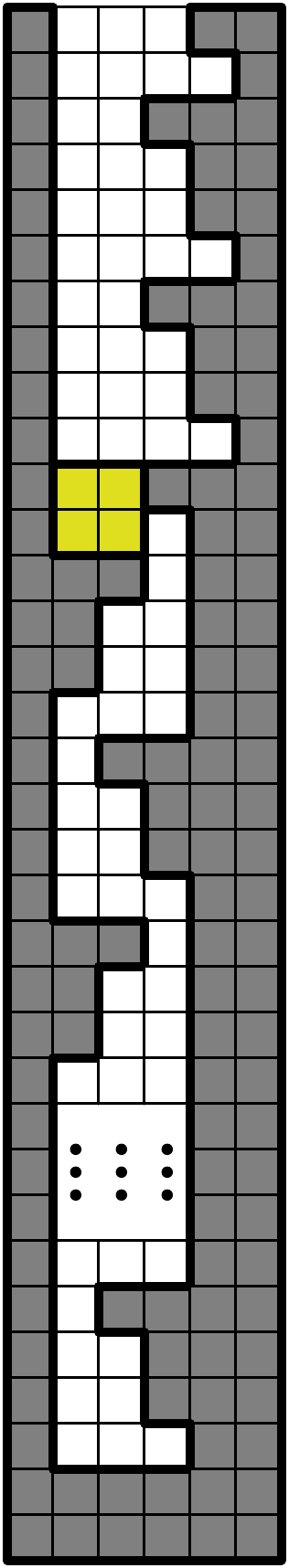}
    \caption{}
  \end{subfigure}
  \begin{subfigure}[b]{0.1\textwidth}
    \centering
    \includegraphics[width=40pt]{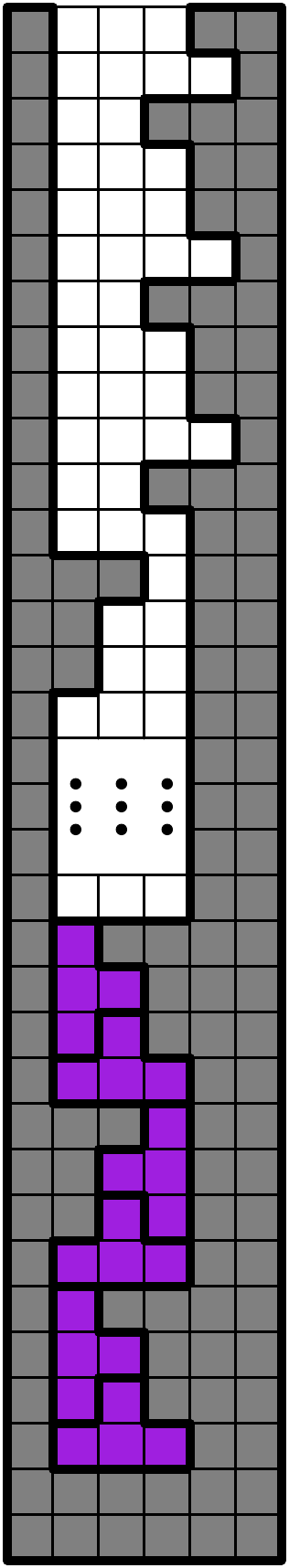}
    \caption{}
  \end{subfigure}
  \begin{subfigure}[b]{0.1\textwidth}
    \centering
    \includegraphics[width=40pt]{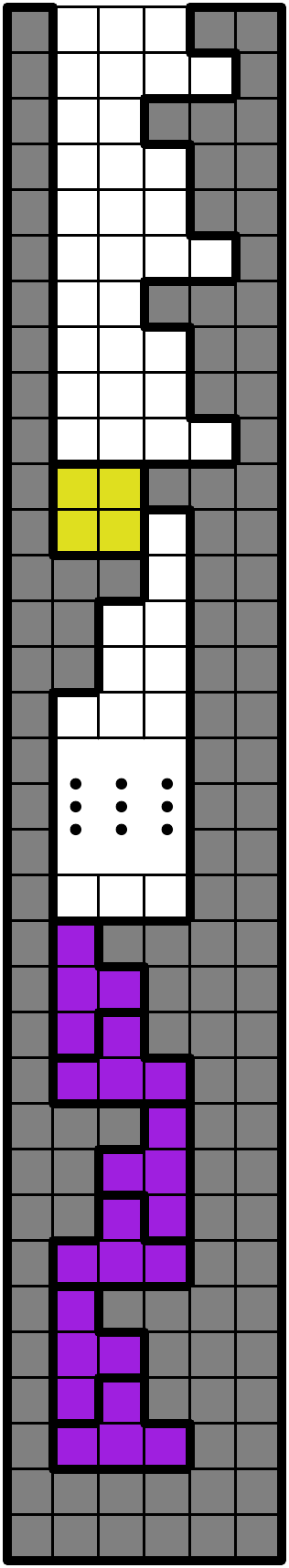}
    \caption{}
  \end{subfigure}
  \begin{subfigure}[b]{0.1\textwidth}
    \centering
    \includegraphics[width=40pt]{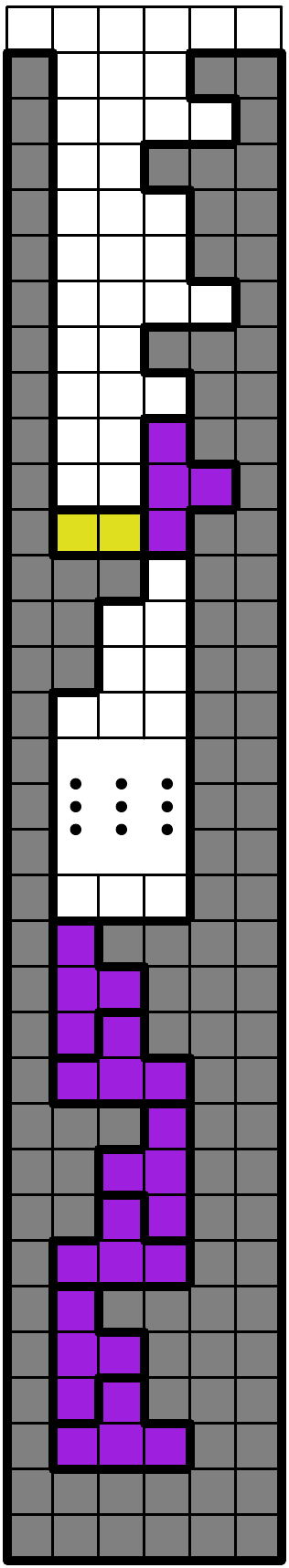}
    \caption{}
  \end{subfigure}
  \begin{subfigure}[b]{0.1\textwidth}
    \centering
    \includegraphics[width=40pt]{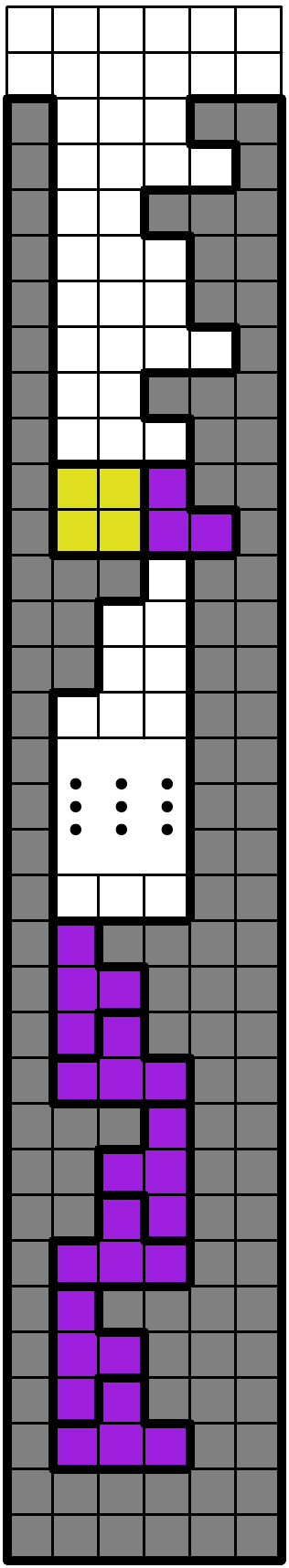}
    \caption{}
  \end{subfigure}
  \begin{subfigure}[b]{0.1\textwidth}
    \centering
    \includegraphics[width=40pt]{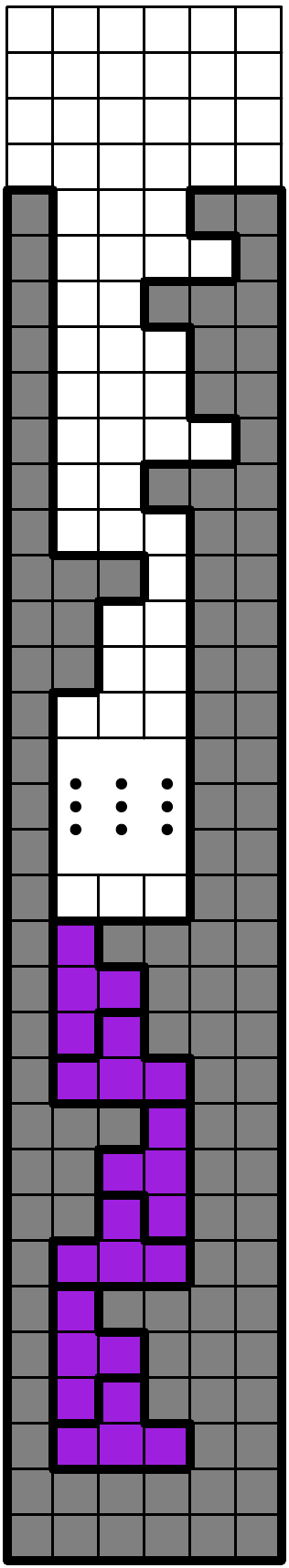}
    \caption{}
  \end{subfigure}
  \caption{The bottle structure for $\{\OO, \TT\}$. (b) shows how the $\OO$ piece must block a bottle during the priming sequence. (c) shows the result of a filling sequence (requires the $\TT$-spin sequence in Figure \ref{TSpin} to get $\TT$ pieces through the body portion). (d--f) show where the pieces in the closing sequence must go. (g) shows the result of the closing sequence.}
  \label{OTSetup}
\end{figure}

    Here, a "top segment" is the pattern that repeats every 4 lines at the top of the bottle, and a "unit" is a region of area $16n$ consisting of the $8$-line pattern right underneath the lowermost "top segment" in Figure \ref{OTSetup}(a) repeated $n$ times (note that the size of the neck portion is $12n$, which is smaller than $16n$, the size of a "unit"). No extra lines are necessary. 
    
    Each $a_i$ is encoded by the sequence of pieces $(\OO^{n/3 - 1}, \TT^{4na_i}, \OO, \TT^{n/3}, \OO^{n/3})$. The priming sequence is $(\OO^{n/3 - 1})$, the filling sequence is $(\TT^{4na_i})$, and the closing sequence is $(\OO, \TT^{n/3}, \OO^{n/3})$, with the ways to block, fill, and re-open a bottle properly shown in Figure \ref{OTSetup}. Discussion on improper piece placements can be found in Appendix \ref{appendix:otclogs}.

    The finale sequence is $(\TT^{8nt-2})$ to clear the $\TT$ finisher.

    Since we have all of the required components, we can apply the general argument from Section \ref{sec:genarg} to conclude NP-hardness.

    This reduction is $\left((\frac n3)!((\frac n3-1)!((\frac n3)!)^2)^n\right)$-monious: for each solution to the 3-Partition with Distinct Integers instance, there are $(\frac n3)!$ ways to "permute" which subsets correspond to which bottles, $(\frac n3-1)!$ ways to permute how the pieces in the priming sequence get placed for each $a_i$ sequence, and $((\frac n3)!)^2$ ways to permute how the $\TT$ pieces and $\OO$ pieces in the closing sequence get placed for each $a_i$ sequence. As such, we can also conclude \#P-hardness.
\end{proof}

A demo of the $\{\OO, \TT\}$ bottle structure can be found at \url{https://jstris.jezevec10.com/map/80169}.

\subsection{Two-Element Subsets of $\{\SS, \TT, \ZZ\}$}\label{sec:stz}

\begin{proposition}\label{prop:ST}
    Tetris clearing with SRS is NP-hard, and the corresponding counting problem is \#P-hard, even if the type of pieces in the sequence given to the player is restricted to either $\{\SS, \TT\}$ or $\{\TT, \ZZ\}$.
\end{proposition}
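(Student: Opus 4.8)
The plan is to reduce from 3-Partition with Distinct Integers using the bottle framework of Section~\ref{sec:genarg}, exactly as in the preceding $\OO$-based propositions; the only real design freedom is which of the two piece types blocks and which fills. I would let $\TT$ play the blocker (priming/closing) role and $\SS$ play the filler role, so that the body of each bottle is built out of vertical stacks of $\SS$ pieces just as in the $\{\OO,\SS\}$ construction of Proposition~\ref{prop:OS}. Concretely, a ``unit'' would be a region of area $16n$ made of $4n$ $\SS$-shaped cells stacked on top of one another, and a ``top segment'' a short repeating neck pattern shaped so that a single $\TT$ piece (wedged in by a last-second $\TT$-spin) plugs the neck and seals the bottle, while an unplugged neck still admits the $\SS$-spin descent used to reach the body. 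Since the body is $\SS$-filled and $\TT$ pieces are available, I would reuse the $\{\OO,\SS\}$ device of an extra ``top segment'' worth of lines rather than a dedicated finisher.

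For each $a_i$ I would emit the sequence $(\TT^{n/3-1}, \SS^{4na_i}, \TT, \SS^{n/3})$: the priming sequence $(\TT^{n/3-1})$ seals every bottle but one, the filling sequence $(\SS^{4na_i})$ drops $a_i$ units into the lone open bottle (using the $\SS$-spin sequence of Figure~\ref{SSpin-long1} to pass the neck and the spins of Figures~\ref{SSpin1} and~\ref{SSpin2} to seat pieces in the body and clear the active top segment), and the closing sequence $(\TT,\SS^{n/3})$ completes and clears that top segment and resets all necks. The full input is $(S_1,\dots,S_n,F)$ with finale $F=(\TT^{n/3},\SS^{n/3})$ clearing the leftover extra top segment once the neck and body lines are gone, exactly as in Proposition~\ref{prop:OS}. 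Choosing each unit larger than the neck (unit size $16n$ against neck size $8n+O(1)$) guarantees that an over-long fill overflows for lack of room in the neck.

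Correctness then follows verbatim from the general argument of Section~\ref{sec:genarg}: a valid 3-partition dictates which bottle to leave open for each $S_i$ and fills each bottle to exactly $t$ units, and conversely any board-clearing play forces each priming sequence to seal all but one bottle and each filling sequence to deposit exactly $a_i$ units without overflow, yielding groups of sum exactly $t$. The main obstacle, as in the $\OO$-based cases, is the clog/overflow case analysis for improper placements, and here it is genuinely harder because \emph{both} piece types are ``spinny'': I must show that a $\TT$ blocker can never be kicked down past the top segment into the body (so that blocking is irreversible except by a line clear), and that an $\SS$ filler can neither be spun or kicked past a $\TT$-sealed neck nor stacked so as to leave the body underfilled, ruling out every SRS kick from Table~\ref{tab:kickdataMain}. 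I expect this to be the only delicate part and would relegate it, like its analogues, to an appendix.

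Finally, the reduction is $\bigl((\tfrac n3)!\,((\tfrac n3-1)!(\tfrac n3)!)^n((\tfrac n3)!)^2\bigr)$-monious by the same bookkeeping as Proposition~\ref{prop:OS} (permuting subsets among bottles, permuting which priming piece seals which bottle, permuting the closing $\SS$ pieces, and permuting the finale pieces, with the $\SS$-filled body forced uniquely), which gives \#P-hardness. The $\{\TT,\ZZ\}$ case is obtained by reflecting the whole construction through a vertical line: $\SS\leftrightarrow\ZZ$ while $\TT$ maps to itself, and since the SRS kick data is vertically symmetric for $\SS/\ZZ$ and for $\TT$, every spin used has a valid mirror image.
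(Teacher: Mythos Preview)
Your proposal is correct and follows essentially the same approach as the paper: $\TT$ blocks, $\SS$ fills, the same sequence encoding $(\TT^{n/3-1},\SS^{4na_i},\TT,\SS^{n/3})$, the same finale $(\TT^{n/3},\SS^{n/3})$, the same $c$-monious count (your expression factors identically to the paper's $(\tfrac n3)!((\tfrac n3-1)!)^n((\tfrac n3)!)^{n+2}$), and the same mirror-symmetry argument for $\{\TT,\ZZ\}$. The only substantive divergence is that the paper's neck is a $4$-line repeating pattern rather than the $3$-line $\{\OO,\SS\}$ neck, so it uses the spin sequence of Figure~\ref{SSpin-long2} rather than Figure~\ref{SSpin-long1} to thread $\SS$ pieces through; this is a detail of the concrete gadget geometry, not a difference in strategy.
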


\begin{proof}
    First we discuss the $\{\SS, \TT\}$ case. Refer to Figure \ref{STSetup}(a), which shows the bottle structure for $\{\SS, \TT\}$. We do not use a finisher in our setup.

    % ST Setup Images
\begin{figure}[ht]
  \centering
  \begin{subfigure}[b]{0.09\textwidth}
    \centering
    \includegraphics[width=40pt]{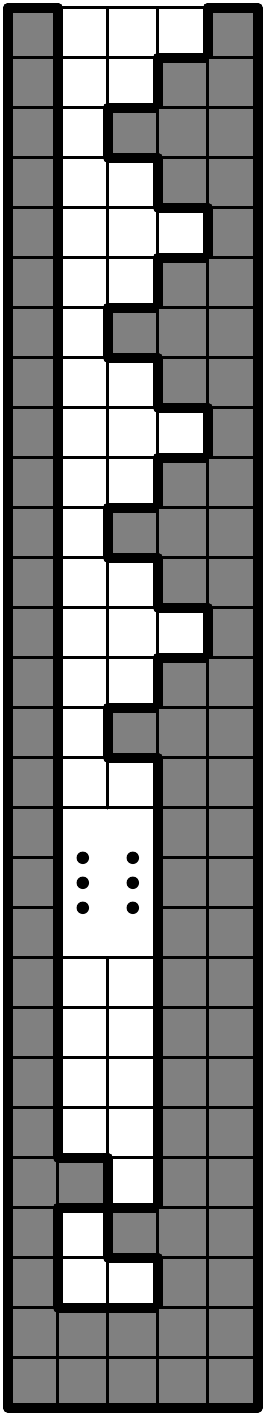}
    \caption{}
  \end{subfigure}
  \begin{subfigure}[b]{0.09\textwidth}
    \centering
    \includegraphics[width=40pt]{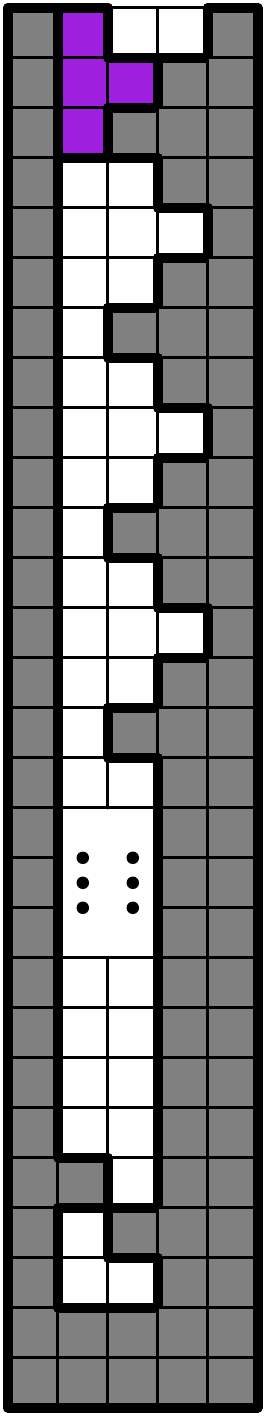}
    \caption{}
  \end{subfigure}
  \begin{subfigure}[b]{0.09\textwidth}
    \centering
    \includegraphics[width=40pt]{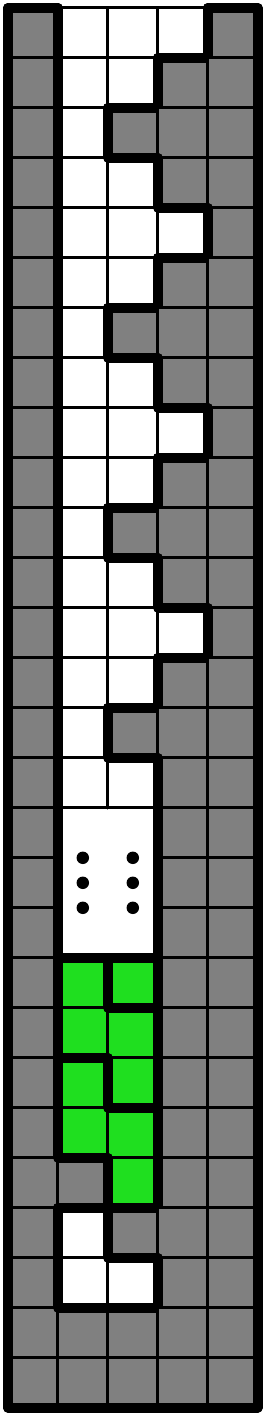}
    \caption{}
  \end{subfigure}
  \begin{subfigure}[b]{0.09\textwidth}
    \centering
    \includegraphics[width=40pt]{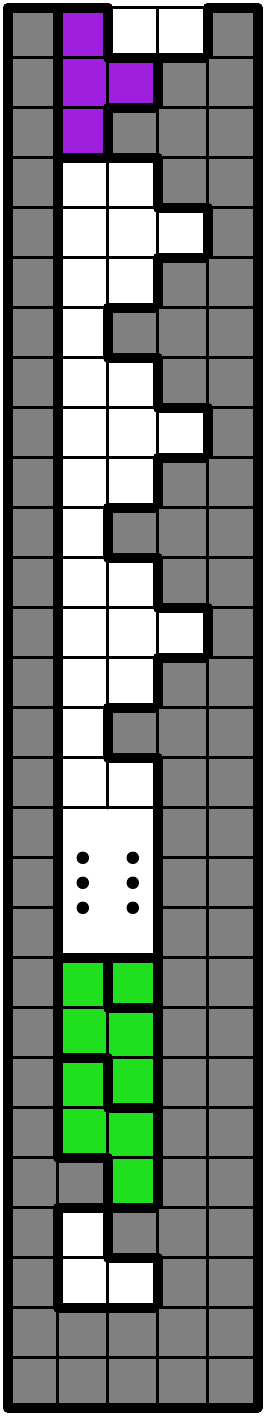}
    \caption{}
  \end{subfigure}
  \begin{subfigure}[b]{0.09\textwidth}
    \centering
    \includegraphics[width=40pt]{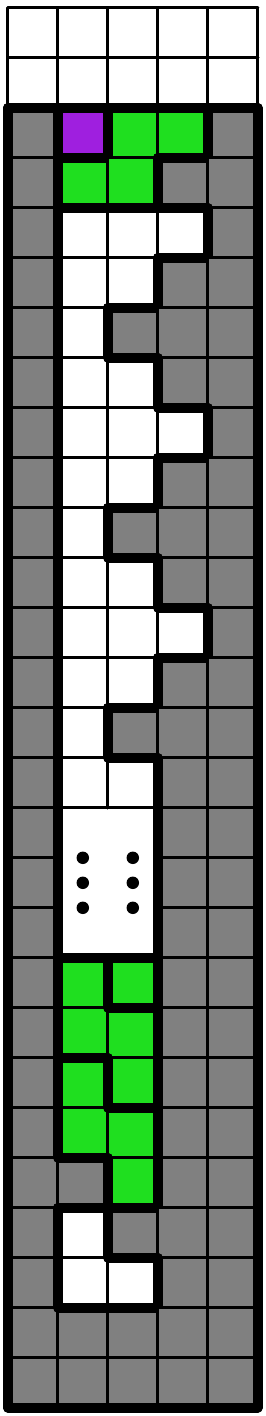}
    \caption{}
  \end{subfigure}
  \begin{subfigure}[b]{0.09\textwidth}
    \centering
    \includegraphics[width=40pt]{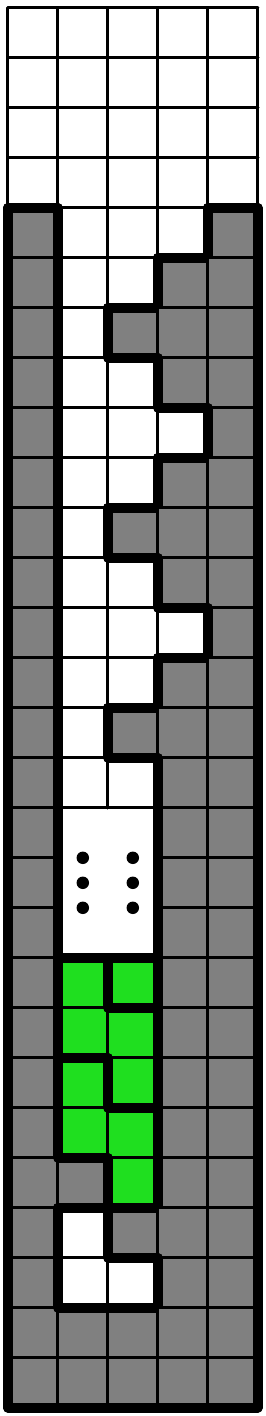}
    \caption{}
  \end{subfigure}
  \begin{subfigure}[b]{0.09\textwidth}
    \centering
    \includegraphics[width=40pt]{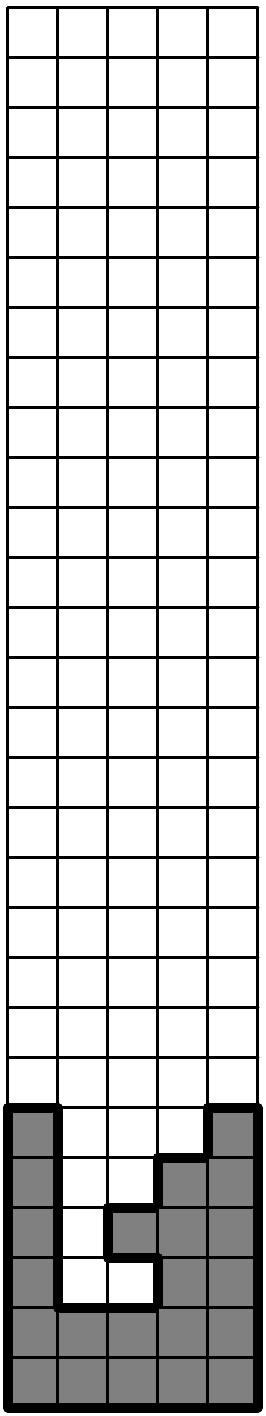}
    \caption{}
  \end{subfigure}
  \caption{The bottle structure for $\{\SS, \TT\}$. (b) shows how the $\TT$ pieces must block a bottle during the priming sequence. (c) shows the result of a filling sequence (requires the $\SS$-spin sequence in Figure \ref{SSpin-long2} to get $\SS$ pieces through the neck portion and the $\SS$ spin in Figure \ref{SSpin1} to get $\SS$ pieces into the body portion). (d--e) show where the pieces in the closing sequence must go (requires the $\SS$ spin in Figure \ref{SSpin2}). (f) shows the result of the closing sequence. (g) shows the state of the board immediately before the finale sequence.}
  \label{STSetup}
\end{figure}

    Here, a "top segment" is the pattern that repeats every 4 lines at the top of the bottle, and a "unit" is a region of area $16n$ consisting of $4n$ $\SS$-shaped pieces stacked on top of each other. Four extra lines are used: two between the neck portion and the body portion, and two under the body portion. Note that the size of the neck portion plus the area between the neck and body portion is $8n+5$, which is smaller than $16n$, the size of a "unit".
    
    Each $a_i$ is encoded by the sequence of pieces $(\TT^{n/3 - 1}, \SS^{4na_i}, \TT, \SS^{n/3})$. The priming sequence is $(\TT^{n/3 - 1})$, the filling sequence is $(\SS^{4na_i})$, and the closing sequence is $(\TT, \SS^{n/3})$, with the ways to block, fill, and re-open a bottle properly shown in Figure \ref{STSetup}. Discussion on improper piece placements can be found in Appendix \ref{appendix:stclogs}.

    The finale sequence is $(\TT^{n/3}, \SS^{n/3})$; the remaining lines collapse down to another copy of the "top segment" (see Figure \ref{STSetup}(g)) and can be cleared in the same fashion as described in the priming sequence and the closing sequence.

    Since we have all of the required components, we can apply the general argument from Section \ref{sec:genarg} to conclude NP-hardness.

    This reduction is $\left((\frac n3)!((\frac n3-1)!)^n((\frac n3)!)^{n+2}\right)$-monious: for each solution to the 3-Partition with Distinct Integers instance, there are $(\frac n3)!$ ways to "permute" which subsets correspond to which bottles, $(\frac n3-1)!$ ways to permute how the pieces in the priming sequence get placed for each $a_i$ sequence, $(\frac n3)!$ ways to permute how the $\SS$ pieces in the closing sequence get placed for each $a_i$ sequence, and $((\frac n3)!)^2$ ways to permute how the pieces in the finale sequence get placed. As such, we can also conclude \#P-hardness.

    We get a similar argument for $\{\TT, \ZZ\}$ by vertical symmetry.
\end{proof}

A demo of the $\{\SS, \TT\}$ bottle structure can be found at \url{https://jstris.jezevec10.com/map/80184}.

\begin{proposition}\label{prop:SZ}
    Tetris clearing with SRS is NP-hard, and the corresponding counting problem is \#P-hard, even if the type of pieces in the sequence given to the player is restricted to $\{\SS, \ZZ\}$.
\end{proposition}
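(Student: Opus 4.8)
The plan is to reuse the bottle framework of Section~\ref{sec:genarg}, building a construction in the same spirit as the $\{\SS,\TT\}$ reduction of Proposition~\ref{prop:ST} but with the $\ZZ$ piece taking over the blocking role that $\TT$ played there. The one structural novelty is that, because $\SS$ and $\ZZ$ are vertical mirror images of each other, the set $\{\SS,\ZZ\}$ is its own mirror; unlike Propositions~\ref{prop:ST} and~\ref{prop:OS}, we cannot harvest a companion result ``by vertical symmetry,'' so this is a single standalone construction in which \emph{both} available piece types are genuinely used---$\ZZ$ to open and close top segments and $\SS$ to fill the body.

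Concretely, I would design a bottle whose neck consists of $n$ constant-height ``top segments,'' each shaped so that a single (suitably spun) $\ZZ$ piece is the only way to block it without clogging, and whose body is partitioned into $t$ units, each of area $16n$ built as a stack of $\SS$ pieces exactly as in the $\{\SS,\TT\}$ setup. I would encode $a_i$ by the sequence $(\ZZ^{n/3-1},\, \SS^{4na_i},\, \ZZ,\, \SS^{n/3})$, decomposed as priming sequence $(\ZZ^{n/3-1})$, filling sequence $(\SS^{4na_i})$, and closing sequence $(\ZZ, \SS^{n/3})$, with a short finale sequence collapsing the leftover lines into one more copy of a top segment (as in Figure~\ref{STSetup}(g)). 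As in Proposition~\ref{prop:ST}, threading $\SS$ pieces through the narrow neck and seating them in the body requires the $\SS$-spin maneuvers of Figures~\ref{SSpin-long2}, \ref{SSpin1}, and~\ref{SSpin2}; analogously, seating each $\ZZ$ blocker will require the mirror-image $\ZZ$-spins.

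The main obstacle---and the feature that distinguishes $\{\SS,\ZZ\}$ from every pairing containing a ``nice'' piece---is ruling out cheating placements. Since $\SS$ and $\ZZ$ are both awkward and nearly interchangeable in shape, I must verify that (i) an $\SS$ piece cannot be used to block a top segment, (ii) a $\ZZ$ piece cannot be used to fill a body unit, and (iii) no mixed or mis-spun placement either overflows the bottle or clogs it so that the required lines can never be cleared. The lever is the SRS kick data together with the neck geometry: by choosing the top-segment notches so that only a $\ZZ$ kick seats flush and only the $\SS$-spin threads the neck, each piece type is forced into its intended role. This case analysis is the real content of the proof and, following the pattern of the earlier propositions, I would relegate the exhaustive verification of improper placements to an appendix.

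Granting this gadget analysis, the reduction is immediate: all required components (priming, filling, closing, finale) are present, so the general argument of Section~\ref{sec:genarg} yields NP-hardness, with \textsc{yes} instances of $\{\SS,\ZZ\}$ Tetris clearing corresponding exactly to valid 3-partitions. For \#P-hardness I would track the solution blowup exactly as in Proposition~\ref{prop:ST}: a factor of $(\frac n3)!$ for assigning subsets to bottles, $(\frac n3-1)!$ per $a_i$ for the priming permutation, $(\frac n3)!$ per $a_i$ for the closing $\SS$ placements, and $((\frac n3)!)^2$ for the finale, giving a $\big((\frac n3)!\,((\frac n3-1)!)^n\,((\frac n3)!)^{n+2}\big)$-monious reduction (adjusted by any internal body-packing factor should the $\SS$-stack admit more than one tiling), from which \#P-hardness follows.
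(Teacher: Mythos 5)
Your overall plan matches the paper's: the actual proof of Proposition~\ref{prop:SZ} uses exactly this bottle framework with $\ZZ$ as the blocker and $\SS$ as the filler, invokes the general argument of Section~\ref{sec:genarg}, and relegates the clog analysis to an appendix (Appendix~\ref{appendix:szclogs}). You have also correctly isolated the crux, namely ruling out role-swapping between two nearly interchangeable pieces. However, the proposal leaves the entire gadget unconstructed (``granting this gadget analysis\ldots''), and where you do commit to specifics by pattern-matching on Proposition~\ref{prop:ST}, those specifics diverge from what the construction actually requires. In the paper, each top segment is simply a $\ZZ$-shaped hole of area $4$, so the closing sequence is a single $\ZZ$ piece; no $\SS^{n/3}$ reset is needed, and your proposed closing sequence $(\ZZ, \SS^{n/3})$ would force a more elaborate top segment with both a $\ZZ$ slot and per-bottle $\SS$ reset slots, which you have neither designed nor verified.

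The more serious divergence is the finale. A ``short finale collapsing the leftover lines into one more copy of the top segment,'' as in Propositions~\ref{prop:OS} and~\ref{prop:ST}, is not available here: those constructions use $\OO$ or $\TT$ pieces to tidy up, whereas with $\{\SS, \ZZ\}$ the residue left after the $\SS$-stacked body clears must be cleaned up with $\ZZ$ pieces alone. The paper needs $O(n)$ extra rows (including $3(n+1)+1$ rows below the body) and a finale of $n/3 + 2n(n+1)/3$ $\ZZ$ pieces that clears the leftover structure four rows per iteration (Figure~\ref{SZSetupFinale}); this is $\Theta(n^2)$ pieces rather than a constant number of rows, and it changes the moniousness to $\left((\frac n3)!((\frac n3-1)!)^n((\frac n3)!)^{2n+3}\right)$ rather than the $\{\SS, \TT\}$ value you quote. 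So while your skeleton is right, the finale is precisely the point where $\{\SS, \ZZ\}$ cannot be obtained by substitution from $\{\SS, \TT\}$, and that step is missing from your argument.
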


\begin{proof}

    Refer to Figure \ref{SZSetup}(a), which shows the bottle structure for $\{\SS, \ZZ\}$. We do not use a finisher in our setup.

    % SZ Setup Images
\begin{figure}[ht]
  \centering
  \begin{subfigure}[b]{0.09\textwidth}
    \centering
    \includegraphics[width=40pt]{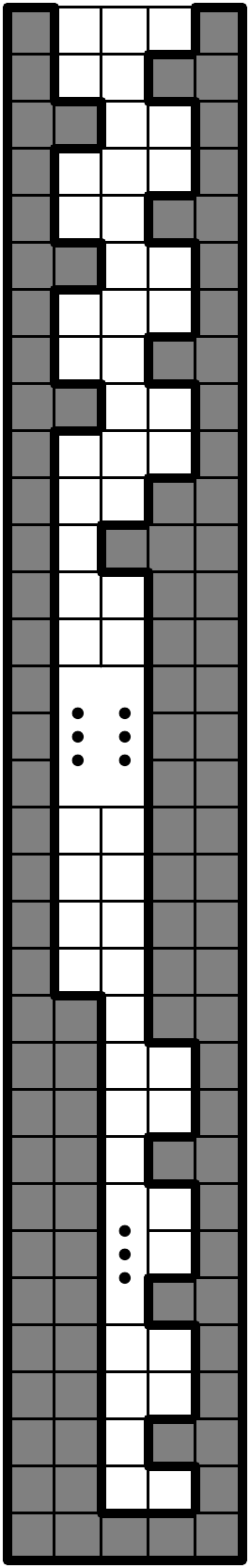}
    \caption{}
  \end{subfigure}
  \begin{subfigure}[b]{0.09\textwidth}
    \centering
    \includegraphics[width=40pt]{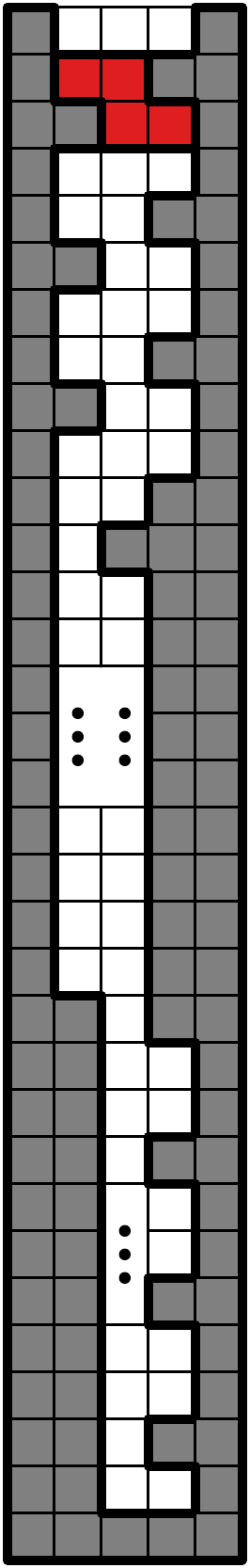}
    \caption{}
  \end{subfigure}
  \begin{subfigure}[b]{0.09\textwidth}
    \centering
    \includegraphics[width=40pt]{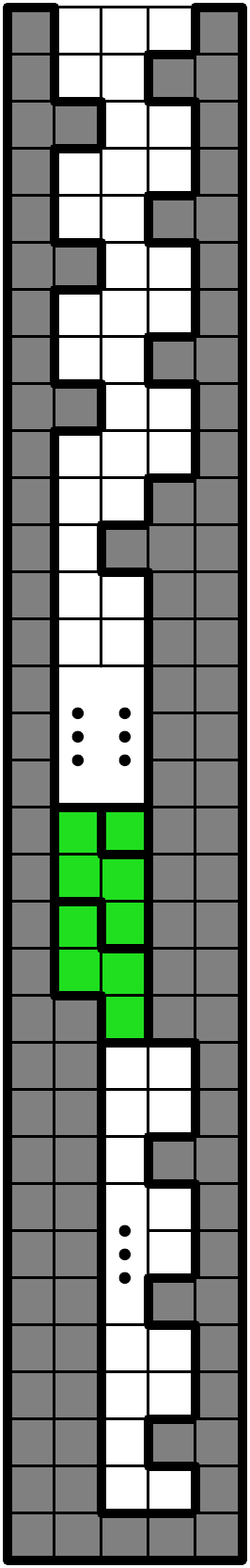}
    \caption{}
  \end{subfigure}
  \begin{subfigure}[b]{0.09\textwidth}
    \centering
    \includegraphics[width=40pt]{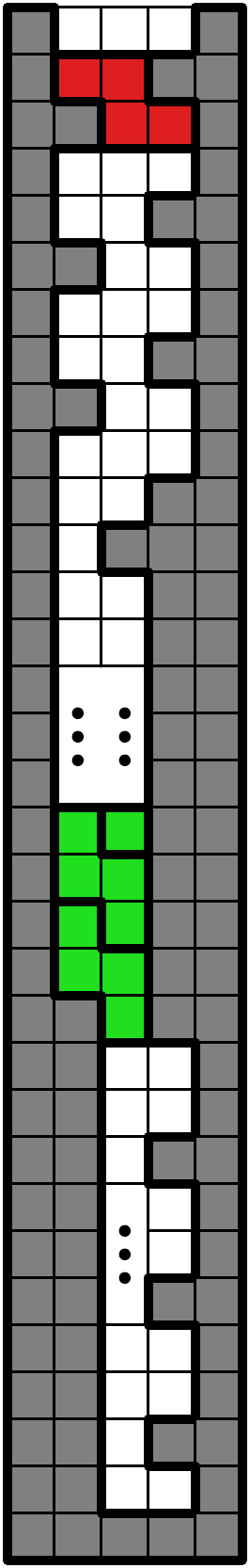}
    \caption{}
  \end{subfigure}
  \begin{subfigure}[b]{0.09\textwidth}
    \centering
    \includegraphics[width=40pt]{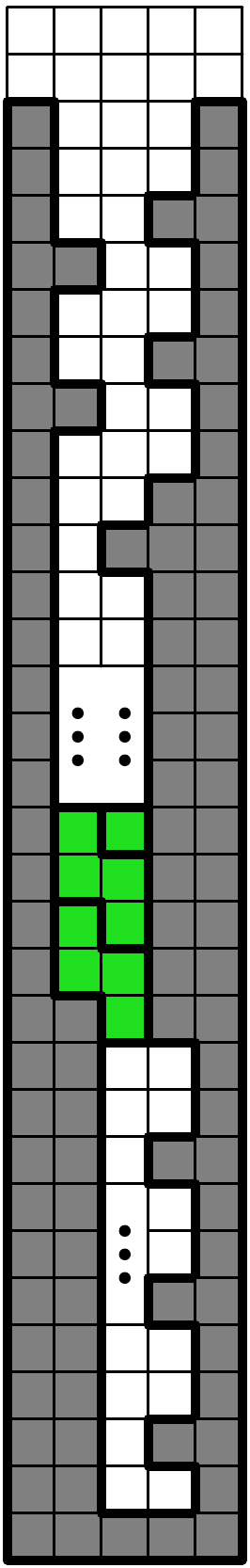}
    \caption{}
  \end{subfigure}
  \caption{The bottle structure for $\{\SS, \ZZ\}$. (b) shows how the $\ZZ$ piece must block a bottle during the priming sequence (requires the $\ZZ$-piece version of the $\SS$ spin in Figure \ref{SSpin2}). (c) shows the result of a filling sequence (requires the $\SS$-spin sequence in Figure \ref{SSpin-long1} to get $\SS$ pieces through the neck portion and the $\SS$ spin in Figure \ref{SSpin1} to get $\SS$ pieces into the body portion). (d) shows where the piece in the closing sequence must go. (e) shows the result of the closing sequence.}
  \label{SZSetup}
\end{figure}

    Here, a "top segment" is the $\ZZ$-shaped region of area $4$, and a "unit" is a region of area $16n$ consisting of $4n$ $\SS$-shaped pieces stacked on top of each other. We use $O(n)$ extra rows: one row with a $1\times 3$ hole above each top segment, two rows between the bottommost "top segment" and the body portion, and $3(n+1)+1$ other rows below the body portion. Note that the size of the region in the bottle above the body portion is $7n+5$, which is smaller than $16n$, the size of a "unit".
    
    Each $a_i$ is encoded by the sequence of pieces $(\ZZ^{n/3 - 1}, \SS^{a_i}, \ZZ)$. The priming sequence is $(\ZZ^{n/3 - 1})$, the filling sequence is $(\SS^{a_i})$, and the closing sequence is $(\ZZ)$, with the ways to block, fill, and re-open a bottle properly shown in Figure \ref{SZSetup}. Discussion on improper piece placements can be found in Appendix \ref{appendix:szclogs}.

    The finale sequence is $(\ZZ^{n/3 + 2n(n+1)/3})$. The remaining lines will collapse down to another structure as indicated in Figure \ref{SZSetupFinale}(a). The only way to clear the structure is to use the first $\frac n3$ pieces to clear two rows as indicated in Figure \ref{SZSetupFinale}(b), and then repeatedly use $2n/3$ $\ZZ$ pieces to clear four rows per iteration as indicated in Figure \ref{SZSetupFinale}(c--e) until the board is cleared.

     % SZ Finale Images
\begin{figure}[ht]
  \centering
  \begin{subfigure}[b]{0.1\textwidth}
    \centering
    \includegraphics[width=40pt]{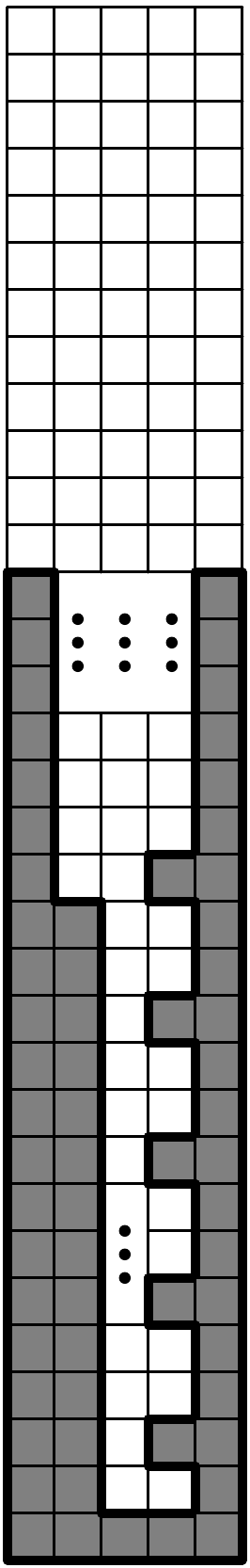}
    \caption{}
  \end{subfigure}
  \begin{subfigure}[b]{0.1\textwidth}
    \centering
    \includegraphics[width=40pt]{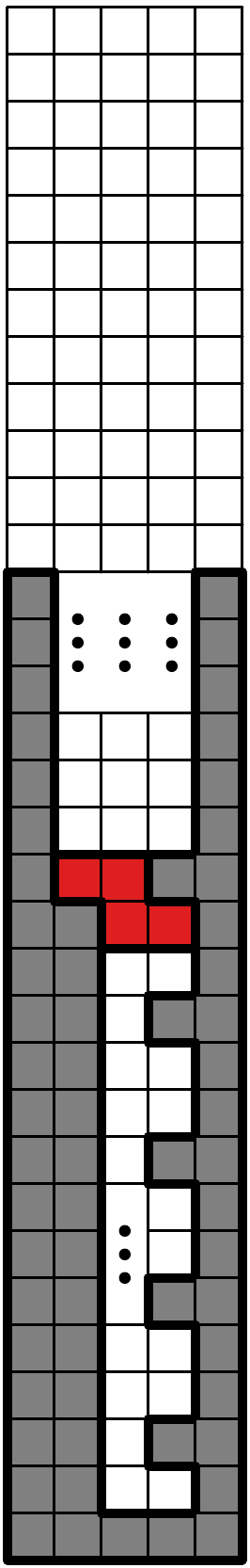}
    \caption{}
  \end{subfigure}
  \begin{subfigure}[b]{0.1\textwidth}
    \centering
    \includegraphics[width=40pt]{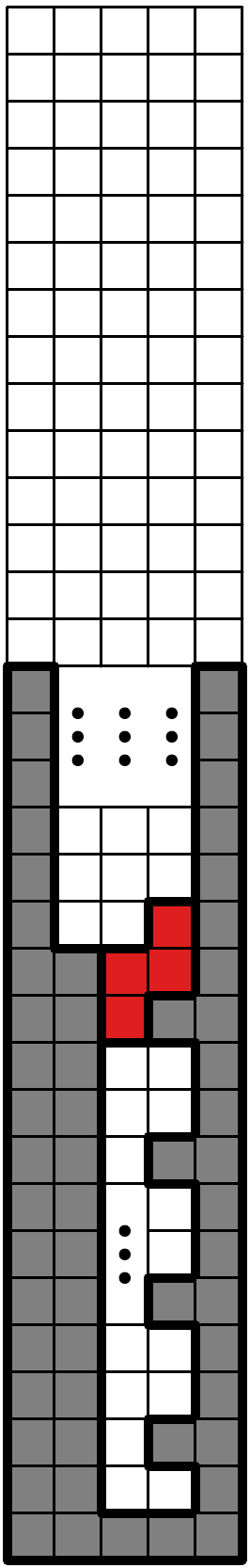}
    \caption{}
  \end{subfigure}
  \begin{subfigure}[b]{0.1\textwidth}
    \centering
    \includegraphics[width=40pt]{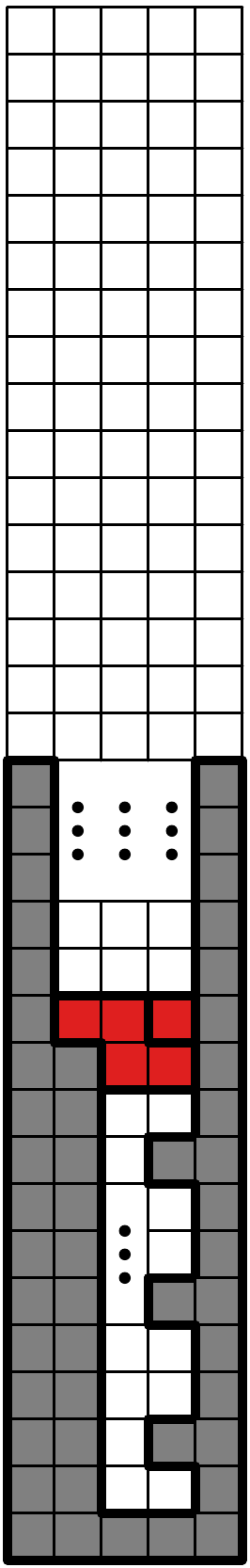}
    \caption{}
  \end{subfigure}
  \begin{subfigure}[b]{0.1\textwidth}
    \centering
    \includegraphics[width=40pt]{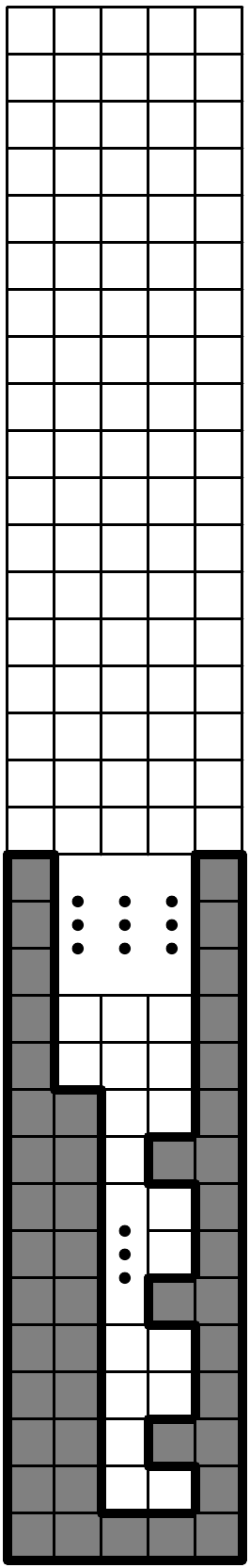}
    \caption{}
  \end{subfigure}
  \caption{The finale structure for $\{\SS, \ZZ\}$, including how to clear rows using $\ZZ$ pieces (requires the $\ZZ$-piece version of the $\SS$ spin in Figure \ref{SSpin2}).}
  \label{SZSetupFinale}
\end{figure}

    Since we have all of the required components, we can apply the general argument from Section \ref{sec:genarg} to conclude NP-hardness.

    This reduction is $\left((\frac n3)!((\frac n3-1)!)^n((\frac n3)!)^{2n+3}\right)$-monious: for each solution to the 3-Partition with Distinct Integers instance, there are $(\frac n3)!$ ways to "permute" which subsets correspond to which bottles, $(\frac n3-1)!$ ways to permute how the pieces in the priming sequence get placed for each $a_i$ sequence and $((\frac n3)!)^{2n+3}$ ways to permute how the pieces in the finale sequence get placed (the only re-ordering allowed is within each group of $\frac n3$ $\ZZ$ pieces). As such, we can also conclude \#P-hardness. 
\end{proof}

A demo of the $\{\SS, \ZZ\}$ bottle structure can be found at \url{https://jstris.jezevec10.com/map/80198}.

\subsection{Remaining Subsets with More Complex Structures}\label{sec:remainingsubsets}

\begin{proposition}\label{prop:JZ}
    Tetris clearing with SRS is NP-hard, and the corresponding counting problem is \#P-hard, even if the type of pieces in the sequence given to the player is restricted to any of $\{\JJ, \TT\}$, $\{\JJ, \ZZ\}$, $\{\LL, \SS\}$, or $\{\LL, \TT\}$.
\end{proposition}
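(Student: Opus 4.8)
The plan is to instantiate the bottle-based framework of Section~\ref{sec:genarg}, reducing from 3-Partition with Distinct Integers, just as in Propositions~\ref{prop:OJ}--\ref{prop:SZ}. By the vertical symmetry of the SRS kick tables (Tables~\ref{tab:kickdataMain}--\ref{tab:kickdataI}), under which $\JJ\leftrightarrow\LL$ and $\SS\leftrightarrow\ZZ$ while $\TT$ is self-symmetric, it suffices to give working constructions for the two representatives $\{\JJ,\TT\}$ and $\{\JJ,\ZZ\}$: reflecting each entire board through a vertical line then yields the constructions for $\{\LL,\TT\}$ and $\{\LL,\SS\}$ respectively. For each representative I would designate $\JJ$ as the \emph{filler} (the piece threaded through the neck to pack the body, $\Theta(a_i)$ pieces completing $a_i$ units) and the remaining piece ($\TT$ or $\ZZ$) as the \emph{primer} (blocking all bottles but one during priming and clearing one top segment during closing), mirroring the $\OO$/$\JJ$ split of Proposition~\ref{prop:OJ}.

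Concretely, for each subset I would fix a bottle whose neck holds $n$ constant-height top segments whose cavity geometry admits the primer only via a forced SRS kick, so the primer cannot slip below the topmost open segment, and whose body is tiled by $t$ identical units, each of area larger than the entire neck, so that packing one unit too many must overflow. Threading the $\JJ$ filler into the body would be done by a $\JJ$-spin maneuver from Appendix~\ref{appendix:spins}, with the unit shape chosen so a fixed count of fillers exactly completes one unit; for $\{\JJ,\ZZ\}$ the $\ZZ$ primer's blocking placement would use the $\ZZ$-piece version of the $\SS$ spin, as in the $\{\SS,\ZZ\}$ setup. Each $a_i$ is then encoded as $(\text{primer}^{\,n/3-1},\ \JJ^{\,\Theta(a_i)},\ \text{primer},\ \ldots)$ with a closing tail that reopens the neck, and a finale sequence $F$ of $\JJ$ pieces clears the $\JJ$ finisher (handled exactly as in Proposition~\ref{prop:OJ}) or collapses and clears the leftover rows if no finisher is used. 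Feeding $(S_1,\dots,S_n,F)$ into the general argument then gives the equivalence of \textsc{yes}-instances and hence NP-hardness.

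For \#P-hardness I would track the solution blow-up exactly as before: a factor $(\tfrac n3)!$ for assigning the three-element groups to bottles, a factor $(\tfrac n3-1)!$ per $S_i$ for permuting which primer blocks which bottle, additional factorial factors for permuting the closing and finale placements, and an internal factor counting how a group of fillers can tile one unit. The product is a fixed $c$-monious factor, so the reduction is parsimonious up to this known multiplier and \#P-hardness follows.

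The hard part will be the geometric design together with the exhaustive case analysis of improper placements. I must choose the top-segment and unit shapes so that (i) the primer has a unique legal blocking position reachable only by the intended kick; (ii) the $\JJ$ filler can reach the body of the single unblocked bottle but is geometrically barred from every blocked bottle; and (iii) every deviation provably either overflows (placing cells above the admissible line, using unit-size $>$ neck-size) or clogs (leaving a configuration from which the required rows can never be cleared). Verifying that each advertised SRS spin actually succeeds, and that no \emph{unintended} kick opens an alternate placement, is the delicate, construction-specific obstacle; I would relegate the full clog analysis to an appendix analogous to Appendix~\ref{appendix:ojclogs}.
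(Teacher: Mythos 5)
Your plan matches the paper's proof in every structural respect: the paper likewise reduces from 3-Partition with Distinct Integers via the Section~\ref{sec:genarg} bottle framework, treats $\JJ$ as the filler threaded into a body of $6n\times 4$ units and $\ZZ$ (resp.\ $\TT$) as the primer, uses the $\ZZ$-piece version of the $\SS$ spin of Figure~\ref{SSpin2} for blocking and the $\JJ$ spin of Figure~\ref{JSpin4} for filling, closes with a $\JJ$ finisher cleared by a finale sequence $(\JJ^{6nt-2})$, obtains $\{\LL,\SS\}$ and $\{\LL,\TT\}$ by vertical symmetry, and derives \#P-hardness from an explicit $c$-monious count. In fact the paper even uses a single bottle (Figure~\ref{JZSetup}) shared by $\{\JJ,\ZZ\}$ and $\{\JJ,\TT\}$, differing only in how the primer blocks a top segment.

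The gap is that you have not actually produced the bottle, and for this proposition the bottle \emph{is} the theorem: the general argument is already established in Section~\ref{sec:genarg}, so the entire content of Proposition~\ref{prop:JZ} is the existence of a concrete top-segment/unit geometry satisfying your conditions (i)--(iii), together with the exhaustive clog analysis of Appendix~\ref{appendix:jzclogs} showing every deviant placement overflows or permanently clogs. Deferring that as ``the hard part'' leaves the proof unestablished rather than merely unpolished. One concrete symptom: you write the priming sequence as $\text{primer}^{\,n/3-1}$, i.e.\ one primer per blocked bottle, but the actual gadget's top segment requires \emph{two} $\ZZ$ (or $\TT$) pieces to block, so the paper's sequence is $(\ZZ^{2n/3-2},\JJ^{6na_i},\ZZ^2,\JJ^{2n/3})$ and the per-$S_i$ priming permutation factor is $\frac{(2n/3-2)!}{2^{\,n/3-1}}$ rather than $(\frac n3-1)!$. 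This is exactly the kind of detail that cannot be guessed from the framework alone and that your \#P accounting would get wrong without the explicit construction in hand.
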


\begin{proof}
    First we discuss the $\{\JJ, \ZZ\}$ and $\{\JJ, \TT\}$ cases. Refer to Figure \ref{JZSetup}(a), which shows the bottle structure for $\{\JJ, \ZZ\}$ and $\{\JJ, \TT\}$. We will also use a $\JJ$ finisher in our setup to prevent rows in the body portion from clearing early.

    % JZ Setup Images
\begin{figure}[ht]
  \centering
  \begin{subfigure}[b]{0.1\textwidth}
    \centering
    \includegraphics[width=40pt]{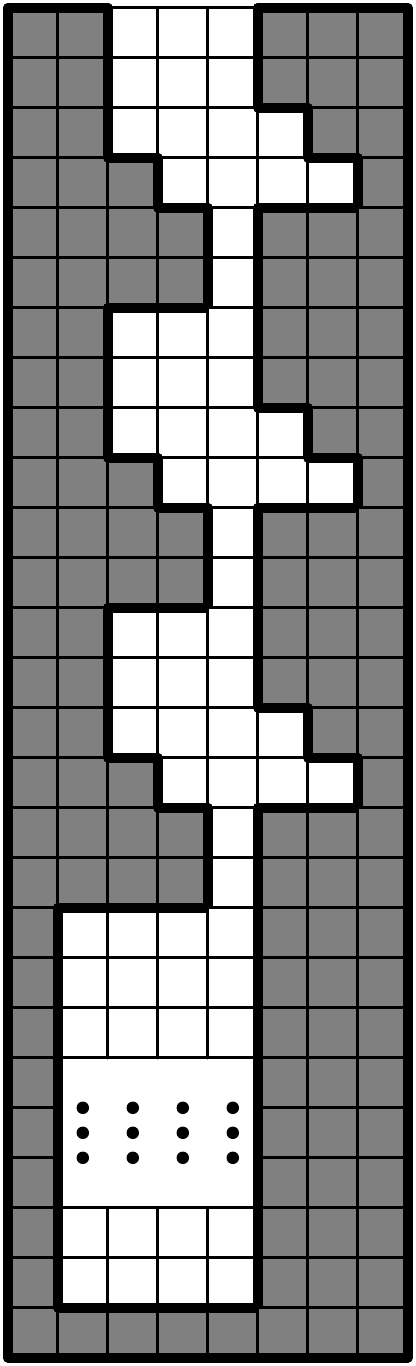}
    \caption{}
  \end{subfigure}
  \begin{subfigure}[b]{0.1\textwidth}
    \centering
    \includegraphics[width=40pt]{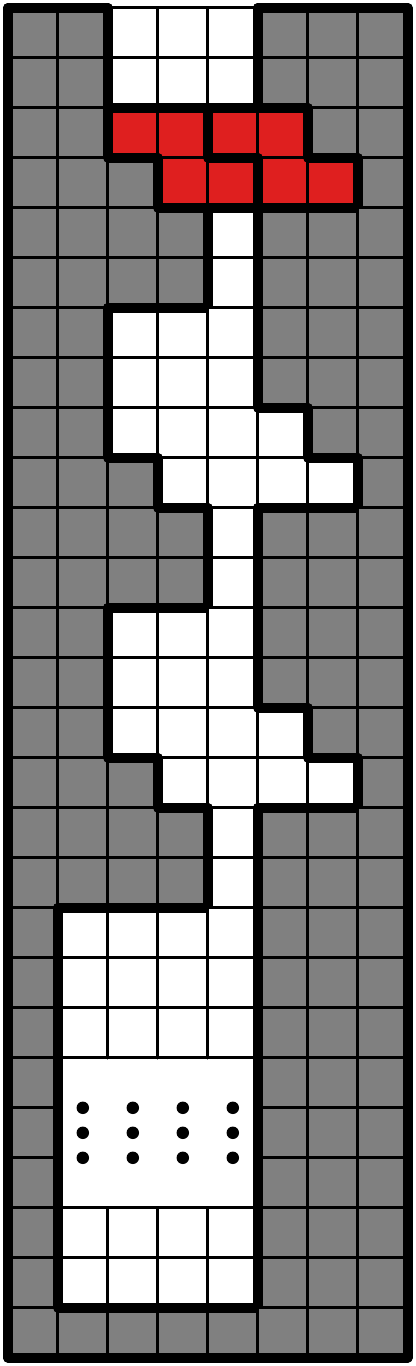}
    \caption{}
  \end{subfigure}
  \begin{subfigure}[b]{0.1\textwidth}
    \centering
    \includegraphics[width=40pt]{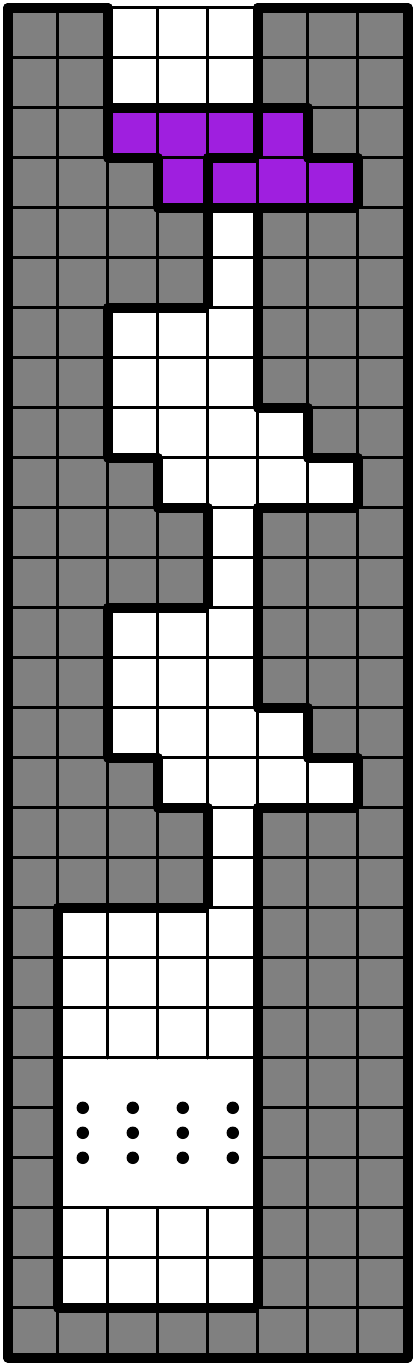}
    \caption{}
  \end{subfigure}
  \begin{subfigure}[b]{0.1\textwidth}
    \centering
    \includegraphics[width=40pt]{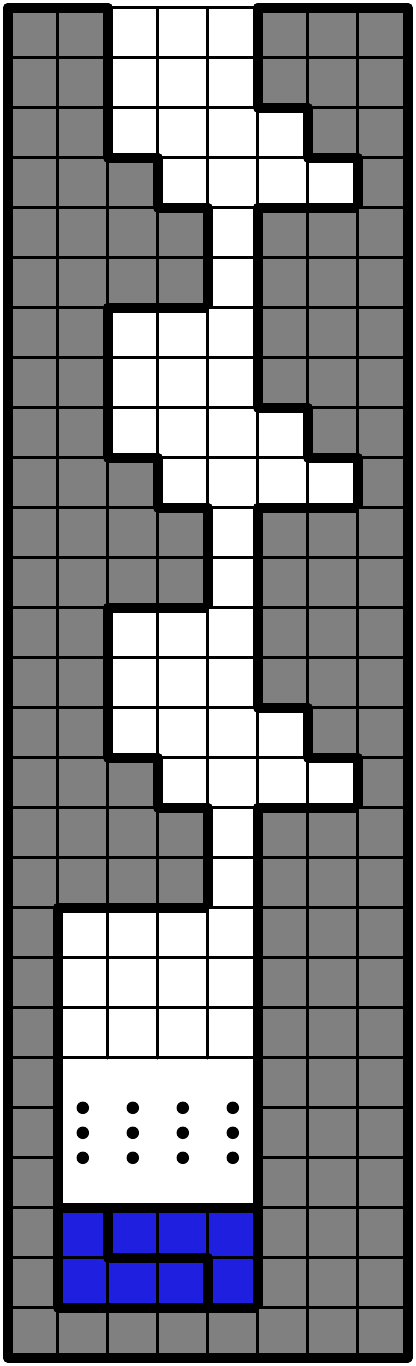}
    \caption{}
  \end{subfigure}
  \begin{subfigure}[b]{0.1\textwidth}
    \centering
    \includegraphics[width=40pt]{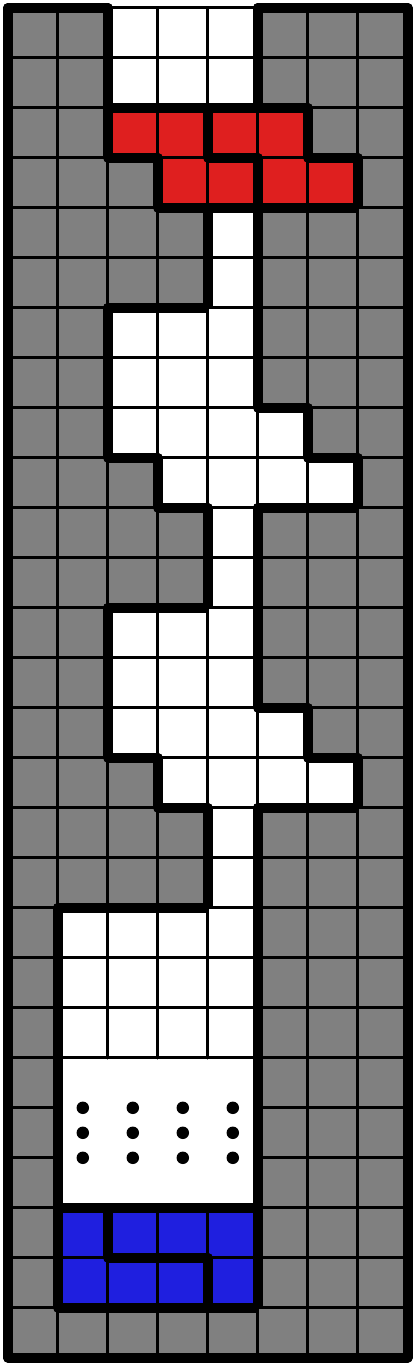}
    \caption{}
  \end{subfigure}
  \begin{subfigure}[b]{0.1\textwidth}
    \centering
    \includegraphics[width=40pt]{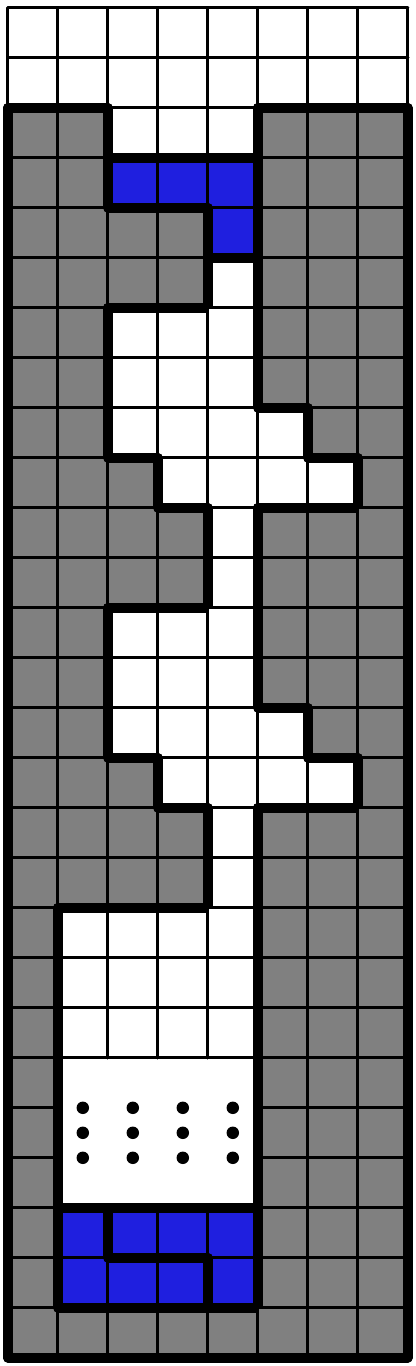}
    \caption{}
  \end{subfigure}
  \begin{subfigure}[b]{0.1\textwidth}
    \centering
    \includegraphics[width=40pt]{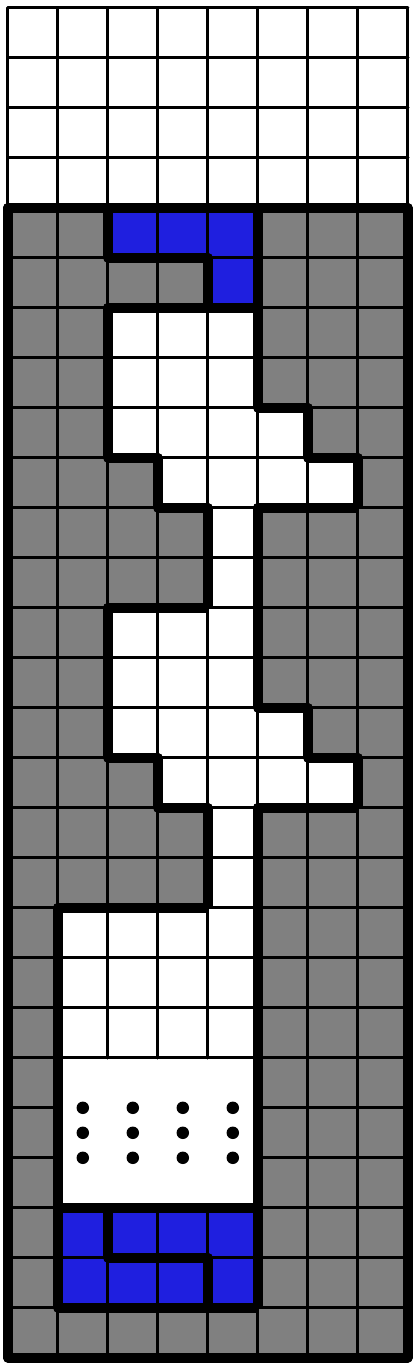}
    \caption{}
  \end{subfigure}
  \begin{subfigure}[b]{0.1\textwidth}
    \centering
    \includegraphics[width=40pt]{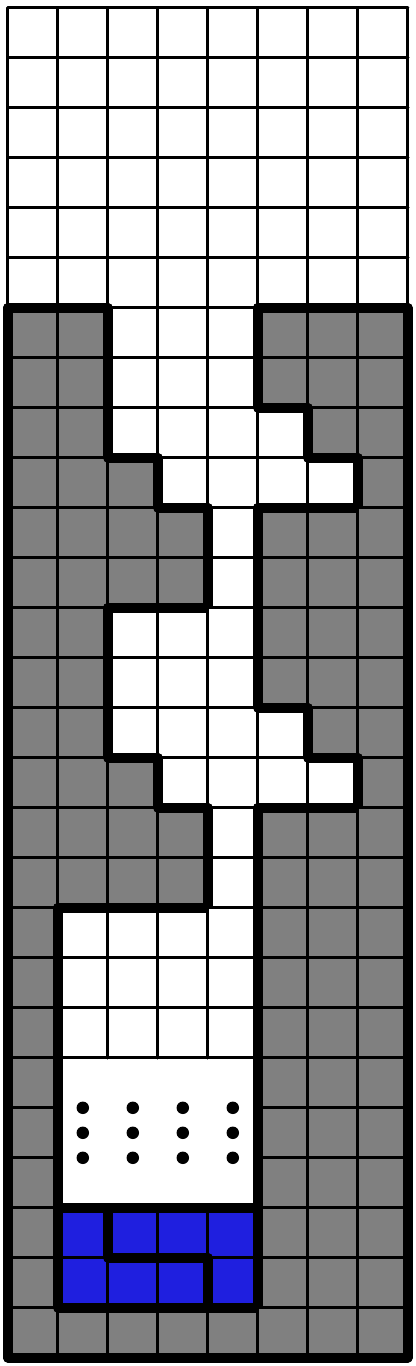}
    \caption{}
  \end{subfigure}
  \caption{The bottle structure for $\{\JJ, \ZZ\}$ and $\{\JJ, \TT\}$. (b) and (c) show how the $\ZZ$ pieces or $\TT$ pieces must block a bottle during the priming sequence (requires the $\ZZ$-piece version of the $\SS$ spin in Figure \ref{SSpin2} for the second $\ZZ$ piece). (d) shows the result of a filling sequence (requires the $\JJ$ spin in Figure \ref{JSpin4} to get $\JJ$ pieces through the neck portion and into the body portion). (e--g) show where the pieces in the closing sequence must go. (h) shows the result of the closing sequence.}
  \label{JZSetup}
\end{figure}

    Here, a "top segment" is the pattern that repeats every 6 lines at the top of the bottle, and a "unit" is one $6n\times 4$ rectangle (note that the size of the neck portion is $16n$, which is smaller than $24n$, the size of a "unit"). No extra lines are necessary. 
    
    For $\{\JJ, \ZZ\}$, each $a_i$ is encoded by the sequence of pieces $(\ZZ^{2n/3 - 2}, \JJ^{6na_i}, \ZZ^2, \allowbreak \JJ^{2n/3})$. The priming sequence is $(\ZZ^{2n/3 - 2})$, the filling sequence is $(\JJ^{6na_i})$, and the closing sequence is $(\ZZ^2, \JJ^{2n/3})$, with the ways to block, fill, and re-open a bottle properly shown in Figure \ref{JZSetup}. Discussion on improper piece placements can be found in Appendix \ref{appendix:jzclogs}.

    The finale sequence is $(\JJ^{6nt-2})$ to clear the $\JJ$ finisher.

    Since we have all of the required components, we can apply the general argument from Section \ref{sec:genarg} to conclude NP-hardness.

    \begin{sloppypar}Now, for our \#P-hardness analysis, let $f_4(m)$ denote the number of ways to place $m$ $\JJ$ pieces in an rectangle of width $4$ and height $m$ to fill the rectangle exactly. This reduction is $\left((\frac n3)!\left(\frac{(2n/3-2)!((n/3)!)^2}{2^{(n/3-1)}}\right)^n(f_4(6nt))^{n/3}\right)$-monious: for each solution to the 3-Partition with Distinct Integers instance, there are $(\frac n3)!$ ways to "permute" which subsets correspond to which bottles, $\frac{(2n/3-2)!}{2^{(n/3-1)}}$ ways to permute how the pieces in the priming sequence get placed for each $a_i$ sequence (the only restriction is on how the two $\ZZ$ pieces blocking the same bottle get placed), $((\frac n3)!)^2$ ways to permute how the $\JJ$ pieces in the closing sequence get placed for each $a_i$ sequence, and $f_4(6nt)$ ways to permute how the $\JJ$ pieces get placed within the body portion of each bottle. As such, we can also conclude \#P-hardness.\end{sloppypar}

    For $\{\JJ, \TT\}$, we use the exact same bottle structure, and have each $a_i$ be encoded by the sequence of pieces $(\TT^{2n/3 - 2}, \JJ^{6na_i}, \TT^2, \JJ^{2n/3})$. The way to block a bottle is slightly different and is shown in Figure \ref{JZSetup}(c) and the cases to check for the priming sequence are slightly different (see Appendix \ref{appendix:jzclogs}), but the proof is overall the same.

    We get a similar argument for $\{\LL, \SS\}$ and $\{\LL, \TT\}$ by vertical symmetry.
\end{proof}

A demo of the $\{\JJ, \ZZ\}$ bottle structure can be found at \url{https://jstris.jezevec10.com/map/80205}.

\begin{proposition}\label{prop:JS}
    Tetris clearing with SRS is NP-hard, and the corresponding counting problem is \#P-hard, even if the type of pieces in the sequence given to the player is restricted to either $\{\JJ, \SS\}$ or $\{\LL, \ZZ\}$.
\end{proposition}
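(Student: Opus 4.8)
The plan is to reduce from 3-Partition with Distinct Integers (strongly ASP-complete by Theorem~\ref{3partasp}) and plug the construction into the bottle framework and general argument of Section~\ref{sec:genarg}, treating $\{\JJ, \SS\}$ as the primary case and obtaining $\{\LL, \ZZ\}$ for free by vertical symmetry. As in Proposition~\ref{prop:JZ}, I would let the $\JJ$ pieces be the \emph{fillers}: the body of each bottle is a stack of width-$4$ \emph{units} (e.g.\ $6n \times 4$ rectangles, so $6n$ $\JJ$ pieces fill one unit and the filling sequence is $\JJ^{6na_i}$), and a $\JJ$ finisher placed in the body rows prevents those rows from clearing before every top segment is cleared. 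The $\SS$ pieces would then play the role of \emph{blocker/closer}: the priming sequence is a run of $\SS$ pieces that plugs the topmost top segment of all but one bottle, and the closing sequence uses a mix of $\SS$ and $\JJ$ pieces to clear the plugged segment and reset the neck (with one fewer top segment). The finale sequence would be $\JJ^{m}$ with $m$ on the order of $6nt$, to consume the $\JJ$ finisher once every neck and body row is gone.

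The concrete work is in the top-segment and neck geometry. First I would choose a top-segment notch into which a single $\SS$ piece can be spun and locked (via the $\SS$-spin of Figure~\ref{SSpin2}) so that it plugs that bottle's neck, with the crucial property that, once a segment is plugged, no later $\JJ$ filler can rotate or translate past it into the body — exactly the condition the general argument requires. Simultaneously, the neck of the \emph{unplugged} bottle must admit $\JJ$ pieces threading down into the body, which (as in Proposition~\ref{prop:JZ}) needs the $\JJ$ spin of Figure~\ref{JSpin4}. I would size each unit to exceed the neck capacity (unit size $24n$ versus neck size $16n$) so that attempting to over-fill a bottle forces an overflow, since the neck cannot absorb an extra unit's worth of fillers. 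The remaining details — enumerating every improper "clog" or "overflow" placement of a priming, filling, or closing piece and verifying each is fatal — I would relegate to an appendix, as in the earlier propositions.

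The main obstacle, and the reason this pair needs its own construction rather than folding into Proposition~\ref{prop:JZ}, is the \emph{opposite chirality} of $\JJ$ and $\SS$: in $\{\JJ, \ZZ\}$ the filler and blocker "lean" compatibly, so a single neck accepts $\ZZ$ plugs and $\JJ$ threads, whereas an $\SS$ plug leans the wrong way relative to the $\JJ$ thread. The delicate part is finding a (likely asymmetric) neck profile, with a few extra shaping rows and a differently oriented notch, that simultaneously (i) lets an $\SS$ spin into a blocking position, (ii) forbids an $\SS$ from illicitly descending into an unblocked body, and (iii) still lets $\JJ$ fillers spin through the unplugged neck — all without handing the solver an unintended way to clear a top-segment row early or to shuttle pieces between adjacent top segments. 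Once the geometry is pinned down, correctness follows immediately from Section~\ref{sec:genarg}, and \#P-hardness follows by tallying the multiplicative solution blow-up: a factor $(\frac n3)!$ for assigning subsets to bottles, a per-$a_i$ factor from permuting the priming and closing placements (of the shape $(\frac n3-1)!$ times a power of $(\frac n3)!$, possibly divided by a small symmetry factor if two blockers share a bottle as in Proposition~\ref{prop:JZ}), and a factor $f_4(6nt)$ per bottle for the internal $\JJ$ packings, giving a $c$-monious reduction and hence \#P-hardness.
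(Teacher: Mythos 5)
Your plan coincides with the paper's proof in every respect you make concrete: the paper reduces from 3-Partition with Distinct Integers via the Section~\ref{sec:genarg} bottle framework, uses $6n\times 4$ units filled by $\JJ^{6na_i}$, a $\JJ$ finisher cleared by a finale of $(\JJ^{6nt-2})$, a priming sequence $(\SS^{n/3-1})$ of single-$\SS$ plugs, a closing sequence $(\SS, \JJ^{4n/3})$, and a moniousness count of the form $(\frac n3)!\bigl((\frac n3-1)!((\frac n3)!)^4\bigr)^n (f_4(6nt))^{n/3}$, with $\{\LL,\ZZ\}$ obtained by vertical symmetry. You also correctly diagnose why this pair cannot simply be folded into Proposition~\ref{prop:JZ}.

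However, the proposal has a genuine gap exactly where you flag ``the delicate part'': you never exhibit the top-segment and neck geometry, and for a gadget reduction that geometry \emph{is} the proof. Everything else --- the general argument, the sequence bookkeeping, the moniousness tally --- is boilerplate established in Section~\ref{sec:genarg} and reused across Propositions~\ref{prop:OJ}--\ref{prop:JL}; the only content specific to $\{\JJ,\SS\}$ is the pattern in Figure~\ref{JSSetup}(a) (a top segment repeating every $8$ lines, a neck of size $20n$ against units of size $24n$) together with the verification in Appendix~\ref{appendix:jsclogs} that every improper $\SS$ or $\JJ$ placement clogs or overflows the bottle. Your guessed parameters also drift from the working ones: the paper threads $\JJ$ pieces through this neck using the spins of Figures~\ref{JSpin1}, \ref{JSpin2}, and \ref{JSpin3} rather than Figure~\ref{JSpin4}, and the $\SS$ plug is not the Figure~\ref{SSpin2} maneuver --- small signs that the geometry you have in mind would need real rework before your required properties (i)--(iii) hold simultaneously. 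Saying that correctness follows once the geometry is pinned down is true but vacuous here; pinning it down, and then checking the clog cases one by one, is the entire burden of this proposition.
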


\begin{proof}
    First we discuss the $\{\JJ, \SS\}$ case. Refer to Figure \ref{JSSetup}(a), which shows the bottle structure for $\{\JJ, \SS\}$. We will also use a $\JJ$ finisher in our setup to prevent rows in the body portion from clearing early.

    % JS Setup Images
\begin{figure}[ht]
  \centering
  \begin{subfigure}[b]{0.1\textwidth}
    \centering
    \includegraphics[width=40pt]{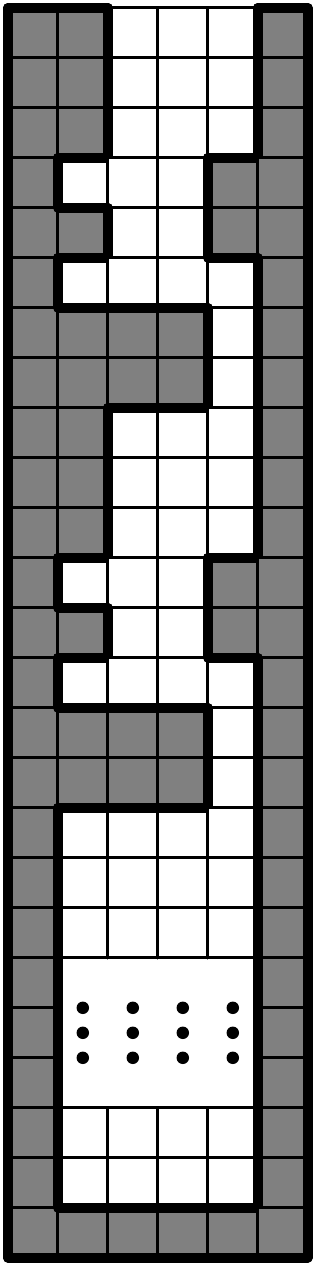}
    \caption{}
  \end{subfigure}
  \begin{subfigure}[b]{0.1\textwidth}
    \centering
    \includegraphics[width=40pt]{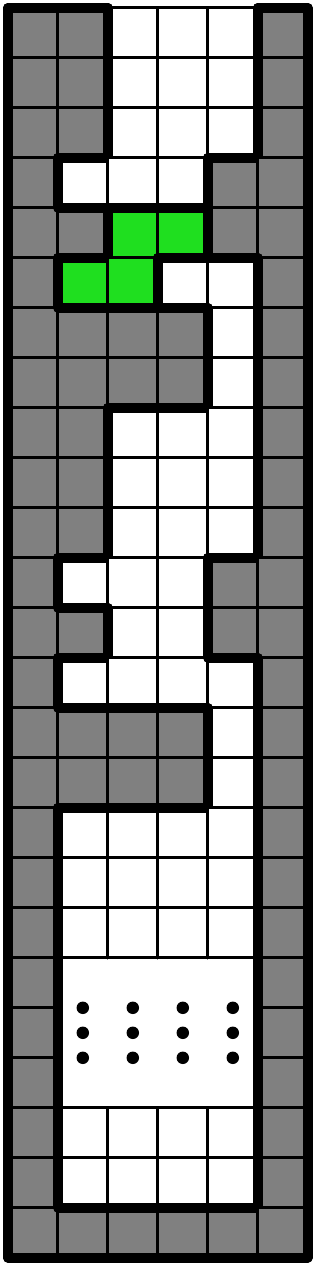}
    \caption{}
  \end{subfigure}
  \begin{subfigure}[b]{0.1\textwidth}
    \centering
    \includegraphics[width=40pt]{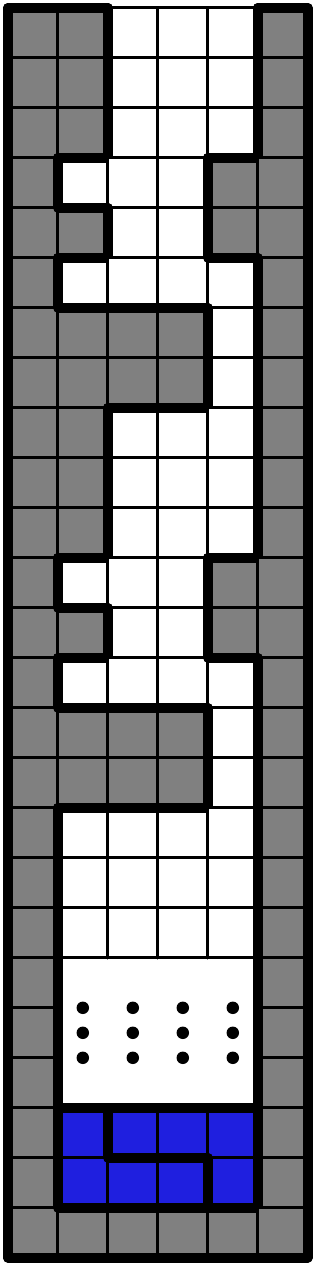}
    \caption{}
  \end{subfigure}
  \begin{subfigure}[b]{0.1\textwidth}
    \centering
    \includegraphics[width=40pt]{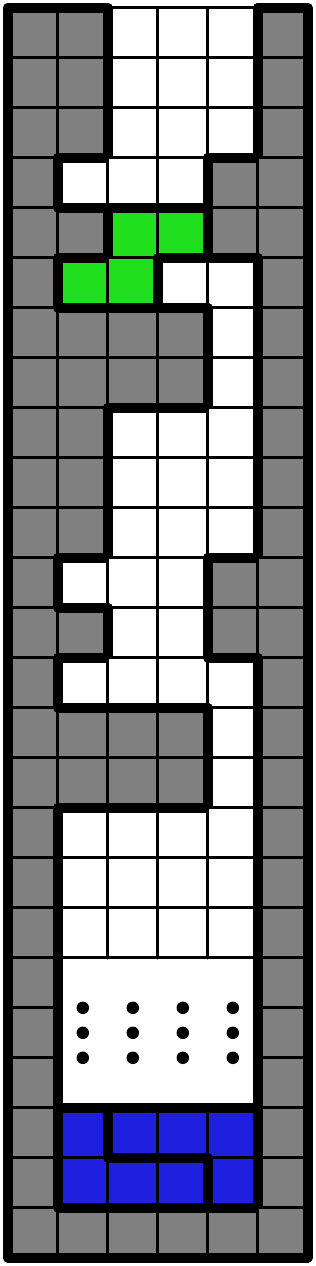}
    \caption{}
  \end{subfigure}
  \begin{subfigure}[b]{0.1\textwidth}
    \centering
    \includegraphics[width=40pt]{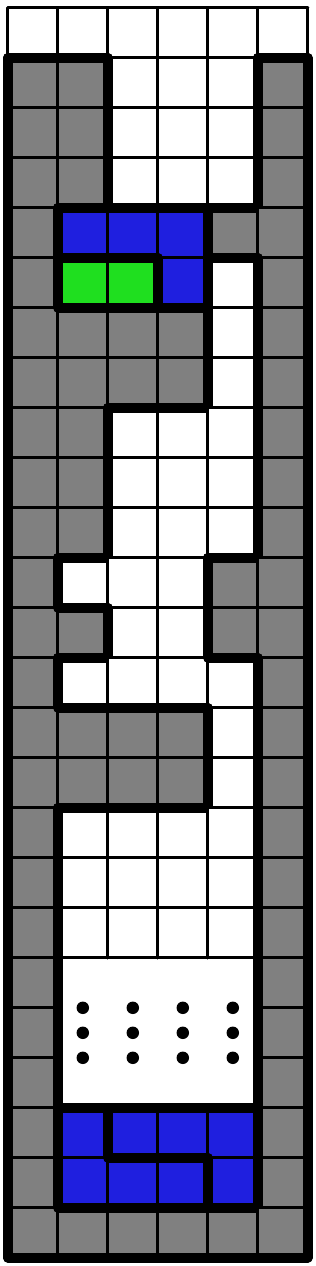}
    \caption{}
  \end{subfigure}
  \begin{subfigure}[b]{0.1\textwidth}
    \centering
    \includegraphics[width=40pt]{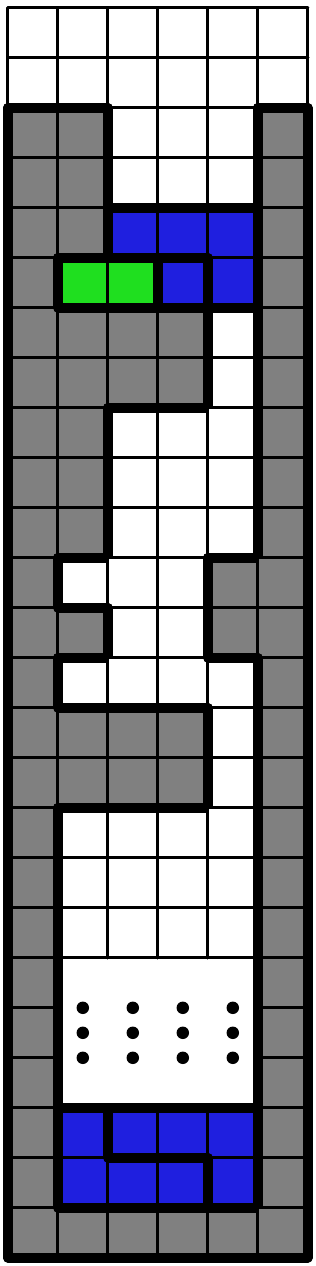}
    \caption{}
  \end{subfigure}
  \begin{subfigure}[b]{0.1\textwidth}
    \centering
    \includegraphics[width=40pt]{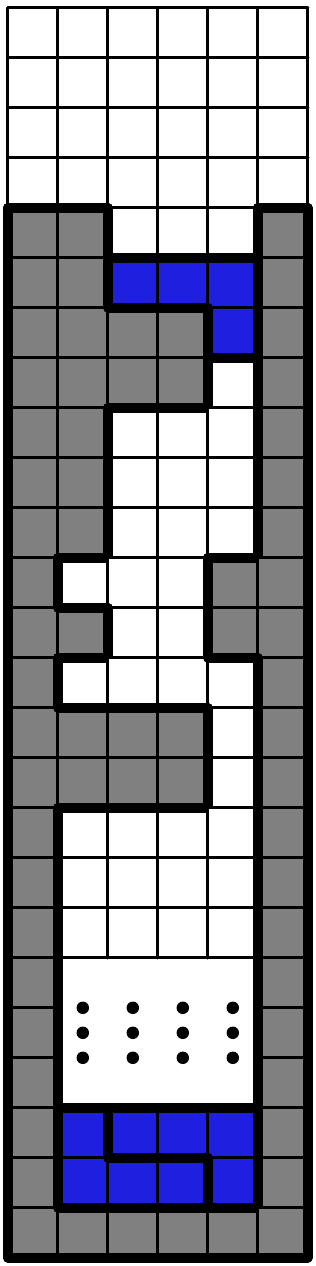}
    \caption{}
  \end{subfigure}
  \begin{subfigure}[b]{0.1\textwidth}
    \centering
    \includegraphics[width=40pt]{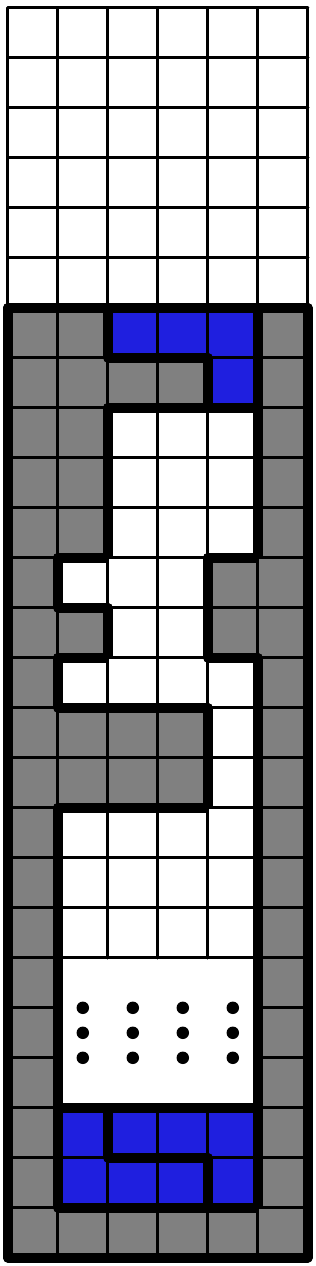}
    \caption{}
  \end{subfigure}
  \begin{subfigure}[b]{0.1\textwidth}
    \centering
    \includegraphics[width=40pt]{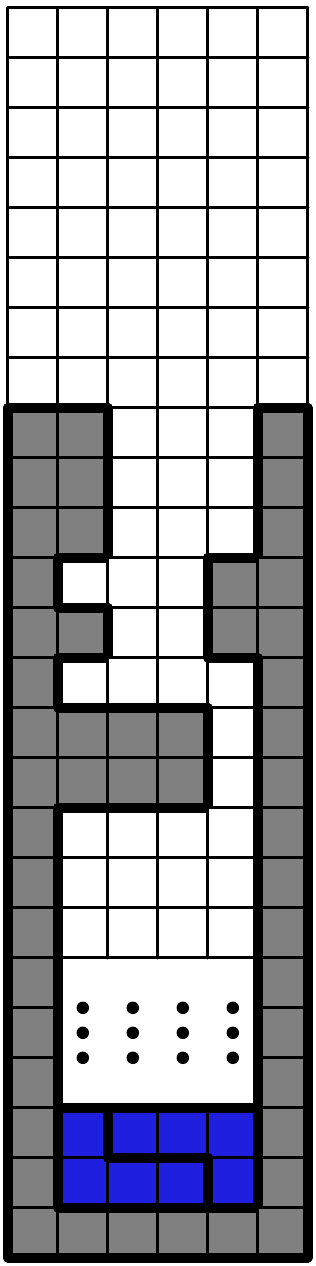}
    \caption{}
  \end{subfigure}
  \caption{The bottle structure for $\{\JJ, \SS\}$. (b) shows how the $\SS$ piece must block a bottle during the priming sequence. (c) shows the result of a filling sequence (requires the $\JJ$ spins in Figures \ref{JSpin2} and \ref{JSpin3} to get $\JJ$ pieces through the neck portion and into the body portion). (d--h) show where the pieces in the closing sequence must go (requires the $\JJ$ spin in Figure \ref{JSpin1}). (i) shows the result of the closing sequence.}
  \label{JSSetup}
\end{figure}

    Here, a "top segment" is the pattern that repeats every 8 lines at the top of the bottle, and a "unit" is one $6n\times 4$ rectangle (note that the size of the neck portion is $20n$, which is smaller than $24n$, the size of a "unit"). No extra lines are necessary. 
    
    Each $a_i$ is encoded by the sequence of pieces $(\SS^{n/3 - 1}, \JJ^{6na_i}, \SS, \JJ^{4n/3})$. The priming sequence is $(\SS^{n/3 - 1})$, the filling sequence is $(\JJ^{6na_i})$, and the closing sequence is $(\SS, \JJ^{4n/3})$, with the ways to block, fill, and re-open a bottle properly shown in Figure \ref{JSSetup}. Discussion on improper piece placements can be found in Appendix \ref{appendix:jsclogs}.

    The finale sequence is $(\JJ^{6nt-2})$ to clear the $\JJ$ finisher.

    Since we have all of the required components, we can apply the general argument from Section \ref{sec:genarg} to conclude NP-hardness.

    \begin{sloppypar}Now, for our \#P-hardness analysis, let $f_4(m)$ denote the number of ways to place $m$ $\JJ$ pieces in an rectangle of width $4$ and height $m$ to fill the rectangle exactly. This reduction is $\left((\frac n3)!\left((\frac{n}{3}-1)!(\frac{n}{3})!^4\right)^n(f_4(6nt))^{n/3}\right)$-monious: for each solution to the 3-Partition with Distinct Integers instance, there are $(\frac n3)!$ ways to "permute" which subsets correspond to which bottles, $(\frac{n}{3}-1)!$ ways to permute how the $\SS$ pieces in the priming sequence get placed for each $a_i$ sequence, $((\frac n3)!)^4$ ways to permute how the $\JJ$ pieces in the closing sequence get placed for each $a_i$ sequence, and $f_4(6nt)$ ways to permute how the $\JJ$ pieces get placed within the body portion of each bottle. As such, we can also conclude \#P-hardness.\end{sloppypar}

    We get a similar argument for $\{\LL, \ZZ\}$ by vertical symmetry.
\end{proof}

A demo of the $\{\JJ, \SS\}$ bottle structure can be found at \url{https://jstris.jezevec10.com/map/83069}.

\begin{proposition}\label{prop:JL}
    Tetris clearing with SRS is NP-hard, and the corresponding counting problem is \#P-hard, even if the type of pieces in the sequence given to the player is restricted to $\{\JJ, \LL\}$.
\end{proposition}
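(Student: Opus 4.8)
The plan is to follow exactly the bottle framework of Section~\ref{sec:genarg}, reducing from 3-Partition with Distinct Integers, so that the only real work is to design a $\{\JJ,\LL\}$ bottle (plus a finisher) in which one piece type blocks top segments during the priming and closing sequences while the other threads down to fill the body. Since the $\JJ$ and $\LL$ kick data are vertical mirror images, I would fix the convention that $\LL$ pieces play the blocking role and $\JJ$ pieces play the filling role (the opposite assignment follows by vertical symmetry, exactly as $\{\LL,\ZZ\}$ follows from $\{\JJ,\SS\}$ in Proposition~\ref{prop:JS}). As in the other ``more complex'' cases, I would make the body a stack of width-$4$ units that $\JJ$ pieces fill using spin maneuvers through the neck analogous to those in Proposition~\ref{prop:JS}, encode each $a_i$ by a sequence of the shape $(\LL^{\,n/3-1}, \JJ^{\,6na_i}, \LL, \JJ^{\,\Theta(n)})$ (priming $=\LL^{\,n/3-1}$, filling $=\JJ^{\,6na_i}$, closing $=(\LL,\JJ^{\,\Theta(n)})$), and attach a $\JJ$ finisher (Figure~\ref{Finishers}) cleared by a finale $(\JJ^{\,6nt-2})$ to stop the body rows from clearing early.

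The heart of the construction is the neck geometry, and this is where I expect the main obstacle to lie. In every previous pair the blocking piece and the filling piece had genuinely different shapes, so it was easy to build a top segment only the blocker fits and a channel only the filler can spin through. Here $\JJ$ and $\LL$ are mirror images, so the symmetry must be broken entirely by the handedness of the \emph{board}: the top-segment holes must be shaped so that an $\LL$ piece seats in exactly one of them and blocks it, while a $\JJ$ placed there cannot block cleanly (it either overflows above the sacrificial line or leaves a hole that clogs the bottle), and simultaneously the channel feeding the body must admit the $\JJ$ spin-drop but catch an $\LL$ on a protrusion so that $\LL$ can never reach the body. Designing a single handed profile that enforces all four implications at once---$\LL$ blocks, $\LL$ cannot fill, $\JJ$ fills, $\JJ$ cannot block---is the delicate part, and verifying it reduces to the usual clog/overflow case analysis, which I would relegate to an appendix as in the other propositions.

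Granting such a bottle, correctness is immediate from the general argument of Section~\ref{sec:genarg}: a valid 3-partition lets each $S_i$ block all but the bottle $j_i$ with $a_i\in D_{j_i}$, fill exactly $a_i$ units there, and reset the neck, after which every bottle holds exactly $t$ units and the $\JJ$ finale clears the finisher and the body; conversely, the filling pieces of each $S_i$ are forced into one bottle and can neither overfill (else overflow, using that a unit is larger than the neck) nor underfill (else the totals cannot reach $t\cdot\tfrac n3$), so any board-clearing play yields a partition.

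Finally, for \#P-hardness I would track the solution blow-up exactly as in Propositions~\ref{prop:JZ} and~\ref{prop:JS}: a factor $(\tfrac n3)!$ for assigning subsets to bottles, a factor per $a_i$ for permuting which priming $\LL$ blocks which bottle and how the closing $\JJ$ pieces are placed, and a factor $f_4(6nt)$ per bottle for the number of ways $\JJ$ pieces exactly tile a width-$4$, height-$6nt$ region. This yields a fixed $c$-monious reduction and hence \#P-hardness. I would not attempt the 20G variant, matching the ``G?'' entry in Table~\ref{tab:subset}: under instant gravity the last-second spins used to thread $\JJ$ into the body and to seat $\LL$ as a blocker appear not to survive.
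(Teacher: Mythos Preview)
Your high-level plan matches the paper exactly: reduce from 3-Partition with Distinct Integers via the bottle framework, let $\LL$ block and $\JJ$ fill, attach a $\JJ$ finisher, and track the blow-up factor for \#P-hardness. You also correctly identify the crux, namely that $\JJ$ and $\LL$ are mirror images and the asymmetry must come entirely from the board.

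Where your proposal diverges is in the concrete sequence shape and, correspondingly, in how much asymmetry the neck actually has to supply. You model the encoding on Proposition~\ref{prop:JS}, with one blocker per bottle: priming $\LL^{\,n/3-1}$, closing $(\LL,\JJ^{\,\Theta(n)})$. The paper instead follows the pattern of Proposition~\ref{prop:JZ} and uses \emph{two} $\LL$ pieces per blocked bottle, with $a_i$ encoded as $(\LL^{\,2n/3-2},\;\JJ^{\,8na_i},\;\LL^2,\;\JJ^{\,n/3},\;\LL^{\,n})$ and $8n\times 4$ units rather than $6n\times 4$. The closing sequence in particular is more elaborate than you anticipate (it ends with a full run of $\LL^{\,n}$). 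Most strikingly, the paper's top segments are \emph{horizontally offset} from one another so that the bottom rows of each segment align with the top rows of the next---a device that appears in no other construction in the paper and is essential to making the $\JJ$-spin channel and the $\LL$-blocking hole coexist. Your proposal flags the four implications that the handed profile must enforce but does not supply such a profile, and the paper's solution suggests that the one-$\LL$-per-bottle route you sketch may simply not close; at minimum, the two-$\LL$ block plus the staircase offset are non-obvious ingredients you would need to discover.

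One minor point in the opposite direction: you say you would not attempt 20G because the spins ``appear not to survive,'' but the paper conjectures the construction \emph{does} work under 20G (hence the ``G?'' in Table~\ref{tab:subset}), precisely because the neck holes are at most two cells wide and the pieces can be shuttled horizontally before rotating.
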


\begin{proof}
    Refer to Figure \ref{JLSetup}(a), which shows the bottle structure for $\{\JJ, \LL\}$. We will also use a $\JJ$ finisher in our setup to prevent rows in the body portion from clearing early.

    % JL Setup Images
\begin{figure}[ht]
  \centering
  \begin{subfigure}[b]{0.11\textwidth}
    \centering
    \includegraphics[width=50pt]{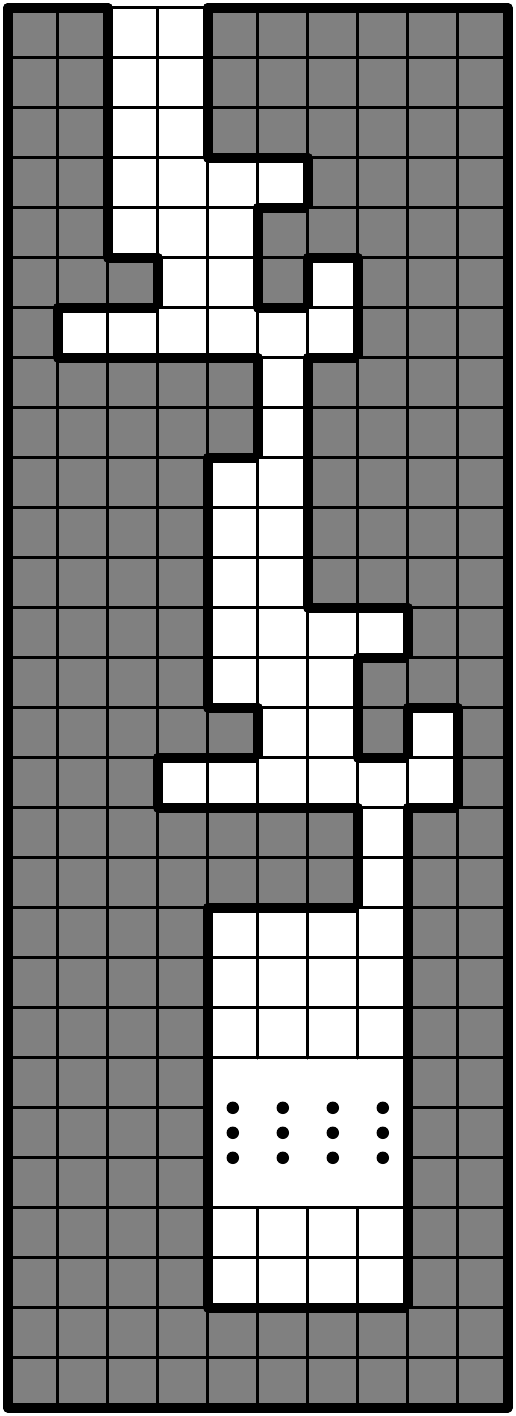}
    \caption{}
  \end{subfigure}
  \begin{subfigure}[b]{0.11\textwidth}
    \centering
    \includegraphics[width=50pt]{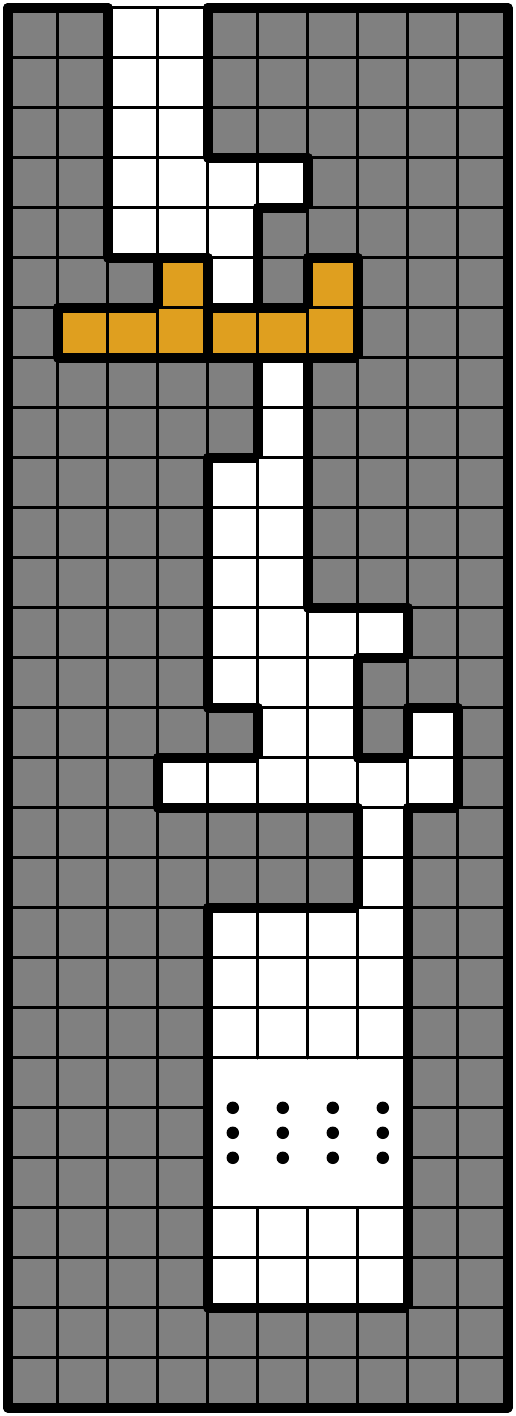}
    \caption{}
  \end{subfigure}
  \begin{subfigure}[b]{0.11\textwidth}
    \centering
    \includegraphics[width=50pt]{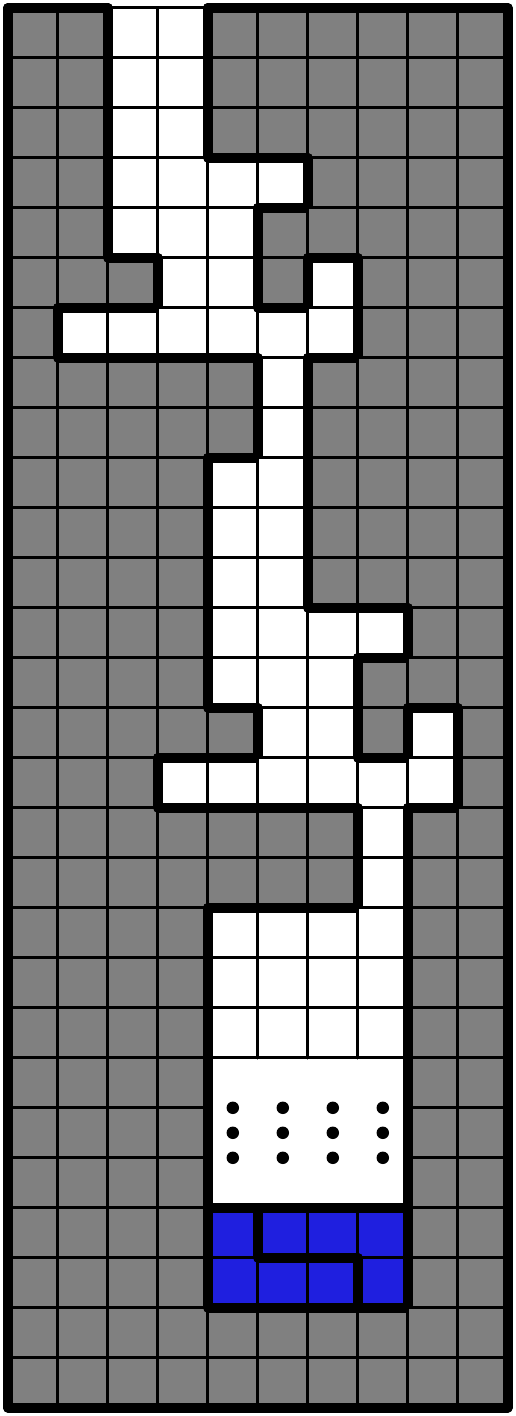}
    \caption{}
  \end{subfigure}
  \begin{subfigure}[b]{0.11\textwidth}
    \centering
    \includegraphics[width=50pt]{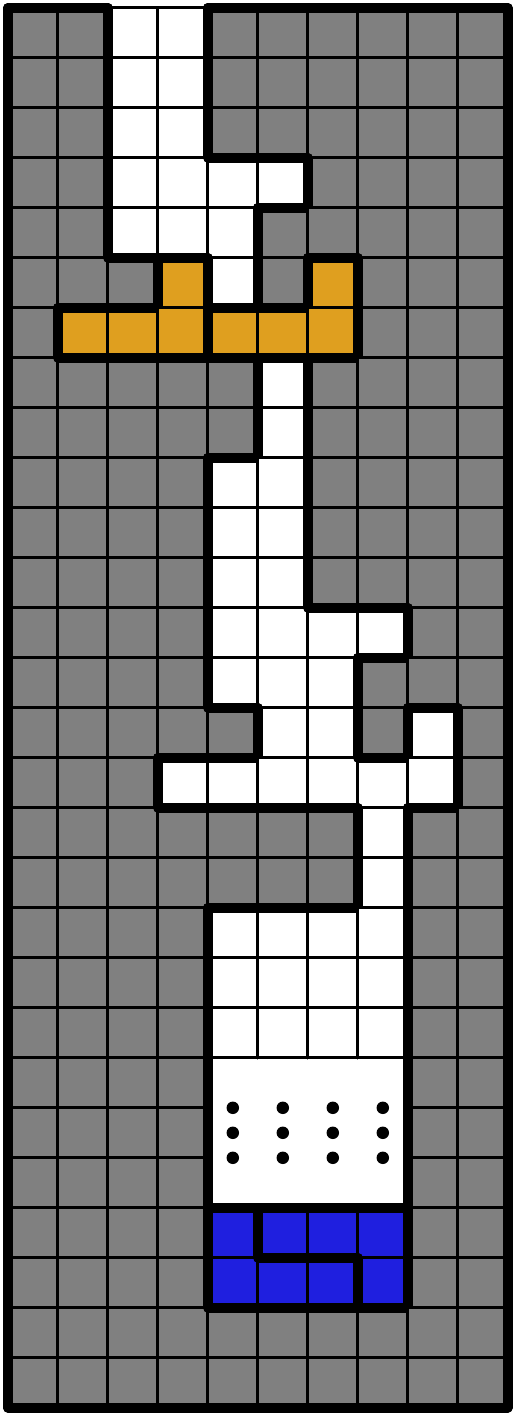}
    \caption{}
  \end{subfigure}
  \begin{subfigure}[b]{0.11\textwidth}
    \centering
    \includegraphics[width=50pt]{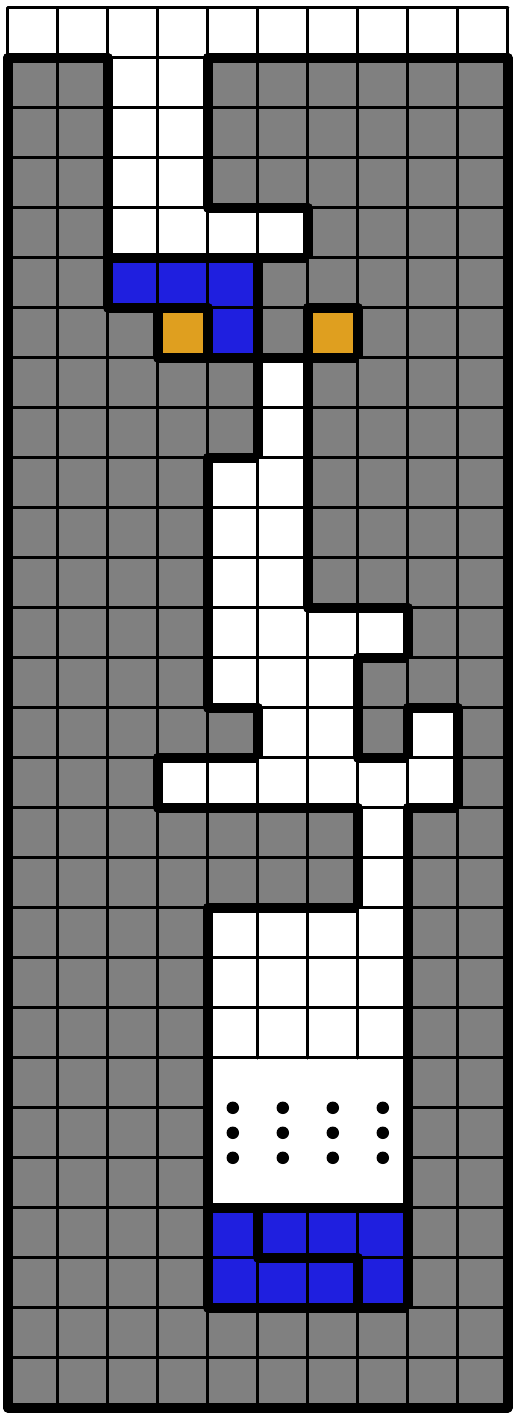}
    \caption{}
  \end{subfigure}
  \begin{subfigure}[b]{0.11\textwidth}
    \centering
    \includegraphics[width=50pt]{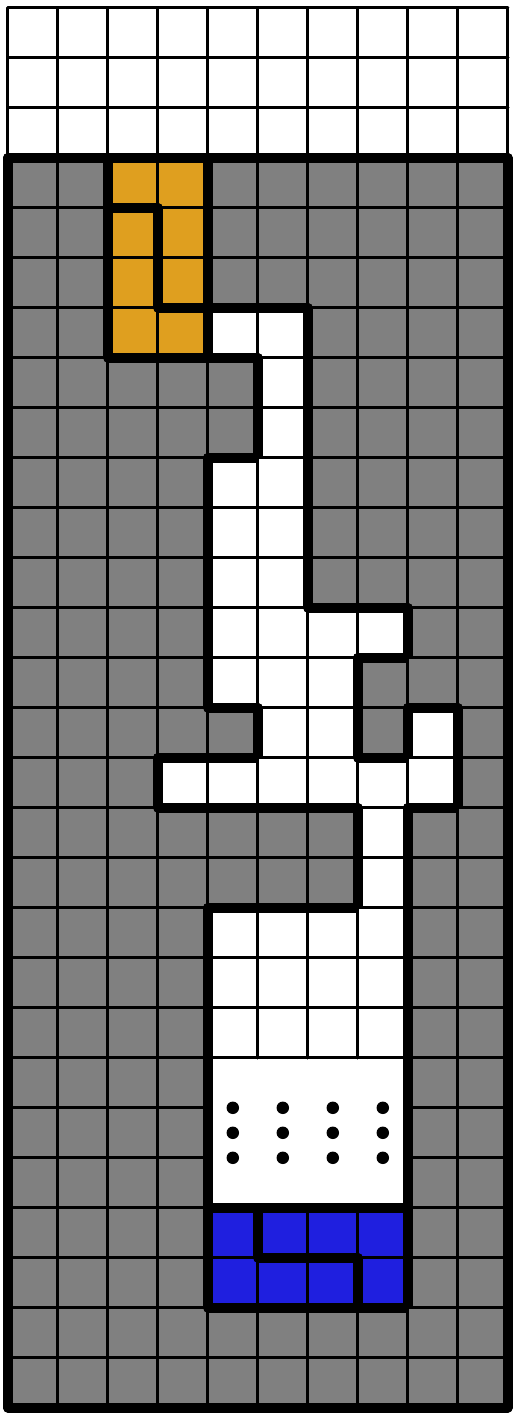}
    \caption{}
  \end{subfigure}
  \begin{subfigure}[b]{0.11\textwidth}
    \centering
    \includegraphics[width=50pt]{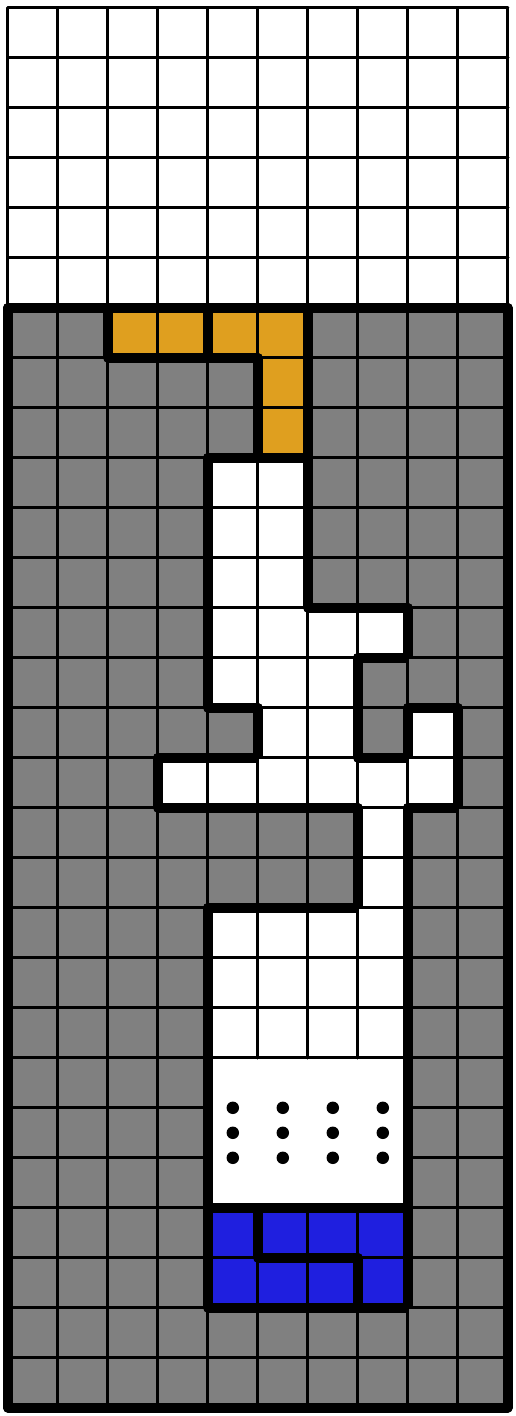}
    \caption{}
  \end{subfigure}
  \begin{subfigure}[b]{0.11\textwidth}
    \centering
    \includegraphics[width=50pt]{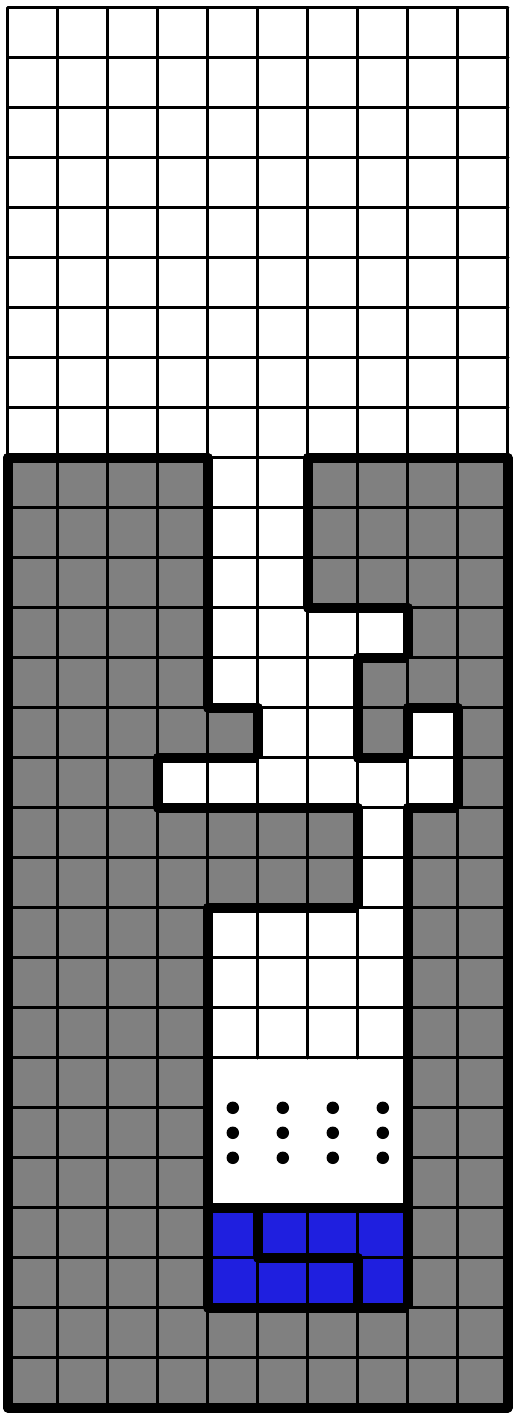}
    \caption{}
  \end{subfigure}
  \caption{The bottle structure for $\{\JJ, \LL\}$. (b) shows how the $\LL$ pieces must block a bottle during the priming sequence (requires the $\LL$-piece versions of the $\JJ$ spins in Figures \ref{JSpin3} and \ref{JSpin5})). (c) shows the result of a filling sequence (requires the $\JJ$ spins in Figures \ref{JSpin3} and \ref{JSpin4} to get $\JJ$ pieces through the neck portion and into the body portion). (d--g) show where the pieces in the closing sequence must go. (h) shows the result of the closing sequence.}
  \label{JLSetup}
\end{figure}

    Here, a "top segment" is the pattern that repeats every 9 lines at the top of the bottle, and a "unit" is one $8n\times 4$ rectangle (note that the size of the neck portion is $24n$, which is smaller than $32n$, the size of a "unit"). Notice that the "top segments" are horizontally offset from each other in order to align the bottom lines of a "top segment" to the top lines of the next "top segment". We will also use a $\JJ$ finisher in our setup to prevent rows in the body portion from clearing early.
    
    Each $a_i$ is encoded by the sequence of pieces $(\LL^{2n/3-2}, \JJ^{8na_i}, \LL^2, \JJ^{n/3}, \LL^n)$. The priming sequence is $(\LL^{2n/3 - 2})$, the filling sequence is $(\JJ^{8na_i})$, and the closing sequence is $(\LL^2, \JJ^{n/3}, \LL^n)$, with the ways to block, fill, and re-open a bottle properly shown in Figure \ref{JLSetup}. Discussion on improper piece placements can be found in Appendix \ref{appendix:jlclogs}.

    The finale sequence is $(\JJ^{8nt-2})$ to clear the $\JJ$ finisher.

    Since we have all of the required components, we can apply the general argument from Section \ref{sec:genarg} to conclude NP-hardness.

    \begin{sloppypar}Now, for our \#P-hardness analysis, let $f_4(m)$ denote the number of ways to place $m$ $\JJ$ pieces in an rectangle of width $4$ and height $m$ to fill the rectangle exactly. This reduction is $\left((\frac n3)!\left(\frac{(2n/3-2)!(2n/3)!((n/3)!)^2}{2^{(2n/3-1)}}\right)^n(f_4(8nt))^{n/3}\right)$-monious: for each solution to the 3-Partition with Distinct Integers instance, there are $(\frac n3)!$ ways to "permute" which subsets correspond to which bottles, $\frac{(2n/3-2)!}{2^{(n/3-1)}}$ ways to permute how the pieces in the priming sequence get placed for each $a_i$ sequence (the only restriction is on how the two $\LL$ pieces blocking the same bottle get placed), $(\frac n3)!\times \frac{(2n/3)!(n/3)!}{2^{n/3}}$ ways to permute how the $\JJ$ pieces and $\LL$ pieces in the closing sequence get placed for each $a_i$ sequence, and $f_4(8nt)$ ways to permute how the $\JJ$ pieces get placed within the body portion of each bottle. As such, we can also conclude \#P-hardness.\end{sloppypar}
\end{proof}

A demo of the $\{\JJ, \LL\}$ bottle structure can be found at \url{https://jstris.jezevec10.com/map/80195}.

We also note the following:

\begin{conj}
    The reduction in the proof of Proposition \ref{prop:JL} works even if pieces experience 20G gravity.
\end{conj}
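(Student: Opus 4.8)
The plan is to exploit the fact that 20G is a \emph{restriction} of the full SRS model: after spawning and after every horizontal shift, a piece falls maximally on its own, so every legal 20G move sequence is also a legal SRS move sequence (the kick tables of Tables~\ref{tab:kickdataMain} and~\ref{tab:kickdataI}, and the rigid line-clear rule, are identical in both models). Consequently the set of placements reachable under 20G is contained in the set reachable under SRS. The immediate payoff is that the soundness direction of Proposition~\ref{prop:JL} transfers for free: any way to clear the board under 20G is in particular a way to clear it under SRS, so by the general argument of Section~\ref{sec:genarg} it still induces a valid $3$-partition. Thus the entire content of the conjecture is the \emph{completeness} direction---that a \textsc{yes} instance of 3-Partition with Distinct Integers can be cleared using only 20G-reachable moves.

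To establish completeness I would replay the intended strategy from the proof of Proposition~\ref{prop:JL} one sub-sequence at a time, checking that each intended placement is 20G-reachable. The guiding principle, used already for $\{\II,\JJ\}$ and $\{\II,\LL\}$ in Proposition~\ref{prop:IL}, is that the player selects the target bottle and horizontal offset while the piece is still in the wide mouth above the neck, and only then lets it descend; since each $\JJ$ or $\LL$ spans at least three columns when laid flat and the sub-slots are narrower, the piece can always be steered to the correct column before committing. The heart of the verification is the collection of spins invoked in Figure~\ref{JLSetup}: the $\LL$-versions of the $\JJ$-spins of Figures~\ref{JSpin3} and~\ref{JSpin5} for priming, and the $\JJ$-spins of Figures~\ref{JSpin3} and~\ref{JSpin4} for filling and descent. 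Each of these is executed from a configuration in which the piece \emph{rests on a ledge} of the bottle immediately before the rotation, and a downward or lateral kick then carries it into the next region; because 20G gravity reproduces exactly that resting configuration, each such spin remains available, after which the piece re-settles and the next spin can be set up.

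The main obstacle---and the reason the statement is only a conjecture---is verifying that \emph{every} intended placement in the filling and closing sequences can be realized without a maneuver that 20G forbids, namely a purely translational slide beneath an overhang at an intermediate height. In the SRS construction the $\JJ$ pieces filling the width-$4$ body interlock in the $f_4(8nt)$ counted ways, and the closing sequence $(\LL^2,\JJ^{n/3},\LL^n)$ tucks pieces into notches created by the horizontally offset top segments; one would need to exhibit, for each such placement, an explicit drop-then-spin realization (dropping the piece onto the relevant ledge of the partially filled body or neck and then applying the appropriate $\JJ$ or $\LL$ spin), and to confirm that the horizontal offsets of the top segments neither trap a descending filling piece at the wrong height in an unblocked bottle nor let a blocking piece slip past a top segment in a blocked one. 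Carrying this out amounts to a finite but delicate case analysis over the local geometry of Figure~\ref{JLSetup} and the spin tables; I expect it to go through, but the number of distinct resting-and-spinning configurations that must be checked---together with the need to preserve the parsimony count for the accompanying \#P-statement---is exactly what makes a fully rigorous argument laborious rather than routine.
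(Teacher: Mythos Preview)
Your proposal aligns with the paper's own treatment: this statement is explicitly left as a conjecture, and the paper's accompanying remark says only that the reasoning is similar to Proposition~\ref{prop:IL} (the holes at the top of the bottles are at most two units wide, so pieces can be steered horizontally before committing) but that ``some more verification is required to make sure that the necessary rotations can be done in 20G gravity.'' Your write-up makes the same two points---soundness is free because 20G restricts SRS, and completeness reduces to checking that each spin in Figure~\ref{JLSetup} fires from a resting configuration---and is simply more explicit than the paper about what that verification would entail and why it is laborious.
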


The reasoning is similar to the reasoning given for the proof of Proposition \ref{prop:IL} (the holes at the top of the bottles are always at most $2$ units wide); however, some more verification is required to make sure that the necessary rotations can be done in 20G gravity.

\subsection{Putting It All Together}\label{sec:alltogether}

Combining all of these results, we get the following result:

\begin{theorem}
    For any size-$2$ subset $A\subseteq\{\ALL\}$, Tetris clearing with SRS is NP-hard, and the corresponding counting problem is \#P-hard, even if the type of pieces in the sequence given to the player is restricted to the piece types in $A$.
\end{theorem}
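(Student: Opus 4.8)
The plan is to deduce the theorem as an immediate corollary of the nine propositions established in this section (Propositions~\ref{prop:IL}, \ref{prop:OJ}, \ref{prop:OS}, \ref{prop:OT}, \ref{prop:ST}, \ref{prop:SZ}, \ref{prop:JZ}, \ref{prop:JS}, and~\ref{prop:JL}) by exhaustive case analysis over the $\binom{7}{2}=21$ size-$2$ subsets $A\subseteq\{\ALL\}$. Each proposition already asserts both NP-hardness of Tetris clearing with SRS and \#P-hardness of the corresponding counting problem for a specific family of such subsets. Since the theorem claims only hardness and not completeness, no membership arguments (i.e.\ showing the decision problem lies in NP and the counting problem in \#P) are required, so it suffices to verify that the nine propositions collectively cover all $21$ pairs.

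First I would dispatch every pair containing an $\II$ piece: Proposition~\ref{prop:IL} handles all six of $\{\II,\JJ\}$, $\{\II,\LL\}$, $\{\II,\OO\}$, $\{\II,\SS\}$, $\{\II,\TT\}$, and $\{\II,\ZZ\}$. Next I would dispatch every remaining pair containing an $\OO$ piece: Proposition~\ref{prop:OJ} covers $\{\OO,\JJ\}$ and $\{\OO,\LL\}$, Proposition~\ref{prop:OS} covers $\{\OO,\SS\}$ and $\{\OO,\ZZ\}$, and Proposition~\ref{prop:OT} covers $\{\OO,\TT\}$, accounting for all five $\OO$-pairs not already handled. This reduces the remaining work to the $\binom{5}{2}=10$ pairs drawn from $\{\SS,\TT,\ZZ,\JJ,\LL\}$.

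Finally I would account for these ten pairs using the last four propositions: Proposition~\ref{prop:ST} covers $\{\SS,\TT\}$ and $\{\TT,\ZZ\}$; Proposition~\ref{prop:SZ} covers $\{\SS,\ZZ\}$; Proposition~\ref{prop:JZ} covers $\{\JJ,\TT\}$, $\{\JJ,\ZZ\}$, $\{\LL,\SS\}$, and $\{\LL,\TT\}$; Proposition~\ref{prop:JS} covers $\{\JJ,\SS\}$ and $\{\LL,\ZZ\}$; and Proposition~\ref{prop:JL} covers $\{\JJ,\LL\}$. Summing $6+5+10=21=\binom{7}{2}$ then confirms the case analysis is exhaustive, and this can be cross-checked against Table~\ref{tab:subset}.

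The only real obstacle is bookkeeping: ensuring the enumeration is complete and that the vertical-symmetry arguments internal to several propositions (which yield $\{\OO,\LL\}$, $\{\OO,\ZZ\}$, $\{\TT,\ZZ\}$, $\{\LL,\SS\}$, $\{\LL,\TT\}$, and $\{\LL,\ZZ\}$ by reflecting their counterparts through a vertical line) are correctly invoked, since those mirrored subsets contribute six of the $21$ entries. Because each proposition's statement already delivers both NP-hardness and \#P-hardness for every subset it names, no further argument is needed once the coverage of all $21$ pairs is verified.
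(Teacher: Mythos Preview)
Your proposal is correct and takes essentially the same approach as the paper: the paper's own proof simply states that Propositions~\ref{prop:IL}, \ref{prop:OJ}, \ref{prop:OS}, \ref{prop:OT}, \ref{prop:ST}, \ref{prop:SZ}, \ref{prop:JZ}, \ref{prop:JS}, and~\ref{prop:JL} cover all size-$2$ subsets (referring to Table~\ref{tab:subset}) and that combining the reductions gives the result. Your version spells out the same coverage explicitly, which is fine and arguably clearer.
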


\begin{proof}
    Propositions \ref{prop:IL}, \ref{prop:OJ}, \ref{prop:OS}, \ref{prop:OT}, \ref{prop:ST}, \ref{prop:SZ}, \ref{prop:JZ}, \ref{prop:JS}, and~\ref{prop:JL} cover all size-$2$ subsets of piece types, as shown in Table~\ref{tab:subset}; combining all of the reductions, we obtain the desired result.
\end{proof}

\begin{remark}
    All of our reductions involve a linear-factor blowup in the $a_i$ for the filling sequences (i.e., we use $\Theta(na_i)$ pieces in the filling sequences); this makes it easier to argue about what happens when an overfill happens and makes the bottle analogy more fitting (since the neck portion is smaller while the body portion is much larger) but makes our reductions somewhat inefficient. Perhaps it is possible to reduce the blowup to a constant factor, though the argument may be a bit more complex.
\end{remark}

% I'm adding this here because it seems like the most natural place to add it, but feel free to move it. Sorry it took so long -Holden
\section{Tetris Survival: Hard Drops Only and 20G}\label{sec:survival}

%In "hard drops only" Tetris, there is no gravity, and the only way to play pieces is to "hard drop" them, instantly forcing them to the bottom. In "20G" Tetris, gravity is at a maximum and all pieces fall instantly to the bottom. 

The previous section mentions NP-hardness of Tetris clearing under the "hard drops only" and "20G" Tetris models. Previous results about general Tetris \cite{TotalTetris_JIP, ThinTetris_JCDCGGG2019} have also proven NP-hardness of Tetris survival, so in this section, we prove that, in both of these variants, Tetris survival is NP-hard using $\{\II,\OO\}$ pieces. This improves upon a result by Temprano \cite{HardDrops} which proves hardness for "hard drops only" mode using the piece subset $\{\LL, \JJ, \II, \ZZ, \TT\}$.

\begin{theorem}
    Tetris survival is NP-complete in the "hard drops only" and "20G" game modes, even if the type of pieces in the sequence given to the player is restricted to $\{\II, \OO\}$.
\end{theorem}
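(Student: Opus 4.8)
The plan is to reduce from 3-Partition, which is strongly NP-complete, so that each integer $a_i$ and the target $t$ may be written in unary and the whole reduction stays polynomial. I would recycle the bottle framework of Section~\ref{sec:genarg}, but reinterpret an \emph{overflow} (a piece forced to lock above the designated loss line) as the failure of survival rather than as an obstruction to clearing. Concretely, I would place $n/3$ bottles side by side, each with a tall body of capacity exactly $t$ fill-pieces and a short neck equipped with a blocking mechanism, and feed the player the sequence $(S_1,\dots,S_n)$ in which each $S_i$ consists of a priming run of $\OO$ pieces, a filling run of $\Theta(na_i)$ $\II$ pieces, and a closing run that clears the current neck rows to reset the necks. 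The priming $\OO$ pieces must seal the necks of all bottles but one, so the filling $\II$ pieces are funneled into the single open bottle; since $t/4<a_i<t/2$, a bottle overflows as soon as it would receive a fourth integer or be filled past $t$, and $\sum_i a_i=(n/3)t$ forces every bottle to end at exactly $t$.

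The correctness argument has the same two directions as Section~\ref{sec:genarg}, but the reverse direction must be strengthened for survival. In the forward direction, given a valid partition I route the $i$th integer's fill pieces into the bottle assigned to its triple; no bottle ever exceeds $t$, so the player never overflows and survives placing every piece. For the reverse direction I must show that \emph{every} surviving play encodes a partition, i.e.\ that there is no way to dodge an overflow other than by respecting the bottle structure. The forcing comes from making each body unit strictly larger than the entire neck (exactly as stipulated in Section~\ref{sec:genarg}): if the player ever misroutes an integer---splitting it across two bottles, failing to seal a neck, or overfilling a bottle---the neck can no longer absorb the pieces belonging to the next unit, and some piece is forced above the loss line. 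Hence survival is possible if and only if the 3-Partition instance is a yes-instance. Membership in NP is immediate, since a list of final piece placements is a polynomial-size certificate checkable in polynomial time, so the problem is NP-complete.

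The hard part will be realizing the blocking and filling under the two restrictive move models, and most delicately under 20G, where a piece drops to the bottom of its column the instant it spawns. Unlike the hard-drop clearing construction for $\{\II,\OO\}$ in Proposition~\ref{prop:IL}, which is not compatible with 20G, a blocking $\OO$ cannot be parked in mid-neck; so I would carve a one-cell \emph{ledge} into the initial board on one side of each $2$-wide neck, positioned so that an $\OO$ dropped into that neck lands on the ledge and seals the passage below it, while the unblocked neck leaves a one-wide channel through which a vertical $\II$ falls straight into the body. I would then verify three things: (i) each priming $\OO$ can be slid horizontally over the intended neck and comes to rest on the correct ledge; (ii) each filling $\II$, after a single rotation at spawn height, reaches the open body and stacks to fill one unit; and (iii) crucially, no alternative horizontal offset or rotation of any piece lets the player avoid an overflow in a no-instance (for example, an $\II$ misdropped onto a ledge is stranded in the neck and forces a later overflow). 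Because $\OO$ never rotates and a vertical $\II$ is only one cell wide, these checks reduce to bounding neck widths against ledge heights, and the identical construction works a fortiori in the more permissive hard-drops-only model, giving NP-completeness in both modes.
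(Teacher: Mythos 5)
Your high-level plan (reduce from 3-Partition, prime with $\OO$ pieces to leave one bucket open, fill with $\II$ pieces, close with an $\OO$ that clears the top rows, and interpret overflow as loss of survival) matches the paper's strategy, but two of your concrete choices leave genuine gaps. First, the forcing argument is incomplete in a way that matters specifically for survival. In the clearing reductions, a player who fails to seal all-but-one neck is punished because the board can no longer be \emph{cleared}; for survival that is no punishment at all. You invoke ``each body unit is larger than the neck,'' but that only rules out overfilling a single bottle --- it does not prevent the player from leaving two bottles open and splitting one integer $a_i$ across them, which can still end with every bottle filled to exactly $t$ without encoding a valid partition. The paper closes this hole with an exact volume/parity argument: the board width is chosen so that the $n/3+1$ priming $\OO$ pieces must tile the two empty top rows except for a single $2\times2$ gap (any other arrangement leaves cells that can never be filled, and since the total piece volume equals the total free volume, an unfillable cell eventually forces a piece above the loss line). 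That exact-area accounting, which makes ``exactly one bucket open per integer'' unavoidable, is the key missing idea in your proposal.

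Second, your 20G gadget is problematic. You propose $2$-wide necks with one-cell ledges, but under 20G a piece hugs the floor as it is slid sideways, so an $\OO$ being moved toward a far bottle will drop into the first open $2$-wide neck it crosses and seal it prematurely; the player cannot in general choose which bottle to leave open. The paper avoids this entirely by making every bucket width $1$: an $\OO$ (width $2$) physically cannot enter any bucket, so all $\OO$ pieces live in the two top rows and can be slid freely, while vertical $\II$ pieces drop cleanly into the unique open column. Relatedly, your claim that a 20G-compatible construction works ``a fortiori'' under hard drops is not sound --- the two models are incomparable (20G permits floor-level tucking that hard drops forbid, and hard drops permit positioning over deep holes that 20G forbids), so each mode needs its own verification, as the paper does separately for both.
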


\begin{proof}
    We reduce from 3-Partition with Distinct Integers using a similar bottle structure to other proofs in this paper. From a 3-Partition with Distinct Integers instance, we create a setup consisting of $\frac{n}{3}$ width 1 buckets, each of height $4t$, separated by width 1 columns. In addition, we create a bucket of height $t-1$ on the left which is blocked by a single square and has an open square on the upper left diagonal. We add one additional column on the left to obtain an even width board. We leave two rows at the top of the board empty. See Figure \ref{20GHD}(a) for details. Each $a_i$ is encoded by the sequence $(\OO^{n/3 + 1}, \II^{a_i}, \OO)$.

% Survival Pictures
\begin{figure}[ht]
  \centering
  \begin{subfigure}[b]{0.25\textwidth}
    \centering
    \includegraphics[width=100pt]{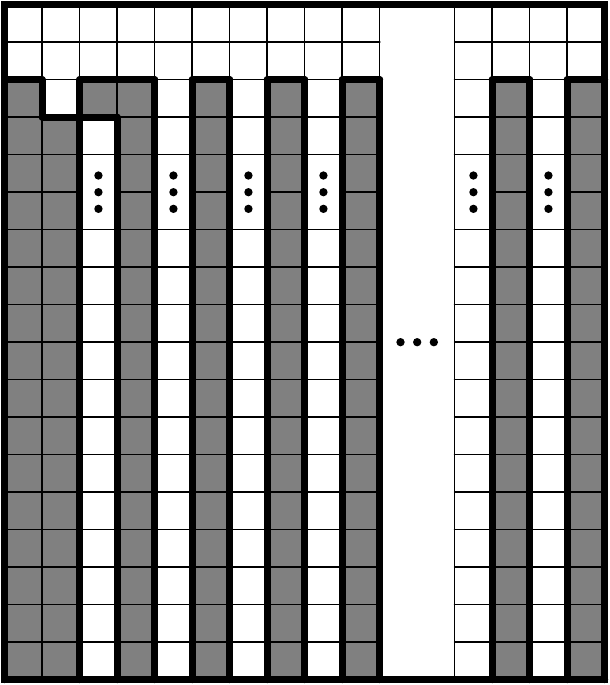}
    \caption{}
  \end{subfigure}
  \begin{subfigure}[b]{0.25\textwidth}
    \centering
    \includegraphics[width=100pt]{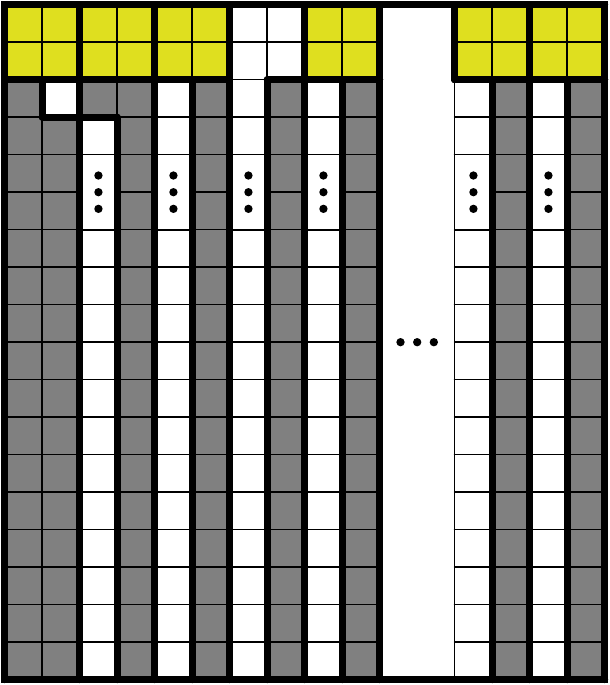}
    \caption{}
  \end{subfigure}
  \begin{subfigure}[b]{0.25\textwidth}
    \centering
    \includegraphics[width=100pt]{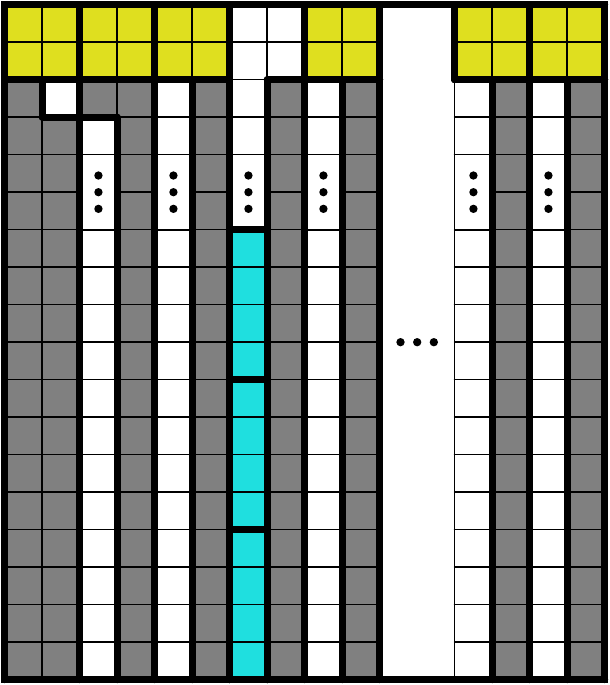}
    \caption{}
  \end{subfigure}
  \begin{subfigure}[b]{0.25\textwidth}
    \centering
    \includegraphics[width=100pt]{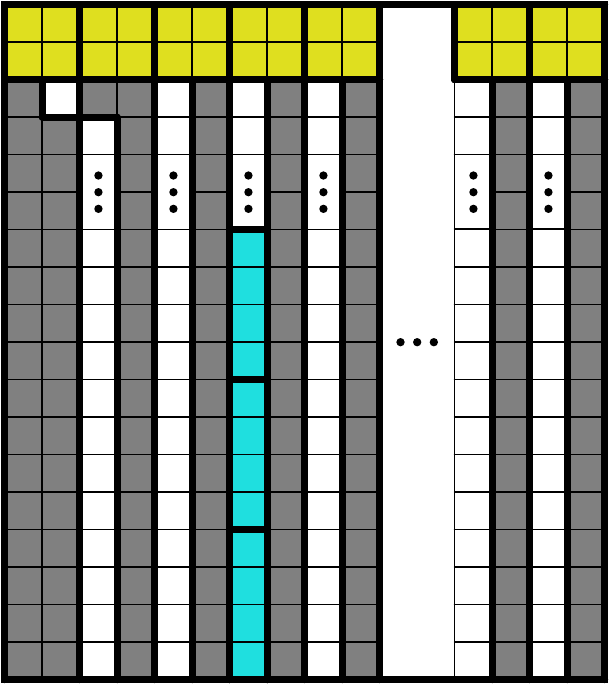}
    \caption{}
  \end{subfigure}
  \begin{subfigure}[b]{0.25\textwidth}
    \centering
    \includegraphics[width=100pt]{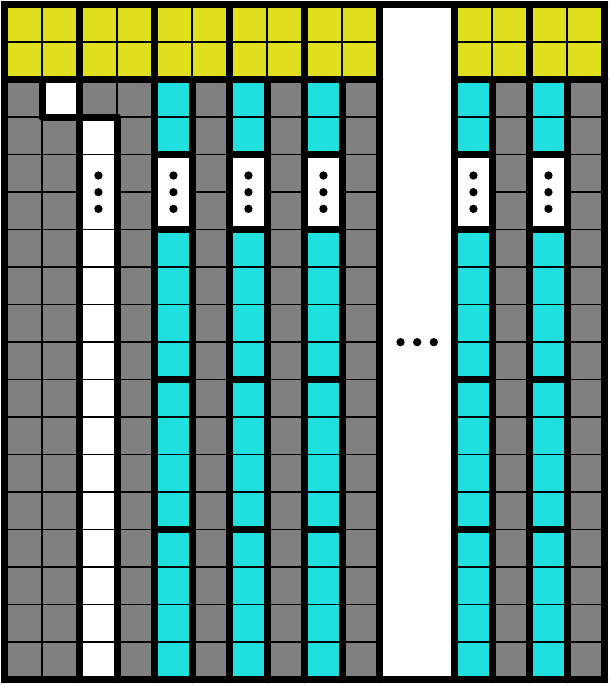}
    \caption{}
  \end{subfigure}
  \begin{subfigure}[b]{0.25\textwidth}
    \centering
    \includegraphics[width=100pt]{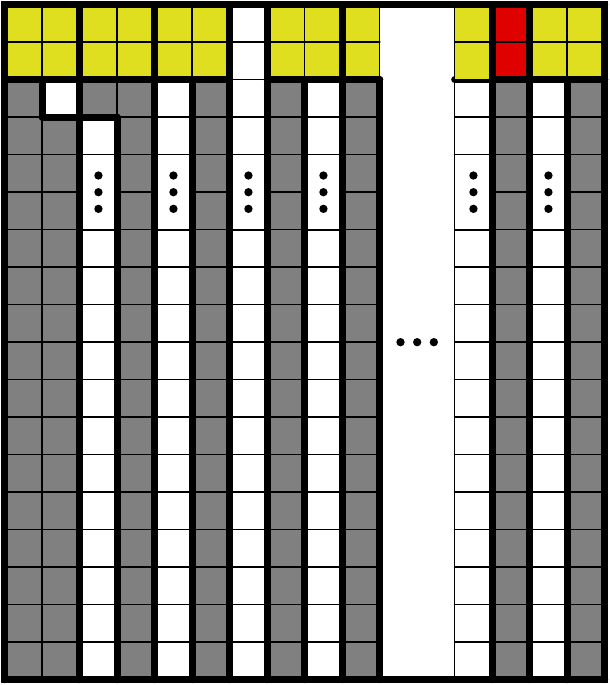}
    \caption{}
  \end{subfigure}
  \caption{The bucket structure for "hard drops only" and "20G" Tetris modes. (b) shows how beginning sequence must block all but one bucket. (c) shows the result of a filling sequence. (d) shows how the closing sequence clears the top two rows. (e) shows a full board. (f) shows improper handling during the priming sequence.}
  \label{20GHD}
\end{figure}
    
    The priming sequence is $(\OO^{n/3 + 1})$. Due to parity constraints, the player is forced to block all but one of the buckets using these pieces. They can choose to leave either one two-block wide gap or two one-block wide gaps in the top two rows. However, if they choose to leave one-block gaps, they will create spaces that can never be filled without overflowing the screen and causing a game loss (see the squares highlighted in red in Figure \ref{20GHD}(f)), so we can assume that they will leave a $2\times 2$ gap, as shown in Figure \ref{20GHD}(b).
    
    The filling sequence is $(\II^{a_i})$; the $\II$ pieces must be placed in the pre-selected bucket, as shown in Figure \ref{20GHD}(c). If pieces are not placed in accordance with a correct partition, then there will at some point be an extra $\II$ piece which does not fit in the open bucket. This will cause the player to lose as an $\II$ piece has length $4$ and cannot fit in the $2\times 2$ gap in the top $2$ rows (note that no $\II$ piece will stick partially out of a bucket since each bucket has a height that is a multiple of $4$).

    The closing sequence is $(\OO)$. The $\OO$ piece can only fit in the $2\times 2$ square formed during the priming sequence (see Figure \ref{20GHD}(e)), and it clears the top two rows, resetting the board to the initial state, except with a somewhat more filled bucket. 
    
    Once the player has received the entire sequence corresponding to each number in the 3-partition instance, including the final closing piece, the entire board will be full (see Figure \ref{20GHD}(e)), with the exception of the extra inaccessible bucket. Because of this, the player must have filled each of the buckets to exactly the right height, solving the 3-partition instance, otherwise they would have lost.

    For the entire duration of this sequence, every piece can be hard dropped into place as there are no overhangs in any of the buckets. Furthermore, each piece can be successfully maneuvered under 20G conditions. The $\OO$ pieces cannot fall down any buckets, so they can be safely slid to any location in the top $2$ rows, and the $\II$ pieces can move over the placed $\OO$ pieces to the only open bucket.
\end{proof}

\begin{corollary}
    The above reduction can be extended to NP-hardness for board clearing under "hard drops only" or "20G" conditions if we also allow for $\TT$ pieces.
\end{corollary}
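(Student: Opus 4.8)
The plan is to reuse, essentially verbatim, the bucket structure and the $\OO$/$\II$ piece sequences of the preceding survival theorem, and to observe that the \emph{only} reason that construction gives survival hardness rather than clearing hardness is the extra inaccessible bucket on the left. At the end of a successful play the $\frac n3$ main buckets are each filled to height exactly $4t$ and the top two rows are empty (they clear during each closing step), but the left bucket stays empty behind its single blocking square, so every row crossing it is incomplete and no line ever clears. Thus the board is survived but never emptied. To upgrade to clearing hardness I would keep this construction and append a \emph{finale sequence} of $\TT$ pieces whose sole purpose is to fill the left bucket and thereby clear the board.

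Concretely, the open square on the upper-left diagonal of the left bucket (already present in Figure~\ref{20GHD}) is exactly a T-spin slot: a $\TT$ piece maneuvered over the left column and rotated, using the SRS kick data of Table~\ref{tab:kickdataMain}, drops into a cell that no straight drop could reach. I would append just enough $\TT$ pieces to fill the empty cells of the left bucket; as those cells fill in, each row crossing the left column becomes complete, and the resulting cascade of line clears empties the board. The sequence handed to the player is then the survival sequence followed by this trailing block of $\TT$ pieces, over the piece set $\{\II,\OO,\TT\}$.

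For correctness I would argue both directions as in the survival theorem. If the $3$-Partition with Distinct Integers instance is a \textsc{yes} instance, the player runs the $\II$/$\OO$ phase as before to fill every main bucket to $4t$ without overflowing, and then uses the $\TT$ finale to fill the left bucket and clear the board. Conversely, clearing the board in particular requires that the player never lost, so by the analysis of the survival theorem every main bucket was filled to exactly $4t$, which yields a valid $3$-partition. I would also check that the $\TT$ pieces confer no illicit clearing power: any $\TT$ placed outside the left-bucket slot either overflows the two-row cap at the top (an immediate loss) or leaves a permanent gap, so the finale can only clear the board when the main region is already completely filled.

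The main obstacle is verifying that the $\TT$ finale is realizable under \emph{each} of the two restrictive models, since the corollary claims hardness for ``hard drops only'' \emph{or} ``20G.'' Under 20G this is the textbook situation: a $\TT$ piece falls to the bottom of the slot and is then kicked into place by a single rotation, so the T-spin fill goes through directly. Under hard drops only, however, a piece cannot be rotated after it has descended, so I would have to arrange the left-bucket profile so that a $\TT$ piece pre-rotated at the top hard-drops cleanly into each successive slot position (or argue that the slot geometry admits such a drop without a genuine mid-descent spin). Confirming that a single left-bucket shape serves both modes, together with the no-abuse check for $\TT$ placements, is the delicate part; everything else is inherited from the already-established survival argument.
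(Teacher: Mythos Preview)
Your plan has a genuine geometric gap. The inaccessible left bucket in the survival construction is a \emph{width-$1$} column (just like the $\frac n3$ main buckets), and every orientation of a $\TT$ piece occupies cells in at least two adjacent columns. So no amount of T-spinning can deposit $\TT$ pieces inside that column; a finale made only of $\TT$ pieces cannot fill it, and hence cannot clear the board. You also correctly flag, but do not resolve, the second obstruction: under ``hard drops only'' no rotation is possible once the piece begins its descent, so a T-spin fill would be unavailable in that model even if the geometry permitted it.

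The paper's finale avoids both problems by using $\TT$ pieces for a different purpose and using all three piece types. The sequence is $(\TT,\II^{t},\OO^{2n/3},\TT^{3})$: the first $\TT$ is hard-dropped into the open upper-left diagonal square so as to complete and \emph{clear one row}, which removes the single blocking square and opens the bucket from above; the $\II$ pieces (the only pieces that fit a width-$1$ column) then fill the bucket, clearing rows as they go, with the last $\II$ protruding one cell because of the row already cleared; the $\OO$ pieces fill a $\frac{2n}{3}\times 4$ region on the right left over after those clears; and three final $\TT$ pieces finish the remaining shape. Every placement is a straight hard drop and also works under 20G. The key idea you are missing is that the $\TT$ piece is a \emph{row-clearing key} that unblocks the bucket, not a filler for it; once it is open, the $\II$ pieces already in your piece set do the actual filling.
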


\begin{remark}
    We have already proved that Tetris clearing under the "hard drops only" and "20G" models is hard with just two types of pieces. The primary reason for including hardness with a larger set of pieces is that this shows that, in both game modes, there is a board configuration and subset of pieces where the problems of surviving and clearing the board are both hard at the same time. In addition, our Tetris clearing results under the "20G" model does not include any proper subsets of $\{\II, \OO, \TT\}$, so this result is interesting in its own right.
\end{remark}

\begin{proof}

We begin with the above proof that survivability with $\{\II,\OO\}$ is hard. We extend the sequence with the finale sequence $(\TT, \II^t, \OO^{2n/3}, \TT^3)$. Figure \ref{20GHDClear} shows the clearing process. We begin by using the first $\TT$ piece to open the inaccessible bucket by clearing a row. We then use the $\II$ pieces to fill the previously inaccessible bucket, clearing all but the final two rows in the process. The final $\II$ piece protrudes one square from its bucket because of the row cleared by the first $\TT$. We use the $\OO$ pieces to fill the $\frac{2n}{3}\times {4}$ space on the right of the board. Finally, we place the remaining three $\TT$ pieces to fill the remaining space and completely clear the board. All of these pieces can be placed in both special game modes.
\end{proof}

% Hard Drops + 20G Board Clear
\begin{figure}[ht]
  \centering
  \begin{subfigure}[b]{0.25\textwidth}
    \centering
    \includegraphics[width=100pt]{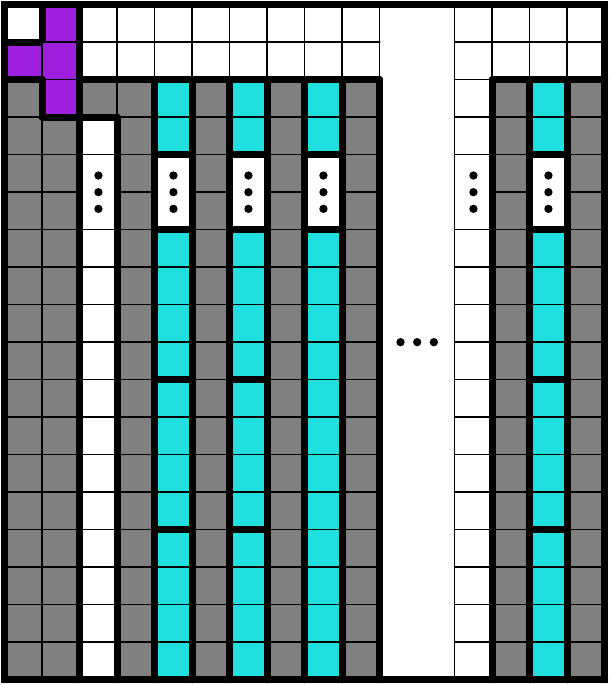}
    \caption{}
  \end{subfigure}
  \begin{subfigure}[b]{0.25\textwidth}
    \centering
    \includegraphics[width=100pt]{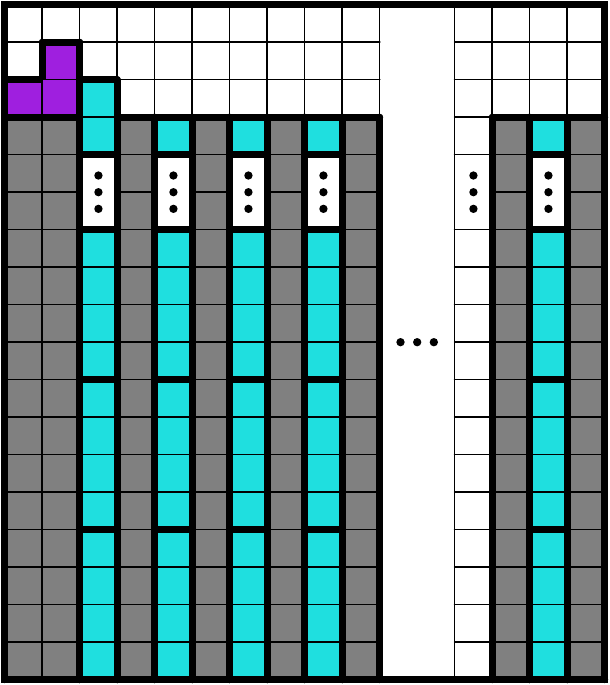}
    \caption{}
  \end{subfigure}
  \begin{subfigure}[b]{0.25\textwidth}
    \centering
    \includegraphics[width=100pt]{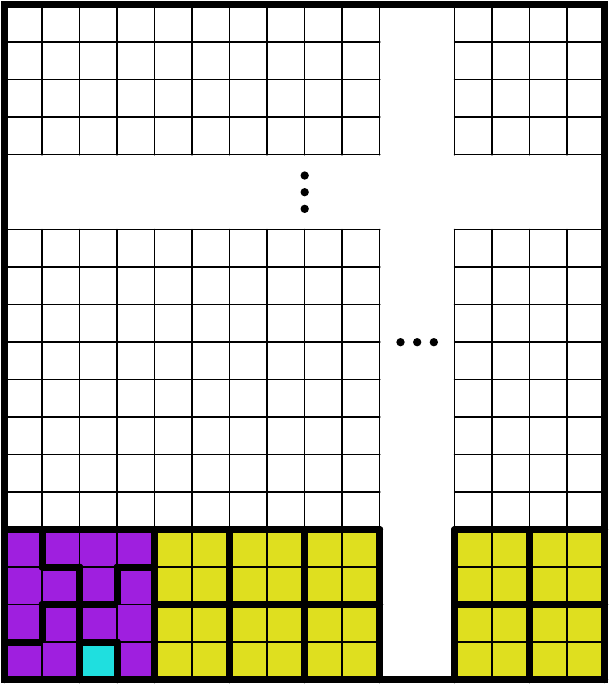}
    \caption{}
  \end{subfigure}
  \caption{The clearing procedure for "hard drops only" and "20G" using $\{\II, \OO, \TT\}$}
  \label{20GHDClear}
\end{figure}

\section{ASP-Completeness of Tetris}\label{sec:asp}

Even though the reductions in Section~\ref{sec:2piecetetris} are sufficient to prove \#P-hardness, the reductions are not parsimonious, so they cannot be used to prove ASP-completeness. Indeed, the "blocking" bottles paradigm likely cannot be used to show ASP-completeness as there are many ways to permute the pieces that block all but one bottle. Thus, for ASP-completeness, we turn back to the "priming" buckets paradigm in \cite{Tetris_IJCGA}:

\begin{theorem}
    Tetris clearing with SRS is ASP-complete even if the type of pieces in the sequence given to the player is restricted to either $\{\II, \TT, \LL\}$ or $\{\II, \TT, \JJ\}$.
\end{theorem}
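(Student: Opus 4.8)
The plan is to give a parsimonious reduction from Numerical 3DM with Distinct Integers (or equivalently 3-Partition with Distinct Integers), which is strongly ASP-complete by Theorem~\ref{3partasp}, handling the set $\{\II,\TT,\LL\}$ directly and deducing $\{\II,\TT,\JJ\}$ from the vertical mirror symmetry of SRS. The essential point is that the ``blocking bottles'' reductions of Section~\ref{sec:2piecetetris} are inherently non-parsimonious: their solution-count blowup comes from (i) the $(n/3)!$ ways to match partition groups to physical bottles, and (ii) the factorially many orderings of which priming piece blocks which bottle. To obtain a genuine bijection on solutions I would abandon blocking in favor of the ``priming buckets'' paradigm of \cite{Tetris_IJCGA}, in which each integer \emph{positively commits} to a single bucket via a forced gadget, rather than by blocking all the others.

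To eliminate blowup (ii), I would make each integer's priming subsequence a $\TT$/$\LL$ ``selector'' gadget whose only degree of freedom is which currently-open bucket it targets; once that single choice is made, SRS should pin every remaining priming, filling, and closing piece to a unique legal placement, using the spins catalogued in Appendix~\ref{appendix:spins} to route pieces deterministically. The filling is already placement-unique in the style of Section~\ref{sec:2piecetetris} (and of the width-$1$ $\II$ stacks of Section~\ref{sec:survival}), so no interchangeable-filler freedom arises. To eliminate blowup (i), I would exploit the distinctness of the integers and a fixed processing order: introduce the $A$-elements first, in the order $a_1,\dots,a_n$, each forced into its own bucket. This canonically labels bucket $i$ by $a_i$ and removes any freedom to permute groups among buckets.

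With the buckets canonically labelled, the only remaining choices are the bucket assigned to each $b_j$ and each $c_k$, which is exactly the data of a Numerical 3DM matching. I would then run the two directions of the general argument of Section~\ref{sec:genarg}: a valid matching induces a single fully-forced clearing, and conversely any clearing forces each bucket to reach total height exactly $t$ (each group summing to $t$), hence a valid matching; a finisher as in Figure~\ref{Finishers} would prevent the body rows from clearing prematurely. Since the per-integer bucket choices are the only branching and correspond one-to-one with the matching, the induced map on solution sets is a bijection, giving a parsimonious reduction; as Tetris clearing with SRS is an NP search problem, ASP-completeness follows.

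I expect the main obstacle to be reconciling \emph{placement-uniqueness} with \emph{selection-freedom}: the selector gadget must expose exactly the intended set of bucket choices and nothing more, while every other $\TT$, $\LL$, and $\II$ piece must be geometrically forced under SRS to a single spot that keeps the board clearable. Verifying that no spurious spin, alternate fill, or premature line clear introduces an extra solution -- so that the reduction is \emph{exactly} parsimonious rather than merely solution-count-preserving up to a multiplicative factor -- is where the detailed case analysis would have to be concentrated.
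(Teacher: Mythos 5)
Your overall strategy matches the paper's: a parsimonious reduction from Numerical 3DM with Distinct Integers via the priming-buckets paradigm of \cite{Tetris_IJCGA}, with a single primer piece (an $\II$ piece, placed by an $\II$ spin, in the paper's construction) committing each integer to one bucket and every subsequent $\TT$/$\LL$ placement forced under SRS, plus a finale structure on the right. However, two steps that your plan asserts rather than achieves are exactly the places where the paper has to do something specific.

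First, you propose to kill the $(n/3)!$ bucket-permutation blowup by feeding in the $A$-elements first, ``each forced into its own bucket.'' But if the $a_i$ arrive as piece sequences into physically interchangeable buckets, nothing forces $a_i$ into the $i$th bucket: any assignment of the $n$ $A$-sequences to the $n$ buckets is playable, and you reintroduce an $n!$ factor, destroying parsimony. The paper avoids this by not giving the $A$-elements as pieces at all: bucket $i$ is \emph{pre-filled} to a level encoding $a_i$ in the initial board, so the labeling is baked into the instance and the player only receives the $b_j$ sequences, the $c_k$ sequences, and the finale. Second, your reverse direction concludes from ``each bucket reaches total height exactly $t$'' that the grouping is a valid matching. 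That is not enough: a bucket could reach $t$ using, say, two elements of $B$ and none of $C$, yielding a board-clearing play that corresponds to no 3DM solution. The paper forecloses this by rescaling and offsetting the $a_i$, $b_i$, $c_i$ so that no triple with a repeated part from $A$, $B$, or $C$ can sum to the target; without such an arithmetic guard the reduction is unsound, not merely non-parsimonious. Both gaps are repairable, and with those repairs your outline coincides with the paper's proof (including the mirror-symmetry argument for $\{\II,\TT,\JJ\}$, where one must additionally check that the $\II$-spin survives mirroring despite the $\II$ kick table not being vertically symmetric).
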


\begin{proof}

    First we discuss the $\{\II, \TT, \LL\}$ case. We give a parsimonious reduction from Numerical 3-Dimensional Matching with Distinct Integers; refer to Figure \ref{ASPSetup}(a), which shows the bucket structure plus rightmost columns for $\{\II, \TT, \LL\}$.

% ASP Setup Images
\begin{figure}[ht]
  \centering
  \begin{subfigure}[b]{0.25\textwidth}
    \centering
    \includegraphics[width=100pt]{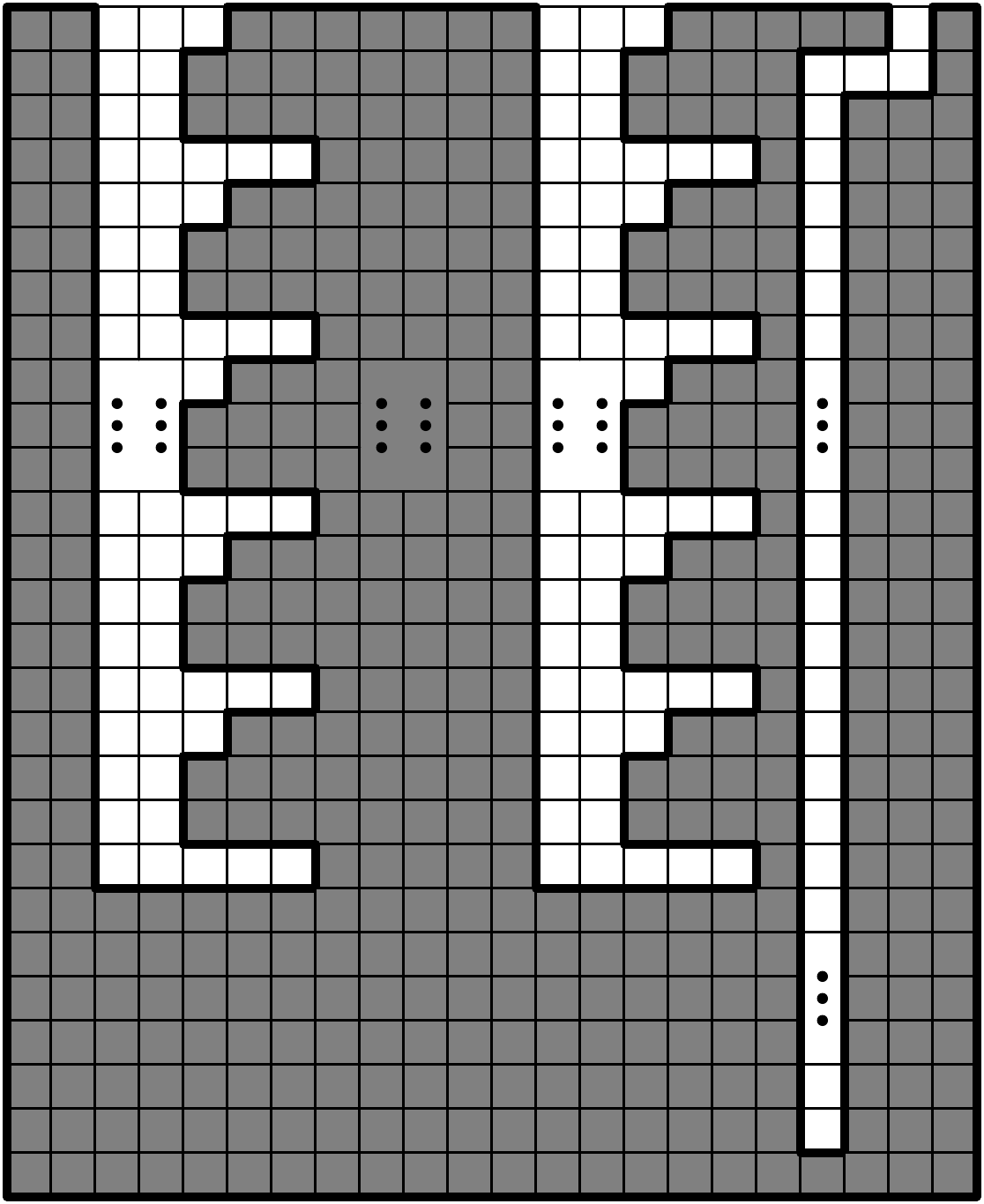}
    \caption{}
  \end{subfigure}
  \begin{subfigure}[b]{0.25\textwidth}
    \centering
    \includegraphics[width=100pt]{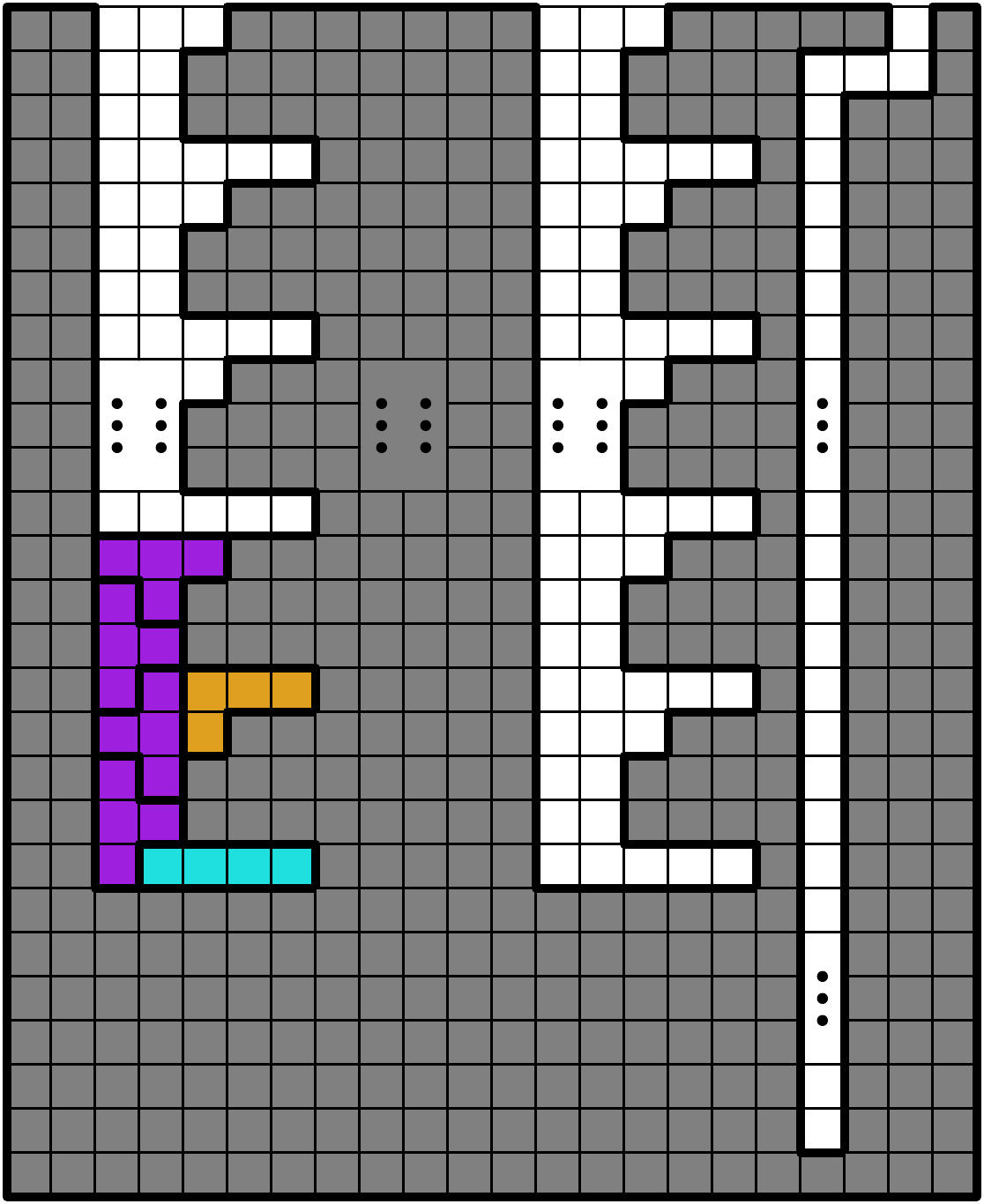}
    \caption{}
  \end{subfigure}
  \begin{subfigure}[b]{0.25\textwidth}
    \centering
    \includegraphics[width=100pt]{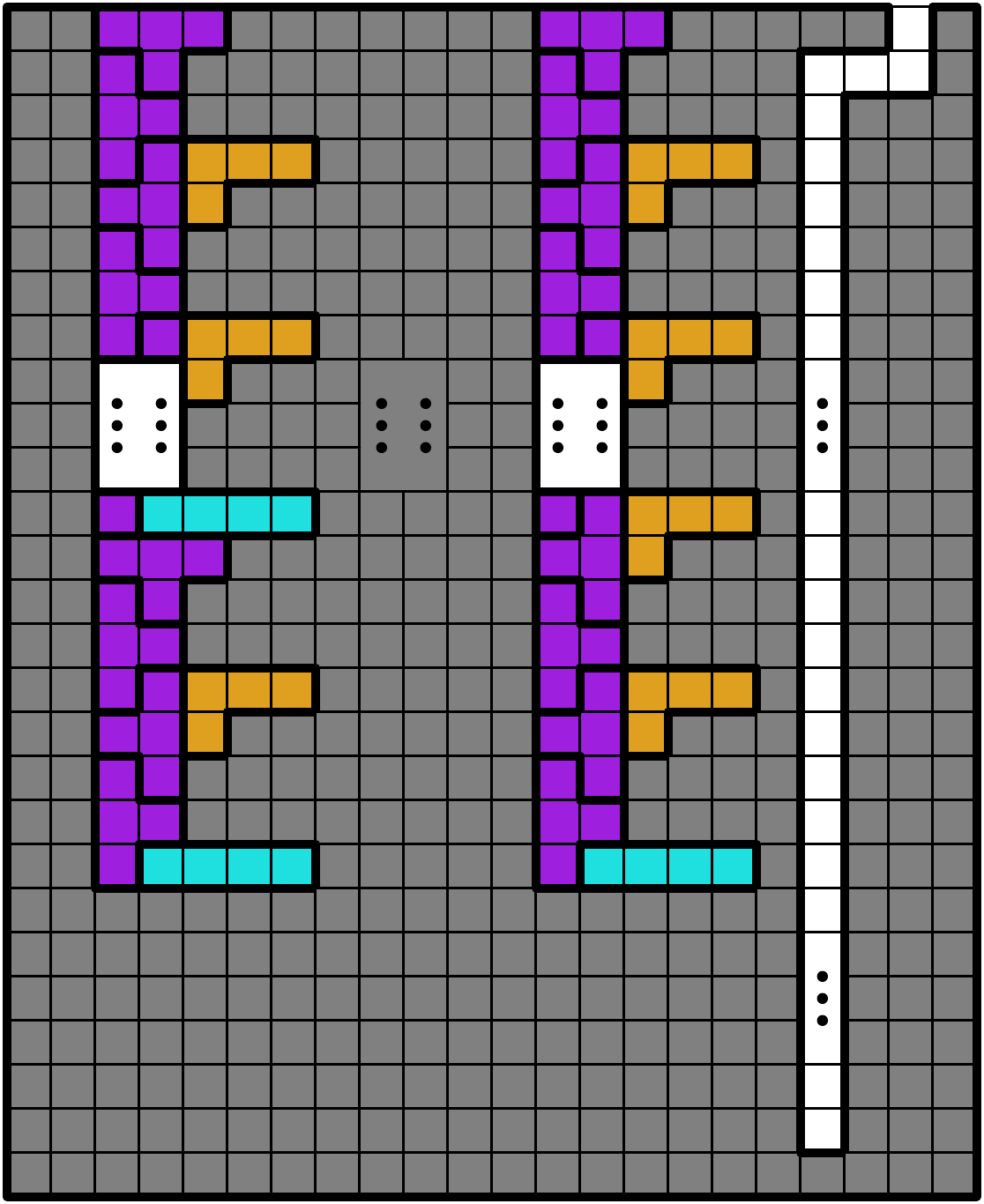}
    \caption{}
  \end{subfigure}
  \begin{subfigure}[b]{0.25\textwidth}
    \centering
    \includegraphics[width=100pt]{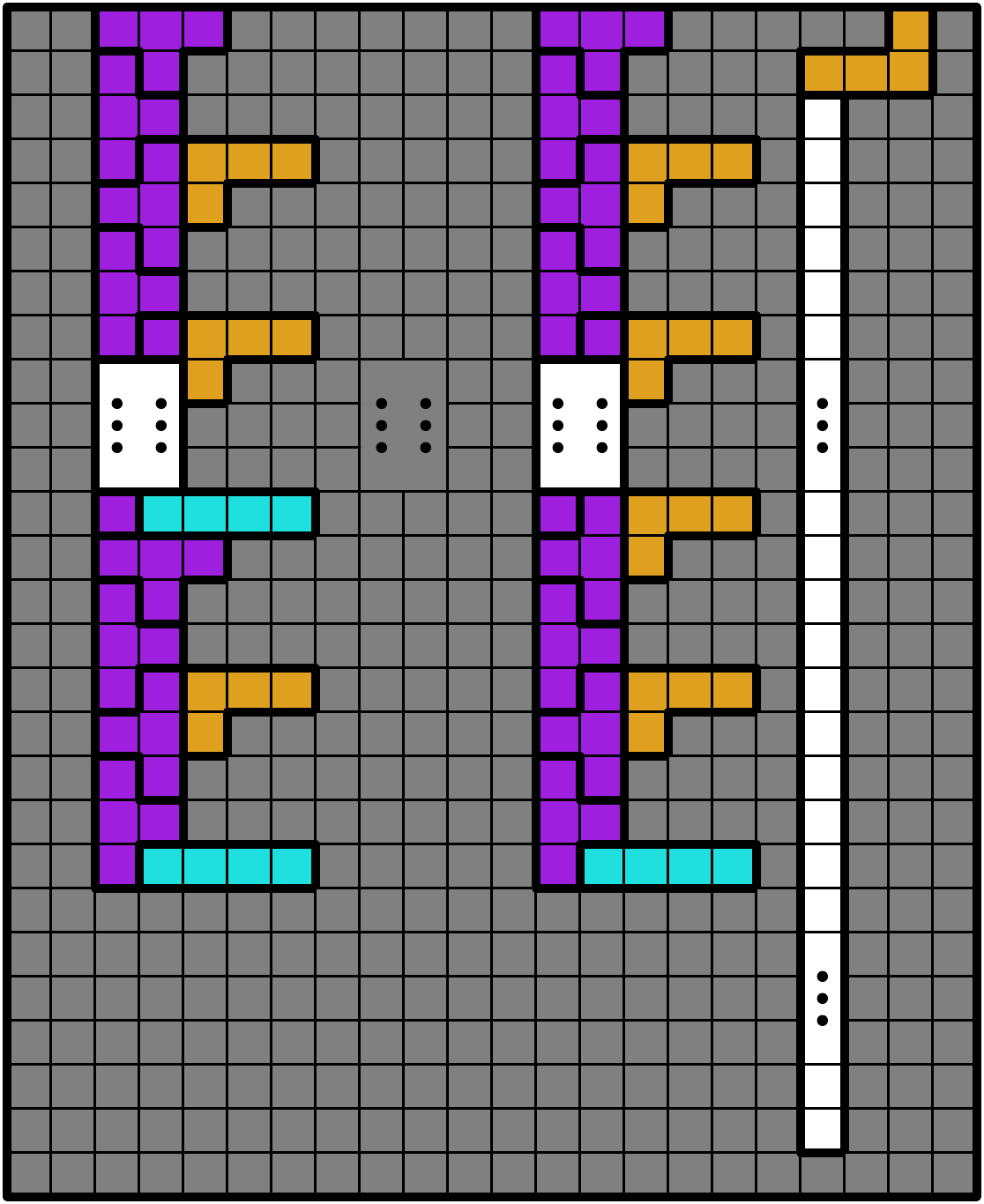}
    \caption{}
  \end{subfigure}
  \begin{subfigure}[b]{0.25\textwidth}
    \centering
    \includegraphics[width=100pt]{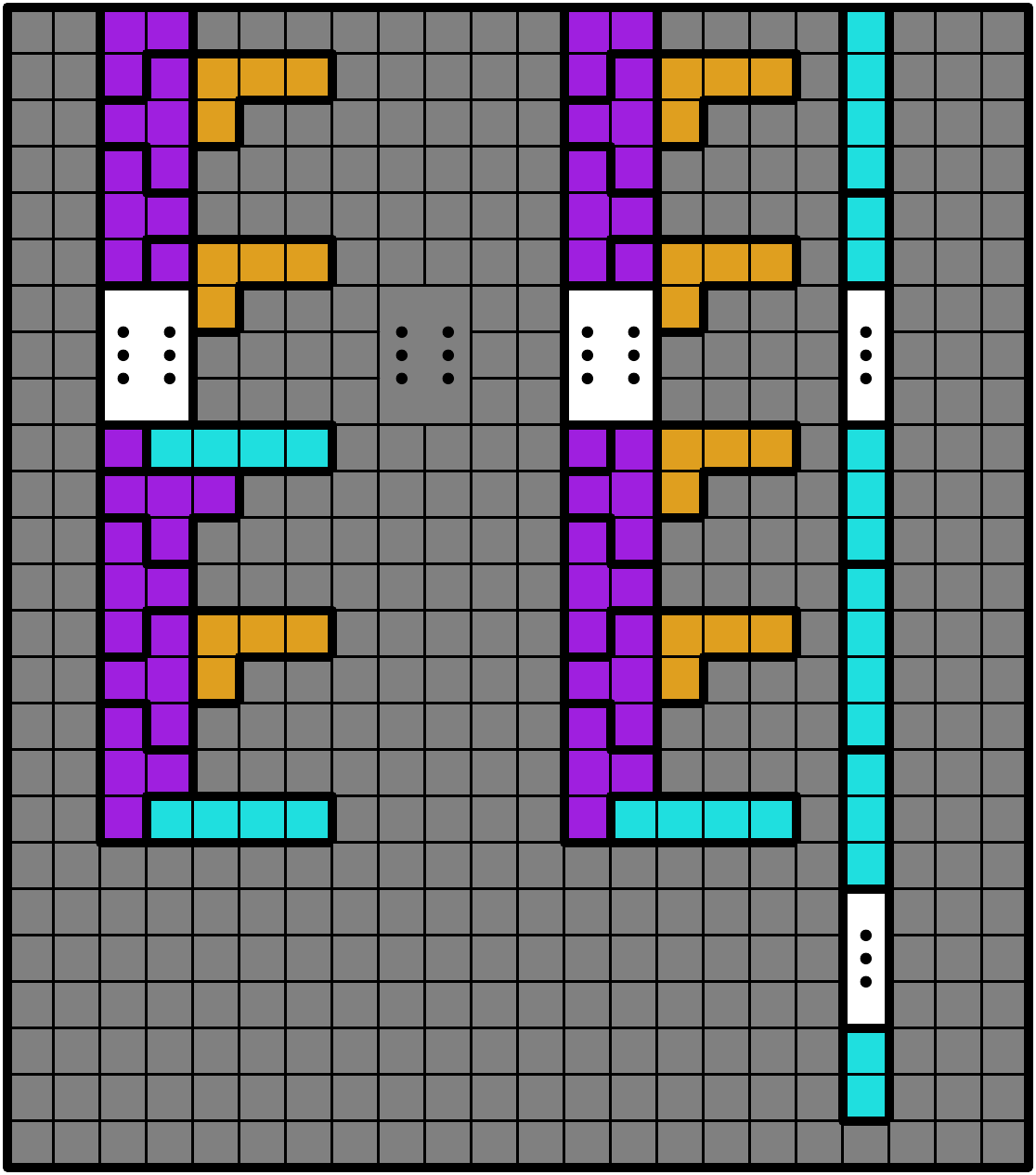}
    \caption{}
  \end{subfigure}
  \caption{The bucket structure plus rightmost columns for $\{\II, \TT, \LL\}$. (b) shows how the $\II$ piece must prime a bucket (requires the $\II$ spin in Figure \ref{ISpin}) and how the remainder of the pieces in a sequence must fit in the bucket ($m=2$ here). (c) shows how the board looks like before the finale sequence. (d--e) show how the pieces in the finale sequence must be placed (requires the $\LL$-piece version of the $\JJ$ spin in Figure \ref{JSpin3}).}
  \label{ASPSetup}
\end{figure}

    Here, a "unit" is the pattern that repeats every four rows. A positive integer $m$ is encoded by the sequence of pieces $(\II, (\TT, \LL, \TT)^{m-1}, \TT, \TT)$. The finale sequence is $(\LL, \II^N)$, where $N=\text{poly}(nt)$ is much larger than the height of the buckets, and is used to clear the rightmost columns after the buckets have been filled.

    In this case, the $\II$ piece serves as the "primer", and must be placed as indicated in Figure \ref{ASPSetup}(b) (the placement is possible due to $\II$ spins). The $\TT$ pieces cannot be placed in a non-primed bucket without blocking off certain holes, particularly the squares in which an $\II$ piece or an $\LL$ piece must be placed, and misplacing an $\LL$ piece (i.e., putting it in a different, non-primed bucket, putting it where the $\LL$ piece goes during the finale sequence, or putting it too high in the bucket) causes the next two $\TT$ pieces to block off squares in which an $\II$ piece or an $\LL$ piece must be placed. Thus, once the $\II$ piece is placed in a bucket, the placements of the rest of the pieces encoding the positive integer $m$ are forced. Further discussion on and figures for improper piece placements can be found in Appendix \ref{appendix:aspclogs}.

    Lastly, to make the reduction parsimonious, from the instance $$A = \{a_1, a_2, \ldots , a_n\}, B = \{b_1, b_2, \ldots , b_n\},\text{ and }C = \{c_1, c_2, \ldots , c_n\}$$ of Numerical 3-Dimensional Matching with Distinct Integers, we scale the $a_i$, $b_i$, and $c_i$ and add/subtract constants such that each group must consist of exactly one $a_i$, one $b_j$, and one $c_k$ (i.e., no group with at least two elements from any of $A$, $B$, or $C$ can sum to the target sum). In addition, we "pre-fill" all of the buckets so that the $i$th bucket from the left is already filled up to $a_i$ units, and the sequence of pieces given to the player consist only of the pieces in the $b_i$ sequences, the pieces in the $c_i$ sequences, and the pieces in the finale sequence. This ensures that the subsets have a fixed ordering and must consist of exactly one $a_i$, one $b_j$, and one $c_k$. Combined with the fact that all the piece placements after each $\II$ piece in a sequence encoding a positive integer are forced, this means that each solution to the Numerical 3-Dimensional Matching instance corresponds to exactly one way to clear the Tetris board, and vice versa.

    Therefore, we have a parsimonious reduction from Numerical 3-Dimensional Matching with Distinct Integers to Tetris clearing with $\{\II, \TT, \LL\}$, meaning that Tetris clearing with SRS is ASP-complete even if the type of pieces in the sequence given to the player is restricted to $\{\II, \TT, \LL\}$.

    We get a similar argument for $\{\II, \TT, \JJ\}$ by vertical symmetry; the $\II$ spin in Figure \ref{ISpin} still works when mirrored even though the kick tests for $\II$ are not vertically symmetric.
\end{proof}

Demos of the bucket structure can be found at \url{https://jstris.jezevec10.com/map/80170} for $\{\II, \TT, \LL\}$ and at \url{https://jstris.jezevec10.com/map/83325} for $\{\II, \TT, \JJ\}$.

%\section{Lumines}

%Lumines, like Tetris, is a similar block-dropping game. Lumines consists of black and white $2 \times 2$ blocks that fall, which you can rotate. Every so often a `Time Line' passes and erases all $2 \times 2$ squares of the same color.  

%[Define and introduce Lumines]

%\subsection{NP-hardness of Lumines}

%We prove that Lumines is NP-hard by doing a reduction from 3-partition, similarly to section 3.

%Specifically, we will use bottles that are opened and closed. 

%[Proof]

%\section{$\leq$n-tris Hardness from an Empty Board}

%DESCRIBE LOSS CONDITION NOT PARTIAL TOP OUT

%\subsection{$\leq$ 4c-tris is NP-Hard}

%\subsection{$\leq$ $c+1$-tris is NP-}

%\begin{theorem}
%    Odd-COLUMN Empty ($\leq c+1$)-TRIS is NP-complete for any $c $
%\end{theorem}

\section{Open Problems}\label{sec:openprobs}

One big open problem that still remains is the computational complexity of Tetris clearing with SRS if the player is only given one piece type (for example, if the sequence consists of entirely $\TT$ pieces). In this case, the "blocking" bottles paradigm no longer works, because the same piece type cannot be used both to block and to fill bottles without the reduction breaking, so a proof of hardness would involve an entirely different setup. It is also possible that Tetris clearing with SRS and with only one piece type is in P.
For example, \cite{Tetris_IJCGA} conjectures polynomial time for the $\II$ piece type.

Similarly, it is open whether or not Tetris clearing with SRS is ASP-complete for subsets of pieces that are not supersets of $\{\II, \TT, \JJ\}$ or $\{\II, \TT, \LL\}$. One could likely construct similar structures for other 3-piece subsets, but arguing whether Tetris clearing with SRS is ASP-complete for 1- or 2-piece subsets may require different ideas.

Some open questions arise regarding whether our results can be extended if we consider different objectives or add additional features. For example, can we establish results for 2-piece subsets, similar to those in Section \ref{sec:2piecetetris}, for Tetris survival? If we add a "holding" function, where the player can put one piece aside for later use, can get similar results?

Modern variants of Tetris also use different random generators to ensure that the player does not receive the same piece arbitrarily many times in a row. One of the simplest random generators is called a \defn{7-bag randomizer}. For this randomizer, the sequence of pieces is divided into groups (or "bags") of 7, each group containing one of each tetromino in one of $7! = 5{,}040$ possible orderings. Can we show NP-hardness even if the sequence of pieces has to be able to be generated from this randomizer?

\section*{Acknowledgments}

This paper was initiated during open problem solving in the MIT class
on Algorithmic Lower Bounds: Fun with Hardness Proofs (6.5440)
taught by Erik Demaine in Fall 2023.
We thank the other participants of that class for helpful discussions
and providing an inspiring atmosphere.
Figures drawn with SVG Tiler
(\url{https://github.com/edemaine/svgtiler}).

\bibliographystyle{alpha}
\bibliography{biblio}

\appendix
\clearpage

\section{Spins}\label{appendix:spins}

In this appendix, we note the important spins used in our constructions. If the diagram has dark gray blocks indicated, then the dark gray blocks indicate the blocks required to perform the spin (as those blocks cause all the kick tests before the successful test to fail).

\subsection{Spins involving $\II$ Pieces}\label{appendix:ispins}

\begin{figure}[!ht]
  \centering
  \begin{subfigure}[b]{0.3\textwidth}
    \centering
    \includegraphics[width=60pt]{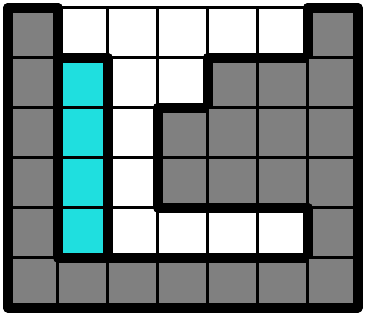}
    \caption{}
  \end{subfigure}
  \begin{subfigure}[b]{0.3\textwidth}
    \centering
    \includegraphics[width=60pt]{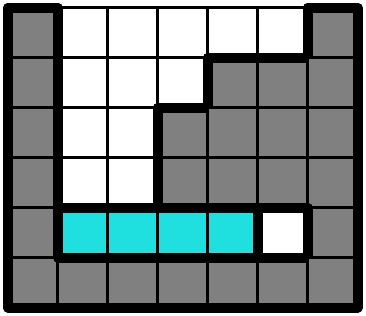}
    \caption{}
  \end{subfigure}
  \begin{subfigure}[b]{0.3\textwidth}
    \centering
    \includegraphics[width=60pt]{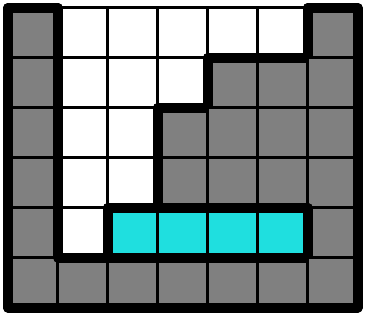}
    \caption{}
  \end{subfigure}
  \caption{An $\II$ spin used in the setup for ASP-completeness for the subset $\{\II, \TT, \LL\}$. Depending on the direction of rotation and the location of the rotation center (i.e. if the center is on the left side of the $\II$ piece or the right side of the $\II$ piece), the rotation will send the piece to either (b) or (c); however, no rotation will send the center of the piece upwards.}
  \label{ISpin}
\end{figure}

\subsection{Spins involving $\JJ/\LL$ Pieces}\label{appendix:jspins}

\begin{figure}[!ht]
  \centering
  \begin{subfigure}[b]{0.3\textwidth}
    \centering
    \includegraphics[width=60pt]{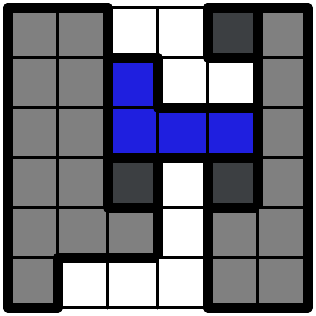}
    \caption{}
  \end{subfigure}
  \begin{subfigure}[b]{0.3\textwidth}
    \centering
    \includegraphics[width=60pt]{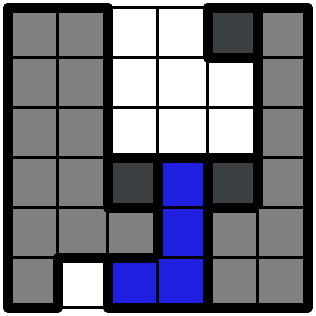}
    \caption{}
  \end{subfigure}
  \caption{A $\JJ$ spin used in the setups for $\{\OO, \JJ\}$, $\{\JJ, \ZZ\}$, and $\{\JJ, \LL\}$}
  \label{JSpin4}
\end{figure}

\begin{figure}[!ht]
  \centering
  \begin{subfigure}[b]{0.3\textwidth}
    \centering
    \includegraphics[width=60pt]{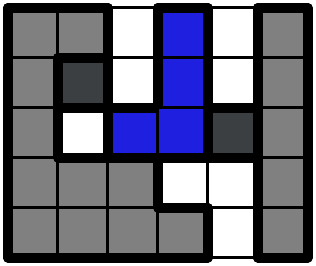}
    \caption{}
  \end{subfigure}
  \begin{subfigure}[b]{0.3\textwidth}
    \centering
    \includegraphics[width=60pt]{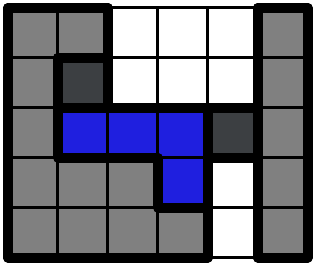}
    \caption{}
  \end{subfigure}
  \caption{A $\JJ$ spin used in the setup for $\{\JJ, \SS\}$}
  \label{JSpin1}
\end{figure}

\begin{figure}[!ht]
  \centering
  \begin{subfigure}[b]{0.3\textwidth}
    \centering
    \includegraphics[width=60pt]{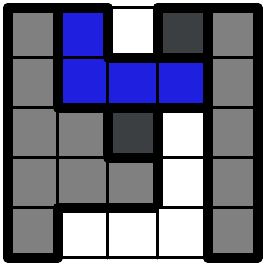}
    \caption{}
  \end{subfigure}
  \begin{subfigure}[b]{0.3\textwidth}
    \centering
    \includegraphics[width=60pt]{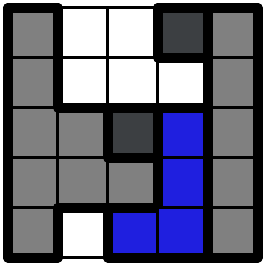}
    \caption{}
  \end{subfigure}
  \caption{A $\JJ$ spin used in the setup for $\{\JJ, \SS\}$}
  \label{JSpin2}
\end{figure}

\begin{figure}[!ht]
  \centering
  \begin{subfigure}[b]{0.3\textwidth}
    \centering
    \includegraphics[width=60pt]{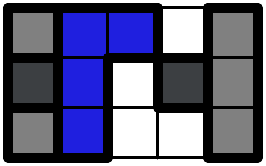}
    \caption{}
  \end{subfigure}
  \begin{subfigure}[b]{0.3\textwidth}
    \centering
    \includegraphics[width=60pt]{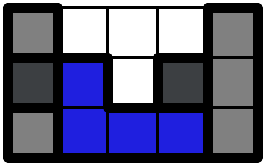}
    \caption{}
  \end{subfigure}
  \caption{A $\JJ$ spin used in the setups for $\{\JJ, \SS\}$, $\{\JJ, \LL\}$, and the ASP-completeness for $\{\II, \TT, \LL\}$}
  \label{JSpin3}
\end{figure}

\begin{figure}[!ht]
  \centering
  \begin{subfigure}[b]{0.3\textwidth}
    \centering
    \includegraphics[width=60pt]{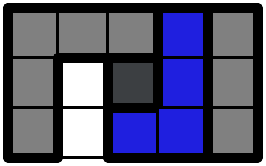}
    \caption{}
  \end{subfigure}
  \begin{subfigure}[b]{0.3\textwidth}
    \centering
    \includegraphics[width=60pt]{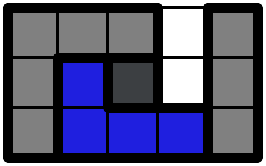}
    \caption{}
  \end{subfigure}
  \caption{A $\JJ$ spin used in the setup for $\{\JJ, \LL\}$ (although the version of the spin used is the $\LL$-spin version)}
  \label{JSpin5}
\end{figure}

\begin{figure}[!ht]
  \centering
  \begin{subfigure}[b]{0.16\textwidth}
    \centering
    \includegraphics[width=60pt]{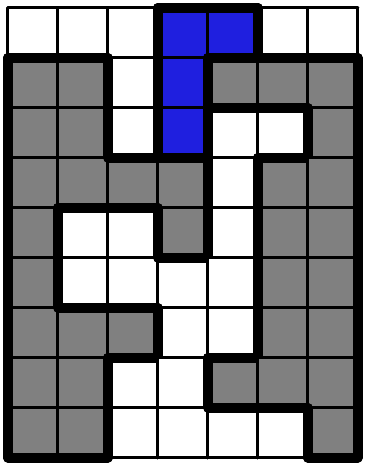}
    \caption{}
  \end{subfigure}
  \begin{subfigure}[b]{0.16\textwidth}
    \centering
    \includegraphics[width=60pt]{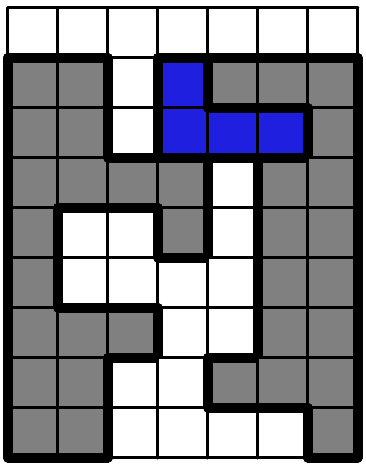}
    \caption{}
  \end{subfigure}
  \begin{subfigure}[b]{0.16\textwidth}
    \centering
    \includegraphics[width=60pt]{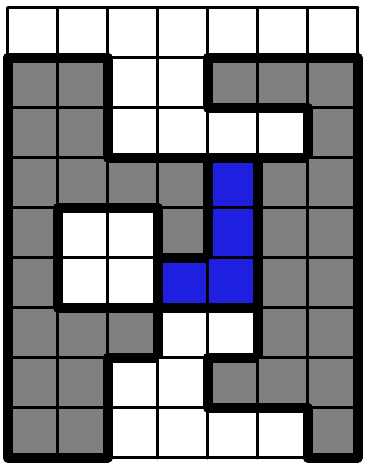}
    \caption{}
  \end{subfigure}
  \begin{subfigure}[b]{0.16\textwidth}
    \centering
    \includegraphics[width=60pt]{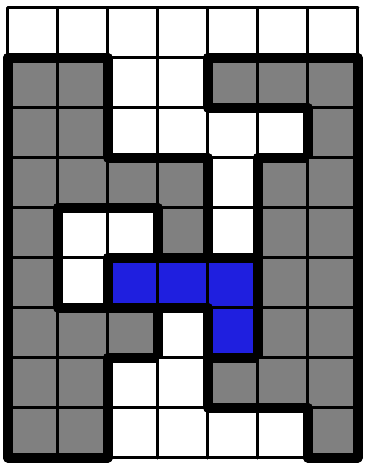}
    \caption{}
  \end{subfigure}
  \begin{subfigure}[b]{0.16\textwidth}
    \centering
    \includegraphics[width=60pt]{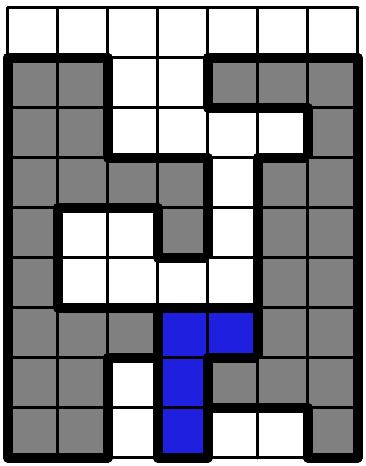}
    \caption{}
  \end{subfigure}
  \caption{A sequence of $\JJ$ spins used in the setup for $\{\OO, \JJ\}$}
  \label{JSpin-long1}
\end{figure}

$$ $$

$$ $$

\subsection{Spins involving $\SS/\ZZ$ Pieces}\label{appendix:sspins}

For these spins, the rotation center is important, so the rotation center is marked with a dot.

\begin{figure}[!ht]
  \centering
  \begin{subfigure}[b]{0.3\textwidth}
    \centering
    \includegraphics[width=60pt]{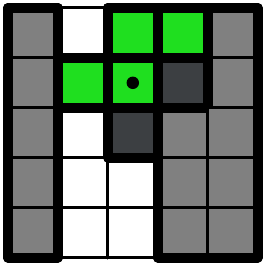}
    \caption{}
  \end{subfigure}
  \begin{subfigure}[b]{0.3\textwidth}
    \centering
    \includegraphics[width=60pt]{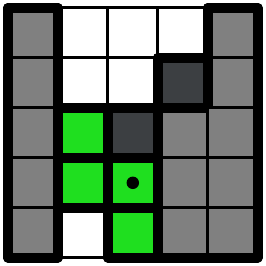}
    \caption{}
  \end{subfigure}
  \caption{An $\SS$ spin used in the setups for $\{\OO, \SS\}$, $\{\SS, \TT\}$, and $\{\SS, \ZZ\}$, used for getting $\SS$ pieces into the body portions of the bottles}
  \label{SSpin1}
\end{figure}

\begin{figure}[!ht]
  \centering
  \begin{subfigure}[b]{0.3\textwidth}
    \centering
    \includegraphics[width=60pt]{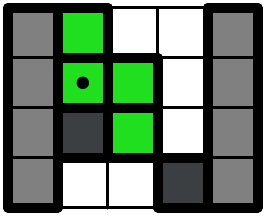}
    \caption{}
  \end{subfigure}
  \begin{subfigure}[b]{0.3\textwidth}
    \centering
    \includegraphics[width=60pt]{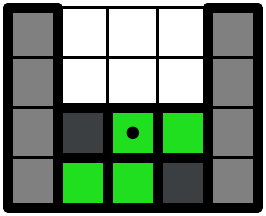}
    \caption{}
  \end{subfigure}
  \begin{subfigure}[b]{0.3\textwidth}
    \centering
    \includegraphics[width=60pt]{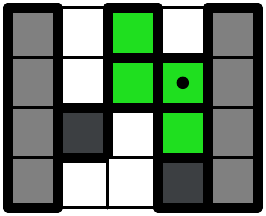}
    \caption{}
  \end{subfigure}
  \caption{An $\SS$ spin ((a) to (b)) used in the setups for $\{\II, \SS\}$, $\{\OO, \SS\}$, $\{\SS, \TT\}$, $\{\SS, \ZZ\}$, and $\{\JJ, \ZZ\}$ (and similar), used either for clogging a bottle during the priming sequence or for re-opening a bottle during the closing sequence (or both). (c) indicates what happens if one more rotation is applied when the $\SS$ piece is tucked in as in (b).}
  \label{SSpin2}
\end{figure}

\begin{figure}[!ht]
  \centering
  \begin{subfigure}[b]{0.15\textwidth}
    \centering
    \includegraphics[width=60pt]{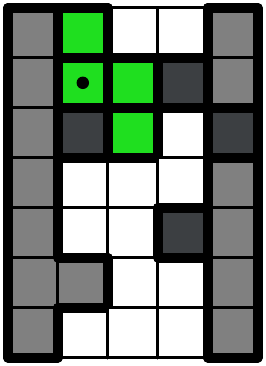}
    \caption{}
  \end{subfigure}
  \begin{subfigure}[b]{0.15\textwidth}
    \centering
    \includegraphics[width=60pt]{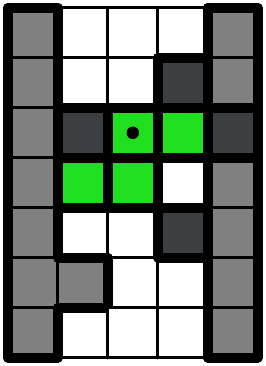}
    \caption{}
  \end{subfigure}
  \begin{subfigure}[b]{0.15\textwidth}
    \centering
    \includegraphics[width=60pt]{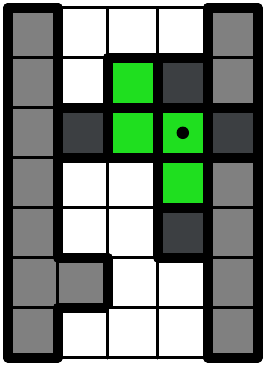}
    \caption{}
  \end{subfigure}
  \begin{subfigure}[b]{0.15\textwidth}
    \centering
    \includegraphics[width=60pt]{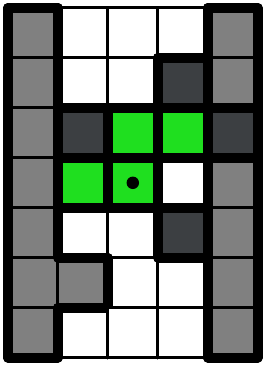}
    \caption{}
  \end{subfigure}
  \begin{subfigure}[b]{0.15\textwidth}
    \centering
    \includegraphics[width=60pt]{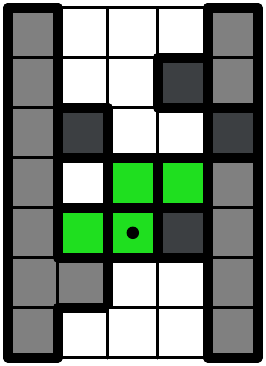}
    \caption{}
  \end{subfigure}
  \begin{subfigure}[b]{0.15\textwidth}
    \centering
    \includegraphics[width=60pt]{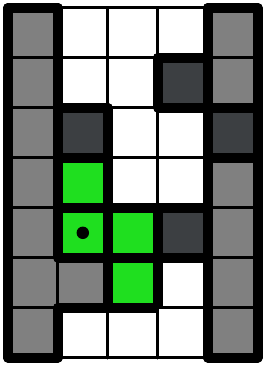}
    \caption{}
  \end{subfigure}
  \caption{A sequence of $\SS$ spins used in the setups for $\{\OO, \SS\}$ and $\{\SS, \ZZ\}$}
  \label{SSpin-long1}
\end{figure}

\begin{figure}[!ht]
  \centering
  \begin{subfigure}[b]{0.15\textwidth}
    \centering
    \includegraphics[width=60pt]{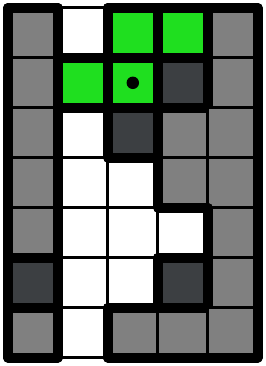}
    \caption{}
  \end{subfigure}
  \begin{subfigure}[b]{0.15\textwidth}
    \centering
    \includegraphics[width=60pt]{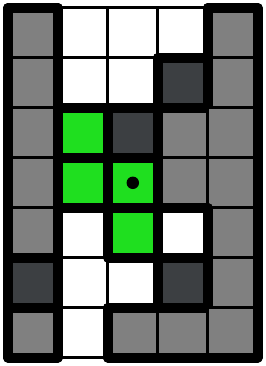}
    \caption{}
  \end{subfigure}
  \begin{subfigure}[b]{0.15\textwidth}
    \centering
    \includegraphics[width=60pt]{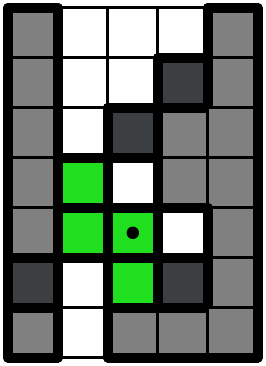}
    \caption{}
  \end{subfigure}
  \begin{subfigure}[b]{0.15\textwidth}
    \centering
    \includegraphics[width=60pt]{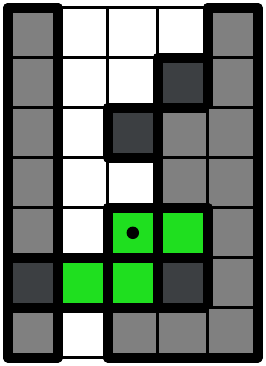}
    \caption{}
  \end{subfigure}
  \begin{subfigure}[b]{0.15\textwidth}
    \centering
    \includegraphics[width=60pt]{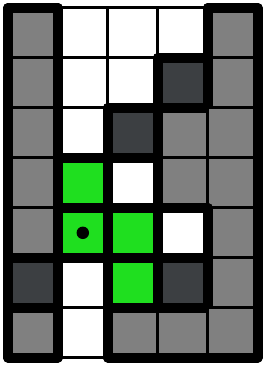}
    \caption{}
  \end{subfigure}
  \begin{subfigure}[b]{0.15\textwidth}
    \centering
    \includegraphics[width=60pt]{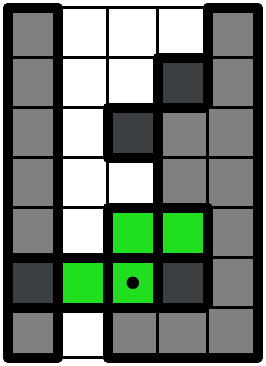}
    \caption{}
  \end{subfigure}
  \caption{A sequence of $\SS$ spins used in the setup for $\{\SS, \TT\}$}
  \label{SSpin-long2}
\end{figure}

$$ $$

$$ $$

\subsection{Spins involving $\TT$ Pieces}\label{appendix:tspins}

\begin{figure}[!ht]
  \centering
  \begin{subfigure}[b]{0.3\textwidth}
    \centering
    \includegraphics[width=60pt]{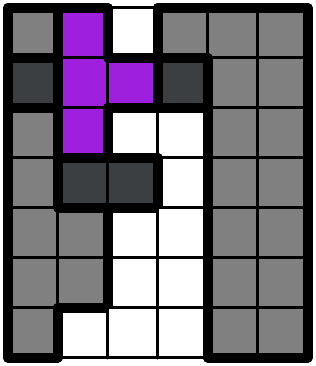}
    \caption{}
  \end{subfigure}
  \begin{subfigure}[b]{0.3\textwidth}
    \centering
    \includegraphics[width=60pt]{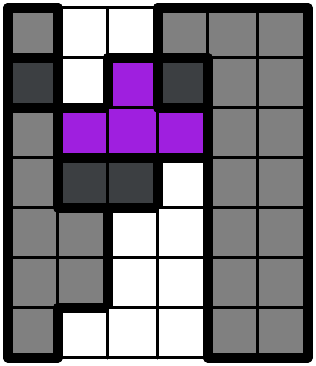}
    \caption{}
  \end{subfigure}
  \begin{subfigure}[b]{0.3\textwidth}
    \centering
    \includegraphics[width=60pt]{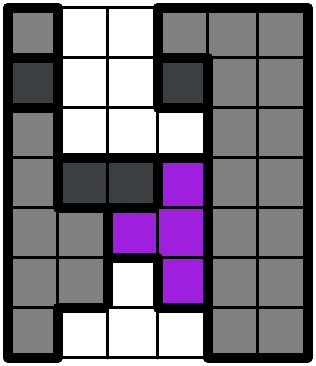}
    \caption{}
  \end{subfigure}
  \caption{A $\TT$-spin sequence used in the setup for $\{\OO, \TT\}$}
  \label{TSpin}
\end{figure}

$$ $$

$$ $$

\section{Clogs}

In this appendix, we illustrate improper and impossible piece placements that do not directly overflow the bottle and show how the improper piece placements break the bottle setup (or indirectly cause overflow). In all of our analyses (except for $\{\OO, \TT\}$ and the last clog for $\{\OO, \SS\}$), we assume that all top segments above where the clog happens, except for possibly the top segment directly above where the clog happens, get cleared properly (as we can always consider where the "highest" clog happens). Some of the piece placements that are not stated as being impossible may still not be possible, but we include them so that our analysis is thorough.

Note that the bottle structures in Figures \ref{ILSetup} and \ref{I+OtherSetup} do not have nontrivial improper piece placements (i.e. piece placements that do not cause overflow), so we focus on other setups.

\subsection{Clogs for $\{\OO, \JJ\}$}\label{appendix:ojclogs}

There are no improper ways to place the pieces in the priming and closing sequences that do not directly overflow the bottle, so we only look at the improper ways to place $\JJ$ pieces in the filling sequences into the neck portion of a bottle, as in Figure \ref{OJJClogs}.

\begin{figure}[!ht]
  \centering
  \begin{subfigure}[b]{0.12\textwidth}
    \centering
    \includegraphics[width=45pt]{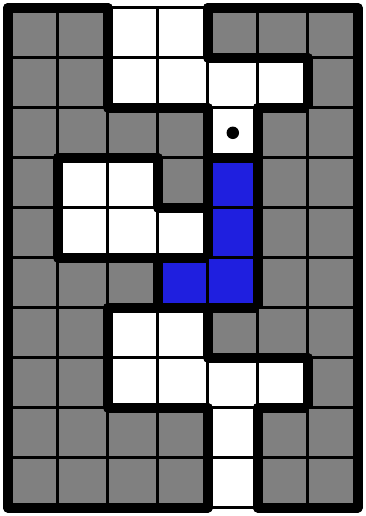}
    \caption{}
  \end{subfigure}
  \begin{subfigure}[b]{0.12\textwidth}
    \centering
    \includegraphics[width=45pt]{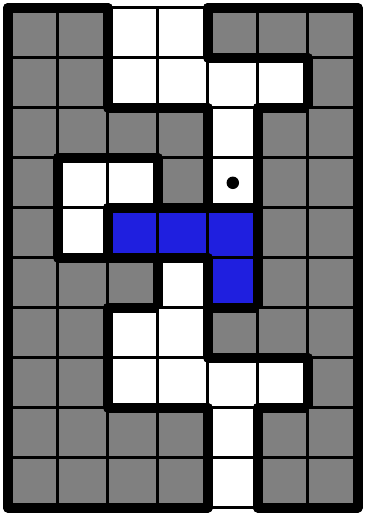}
    \caption{}
  \end{subfigure}
  \begin{subfigure}[b]{0.12\textwidth}
    \centering
    \includegraphics[width=45pt]{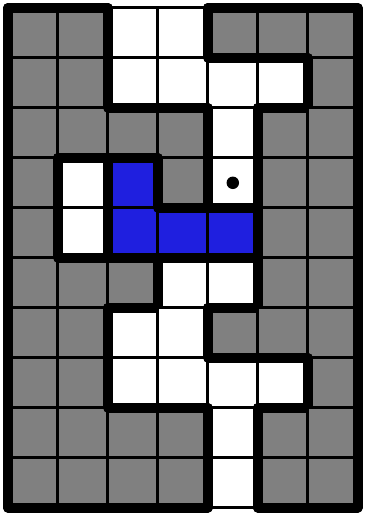}
    \caption{}
  \end{subfigure}
  \begin{subfigure}[b]{0.12\textwidth}
    \centering
    \includegraphics[width=45pt]{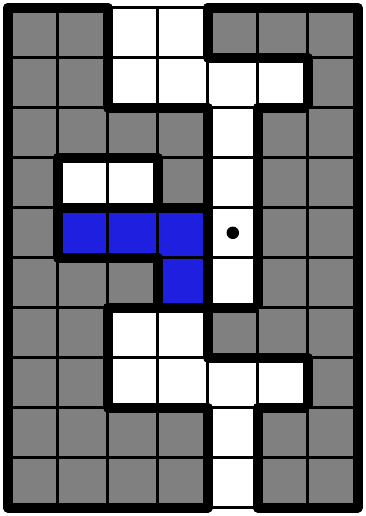}
    \caption{}
  \end{subfigure}
  \begin{subfigure}[b]{0.12\textwidth}
    \centering
    \includegraphics[width=45pt]{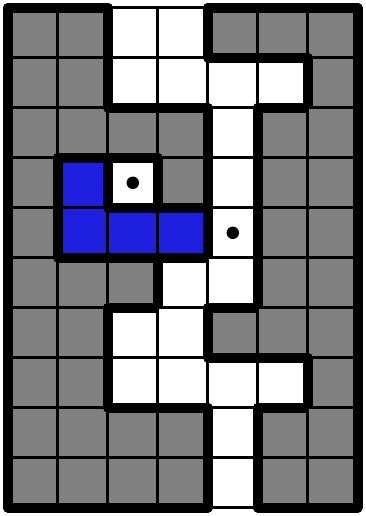}
    \caption{}
  \end{subfigure}
  \begin{subfigure}[b]{0.12\textwidth}
    \centering
    \includegraphics[width=45pt]{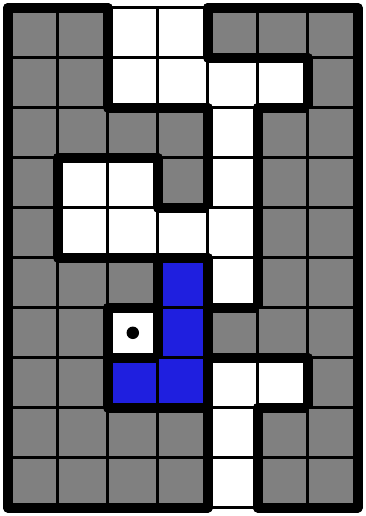}
    \caption{}
  \end{subfigure}
  \begin{subfigure}[b]{0.12\textwidth}
    \centering
    \includegraphics[width=45pt]{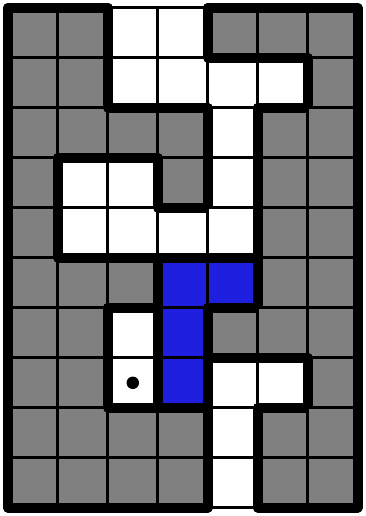}
    \caption{}
  \end{subfigure}
  \begin{subfigure}[b]{0.12\textwidth}
    \centering
    \includegraphics[width=45pt]{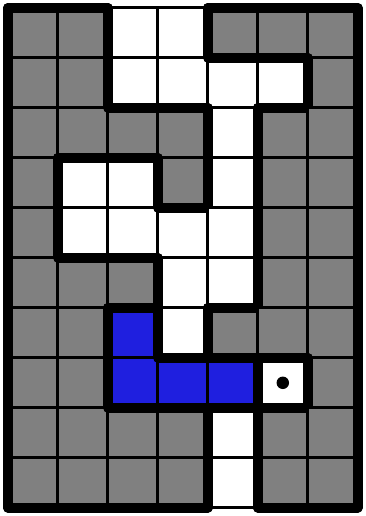}
    \caption{}
  \end{subfigure}
  \begin{subfigure}[b]{0.12\textwidth}
    \centering
    \includegraphics[width=45pt]{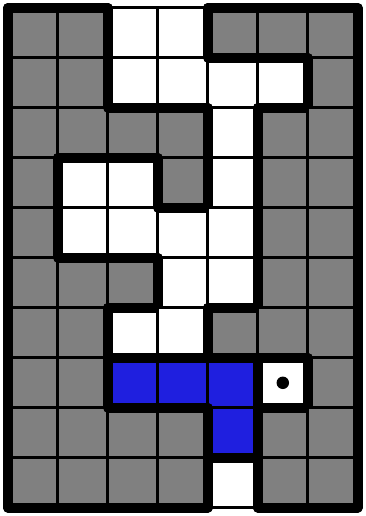}
    \caption{}
  \end{subfigure}
  \begin{subfigure}[b]{0.12\textwidth}
    \centering
    \includegraphics[width=45pt]{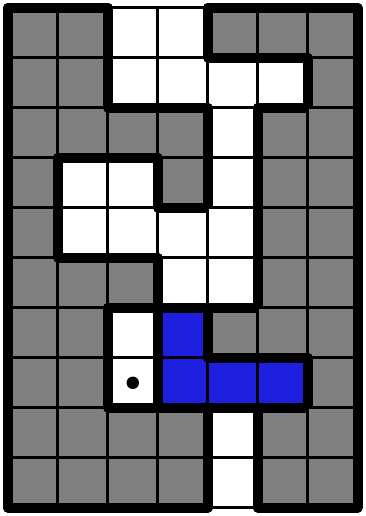}
    \caption{}
  \end{subfigure}
  \begin{subfigure}[b]{0.12\textwidth}
    \centering
    \includegraphics[width=45pt]{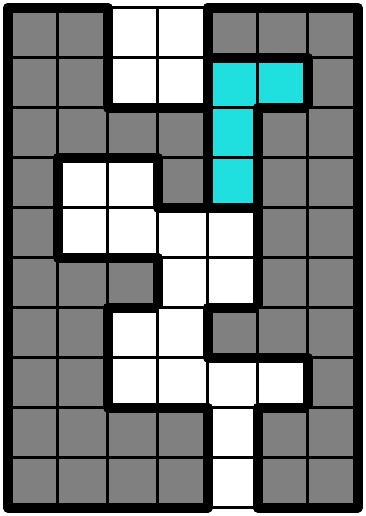}
    \caption{}
  \end{subfigure}
  \caption{Clogs or impossible piece placements (in the case of (k)) involving $\JJ$ pieces in the $\{\OO, \JJ\}$ setup}
  \label{OJJClogs}
\end{figure}

The first five improper piece placements ((a) through (e)) prevent additional $\JJ$ pieces from rotating into the top segment the $\JJ$ piece is placed in, with the blue squares next to empty squares with dots causing issues and preventing the necessary spins from working, and attempting to fix the top segment via line clearing leads to overflow. The next five improper piece placements ((f) through (j)) block off one to two empty squares (indicated with dots) from being accessible, and attempting to clear lines and fit a piece covering those empty squares leads to overflow.

It is also not possible to rotate a $\JJ$ piece into the required placement in the top segment during a filling sequence (as in Figure \ref{OJJClogs}(k)) because all of the rotations into that orientation change the $y$-coordinate of the center of the $\JJ$ piece by $0$, $+1$, or $-2$, and of the possible $y$-coordinates the center of the $\JJ$ piece can start at, there are no valid ways to place a $\JJ$ piece in one of the two possible orientations pre-rotation without colliding with one of the gray squares.

\subsection{Clogs for $\{\OO, \SS\}$}\label{appendix:osclogs}

There are no improper ways to place the pieces in the priming and closing sequences that do not directly overflow the bottle, so we only look at the improper ways to place $\SS$ pieces in the filling sequences into the neck portion of a bottle, as in Figure \ref{OSSClogs} ((d)-(e) indicates a sequence of placements right above the body portion).

\begin{figure}[!ht]
  \centering
  \begin{subfigure}[b]{0.12\textwidth}
    \centering
    \includegraphics[width=45pt]{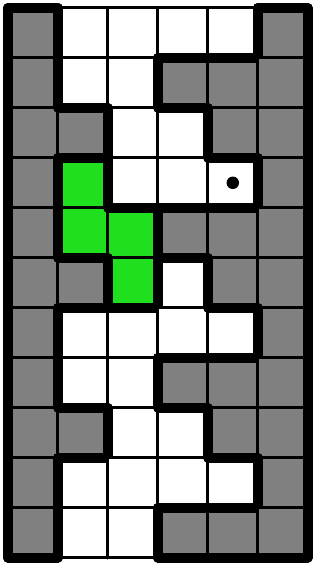}
    \caption{}
  \end{subfigure}
  \begin{subfigure}[b]{0.12\textwidth}
    \centering
    \includegraphics[width=45pt]{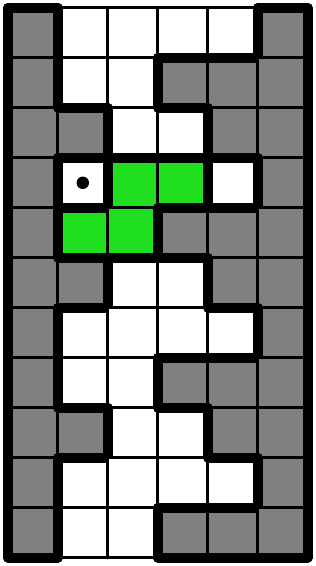}
    \caption{}
  \end{subfigure}
  \begin{subfigure}[b]{0.12\textwidth}
    \centering
    \includegraphics[width=45pt]{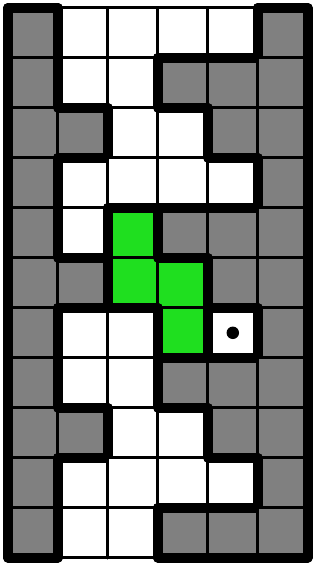}
    \caption{}
  \end{subfigure}
  \begin{subfigure}[b]{0.12\textwidth}
    \centering
    \includegraphics[width=45pt]{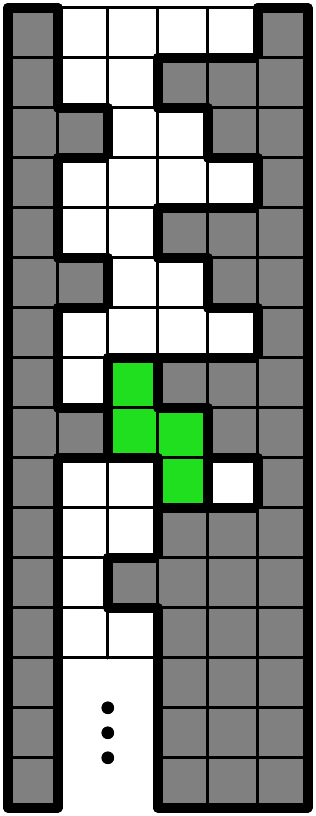}
    \caption{}
  \end{subfigure}
  \begin{subfigure}[b]{0.12\textwidth}
    \centering
    \includegraphics[width=45pt]{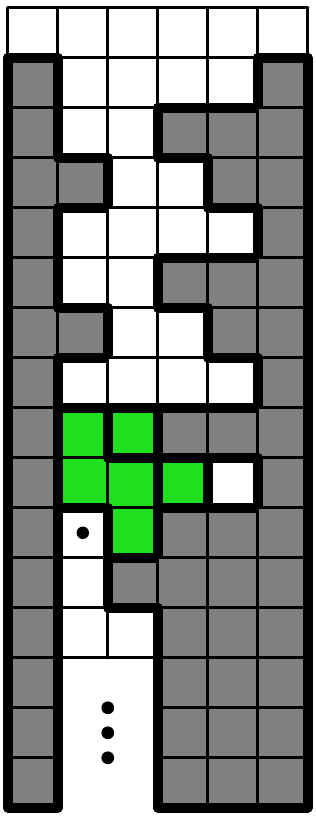}
    \caption{}
  \end{subfigure}
  \caption{Clogs involving $\SS$ pieces in the $\{\OO, \SS\}$ setup}
  \label{OSSClogs}
\end{figure}

For all of the improper piece placements in Figure \ref{OSSClogs}, the empty square indicated with a dot no longer becomes fill-able without overflow.

\subsection{Clogs for $\{\OO, \TT\}$}\label{appendix:otclogs}

First we discuss improper piece placements involving $\OO$ pieces in the priming sequence, which can be broken into two main cases: one indicated in Figure \ref{OTOClogs}(a) and one indicated in Figure \ref{OTOClogs}(b--c). The former shows that the $\OO$ pieces cannot be placed outside of the first two columns without causing a empty square to become inaccessible, and the latter shows that we cannot place a second $\OO$ piece from a priming or closing sequence into a bottle before a $\TT$ piece, as no $\TT$ piece can rotate in and cover the three empty squares with dots in Figure \ref{OTOClogs}(c). Thus, all of the $\OO$ pieces in the priming sequence must be placed in different bottles.

\begin{figure}[!ht]
  \centering
  \begin{subfigure}[b]{0.12\textwidth}
    \centering
    \includegraphics[width=45pt]{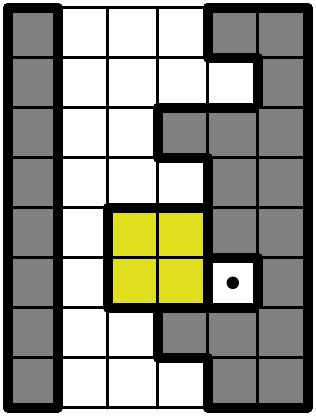}
    \caption{}
  \end{subfigure}
  \begin{subfigure}[b]{0.12\textwidth}
    \centering
    \includegraphics[width=45pt]{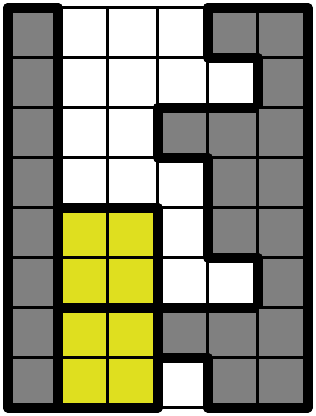}
    \caption{}
  \end{subfigure}
  \begin{subfigure}[b]{0.12\textwidth}
    \centering
    \includegraphics[width=45pt]{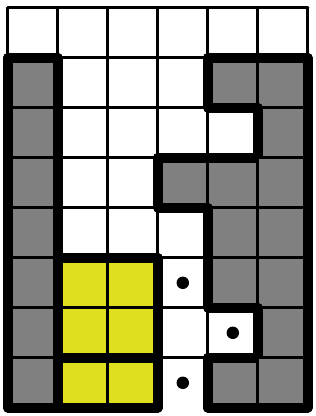}
    \caption{}
  \end{subfigure}
  \caption{Clogs involving $\OO$ pieces in the neck portion of a bottle in the $\{\OO, \TT\}$ setup}
  \label{OTOClogs}
\end{figure}

Next, we discuss improper piece placements involving $\TT$ pieces in the filling sequence. Here, we first note the improper piece placements within the body portion of the bottle in Figure \ref{OTBodyClogs}, where (c) is considered improper if there are any empty squares below the $\TT$ piece (as the presence of the $\TT$-finisher means that the rest of the bottle becomes inaccessible until the finale sequence, thus causing an overflow). In particular, this means that no $\OO$ pieces can be placed in the body portions of the bottles, meaning that all $\OO$ pieces must be placed in the neck portions. Furthermore, it is impossible to place an $\OO$ piece with its rotation center between the rightmost two columns of the bottle, even with line clearing with $\OO$s and $\TT$s, and placing an $\OO$ piece with its rotation center in between the middle two columns of the bottle, like in Figure \ref{OTOClogs}(a) and after Figure \ref{OTTClogs}(i), causes at least one empty square (indicated with a dot) to be inaccessible without causing an overflow. This means that all $\OO$ pieces must be placed within the leftmost two columns of the bottle, and by a counting argument, the first two columns must be covered by only $\OO$ pieces.

Therefore, all the $\TT$-piece placements in Figure \ref{OTTClogs} are improper and cause clogs, as they take up space required for $\OO$ pieces, in addition to potentially causing certain squares (marked with dots) to become inaccessible without causing an overflow.

\begin{figure}[!ht]
  \centering
  \begin{subfigure}[b]{0.12\textwidth}
    \centering
    \includegraphics[width=45pt]{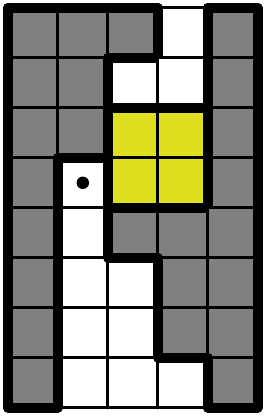}
    \caption{}
  \end{subfigure}
  \begin{subfigure}[b]{0.12\textwidth}
    \centering
    \includegraphics[width=45pt]{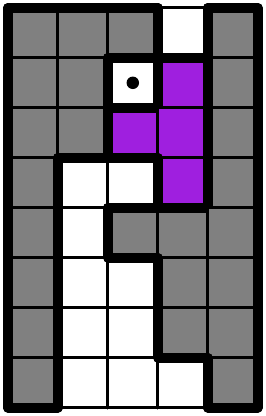}
    \caption{}
  \end{subfigure}
  \begin{subfigure}[b]{0.12\textwidth}
    \centering
    \includegraphics[width=45pt]{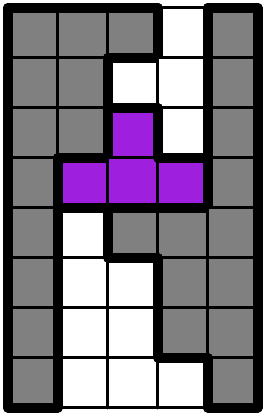}
    \caption{}
  \end{subfigure}
  \caption{Clogs in the body portion of a bottle in the $\{\OO, \TT\}$ setup}
  \label{OTBodyClogs}
\end{figure}

\begin{figure}[!ht]
  \centering
  \begin{subfigure}[b]{0.12\textwidth}
    \centering
    \includegraphics[width=45pt]{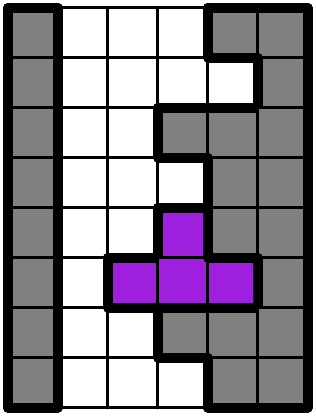}
    \caption{}
  \end{subfigure}
  \begin{subfigure}[b]{0.12\textwidth}
    \centering
    \includegraphics[width=45pt]{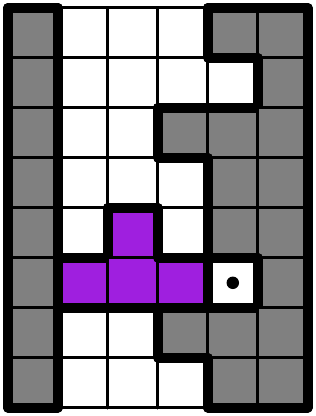}
    \caption{}
  \end{subfigure}
  \begin{subfigure}[b]{0.12\textwidth}
    \centering
    \includegraphics[width=45pt]{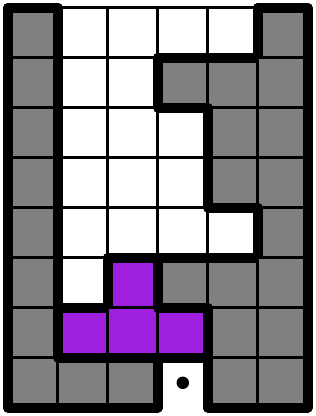}
    \caption{}
  \end{subfigure}
  \begin{subfigure}[b]{0.12\textwidth}
    \centering
    \includegraphics[width=45pt]{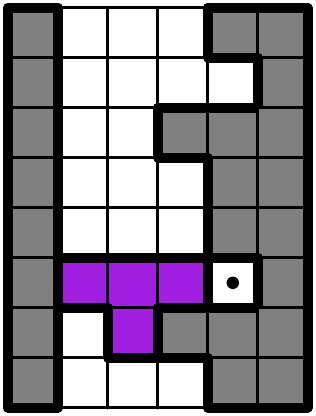}
    \caption{}
  \end{subfigure}
  \begin{subfigure}[b]{0.12\textwidth}
    \centering
    \includegraphics[width=45pt]{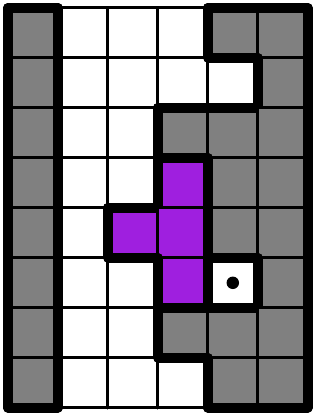}
    \caption{}
  \end{subfigure}
  \begin{subfigure}[b]{0.12\textwidth}
    \centering
    \includegraphics[width=45pt]{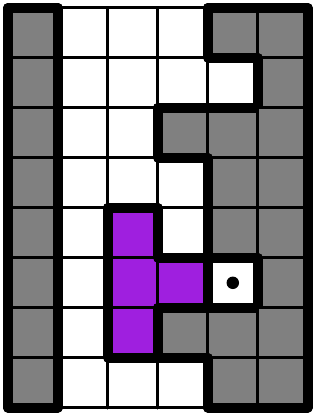}
    \caption{}
  \end{subfigure}
  \begin{subfigure}[b]{0.12\textwidth}
    \centering
    \includegraphics[width=45pt]{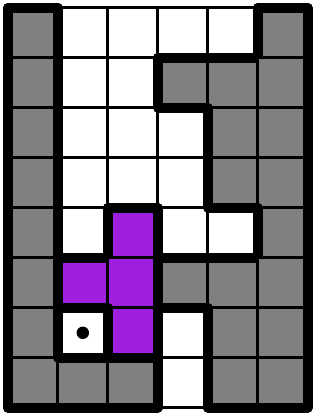}
    \caption{}
  \end{subfigure}
  \begin{subfigure}[b]{0.12\textwidth}
    \centering
    \includegraphics[width=45pt]{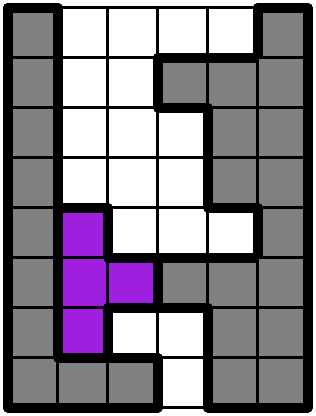}
    \caption{}
  \end{subfigure}
  \begin{subfigure}[b]{0.12\textwidth}
    \centering
    \includegraphics[width=45pt]{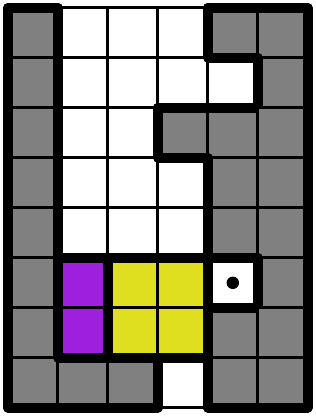}
    \caption{}
  \end{subfigure}
  \caption{Clogs involving $\TT$ pieces in the neck portion of a bottle in the $\{\OO, \TT\}$ setup}
  \label{OTTClogs}
\end{figure}

Improper piece placements involving pieces in the closing sequence are similar to the above; for $\TT$ pieces, any other placement takes up space required for $\OO$ pieces, and for $\OO$ pieces, the only other possible placements are those similar to Figure \ref{OTOClogs}(a).

\subsection{Clogs for $\{\SS, \TT\}$}\label{appendix:stclogs}

For the $\TT$ pieces in the priming sequence, the only improper piece placement that does not directly overflow the bottle is the one in Figure \ref{STClogs}(a); however, after the top line clears, no $\SS$ or $\TT$ piece can rotate into the rest of the bottle, and the empty square indicated with a dot cannot be filled, without overflowing the bottle.

For the $\SS$ pieces in the filling sequence, there are three different types of improper piece placements, as shown in Figure \ref{STClogs}(b--d); all of them cause specific empty squares (indicated with dots) to become inaccessible or unable to be filled without overflowing the bottle.

\begin{figure}[!ht]
  \centering
  \begin{subfigure}[b]{0.12\textwidth}
    \centering
    \includegraphics[width=45pt]{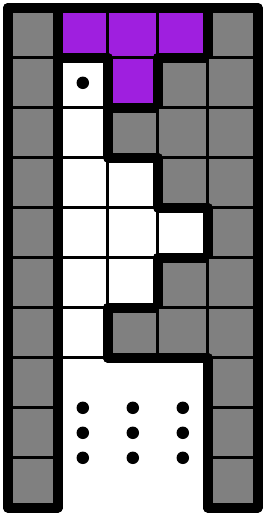}
    \caption{}
  \end{subfigure}
  \begin{subfigure}[b]{0.12\textwidth}
    \centering
    \includegraphics[width=45pt]{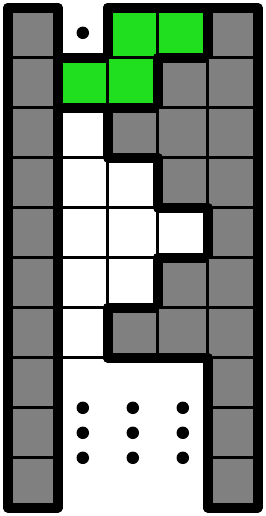}
    \caption{}
  \end{subfigure}
  \begin{subfigure}[b]{0.12\textwidth}
    \centering
    \includegraphics[width=45pt]{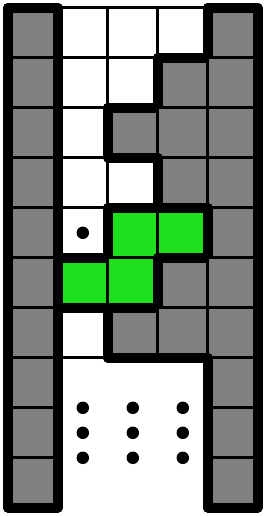}
    \caption{}
  \end{subfigure}
  \begin{subfigure}[b]{0.12\textwidth}
    \centering
    \includegraphics[width=45pt]{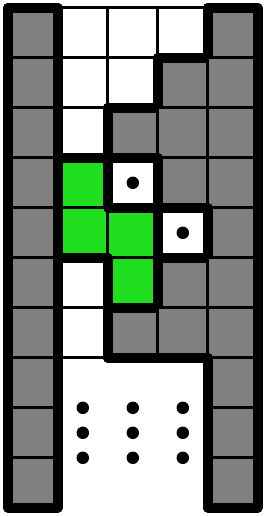}
    \caption{}
  \end{subfigure}
  \caption{Clogs in the $\{\SS, \TT\}$ setup}
  \label{STClogs}
\end{figure}

There are no improper piece placements among the pieces in the closing sequence that do not overflow the bottle (except for the first $\TT$ piece, which is covered by the discussion on $\TT$ pieces in the priming sequence).

\subsection{Clogs for $\{\SS, \ZZ\}$}\label{appendix:szclogs}

The improper piece placements for $\ZZ$ pieces in the priming and closing sequences are shown in Figure \ref{SZZClogs}; all of them cause at least one empty square (indicated with dots) to become inaccessible or unable to be filled without overflowing the bottle. 

Improper piece placements for $\ZZ$ pieces in the finale sequence cause similar issues with certain empty squares becoming inaccessible or unable to be filled without overflow.

\begin{figure}[!ht]
  \centering
  \begin{subfigure}[b]{0.12\textwidth}
    \centering
    \includegraphics[width=45pt]{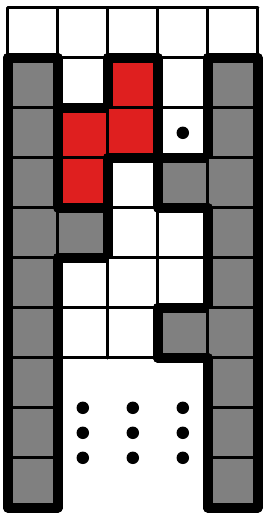}
    \caption{}
  \end{subfigure}
  \begin{subfigure}[b]{0.12\textwidth}
    \centering
    \includegraphics[width=45pt]{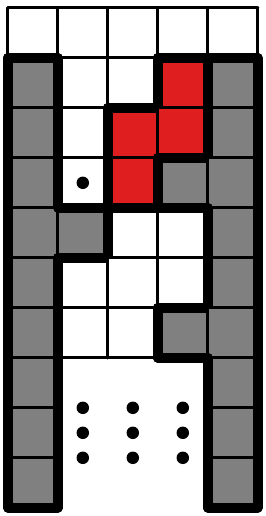}
    \caption{}
  \end{subfigure}
  \begin{subfigure}[b]{0.12\textwidth}
    \centering
    \includegraphics[width=45pt]{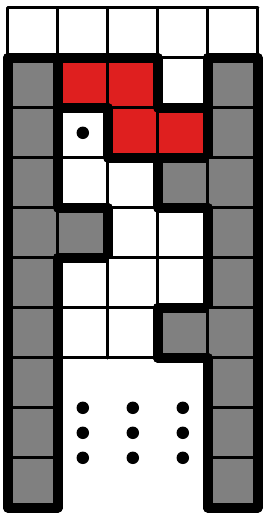}
    \caption{}
  \end{subfigure}
  \begin{subfigure}[b]{0.12\textwidth}
    \centering
    \includegraphics[width=45pt]{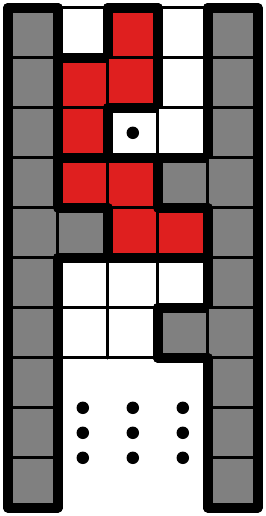}
    \caption{}
  \end{subfigure}
  \begin{subfigure}[b]{0.12\textwidth}
    \centering
    \includegraphics[width=45pt]{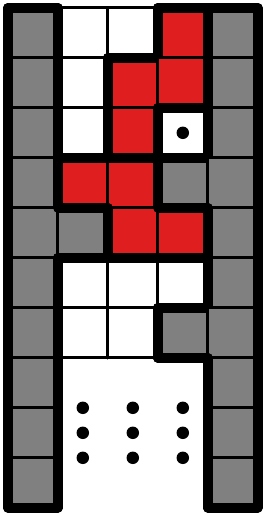}
    \caption{}
  \end{subfigure}
  \caption{Clogs involving $\ZZ$ pieces in the $\{\SS, \ZZ\}$ setup}
  \label{SZZClogs}
\end{figure}

The improper piece placements for $\SS$ pieces in the filling sequences are shown in Figure \ref{SZSClogs}. The piece placements in (a) and (b) force a $\ZZ$ piece to cover the empty square indicated with a dot, which causes at least one square in the rightmost column to become unable to be filled without overflow. The piece placement in (c) causes similar issues involving the empty square indicated with a dot. The piece placement in (d) leads to the other piece placements in (e)-(i), where it becomes impossible to reach the empty squares with dots without overflowing the bottle.

\begin{figure}[!ht]
  \centering
  \begin{subfigure}[b]{0.12\textwidth}
    \centering
    \includegraphics[width=45pt]{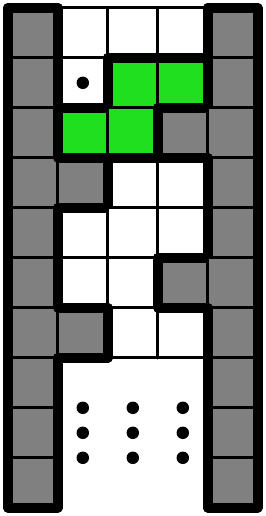}
    \caption{}
  \end{subfigure}
  \begin{subfigure}[b]{0.12\textwidth}
    \centering
    \includegraphics[width=45pt]{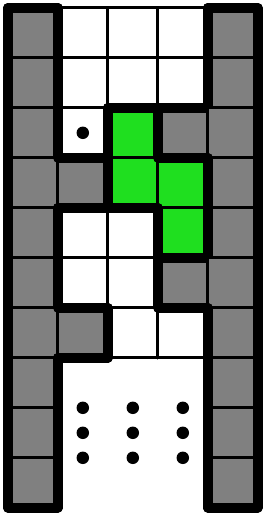}
    \caption{}
  \end{subfigure}
  \begin{subfigure}[b]{0.12\textwidth}
    \centering
    \includegraphics[width=45pt]{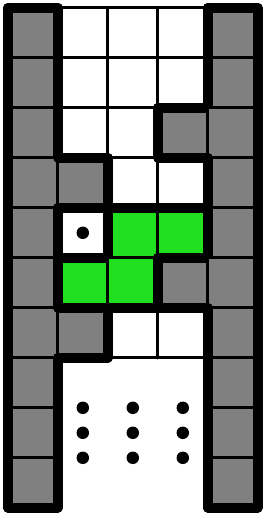}
    \caption{}
  \end{subfigure}
  \begin{subfigure}[b]{0.12\textwidth}
    \centering
    \includegraphics[width=45pt]{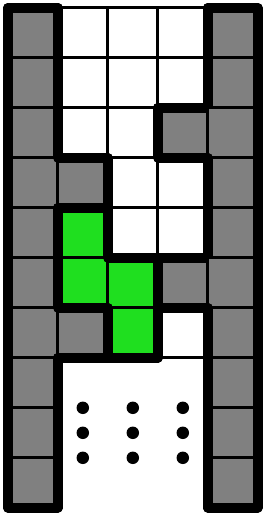}
    \caption{}
  \end{subfigure}
  \begin{subfigure}[b]{0.12\textwidth}
    \centering
    \includegraphics[width=45pt]{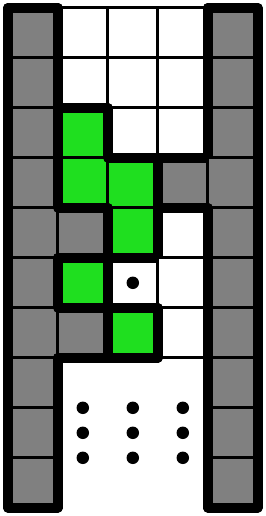}
    \caption{}
  \end{subfigure}
  \begin{subfigure}[b]{0.12\textwidth}
    \centering
    \includegraphics[width=45pt]{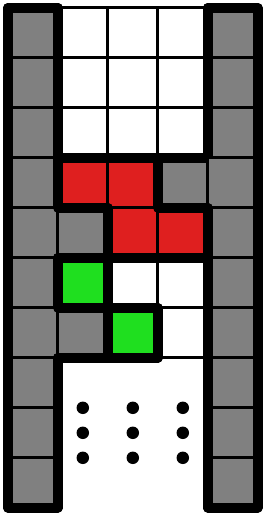}
    \caption{}
  \end{subfigure}
  \begin{subfigure}[b]{0.12\textwidth}
    \centering
    \includegraphics[width=45pt]{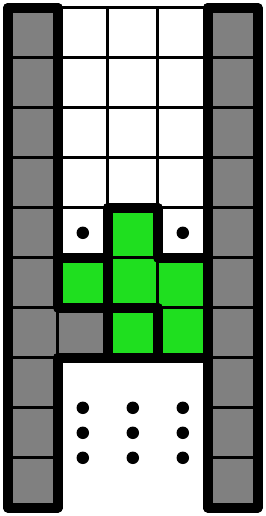}
    \caption{}
  \end{subfigure}
  \begin{subfigure}[b]{0.12\textwidth}
    \centering
    \includegraphics[width=45pt]{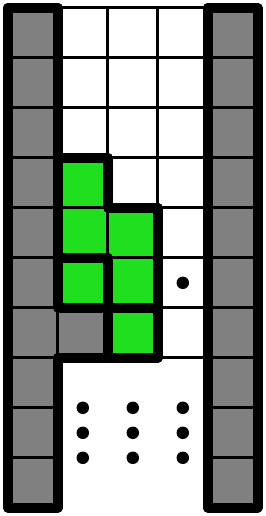}
    \caption{}
  \end{subfigure}
  \begin{subfigure}[b]{0.12\textwidth}
    \centering
    \includegraphics[width=45pt]{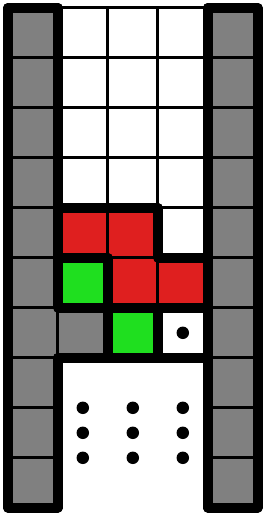}
    \caption{}
  \end{subfigure}
  \caption{Clogs involving $\SS$ pieces in the $\{\SS, \ZZ\}$ setup}
  \label{SZSClogs}
\end{figure}

\subsection{Clogs for $\{\JJ, \ZZ\}, \{\JJ, \TT\}$}\label{appendix:jzclogs}

The improper piece placements for $\ZZ$ pieces and $\TT$ pieces in the priming and closing sequences for $\{\JJ, \ZZ\}$ and $\{\JJ, \TT\}$, respectively, are shown in Figures \ref{JZZClogs} and \ref{JTTClogs}; all of them cause at least one empty square (indicated with dots) to become inaccessible or unable to be filled without overflowing the bottle, or cause the rest of the bottle to become inaccessible without overflow due to pieces not being able to move or rotate into the bottle. In particular, putting too many $\ZZ$ or $\TT$ pieces into a top segment during a priming sequence causes clogs similar to Figure \ref{JZZClogs}(g) and \ref{JTTClogs}(h--i).

\begin{figure}[!ht]
  \centering
  \begin{subfigure}[b]{0.12\textwidth}
    \centering
    \includegraphics[width=45pt]{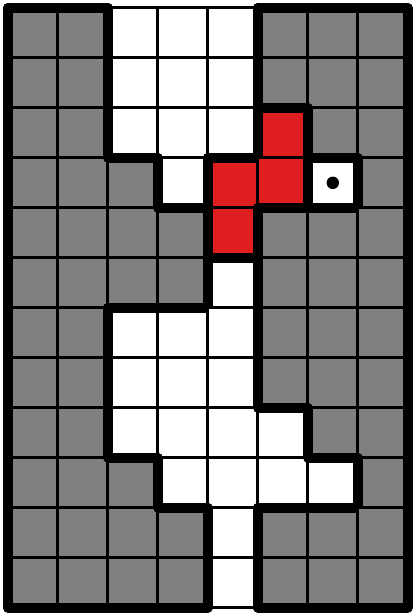}
    \caption{}
  \end{subfigure}
  \begin{subfigure}[b]{0.12\textwidth}
    \centering
    \includegraphics[width=45pt]{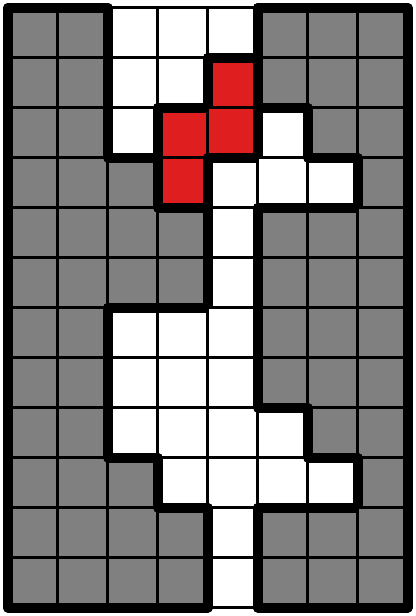}
    \caption{}
  \end{subfigure}
  \begin{subfigure}[b]{0.12\textwidth}
    \centering
    \includegraphics[width=45pt]{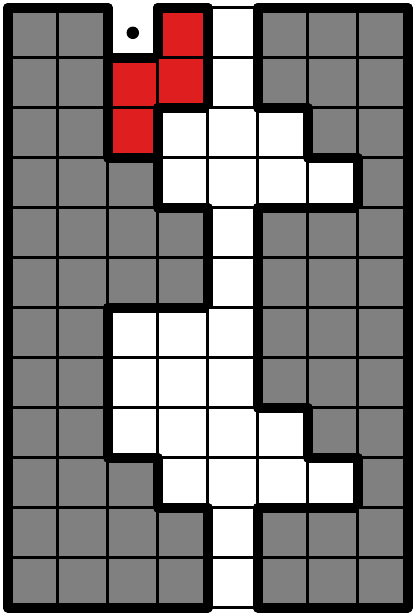}
    \caption{}
  \end{subfigure}
  \begin{subfigure}[b]{0.12\textwidth}
    \centering
    \includegraphics[width=45pt]{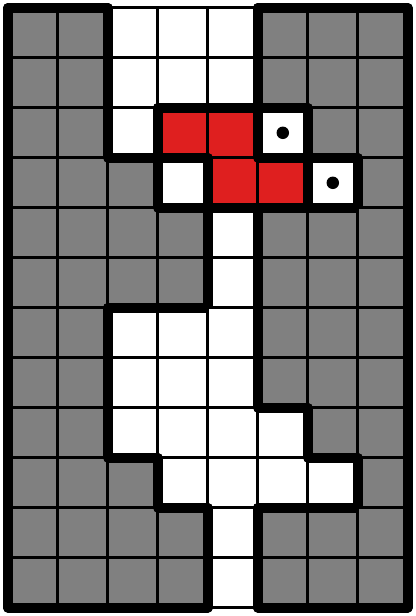}
    \caption{}
  \end{subfigure}
  \begin{subfigure}[b]{0.12\textwidth}
    \centering
    \includegraphics[width=45pt]{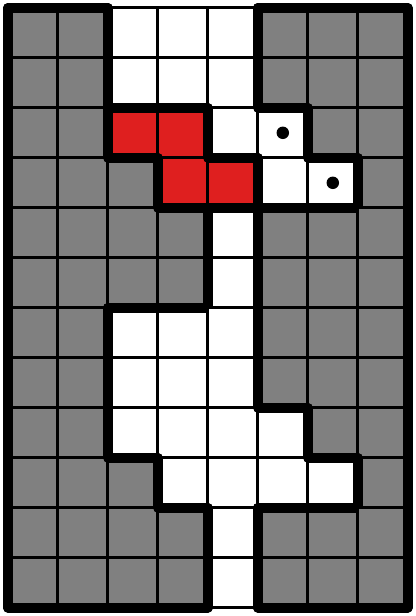}
    \caption{}
  \end{subfigure}
  \begin{subfigure}[b]{0.12\textwidth}
    \centering
    \includegraphics[width=45pt]{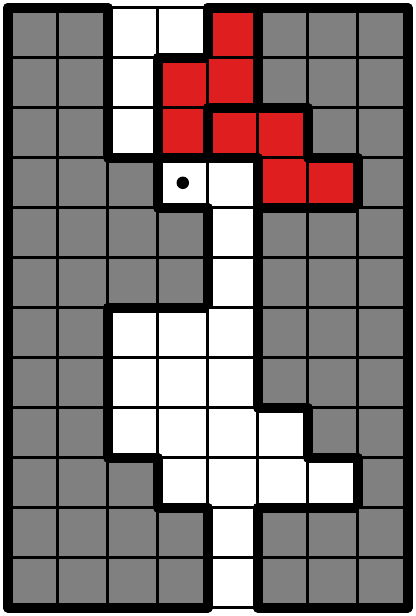}
    \caption{}
  \end{subfigure}
  \begin{subfigure}[b]{0.12\textwidth}
    \centering
    \includegraphics[width=45pt]{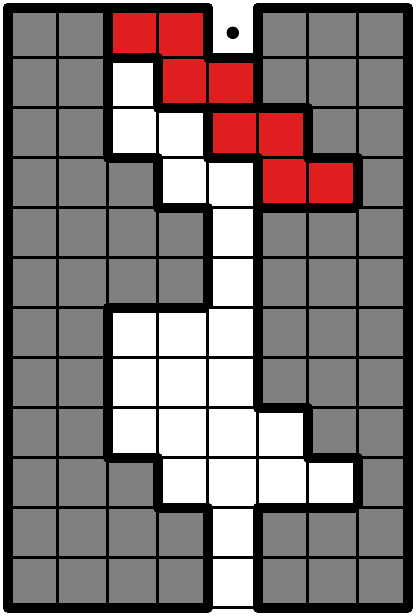}
    \caption{}
  \end{subfigure}
  \caption{Clogs involving $\ZZ$ pieces in the $\{\JJ, \ZZ\}$ setup}
  \label{JZZClogs}
\end{figure}

\begin{figure}[!ht]
  \centering
  \begin{subfigure}[b]{0.12\textwidth}
    \centering
    \includegraphics[width=45pt]{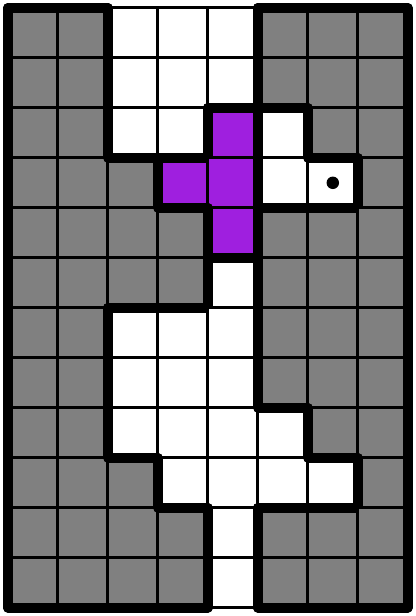}
    \caption{}
  \end{subfigure}
  \begin{subfigure}[b]{0.12\textwidth}
    \centering
    \includegraphics[width=45pt]{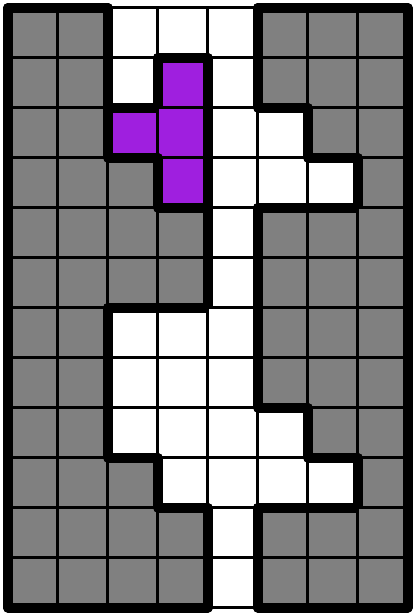}
    \caption{}
  \end{subfigure}
  \begin{subfigure}[b]{0.12\textwidth}
    \centering
    \includegraphics[width=45pt]{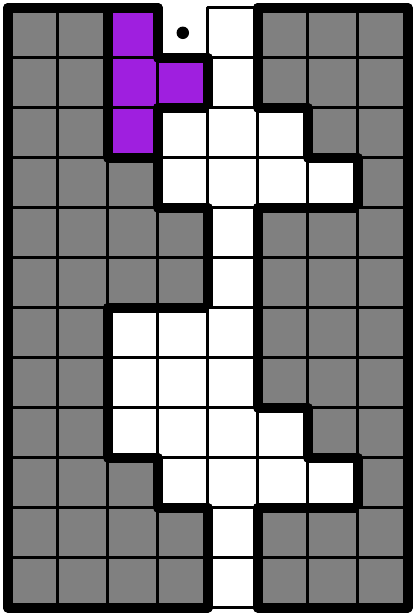}
    \caption{}
  \end{subfigure}
  \begin{subfigure}[b]{0.12\textwidth}
    \centering
    \includegraphics[width=45pt]{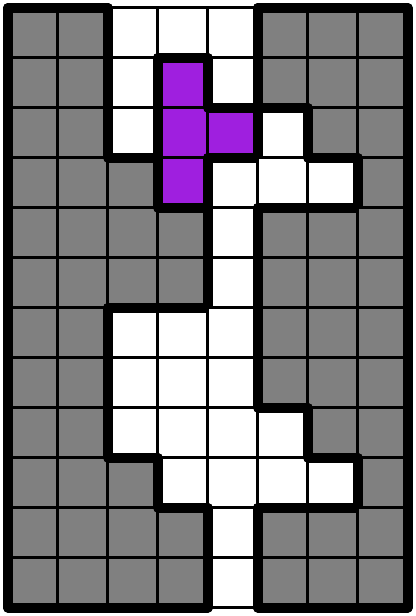}
    \caption{}
  \end{subfigure}
  \begin{subfigure}[b]{0.12\textwidth}
    \centering
    \includegraphics[width=45pt]{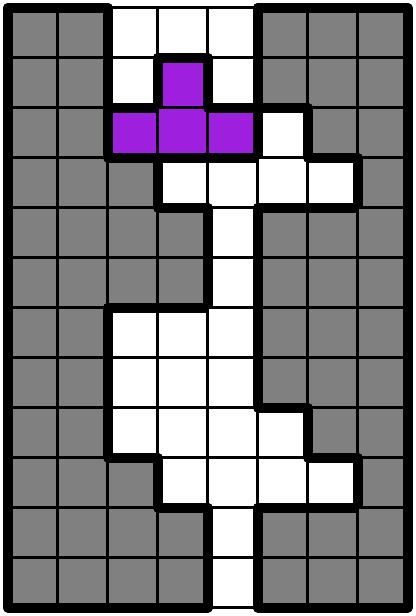}
    \caption{}
  \end{subfigure}
  \begin{subfigure}[b]{0.12\textwidth}
    \centering
    \includegraphics[width=45pt]{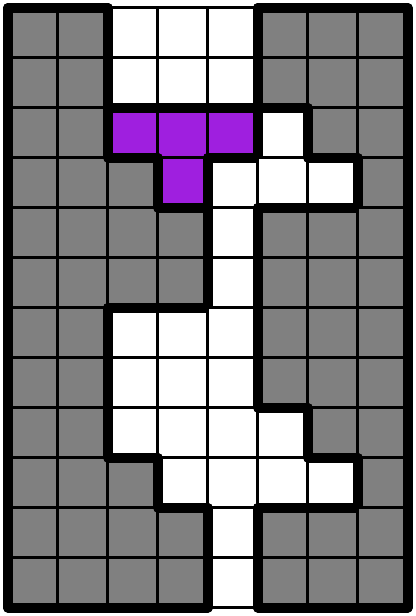}
    \caption{}
  \end{subfigure}
  \begin{subfigure}[b]{0.12\textwidth}
    \centering
    \includegraphics[width=45pt]{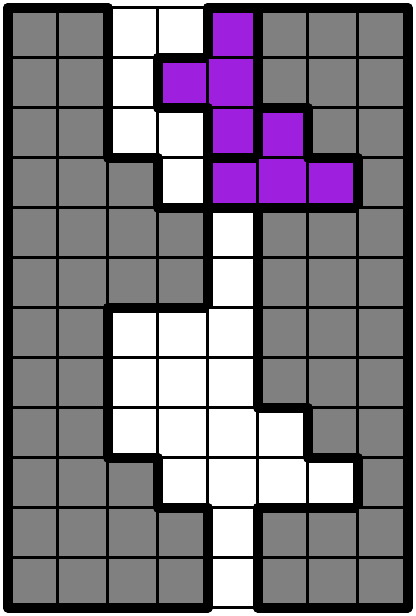}
    \caption{}
  \end{subfigure}
  \begin{subfigure}[b]{0.12\textwidth}
    \centering
    \includegraphics[width=45pt]{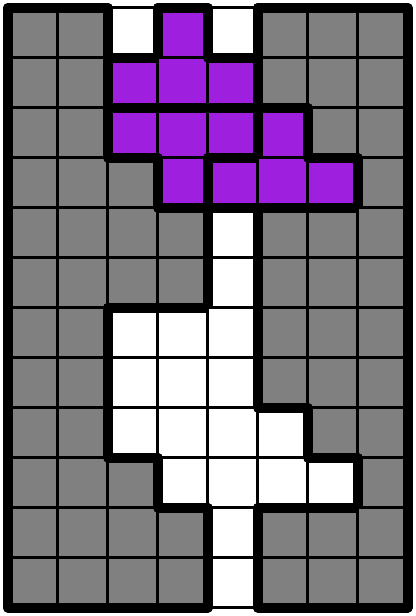}
    \caption{}
  \end{subfigure}
  \begin{subfigure}[b]{0.12\textwidth}
    \centering
    \includegraphics[width=45pt]{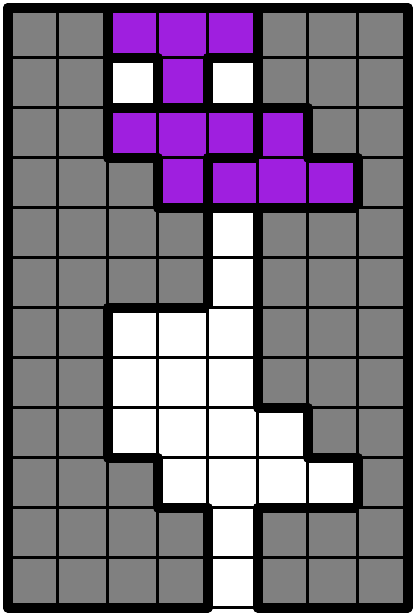}
    \caption{}
  \end{subfigure}
  \caption{Clogs involving $\TT$ pieces in the $\{\JJ, \TT\}$ setup}
  \label{JTTClogs}
\end{figure}

The improper piece placements for $\JJ$ pieces in the filling and closing sequences are shown in Figure \ref{JZJClogs}. The piece placements in (a)-(e) and (j) cause at least one empty square (indicated with a dot) to become inaccessible without overflowing the bottle. The piece placements in (f)-(i) either prevent more $\JJ$ pieces from rotating into the top segment the $\JJ$ piece is placed in or allow only one more $\JJ$ piece to rotate in and be stuck at the top of the top segment (in the case of (g) and (i)). Due to the structure of a top segment, the $\JJ$ pieces in the closing sequence must be placed as in Figure \ref{JZSetup}; (k)-(l) shows the result of the only improper piece placement involving a $\JJ$ piece in the closing sequence that does not directly overflow the bottle, but causes the empty square indicated with a dot to become inaccessible without overflowing the bottle.

\begin{figure}[!ht]
  \centering
  \begin{subfigure}[b]{0.12\textwidth}
    \centering
    \includegraphics[width=45pt]{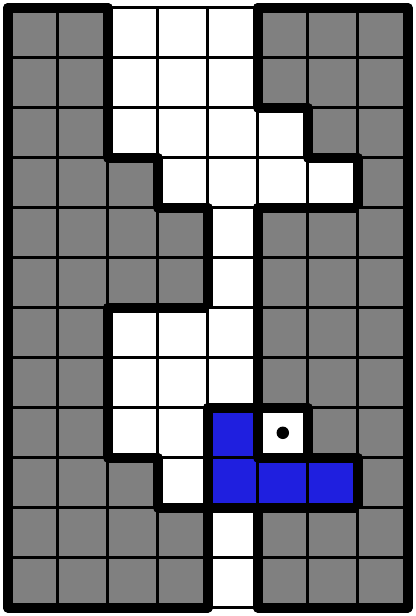}
    \caption{}
  \end{subfigure}
  \begin{subfigure}[b]{0.12\textwidth}
    \centering
    \includegraphics[width=45pt]{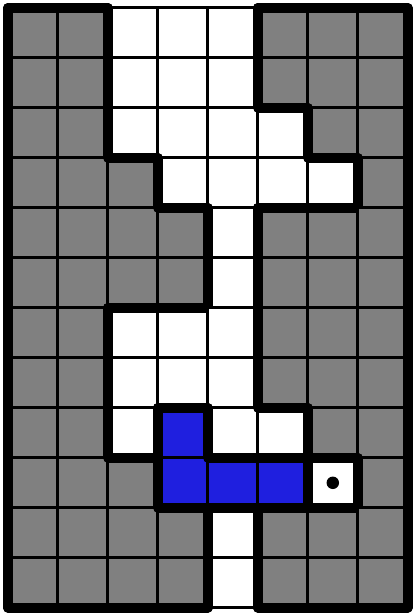}
    \caption{}
  \end{subfigure}
  \begin{subfigure}[b]{0.12\textwidth}
    \centering
    \includegraphics[width=45pt]{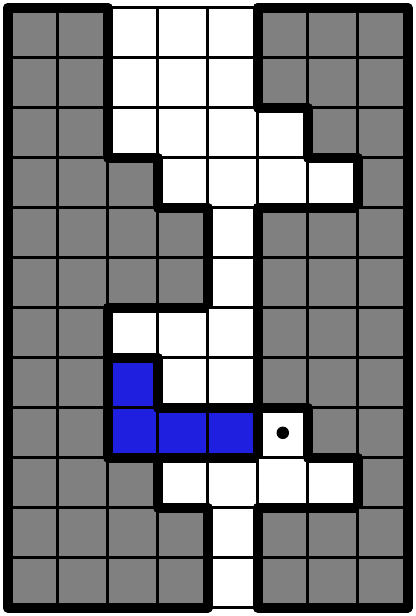}
    \caption{}
  \end{subfigure}
  \begin{subfigure}[b]{0.12\textwidth}
    \centering
    \includegraphics[width=45pt]{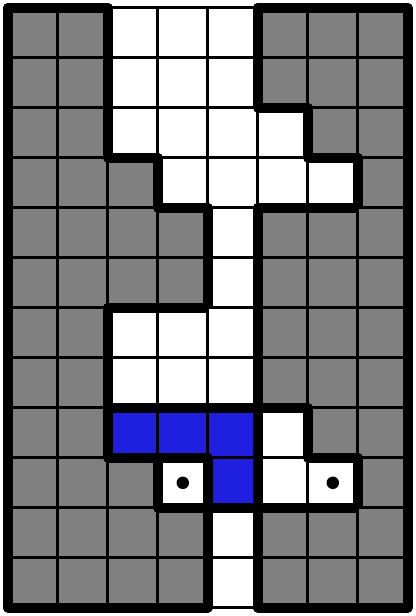}
    \caption{}
  \end{subfigure}
  \begin{subfigure}[b]{0.12\textwidth}
    \centering
    \includegraphics[width=45pt]{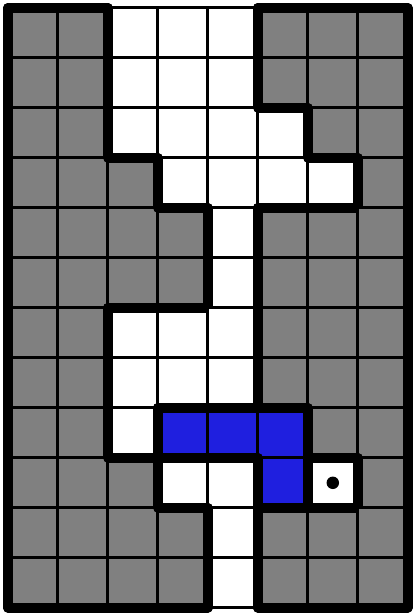}
    \caption{}
  \end{subfigure}
  \begin{subfigure}[b]{0.12\textwidth}
    \centering
    \includegraphics[width=45pt]{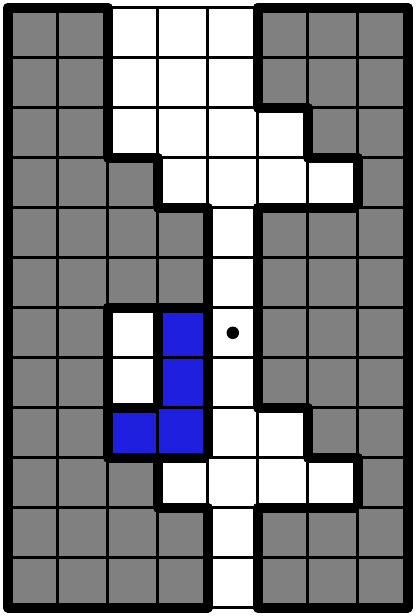}
    \caption{}
  \end{subfigure}
  \begin{subfigure}[b]{0.12\textwidth}
    \centering
    \includegraphics[width=45pt]{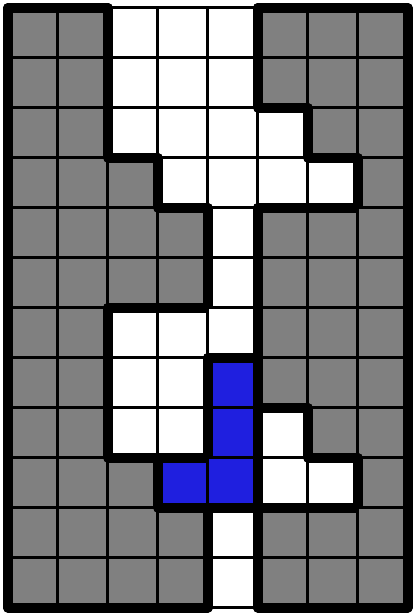}
    \caption{}
  \end{subfigure}
  \begin{subfigure}[b]{0.12\textwidth}
    \centering
    \includegraphics[width=45pt]{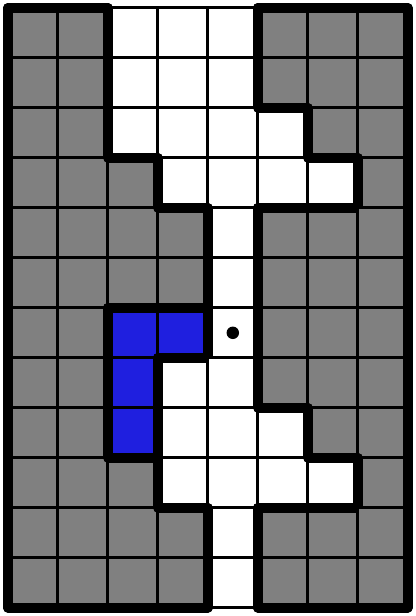}
    \caption{}
  \end{subfigure}
  \begin{subfigure}[b]{0.12\textwidth}
    \centering
    \includegraphics[width=45pt]{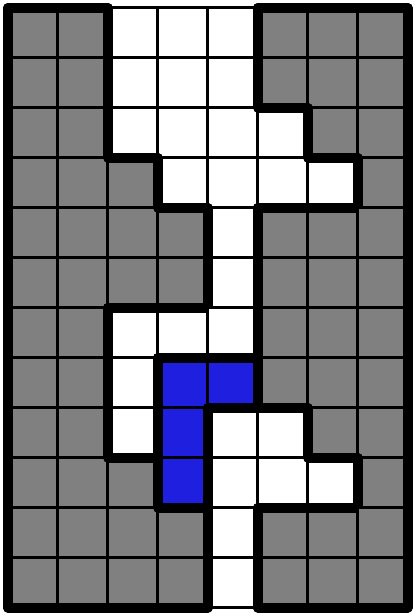}
    \caption{}
  \end{subfigure}
  \begin{subfigure}[b]{0.12\textwidth}
    \centering
    \includegraphics[width=45pt]{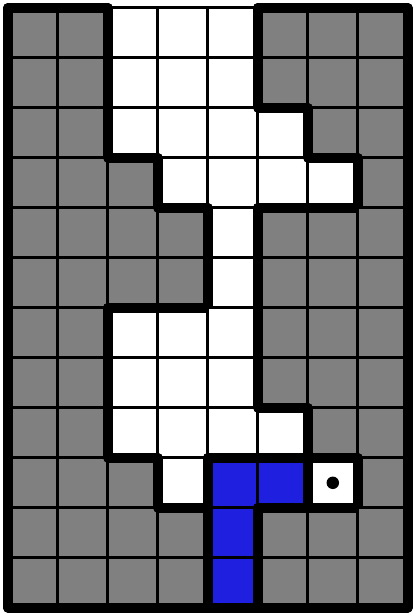}
    \caption{}
  \end{subfigure}
  \begin{subfigure}[b]{0.12\textwidth}
    \centering
    \includegraphics[width=45pt]{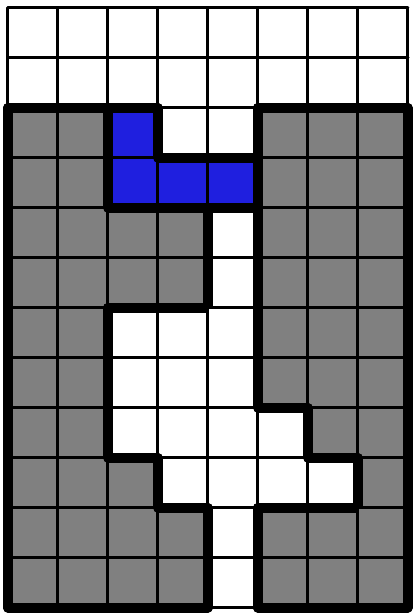}
    \caption{}
  \end{subfigure}
  \begin{subfigure}[b]{0.12\textwidth}
    \centering
    \includegraphics[width=45pt]{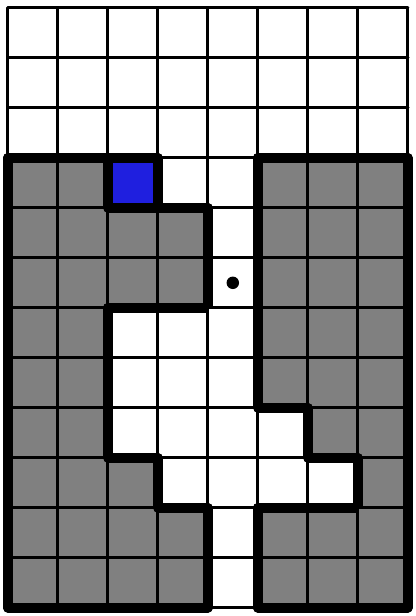}
    \caption{}
  \end{subfigure}
  \caption{Clogs involving $\JJ$ pieces in the $\{\JJ, \ZZ\}$ and $\{\JJ, \TT\}$ setups}
  \label{JZJClogs}
\end{figure}

\subsection{Clogs for $\{\JJ, \SS\}$}\label{appendix:jsclogs}

The improper piece placements for $\SS$ pieces in the priming and closing sequences are shown in Figure \ref{JSSClogs}; all of them cause at least one empty square (indicated with dots) to become inaccessible or unable to be filled without overflowing the bottle, or cause the rest of the bottle to become inaccessible without overflow due to pieces not being able to move or rotate into the bottle, or are impossible under SRS (as in (f)). In particular, putting more than one $\SS$ piece into a top segment during a priming sequence causes one of the $\SS$ pieces to be placed in one of the five improper ways shown in Figure \ref{JSSClogs}.

\begin{figure}[!ht]
  \centering
  \begin{subfigure}[b]{0.12\textwidth}
    \centering
    \includegraphics[width=45pt]{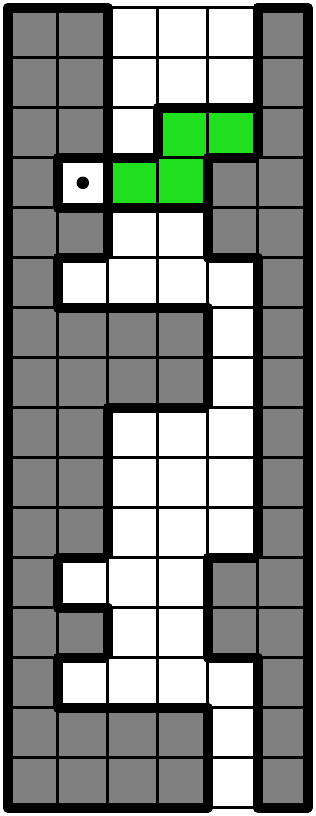}
    \caption{}
  \end{subfigure}
  \begin{subfigure}[b]{0.12\textwidth}
    \centering
    \includegraphics[width=45pt]{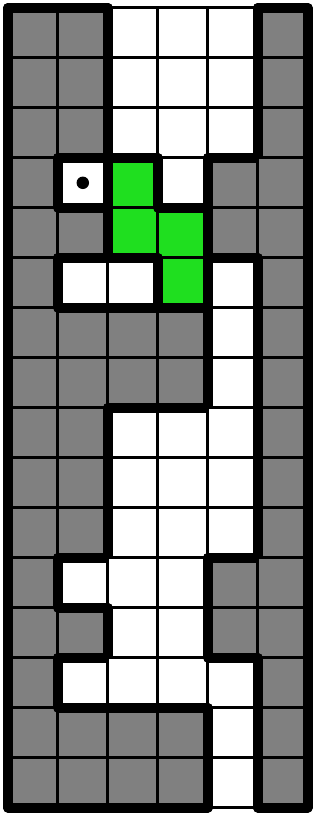}
    \caption{}
  \end{subfigure}
  \begin{subfigure}[b]{0.12\textwidth}
    \centering
    \includegraphics[width=45pt]{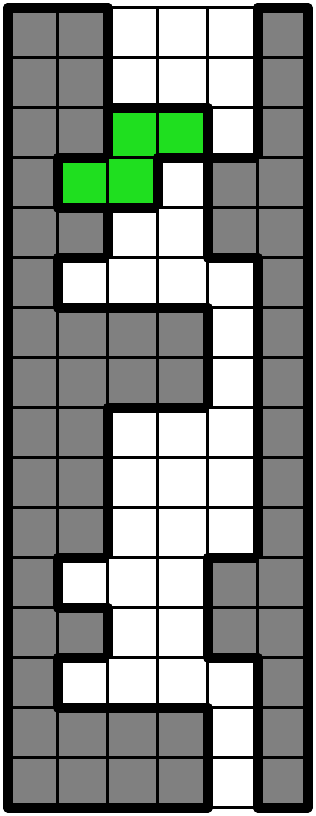}
    \caption{}
  \end{subfigure}
  \begin{subfigure}[b]{0.12\textwidth}
    \centering
    \includegraphics[width=45pt]{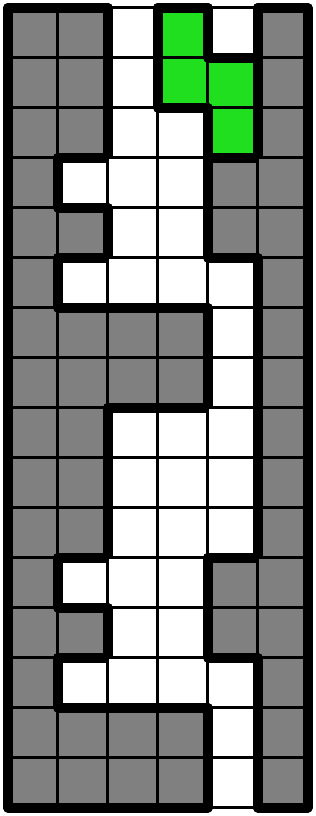}
    \caption{}
  \end{subfigure}
  \begin{subfigure}[b]{0.12\textwidth}
    \centering
    \includegraphics[width=45pt]{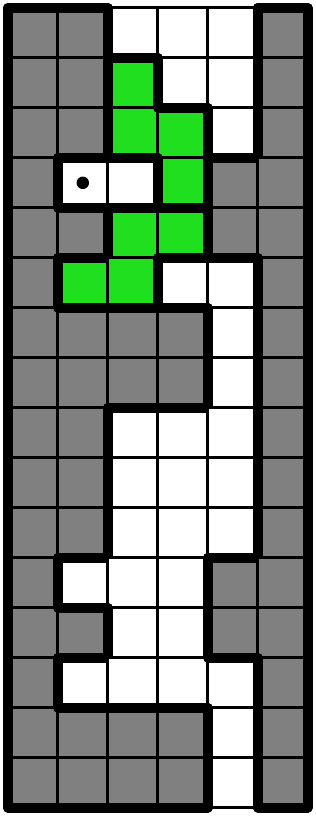}
    \caption{}
  \end{subfigure}
  \begin{subfigure}[b]{0.12\textwidth}
    \centering
    \includegraphics[width=45pt]{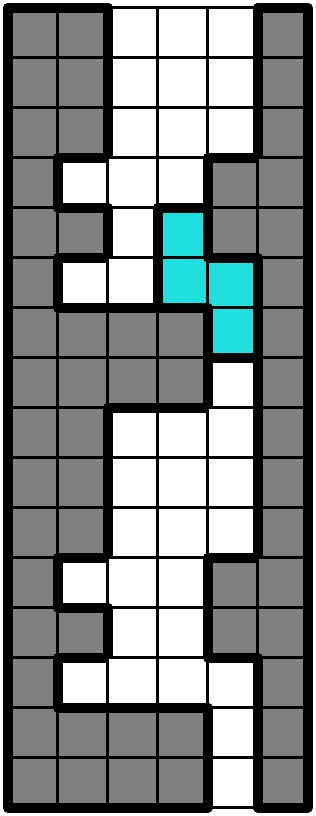}
    \caption{}
  \end{subfigure}
  \caption{Clogs or impossible piece placements (in the case of (f)) involving $\SS$ pieces in the $\{\JJ, \ZZ\}$ setup}
  \label{JSSClogs}
\end{figure}

The improper piece placements for $\JJ$ pieces in the filling sequences are shown in Figure \ref{JSJClogs}. The piece placements in (a), (b), (d)-(g), and (i) cause at least one empty square (indicated with a dot) to become inaccessible without overflowing the bottle, either due to being blocked off or due to pieces not being able to rotate in to fill the square. The piece placements in (c), (h), and (j) cause issues with $\JJ$ pieces attempting to rotate into the top segment the $\JJ$ piece is placed in; (k)-(m) show possible piece placements after (j) (with (m) being after the lines above the top segment have been cleared), with certain empty squares (indicated with dots) becoming inaccessible without overflowing the bottle.

\begin{figure}[!ht]
  \centering
  \begin{subfigure}[b]{0.12\textwidth}
    \centering
    \includegraphics[width=45pt]{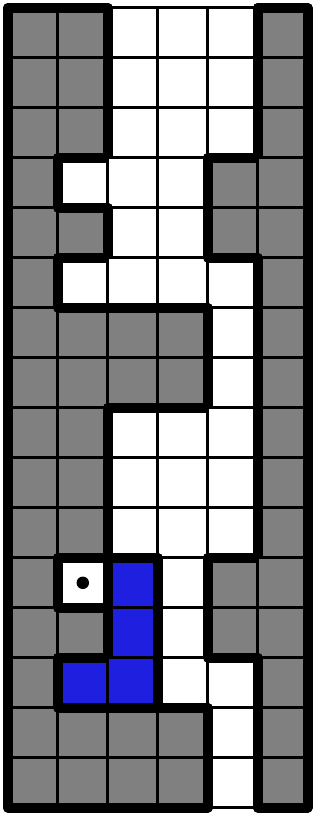}
    \caption{}
  \end{subfigure}
  \begin{subfigure}[b]{0.12\textwidth}
    \centering
    \includegraphics[width=45pt]{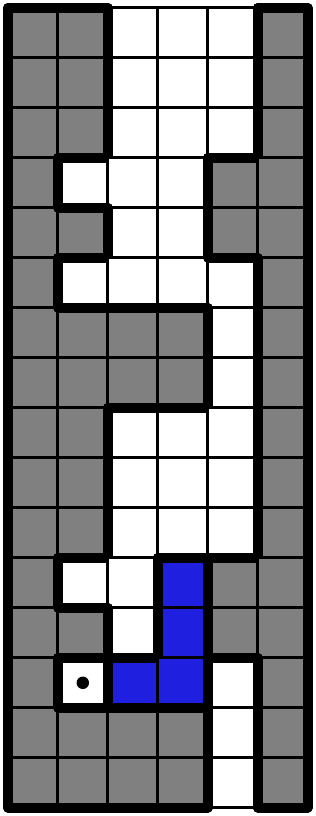}
    \caption{}
  \end{subfigure}
  \begin{subfigure}[b]{0.12\textwidth}
    \centering
    \includegraphics[width=45pt]{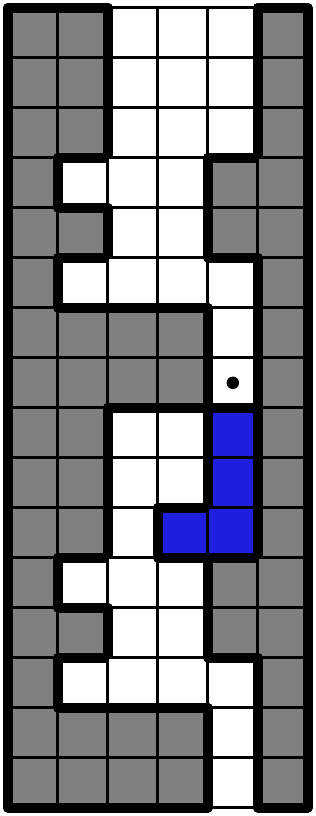}
    \caption{}
  \end{subfigure}
  \begin{subfigure}[b]{0.12\textwidth}
    \centering
    \includegraphics[width=45pt]{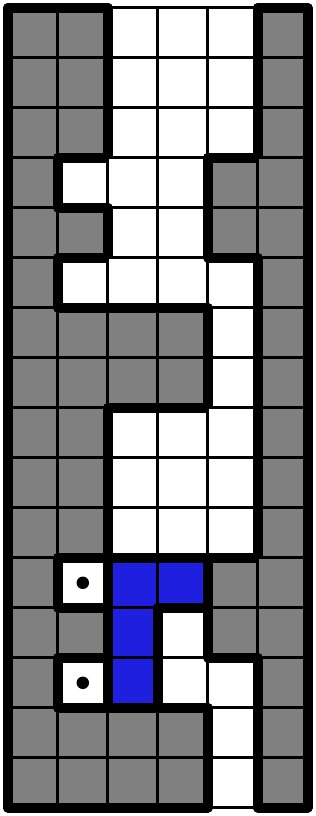}
    \caption{}
  \end{subfigure}
  \begin{subfigure}[b]{0.12\textwidth}
    \centering
    \includegraphics[width=45pt]{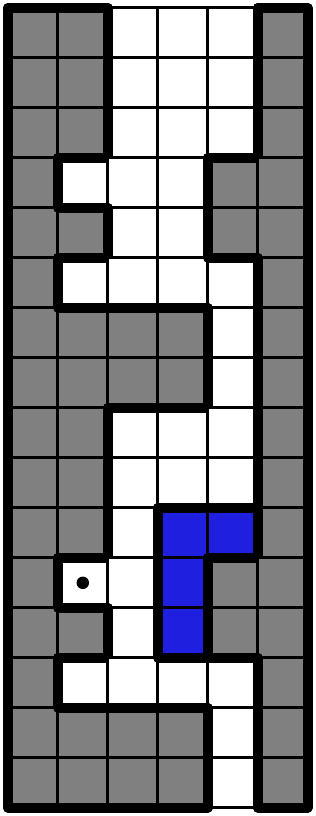}
    \caption{}
  \end{subfigure}
  \begin{subfigure}[b]{0.12\textwidth}
    \centering
    \includegraphics[width=45pt]{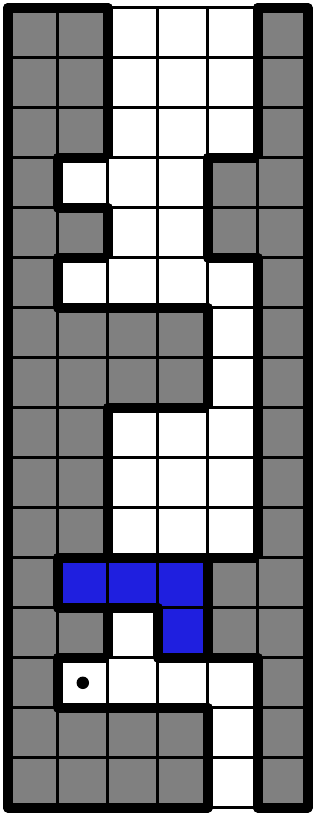}
    \caption{}
  \end{subfigure}
  \begin{subfigure}[b]{0.12\textwidth}
    \centering
    \includegraphics[width=45pt]{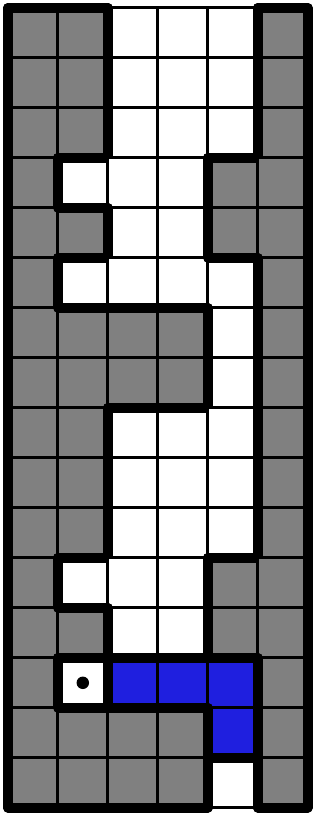}
    \caption{}
  \end{subfigure}
  \begin{subfigure}[b]{0.12\textwidth}
    \centering
    \includegraphics[width=45pt]{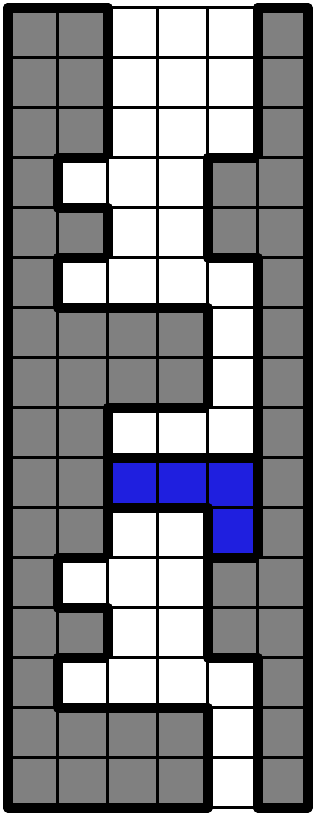}
    \caption{}
  \end{subfigure}
  \begin{subfigure}[b]{0.12\textwidth}
    \centering
    \includegraphics[width=45pt]{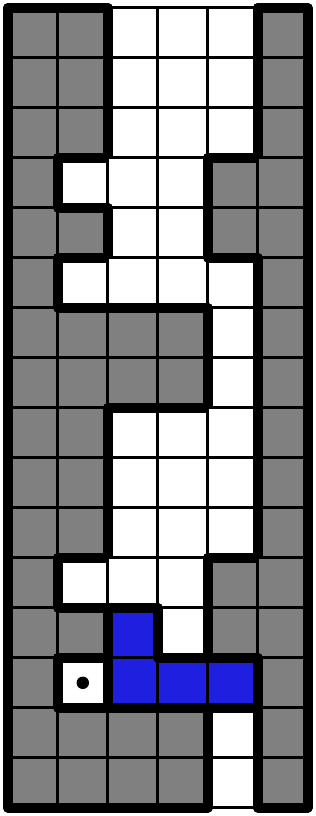}
    \caption{}
  \end{subfigure}
  \begin{subfigure}[b]{0.12\textwidth}
    \centering
    \includegraphics[width=45pt]{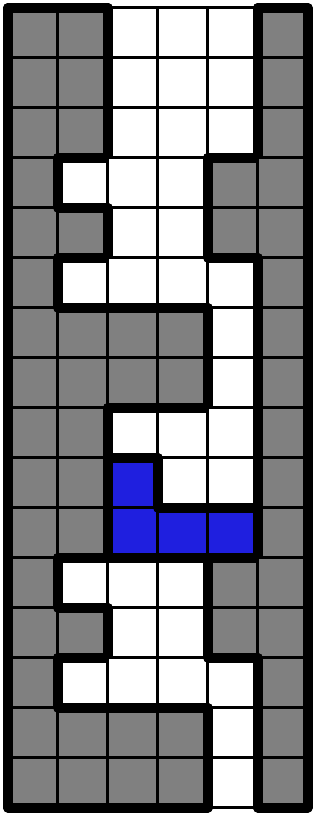}
    \caption{}
  \end{subfigure}
  \begin{subfigure}[b]{0.12\textwidth}
    \centering
    \includegraphics[width=45pt]{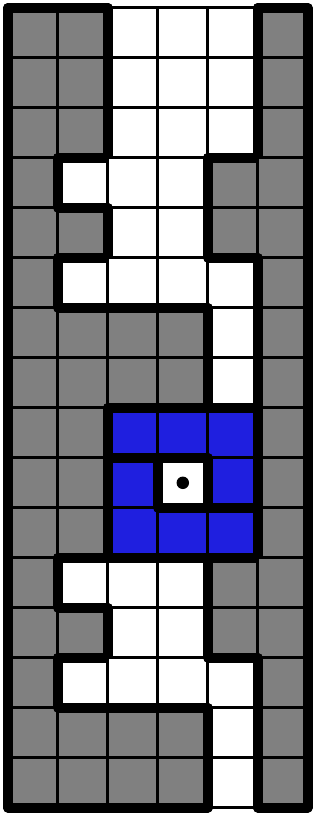}
    \caption{}
  \end{subfigure}
  \begin{subfigure}[b]{0.12\textwidth}
    \centering
    \includegraphics[width=45pt]{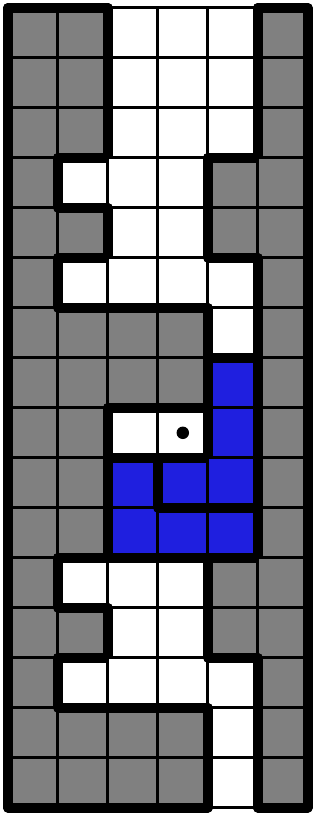}
    \caption{}
  \end{subfigure}
  \begin{subfigure}[b]{0.12\textwidth}
    \centering
    \includegraphics[width=45pt]{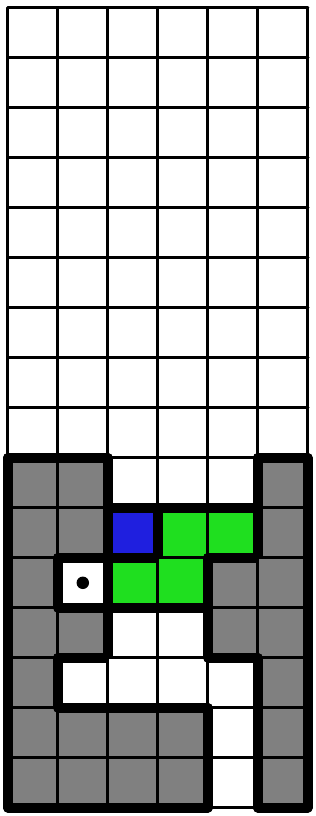}
    \caption{}
  \end{subfigure}
  \caption{Clogs involving $\JJ$ pieces in the $\{\JJ, \SS\}$ setup}
  \label{JSJClogs}
\end{figure}

Due to the structure of a top segment, the $\JJ$ pieces in the closing sequence must be placed as in Figure \ref{JSSetup}; improper piece placements for $\JJ$ pieces in the closing sequence here follow a similar analysis as improper piece placements for $\JJ$ pieces in the closing sequence for the $\{\JJ, \ZZ\}$ setup or for $\JJ$ pieces in the filling sequence for this setup.

\subsection{Clogs for $\{\JJ, \LL\}$}\label{appendix:jlclogs}

The improper piece placements for $\LL$ pieces in the priming sequences and the first two $\LL$ pieces in the closing sequences are shown in Figure \ref{JLLClogs}; all of them cause at least one empty square (indicated with dots) to become inaccessible or unable to be filled without overflowing the bottle, or cause the rest of the bottle to become inaccessible without overflow due to pieces not being able to move or rotate into the bottle. In particular, putting more than two $\LL$ pieces into a top segment during a priming sequence, or having more than two $\LL$ pieces in a top segment before a $\JJ$ piece gets placed in a closing sequence, causes one of the $\LL$ pieces to be placed in one of the improper ways shown in Figure \ref{JLLClogs}, particularly in (n) and (o).

\begin{figure}[!ht]
  \centering
  \begin{subfigure}[b]{0.11\textwidth}
    \centering
    \includegraphics[width=44pt]{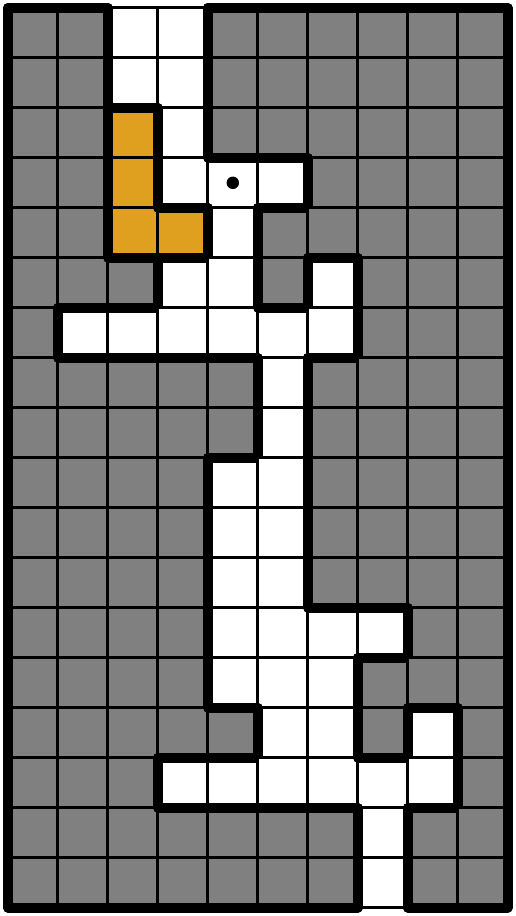}
    \caption{}
  \end{subfigure}
  \begin{subfigure}[b]{0.11\textwidth}
    \centering
    \includegraphics[width=44pt]{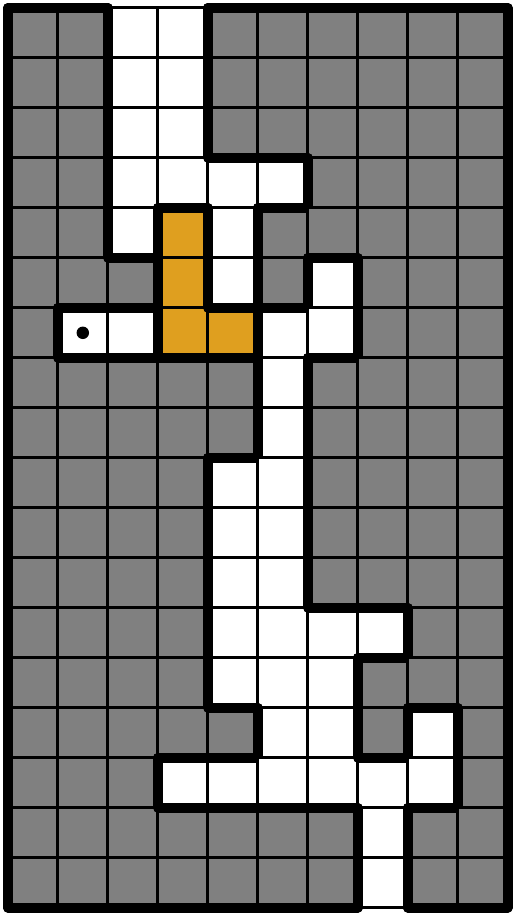}
    \caption{}
  \end{subfigure}
  \begin{subfigure}[b]{0.11\textwidth}
    \centering
    \includegraphics[width=44pt]{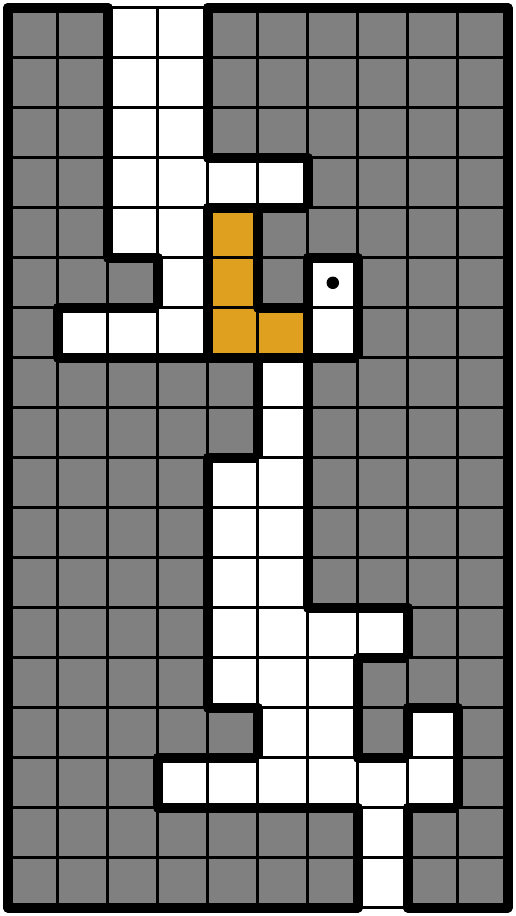}
    \caption{}
  \end{subfigure}
  \begin{subfigure}[b]{0.11\textwidth}
    \centering
    \includegraphics[width=44pt]{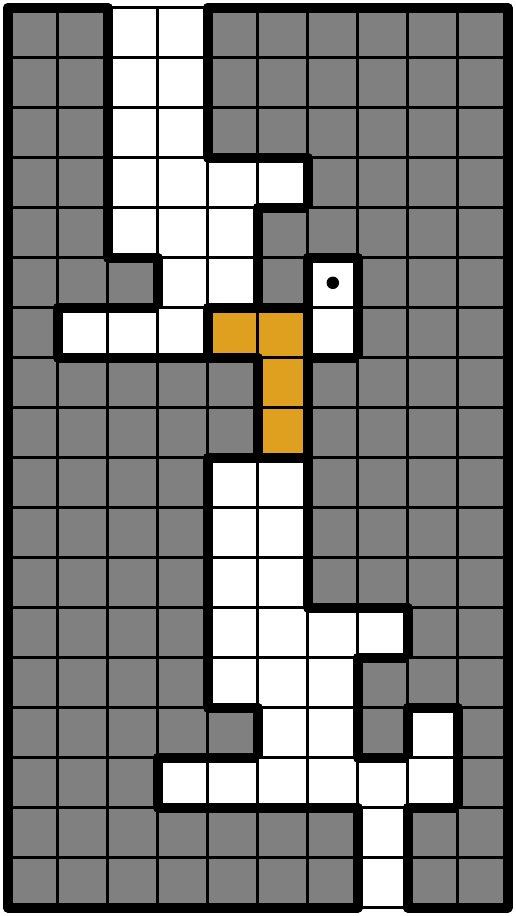}
    \caption{}
  \end{subfigure}
  \begin{subfigure}[b]{0.11\textwidth}
    \centering
    \includegraphics[width=44pt]{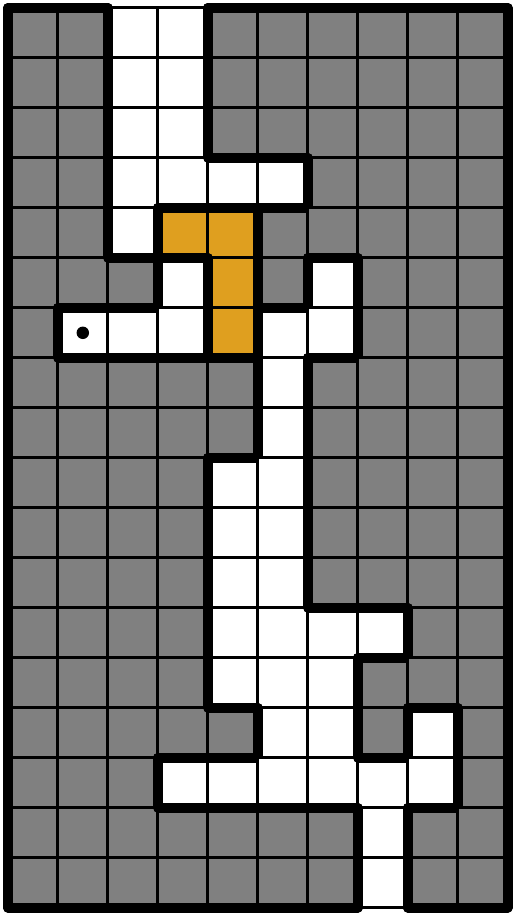}
    \caption{}
  \end{subfigure}
  \begin{subfigure}[b]{0.11\textwidth}
    \centering
    \includegraphics[width=44pt]{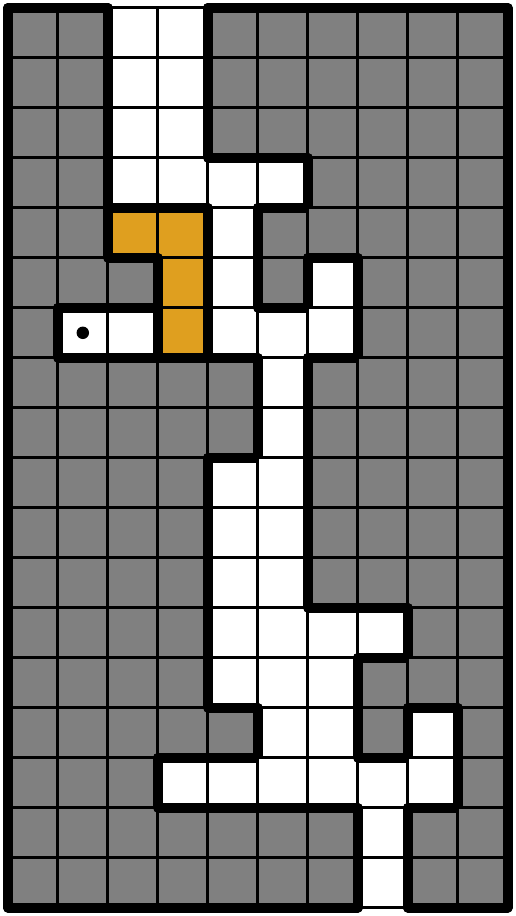}
    \caption{}
  \end{subfigure}
  \begin{subfigure}[b]{0.11\textwidth}
    \centering
    \includegraphics[width=44pt]{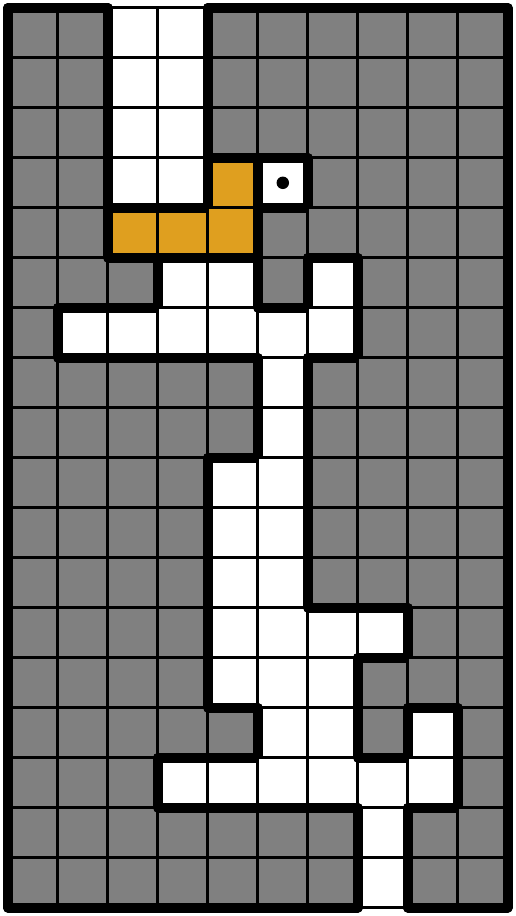}
    \caption{}
  \end{subfigure}
  \begin{subfigure}[b]{0.11\textwidth}
    \centering
    \includegraphics[width=44pt]{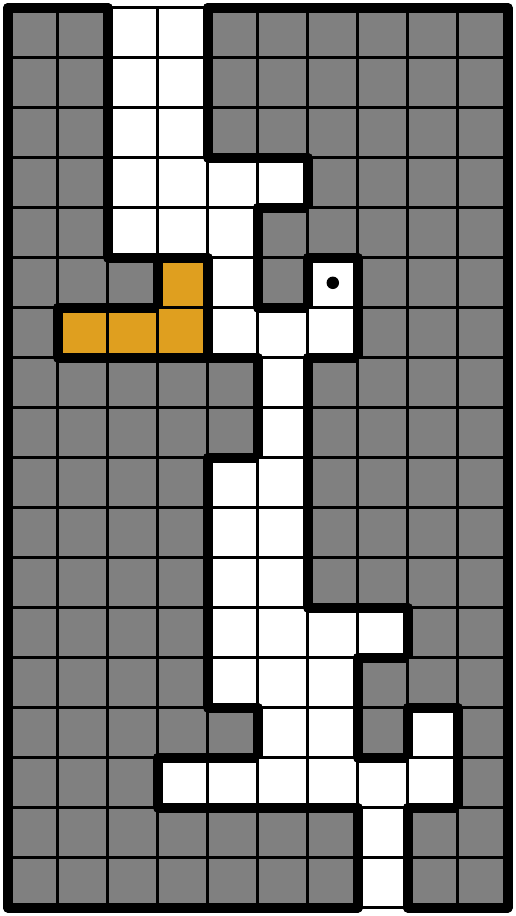}
    \caption{}
  \end{subfigure}
  \begin{subfigure}[b]{0.11\textwidth}
    \centering
    \includegraphics[width=44pt]{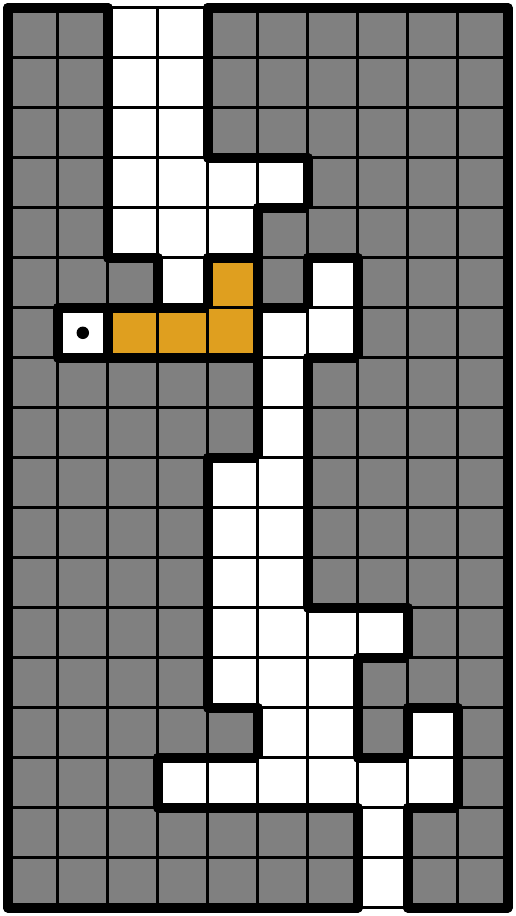}
    \caption{}
  \end{subfigure}
  \begin{subfigure}[b]{0.11\textwidth}
    \centering
    \includegraphics[width=44pt]{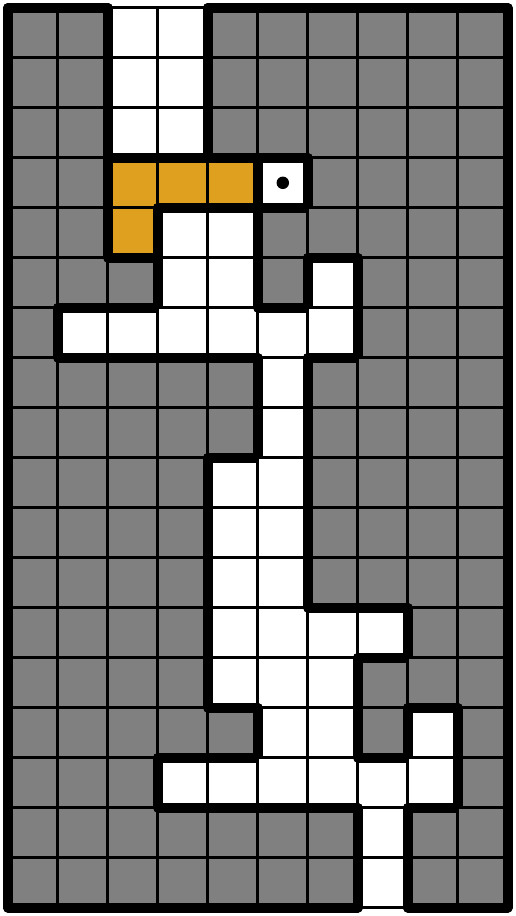}
    \caption{}
  \end{subfigure}
  \begin{subfigure}[b]{0.11\textwidth}
    \centering
    \includegraphics[width=44pt]{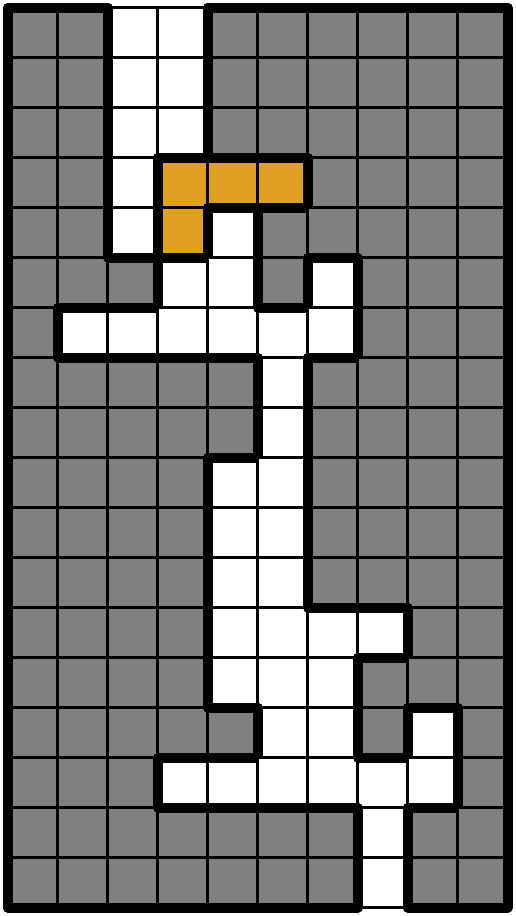}
    \caption{}
  \end{subfigure}
  \begin{subfigure}[b]{0.11\textwidth}
    \centering
    \includegraphics[width=44pt]{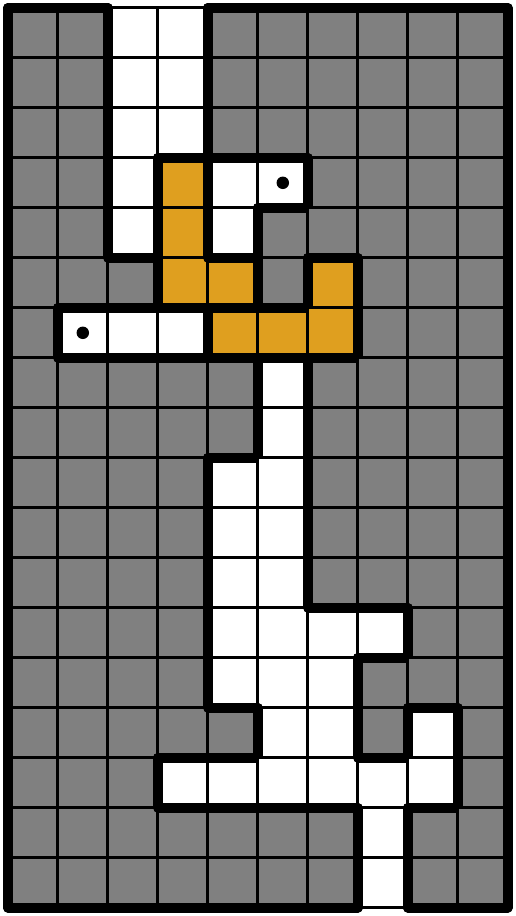}
    \caption{}
  \end{subfigure}
  \begin{subfigure}[b]{0.11\textwidth}
    \centering
    \includegraphics[width=44pt]{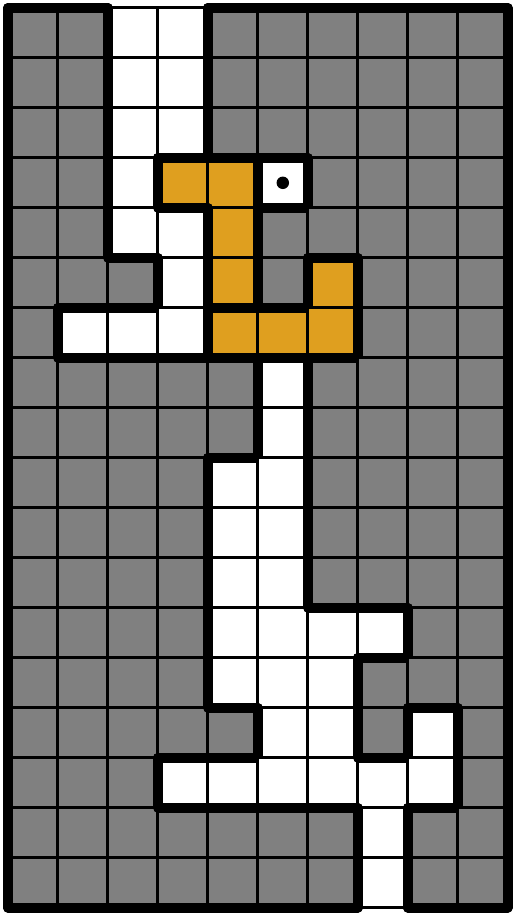}
    \caption{}
  \end{subfigure}
  \begin{subfigure}[b]{0.11\textwidth}
    \centering
    \includegraphics[width=44pt]{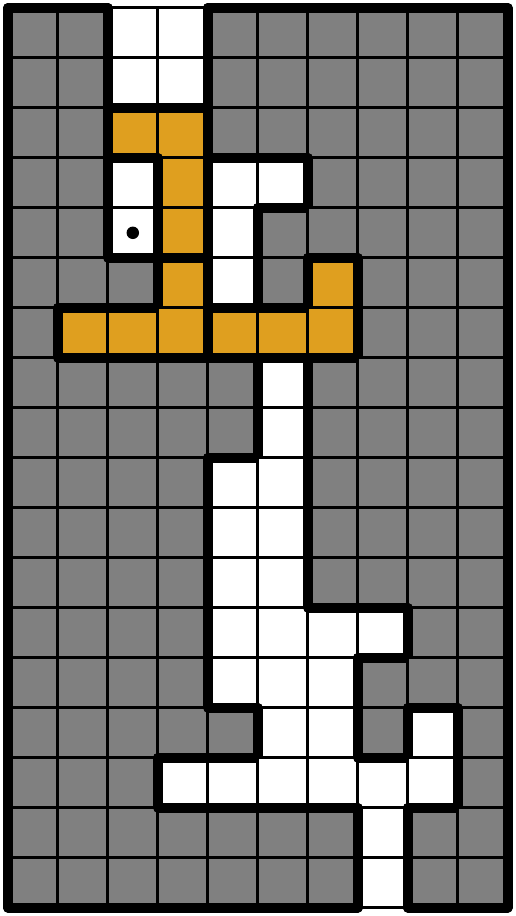}
    \caption{}
  \end{subfigure}
  \begin{subfigure}[b]{0.11\textwidth}
    \centering
    \includegraphics[width=44pt]{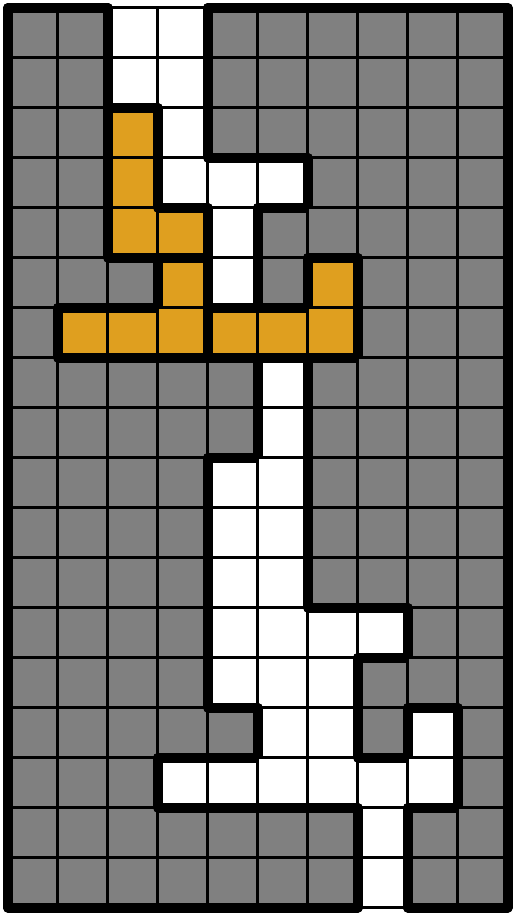}
    \caption{}
  \end{subfigure}
  \caption{Clogs involving $\LL$ pieces from priming sequences in the $\{\JJ, \LL\}$ setup}
  \label{JLLClogs}
\end{figure}

The improper piece placements for $\JJ$ pieces in the filling sequences are shown in Figure \ref{JLJClogs}. Most of the piece placements cause at least one empty square (indicated with a dot) to become inaccessible without overflowing the bottle, either due to being blocked off or due to pieces not being able to rotate in to fill the square or to reach the rest of the bottle (like in (c), (e), and (g)-(i)). The piece placement in (l) prevents any piece from rotating in to cover the empty square with a dot without overflow, as $\JJ$ pieces cannot normally rotate in to cover that empty square and $\LL$ pieces require one of the squares covered by the $\JJ$ piece to be empty in order to rotate in to cover that square (in addition, it is not possible to clear the lines above the $\JJ$ piece without overflow as the number of empty squares is $2\pmod{4}$ and all tetrominoes fill $4$ squares).

\begin{figure}[!ht]
  \centering
  \begin{subfigure}[b]{0.11\textwidth}
    \centering
    \includegraphics[width=44pt]{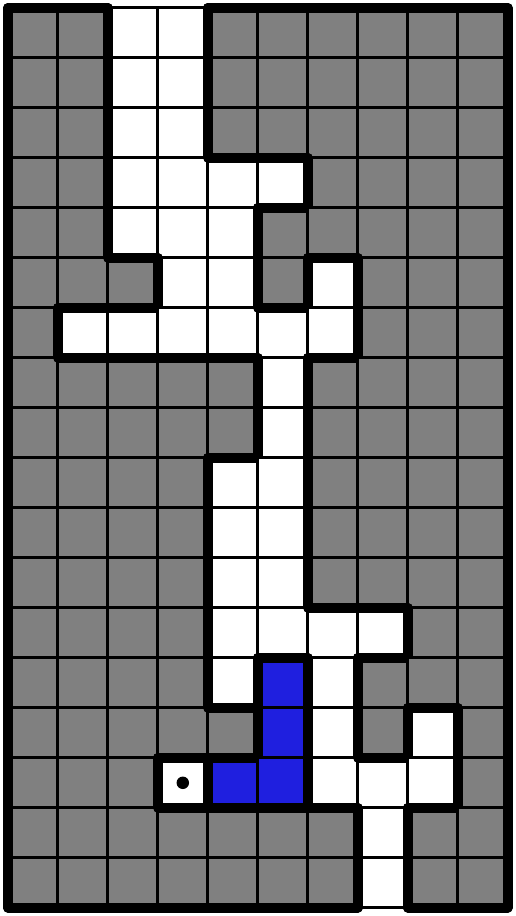}
    \caption{}
  \end{subfigure}
  \begin{subfigure}[b]{0.11\textwidth}
    \centering
    \includegraphics[width=44pt]{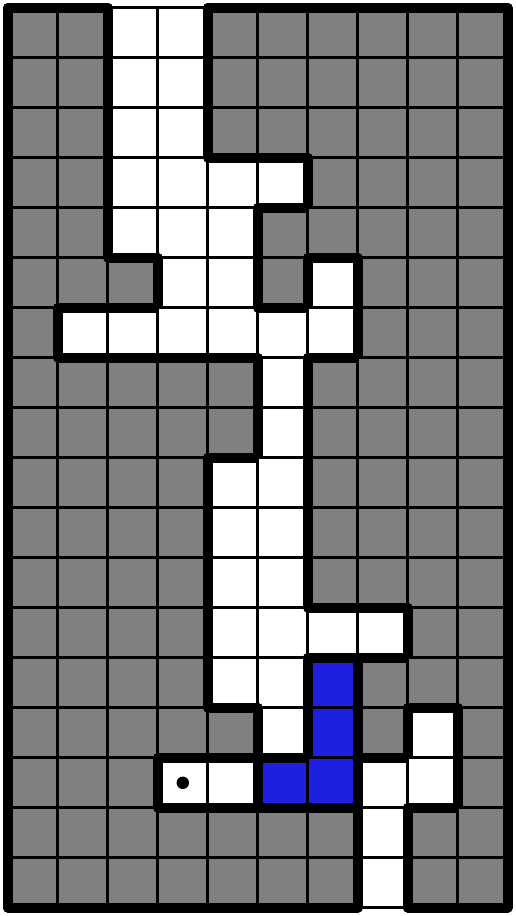}
    \caption{}
  \end{subfigure}
  \begin{subfigure}[b]{0.11\textwidth}
    \centering
    \includegraphics[width=44pt]{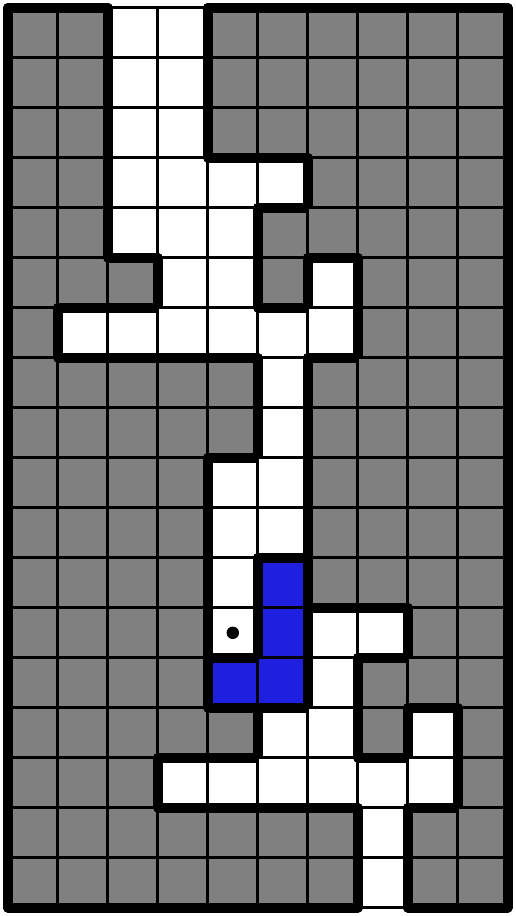}
    \caption{}
  \end{subfigure}
  \begin{subfigure}[b]{0.11\textwidth}
    \centering
    \includegraphics[width=44pt]{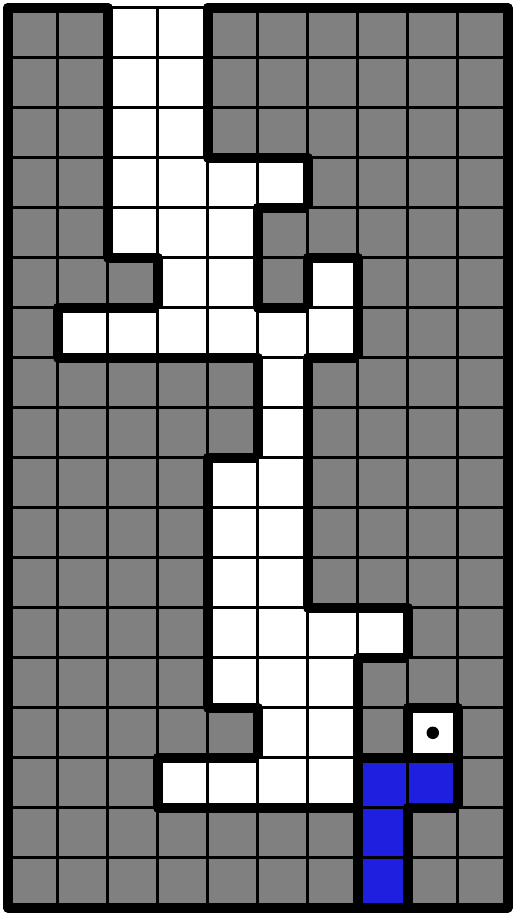}
    \caption{}
  \end{subfigure}
  \begin{subfigure}[b]{0.11\textwidth}
    \centering
    \includegraphics[width=44pt]{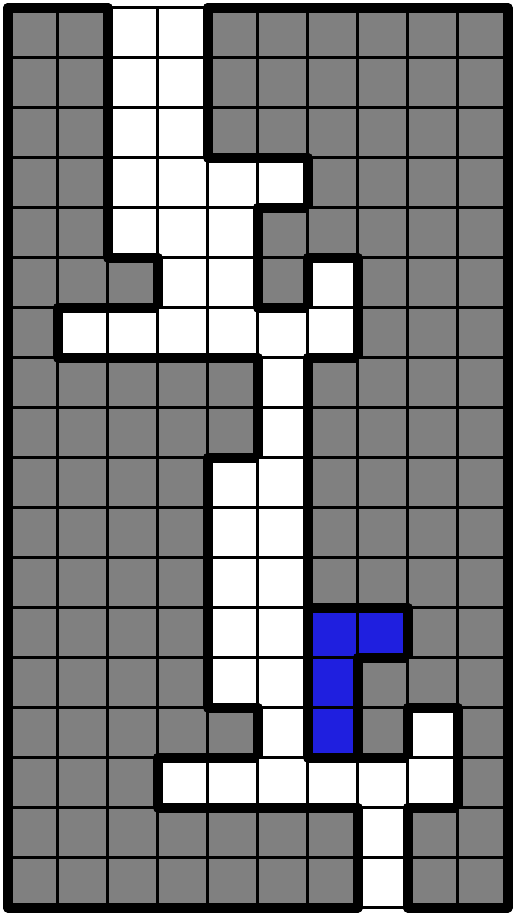}
    \caption{}
  \end{subfigure}
  \begin{subfigure}[b]{0.11\textwidth}
    \centering
    \includegraphics[width=44pt]{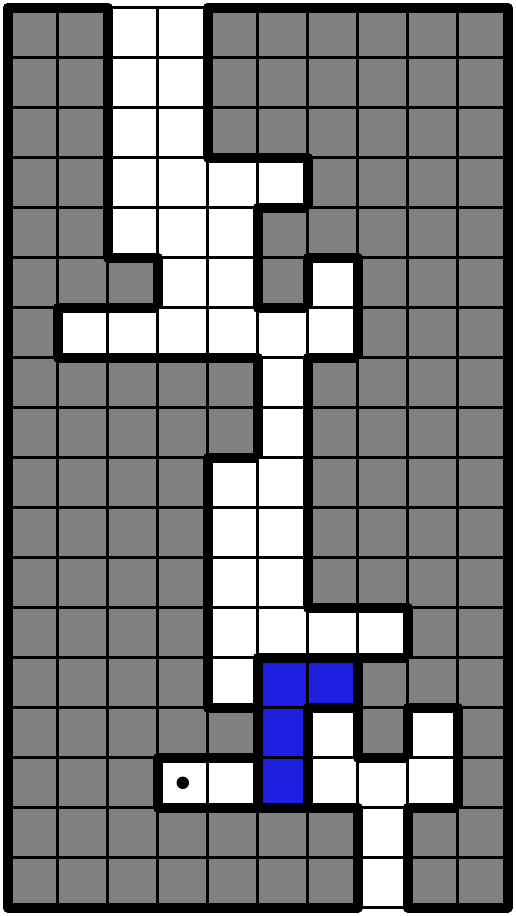}
    \caption{}
  \end{subfigure}
  \begin{subfigure}[b]{0.11\textwidth}
    \centering
    \includegraphics[width=44pt]{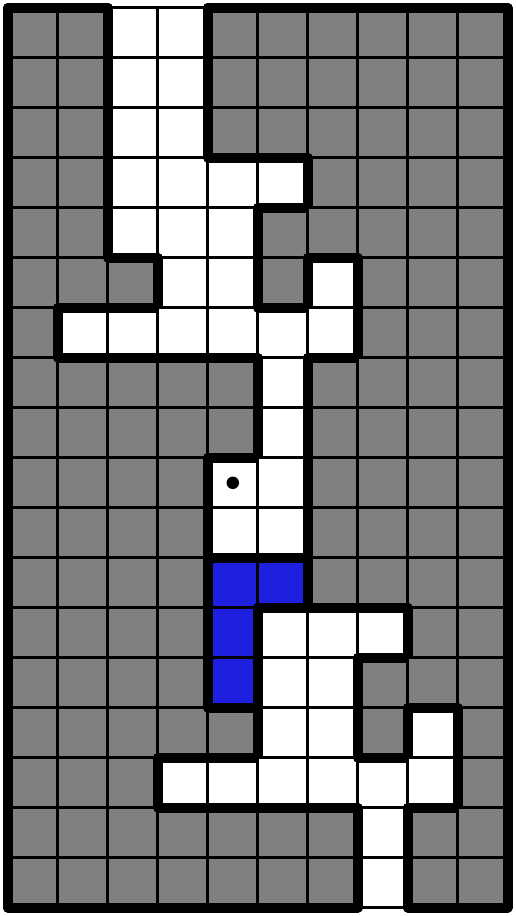}
    \caption{}
  \end{subfigure}
  \begin{subfigure}[b]{0.11\textwidth}
    \centering
    \includegraphics[width=44pt]{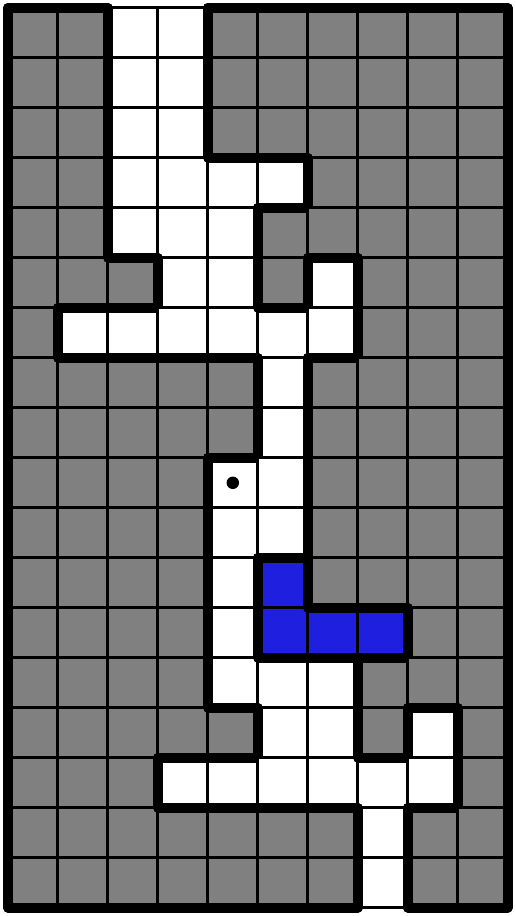}
    \caption{}
  \end{subfigure}
  \begin{subfigure}[b]{0.11\textwidth}
    \centering
    \includegraphics[width=44pt]{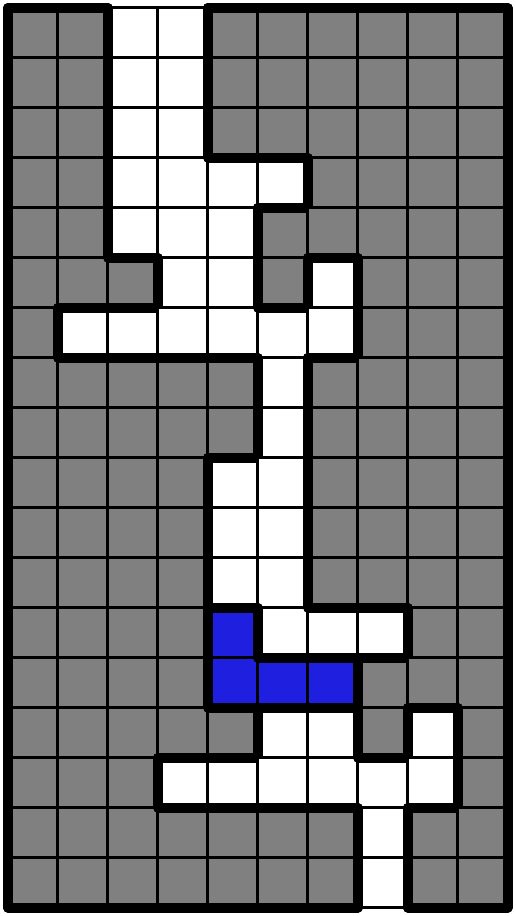}
    \caption{}
  \end{subfigure}
  \begin{subfigure}[b]{0.11\textwidth}
    \centering
    \includegraphics[width=44pt]{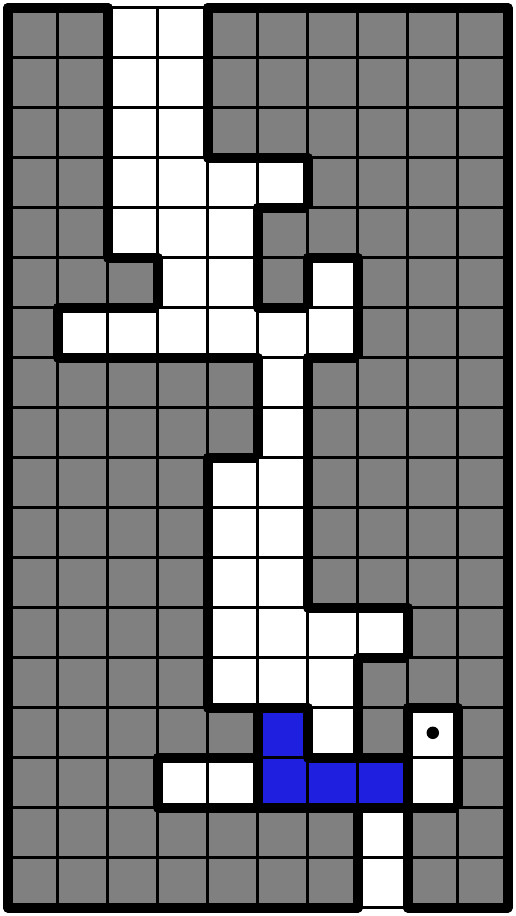}
    \caption{}
  \end{subfigure}
  \begin{subfigure}[b]{0.11\textwidth}
    \centering
    \includegraphics[width=44pt]{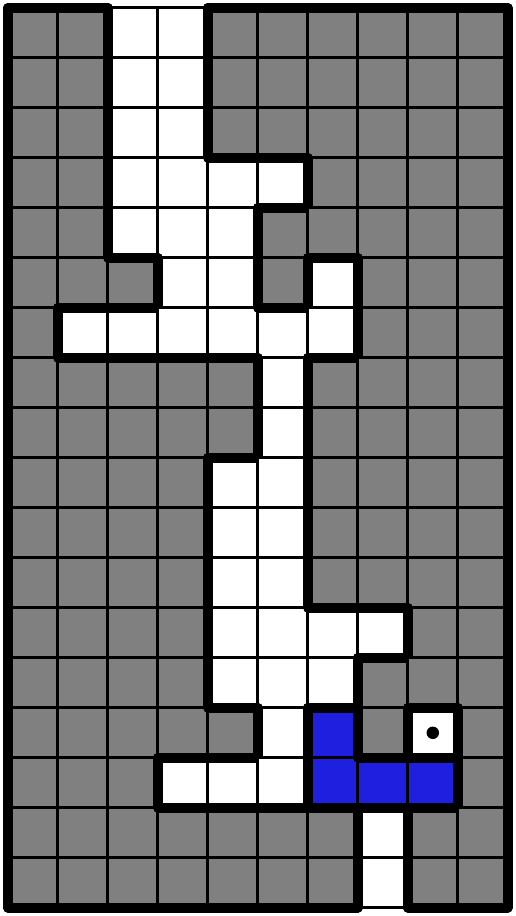}
    \caption{}
  \end{subfigure}
  \begin{subfigure}[b]{0.11\textwidth}
    \centering
    \includegraphics[width=44pt]{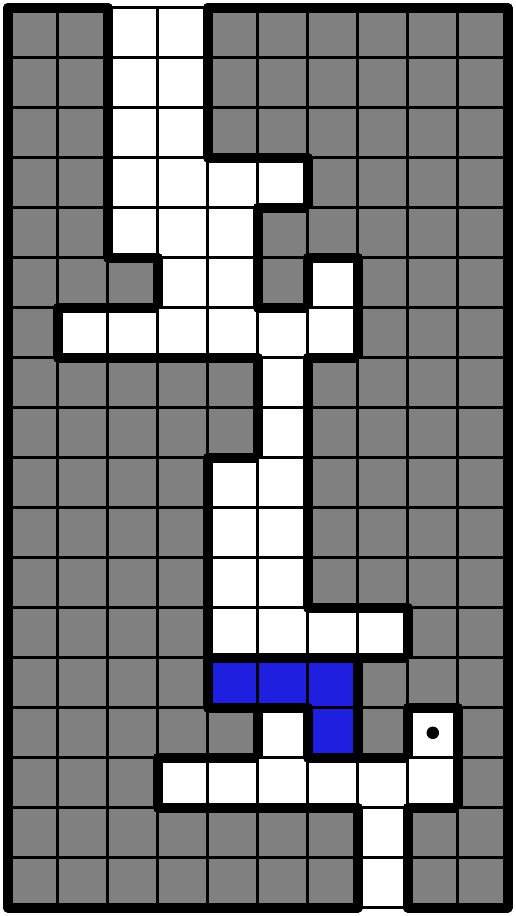}
    \caption{}
  \end{subfigure}
  \begin{subfigure}[b]{0.11\textwidth}
    \centering
    \includegraphics[width=44pt]{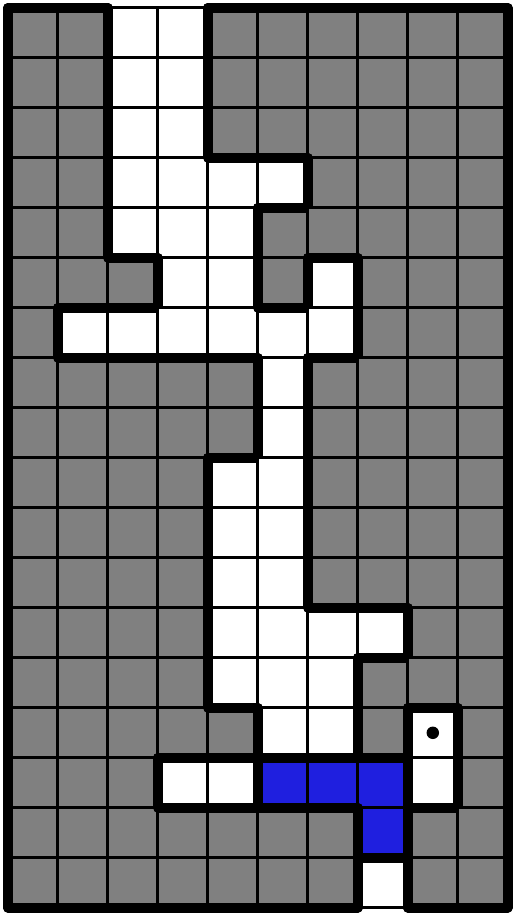}
    \caption{}
  \end{subfigure}
  \caption{Clogs involving $\JJ$ pieces from filling sequences in the $\{\JJ, \LL\}$ setup}
  \label{JLJClogs}
\end{figure}

The improper piece placements for $\JJ$ pieces and $\LL$ pieces in the closing sequences are shown in Figure \ref{JLCloseClogs}; all of them yield scenarios in which it is impossible to re-open the top of the bottle without overflow.

\begin{figure}[!ht]
  \centering
  \begin{subfigure}[b]{0.12\textwidth}
    \centering
    \includegraphics[width=45pt]{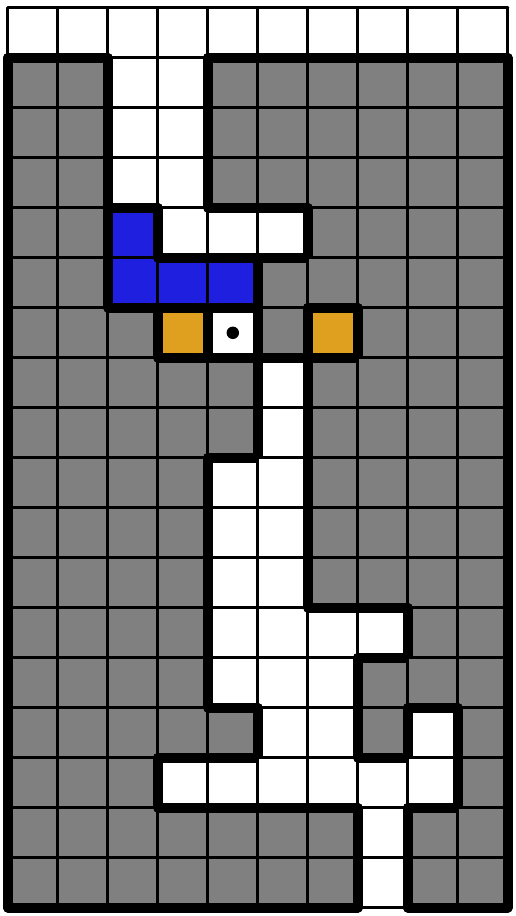}
    \caption{}
  \end{subfigure}
  \begin{subfigure}[b]{0.12\textwidth}
    \centering
    \includegraphics[width=45pt]{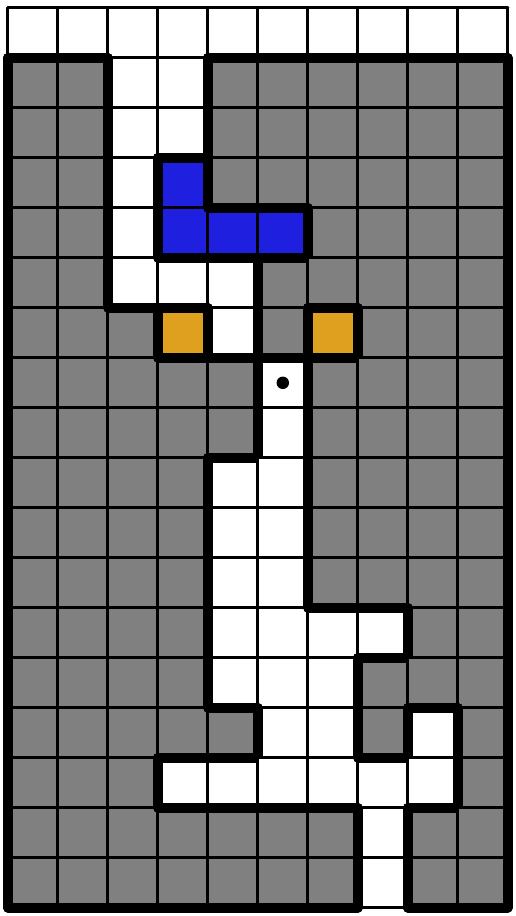}
    \caption{}
  \end{subfigure}
  \begin{subfigure}[b]{0.12\textwidth}
    \centering
    \includegraphics[width=45pt]{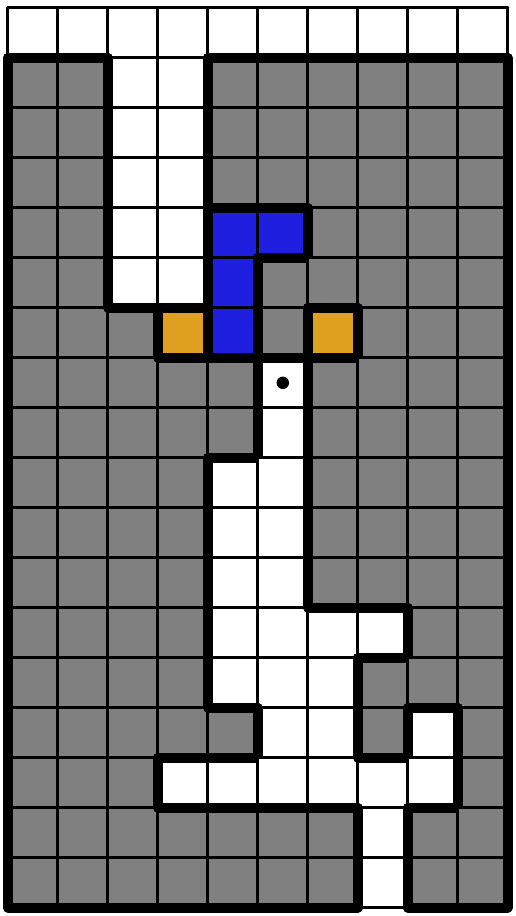}
    \caption{}
  \end{subfigure}
  \begin{subfigure}[b]{0.12\textwidth}
    \centering
    \includegraphics[width=45pt]{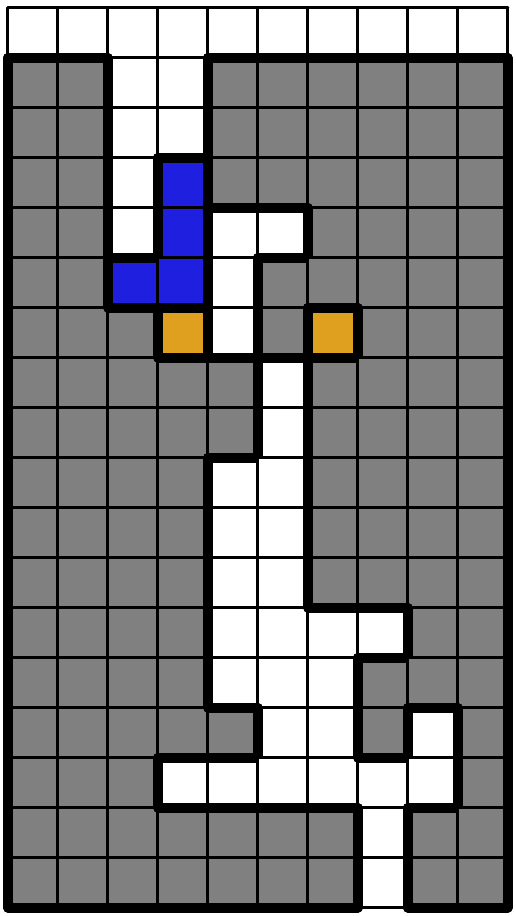}
    \caption{}
  \end{subfigure}
  \begin{subfigure}[b]{0.12\textwidth}
    \centering
    \includegraphics[width=45pt]{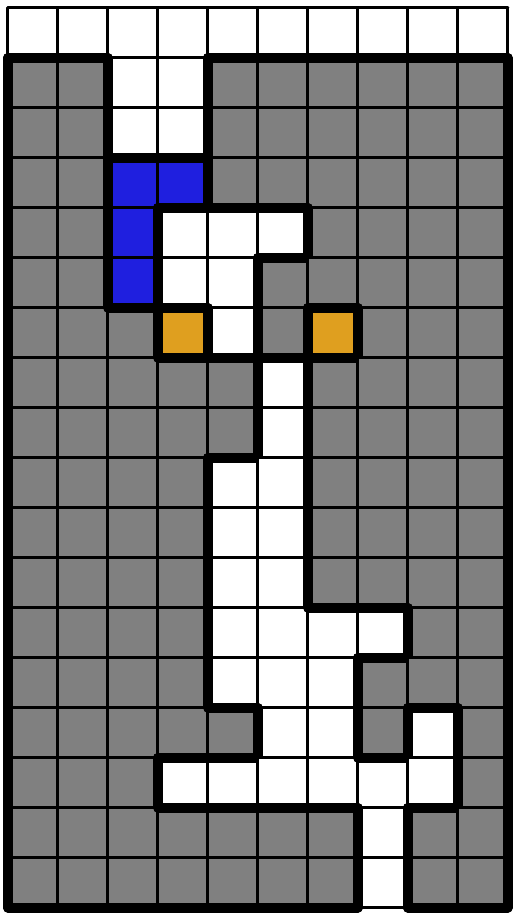}
    \caption{}
  \end{subfigure}
  \begin{subfigure}[b]{0.12\textwidth}
    \centering
    \includegraphics[width=45pt]{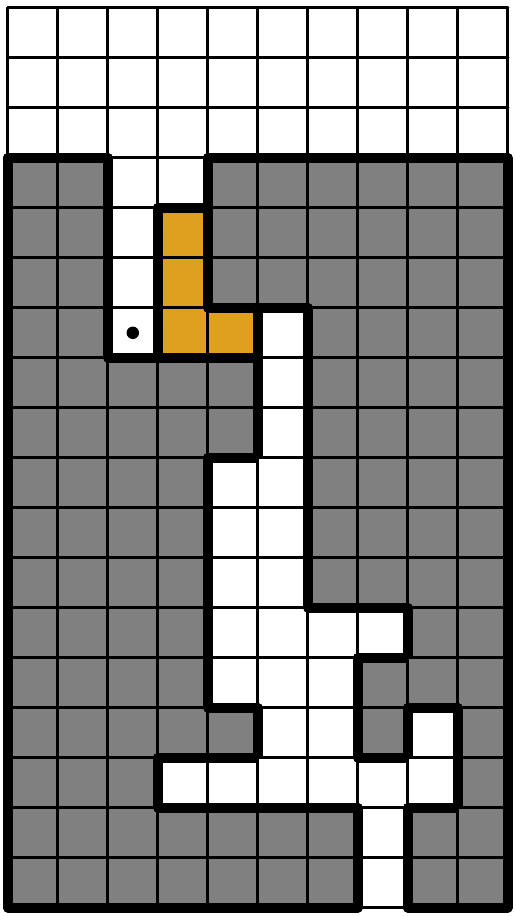}
    \caption{}
  \end{subfigure}
  \begin{subfigure}[b]{0.12\textwidth}
    \centering
    \includegraphics[width=45pt]{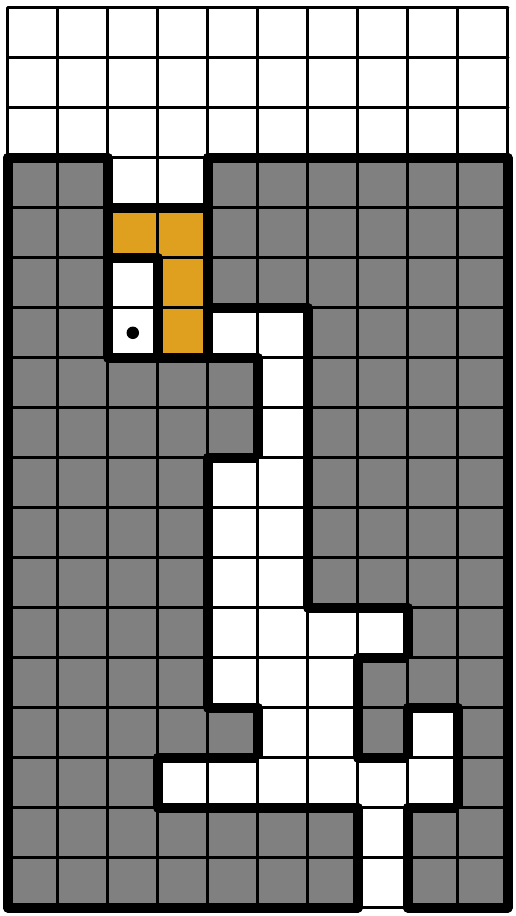}
    \caption{}
  \end{subfigure}
  \caption{Clogs involving pieces from closing sequences in the $\{\JJ, \LL\}$ setup}
  \label{JLCloseClogs}
\end{figure}

$$ $$

\subsection{Clogs for the ASP-Completeness Setup}\label{appendix:aspclogs}

The improper piece placements for $\II$ pieces are shown in Figure \ref{ASPIClogs}. The placements in (a) and (b) block the empty square indicated with a dot. The placement in (c) forces the empty square indicated with a dot to be filled with an $\II$ piece, which blocks off other empty squares. The placement in (d) causes the empty square indicated with a dot to become inaccessible, as only $\II$ pieces can drop below the hole of size $1$ next to the $\II$ piece, and $\II$ pieces cannot rotate in to fill that empty square.

\begin{figure}[!ht]
  \centering
  \begin{subfigure}[b]{0.15\textwidth}
    \centering
    \includegraphics[width=60pt]{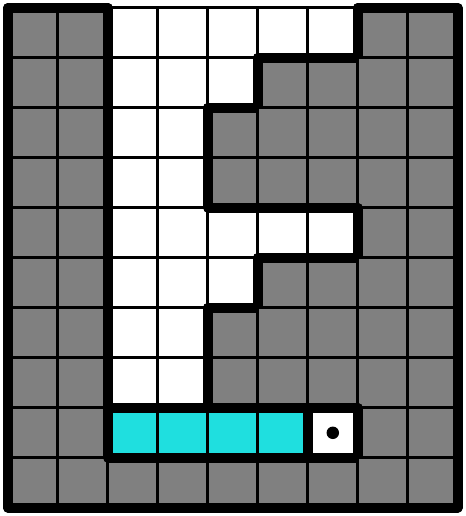}
    \caption{}
  \end{subfigure}
  \begin{subfigure}[b]{0.15\textwidth}
    \centering
    \includegraphics[width=60pt]{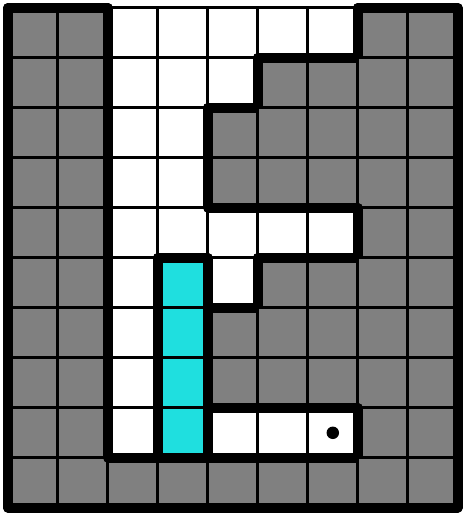}
    \caption{}
  \end{subfigure}
  \begin{subfigure}[b]{0.15\textwidth}
    \centering
    \includegraphics[width=60pt]{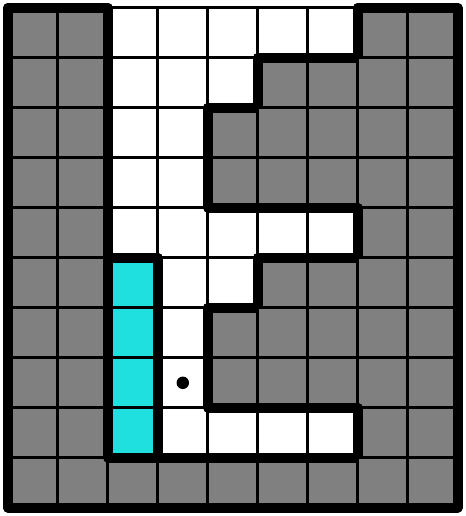}
    \caption{}
  \end{subfigure}
  \begin{subfigure}[b]{0.15\textwidth}
    \centering
    \includegraphics[width=60pt]{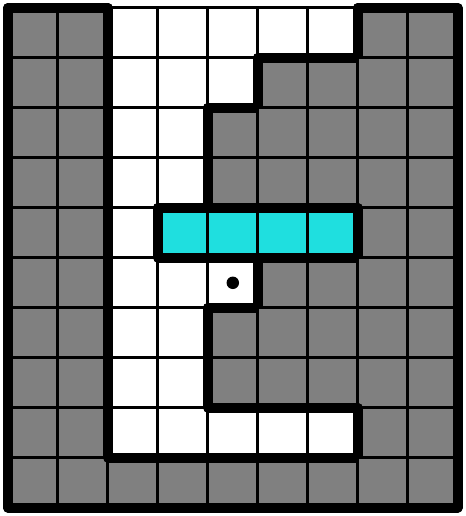}
    \caption{}
  \end{subfigure}
  \caption{Clogs involving $\II$ pieces in the setup for ASP-completeness of $\{\II, \TT, \LL\}$}
  \label{ASPIClogs}
\end{figure}

The improper piece placements for $\TT$ pieces and $\LL$ pieces are shown in Figures \ref{ASPTClogs} and \ref{ASPLClogs}; all of them cause an empty square (indicated with a dot) to become inaccessible.

\begin{figure}[!ht]
  \centering
  \begin{subfigure}[b]{0.15\textwidth}
    \centering
    \includegraphics[width=60pt]{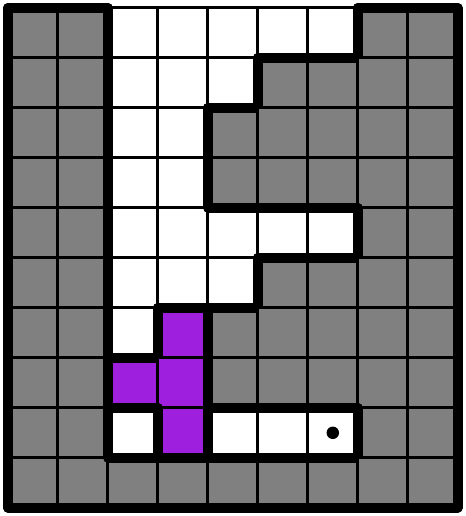}
    \caption{}
  \end{subfigure}
  \begin{subfigure}[b]{0.15\textwidth}
    \centering
    \includegraphics[width=60pt]{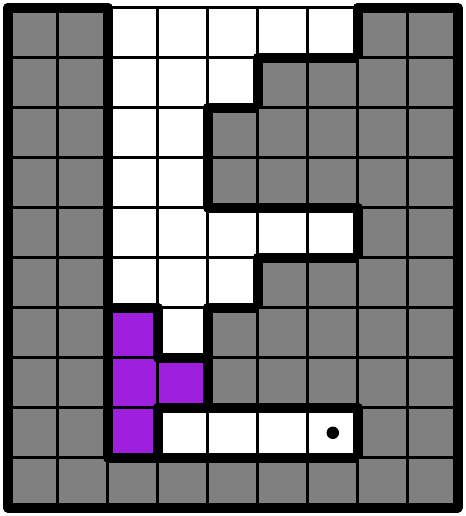}
    \caption{}
  \end{subfigure}
  \begin{subfigure}[b]{0.15\textwidth}
    \centering
    \includegraphics[width=60pt]{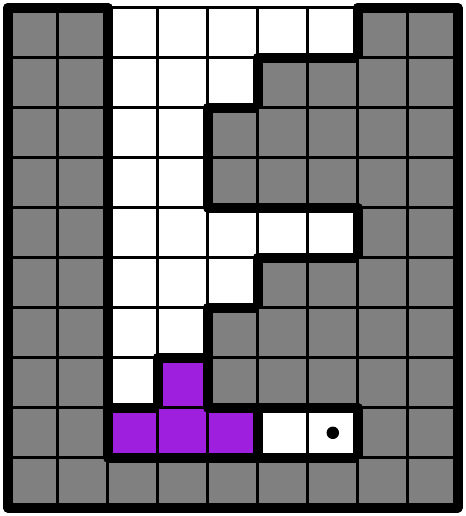}
    \caption{}
  \end{subfigure}
  \begin{subfigure}[b]{0.15\textwidth}
    \centering
    \includegraphics[width=60pt]{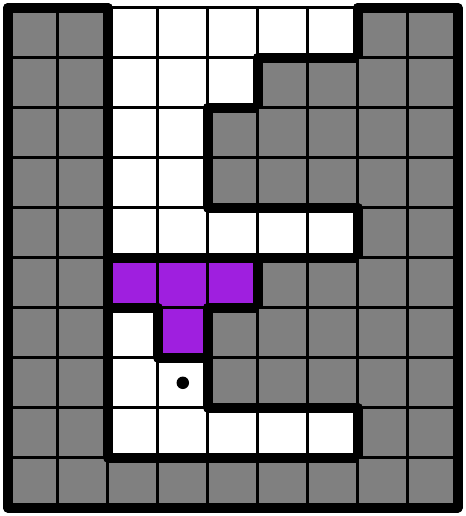}
    \caption{}
  \end{subfigure}
  \begin{subfigure}[b]{0.15\textwidth}
    \centering
    \includegraphics[width=60pt]{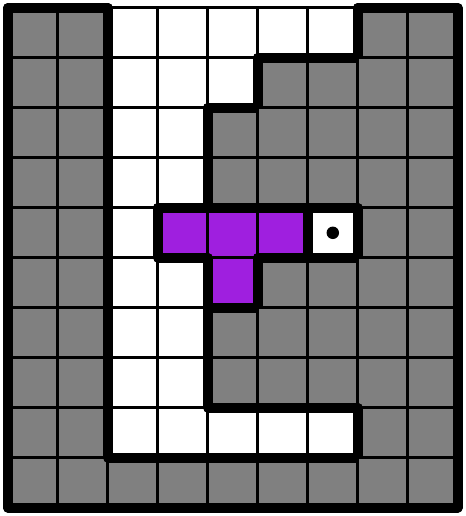}
    \caption{}
  \end{subfigure}
  \begin{subfigure}[b]{0.15\textwidth}
    \centering
    \includegraphics[width=60pt]{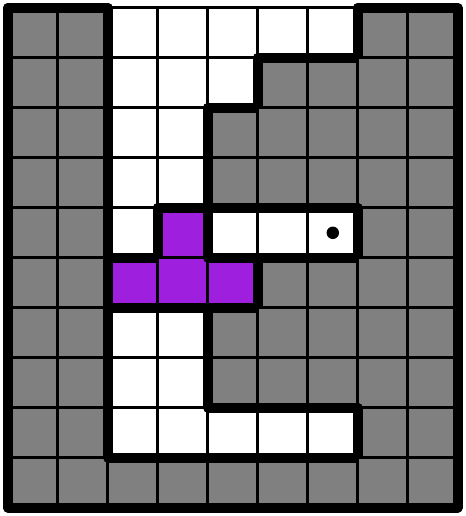}
    \caption{}
  \end{subfigure}
  \begin{subfigure}[b]{0.15\textwidth}
    \centering
    \includegraphics[width=60pt]{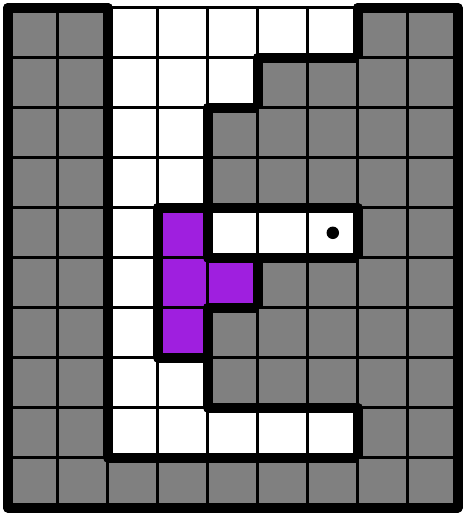}
    \caption{}
  \end{subfigure}
  \begin{subfigure}[b]{0.15\textwidth}
    \centering
    \includegraphics[width=60pt]{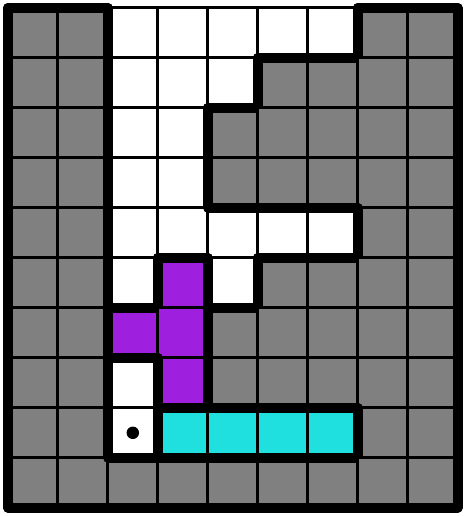}
    \caption{}
  \end{subfigure}
  \begin{subfigure}[b]{0.15\textwidth}
    \centering
    \includegraphics[width=60pt]{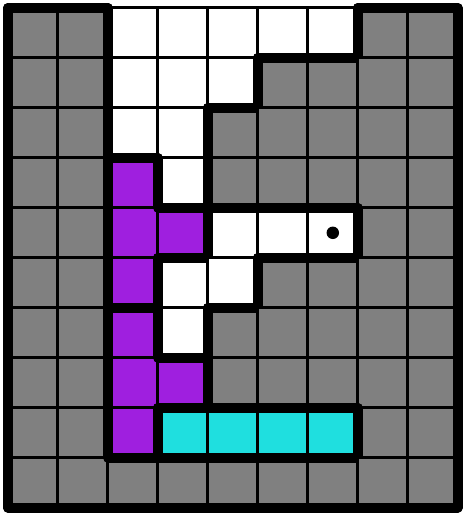}
    \caption{}
  \end{subfigure}
  \begin{subfigure}[b]{0.15\textwidth}
    \centering
    \includegraphics[width=60pt]{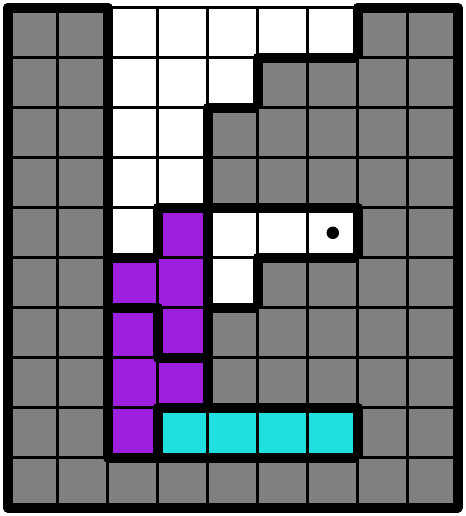}
    \caption{}
  \end{subfigure}
  \caption{Clogs involving $\TT$ pieces in the setup for ASP-completeness of $\{\II, \TT, \LL\}$}
  \label{ASPTClogs}
\end{figure}

\begin{figure}[!ht]
  \centering
  \begin{subfigure}[b]{0.15\textwidth}
    \centering
    \includegraphics[width=60pt]{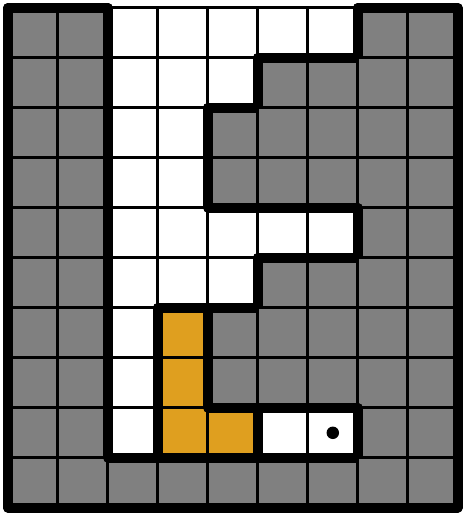}
    \caption{}
  \end{subfigure}
  \begin{subfigure}[b]{0.15\textwidth}
    \centering
    \includegraphics[width=60pt]{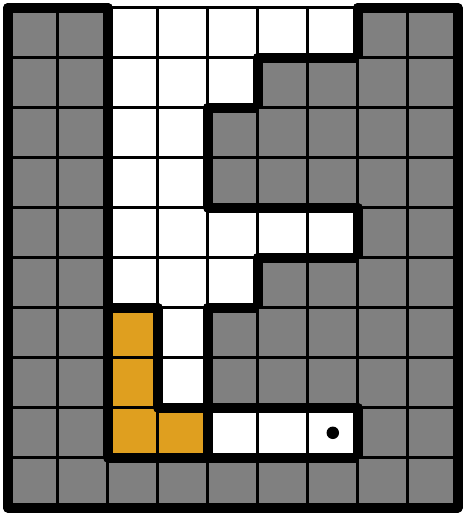}
    \caption{}
  \end{subfigure}
  \begin{subfigure}[b]{0.15\textwidth}
    \centering
    \includegraphics[width=60pt]{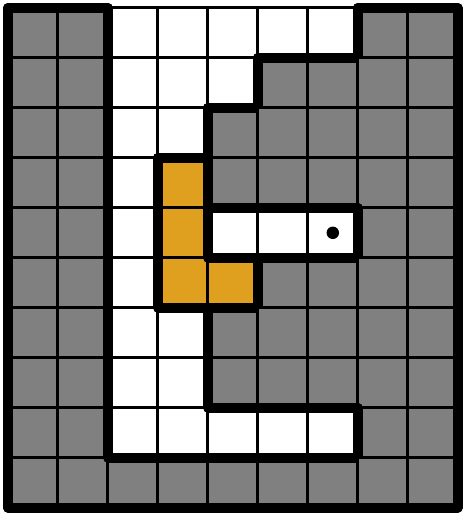}
    \caption{}
  \end{subfigure}
  \begin{subfigure}[b]{0.15\textwidth}
    \centering
    \includegraphics[width=60pt]{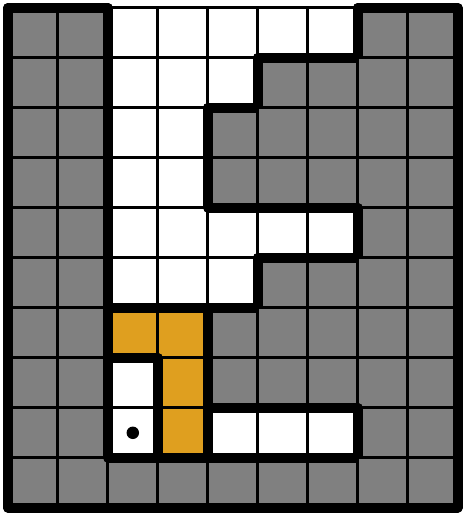}
    \caption{}
  \end{subfigure}
  \begin{subfigure}[b]{0.15\textwidth}
    \centering
    \includegraphics[width=60pt]{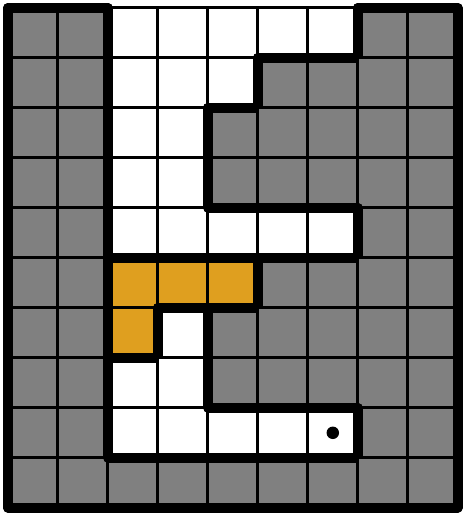}
    \caption{}
  \end{subfigure}
  \begin{subfigure}[b]{0.15\textwidth}
    \centering
    \includegraphics[width=60pt]{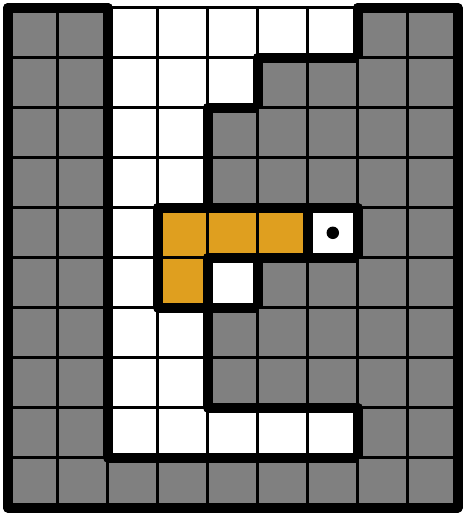}
    \caption{}
  \end{subfigure}
  \begin{subfigure}[b]{0.15\textwidth}
    \centering
    \includegraphics[width=60pt]{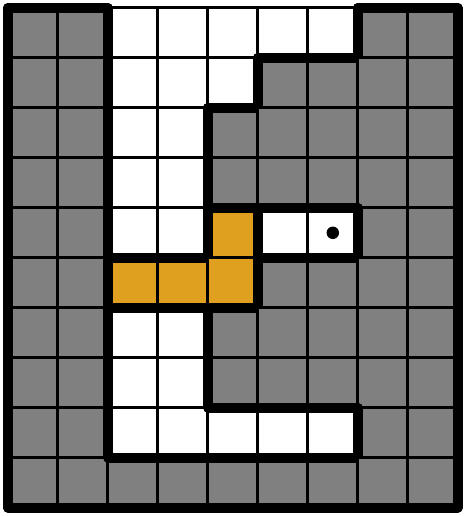}
    \caption{}
  \end{subfigure}
  \begin{subfigure}[b]{0.15\textwidth}
    \centering
    \includegraphics[width=60pt]{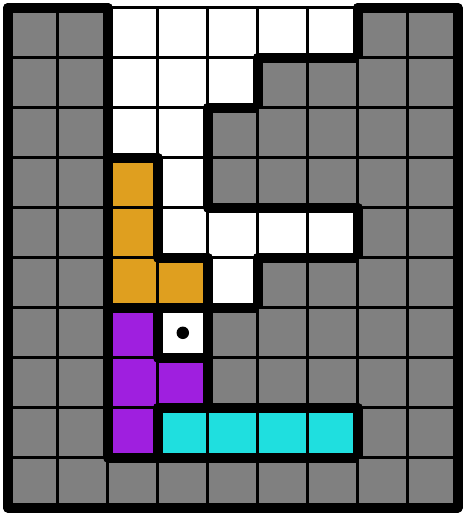}
    \caption{}
  \end{subfigure}
  \begin{subfigure}[b]{0.15\textwidth}
    \centering
    \includegraphics[width=60pt]{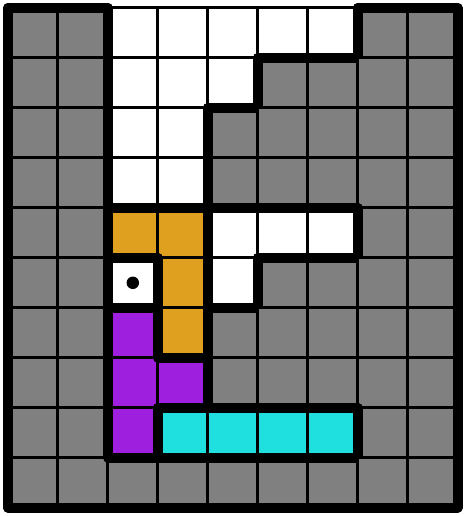}
    \caption{}
  \end{subfigure}
  \begin{subfigure}[b]{0.15\textwidth}
    \centering
    \includegraphics[width=60pt]{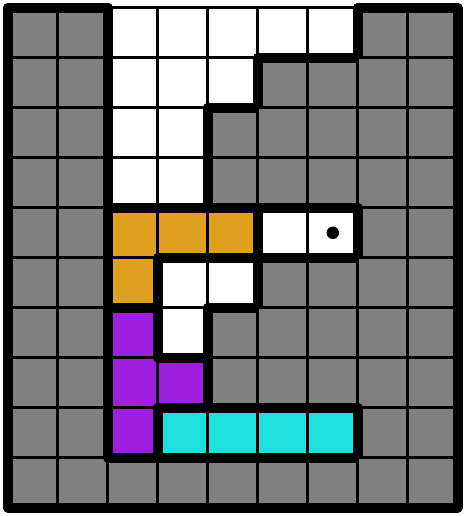}
    \caption{}
  \end{subfigure}
  \caption{Clogs involving $\LL$ pieces in the setup for ASP-completeness of $\{\II, \TT, \LL\}$}
  \label{ASPLClogs}
\end{figure}

%++++++++++++++++++++++++++++++++++++++++

\end{document}